\documentclass[11pt]{article}
\usepackage[letterpaper, margin=1in]{geometry}
\usepackage{times}
\usepackage{amsmath,amssymb,amsthm}
\usepackage{graphicx}
\graphicspath{{figures/}{./}}
\usepackage{float}
\usepackage{booktabs}
\usepackage{enumitem}
\usepackage{bm}
\usepackage{setspace}
\usepackage[unicode]{hyperref}

\newtheorem{theorem}{Theorem}[section]
\newtheorem{corollary}[theorem]{Corollary}
\newtheorem{lemma}[theorem]{Lemma}
\newtheorem{proposition}[theorem]{Proposition}
\newtheorem{definition}[theorem]{Definition}
\newtheorem{remark}[theorem]{Remark}
\newtheorem{example}[theorem]{Example}

\DeclareMathOperator{\Tr}{Tr}
\DeclareMathOperator{\ad}{ad}
\DeclareMathOperator{\Aut}{Aut}
\DeclareMathOperator{\spn}{span}
\newcommand{\R}{\mathbf{R}}
\newcommand{\Fro}[1]{\lVert #1 \rVert_F}

\title{Matched Generators for the Karhunen--Lo\`eve Transform:\\
A Double-Commutator Eigenvalue Theory}

\author{Mitchell A. Thornton\\[2pt]
\small ORCID 0000-0003-3559-9511\\
\small \texttt{mitchat@sbcglobal.net}}
\date{}

\begin{document}
\maketitle

\begin{abstract}
The Karhunen--Lo\`eve transform (KLT) diagonalizes the covariance operator of a second-order process and is the optimal orthonormal expansion for mean-square truncation. Which classical transform the KLT reduces to is governed by the symmetry commutant of the covariance: when the kernel commutes with a group action, the KLT eigenfunctions are the irreducible representation functions of that group, recovering the Fourier, cosine, Mellin, and spherical-harmonic systems. We study the inverse question. Given a covariance operator $R$ and a finite-dimensional space of candidate generators, the generator nearest to commuting with $R$, the minimizer of the commutativity residual $\delta(A,R)=\Fro{[R,A]}/(\Fro{R}\,\Fro{A})$, is the smallest-eigenvalue solution of a double-commutator eigenvalue problem $\ad_R^2(A^\ast)=\lambda A^\ast$, which reduces to a Hermitian positive-semidefinite generalized eigenvalue problem of dimension equal to the number of generators, independent of the ambient dimension. The framework recovers hidden transforms as well as classical ones: a variational characterization turns the existence of a commuting generator into a spectral condition, and a tridiagonal commutant-uniqueness result yields the prolate spheroidal, cosine, and discrete orthogonal-polynomial bases as exact recoveries, admits matrix-valued extensions and local obstruction results, and produces a continuum of transforms that interpolates between and beyond the classical families. When the symmetry is only approximate, the high-resolution coding penalty of the symmetry-adapted blockwise transform is exactly the multi-information among the symmetry sectors, giving an exact marginal selection threshold between the fixed group transform and the data-driven KLT. We further give an eigenvalue-displacement formula and a graph-automorphism characterization for permutation structure, a sequential deflation that, under a generating-completeness hypothesis, recovers non-Abelian symmetry, and stability bounds for the recovered generator under covariance estimation error. As an application, we synthesize the Karhunen--Lo\`eve transform of a two-paradigm covariance directly from its two known generators, without forming the mixed covariance, reaching the full-data transform's energy compaction from a handful of observations.
\end{abstract}

\section{Introduction}\label{sec:intro}

The Karhunen--Lo\`eve transform occupies a central position in harmonic analysis and second-order stochastic process theory. Given a process with covariance kernel $R$, the KLT expands the process in the eigenfunctions of the associated integral operator, achieving the fastest possible mean-square decay of truncation error among all orthonormal systems. The transform itself is data-dependent: its basis is dictated by $R$, with no closed form in general. Yet for the kernels that arise most often in practice the KLT collapses onto a \emph{classical} transform, the Fourier basis for stationary processes, the cosine basis for even-stationary processes, the Mellin basis for scale-invariant processes, the spherical harmonics for isotropic processes on the sphere, and this collapse is not a coincidence of special functions but a consequence of symmetry. Each of these kernels commutes with a group action, and the KLT eigenfunctions are forced, by Schur's lemma, to be the irreducible representation functions of that group.

This paper takes the symmetry viewpoint as primary and studies the inverse problem. Rather than asking for the eigenbasis of a given kernel, we ask which symmetry a kernel possesses, or most nearly possesses. Concretely, fix a covariance operator $R$ and a finite-dimensional space $\mathcal{B}$ of candidate generators (elements of a Lie algebra, shifted permutations, or graph-structured operators). We seek the generator $A^\ast\in\mathcal{B}$ that minimizes the commutativity residual
\begin{equation}\label{eq:delta-intro}
\delta(A,R)=\frac{\Fro{[R,A]}}{\Fro{R}\,\Fro{A}},\qquad [R,A]=RA-AR,
\end{equation}
which vanishes exactly when $A$ commutes with $R$, that is, when the one-parameter group $\exp(\theta A)$ is an exact symmetry of $R$. The minimizer identifies the algebraic structure of $R$ and hence the classical transform onto which the KLT collapses.

The technical core is a single variational principle. The residual minimization \eqref{eq:delta-intro} is the Rayleigh--Ritz problem for the self-adjoint double-commutator superoperator $\ad_R^2\colon A\mapsto[R,[R,A]]$ restricted to $\mathcal{B}$, and its solution is the smallest-eigenvalue generalized eigenvector of a $d\times d$ Hermitian positive-semidefinite pencil $\mathbf{M}\mathbf{c}=\lambda\mathbf{G}\mathbf{c}$, where $d=\dim\mathcal{B}$ is independent of the ambient dimension $M$ (Section~\ref{sec:gevp}). The pencil eigenvalues are real and nonnegative, the smallest vanishing exactly when $\mathcal{B}$ contains an exact symmetry (Section~\ref{sec:spectrum}); the two notions of optimality at play, the nearest commuting generator and the best energy-compacting transform, are reconciled in Section~\ref{sec:optimality}.

The same principle recovers transforms, not merely symmetries. When the pencil kernel is nontrivial it returns the exact infinitesimal symmetry algebra, including a metaplectic generator on the time--frequency plane (Section~\ref{sec:gevp}). Beyond the classical Fourier and cosine systems, a variational characterization converts the existence of a hidden commuting operator into a spectral condition, and a tridiagonal commutant-uniqueness result makes the recovery exact: the prolate spheroidal basis, a discrete orthogonal-polynomial ensemble, and the discrete cosine transform are recovered as the unique tridiagonal-commuting transforms of their covariances (Sections~\ref{sec:prolate}, \ref{sec:hahn}, \ref{sec:dct}), the Haar wavelet basis appears as a numerical illustration from a hierarchical generator (Section~\ref{sec:wavelet}), and the construction extends toward a matrix-valued setting (Section~\ref{sec:matrix}). Convex combinations of generators produce a continuum of transforms between the classical ones (Section~\ref{sec:interp}), and the same continuum reaches transforms that are certifiably outside the sinusoidal family, witnessed by the triviality of the tridiagonal commutant (Section~\ref{sec:interpjain}).

When the symmetry is only approximate the theory becomes quantitative. The isotypic refinement and the strong-versus-weak symmetry distinction (Section~\ref{sec:isotypic}) yield a blockwise group transform whose high-resolution coding penalty is exactly the multi-information among the symmetry sectors, with an exact threshold selecting between the fixed group transform and the data-driven KLT. The symplectic and reflection-group families place the construction on the time--frequency plane (Section~\ref{sec:symplectic}). For discrete structure, the residual equals the total squared eigenvalue displacement $\sum_k(\lambda_k-\lambda_{\sigma(k)})^2$ for permutation generators, and $\delta=0$ characterizes graph automorphisms for graph-diffusion covariance (Section~\ref{sec:pcf}); a sequential deflation extends the single-generator theorem to non-Abelian symmetry with strict subgroup growth, a logarithmic iteration bound, and, under a generating-completeness hypothesis, recovery of the full automorphism group (Section~\ref{sec:sequential}). Finally, the residual and the recovered generator are stable under covariance estimation error, with a quadratic residual bound at a true symmetry and a resulting sample-complexity estimate (Section~\ref{sec:perturbation}). These results have a direct application in data-scarce settings: when a covariance mixes two paradigms with known generators, the matched transform is synthesized from the two generators alone, without forming the mixed covariance, and attains the full-data Karhunen--Lo\`eve compaction from a handful of observations, on synthetic and public real-world signals (Section~\ref{sec:application}).

Section~\ref{sec:discussion} returns to the finite-dimensional blind group matching problem, of which the continuum theory is the generalization, and to the connection with isospectral and double-bracket flows. None of the spectral results requires positivity of $R$; they hold for any Hermitian (in the continuum, self-adjoint) operator, though the covariance case is the one of interest.

This work is part of a broader program. The algebraic diversity paradigm for signal and data processing, from which the present results emerged, is described in~\cite{thornton2026algebraicdiversityprinciplesgrouptheoretic}. The polynomial-time group-selection formula that the double-commutator pencil generalizes to the continuum is developed in~\cite{thornton2026polynomialtimeoptimalgroupselection}, and the unification of the classical signal transforms under the matched-group principle is treated in~\cite{thornton2026unificationsignaltransformtheory}. The finite-dimensional blind group matching problem and its single-observation variance reduction appear in companion work~\cite{thornton2026algebraicdiversitygrouptheoreticspectral,thornton2026continuousalgebraicdiversityunifying}, discussed further in Section~\ref{sec:discussion}.

\subsection{Scope and contributions}\label{sec:contributions}

A portion of the development below is classical and is included for completeness, for review, and to keep the paper self-contained. The link between the Karhunen--Lo\`eve transform and the symmetry commutant of a covariance, the recovery of the Fourier, cosine, Mellin, and spherical-harmonic systems from group invariance, and the isotypic decomposition of a group-invariant covariance are standard material in representation-theoretic harmonic analysis; they are reviewed in Sections~\ref{sec:ckl} and~\ref{sec:commutant} to fix notation and viewpoint. The contributions of this paper concern the inverse direction, recovering a structured generator, and hence a transform, from a covariance operator. They are the following.

\begin{enumerate}
\item \emph{An inverse formulation and its variational solution.} We pose the recovery of a covariance's symmetry generator as the minimization of the commutativity residual $\delta$ over a finite-dimensional candidate space, and show it is the Rayleigh--Ritz problem for the double-commutator superoperator $\ad_R^2$, solved by the smallest generalized eigenvector of a $d\times d$ Hermitian positive-semidefinite pencil whose size $d$ is independent of the ambient dimension $M$ (Section~\ref{sec:gevp}).

\item \emph{A spectral characterization of exact symmetry.} We show the pencil eigenvalues are real and nonnegative, and that the smallest vanishes exactly when the candidate space contains a generator that commutes with $R$, turning the detection of an exact symmetry into a spectral condition (Section~\ref{sec:spectrum}).

\item \emph{Reconciliation of two notions of optimality.} We reconcile the nearest-commuting-generator criterion with the classical energy-compaction optimality of the Karhunen--Lo\`eve transform, separating the regime in which the matched group transform and the data-driven transform coincide from the regime in which they diverge (Section~\ref{sec:optimality}).

\item \emph{Stability under estimation error.} We show the residual and the recovered generator are stable when the covariance is replaced by a sample estimate, with the residual of a true symmetry second-order in the perturbation while that of a false candidate is first-order, and we give a resulting sample-complexity estimate (Section~\ref{sec:perturbation}).

\item \emph{From symmetry recovery to transform recovery.} We prove that when the covariance is an injective function of an irreducible Jacobi matrix, the symmetric-tridiagonal commutant is exactly $\spn\{\mathbf{I},J\}$, so the pencil returns the orthogonal-polynomial generator and its Karhunen--Lo\`eve transform rather than a mere residual minimizer (Sections~\ref{sec:prolate}, \ref{sec:hahn}). This uniqueness is the mechanism that upgrades symmetry recovery to transform recovery.

\item \emph{Exact recovery of hidden transforms.} We show the discrete prolate spheroidal (Slepian) basis (Section~\ref{sec:prolate}), the Hahn orthogonal-polynomial transform (Section~\ref{sec:hahn}), and the discrete cosine transform (Section~\ref{sec:dct}) are recovered exactly as finite-dimensional instances; the Haar wavelet basis appears from a hierarchical generator as a numerical illustration (Section~\ref{sec:wavelet}); and the construction extends toward a matrix-valued setting (Section~\ref{sec:matrix}).

\item \emph{The isotypic refinement and a blockwise group transform.} Recasting and extending classical material, we factor the transform of a group-invariant covariance into a fixed symmetry-adapted part and a data-dependent eigendecomposition confined to the multiplicity blocks, distinguish strong from weak symmetry, and obtain a blockwise group transform whose high-resolution coding penalty equals the multi-information among the symmetry sectors, with an exact threshold selecting between the fixed group transform and the data-driven transform (Section~\ref{sec:isotypic}).

\item \emph{Symmetry on the time--frequency plane.} We place the construction on the time--frequency plane through the symplectic and reflection-group families, recovering a metaplectic generator and connecting to the fractional Fourier and linear canonical transforms (Section~\ref{sec:symplectic}).

\item \emph{A permutation commutator formula and graph automorphisms.} For permutation generators we show the residual equals the total squared eigenvalue displacement $\sum_k(\lambda_k-\lambda_{\sigma(k)})^2$, and that $\delta=0$ characterizes the graph automorphisms of a graph-diffusion covariance (Section~\ref{sec:pcf}).

\item \emph{Sequential recovery for non-Abelian symmetry.} We extend the single-generator theorem to non-Abelian symmetry by sequential deflation, with strict subgroup growth, a logarithmic iteration bound, and, under a generating-completeness hypothesis, recovery of the full automorphism group (Section~\ref{sec:sequential}).

\item \emph{A continuum of transforms between the classical ones.} We show that convex combinations of generators produce a continuum of transforms interpolating the classical ones, including transforms certifiably outside the sinusoidal family, witnessed by the triviality of the tridiagonal commutant (Sections~\ref{sec:interp}, \ref{sec:interpjain}).

\item \emph{An application to data-scarce transform synthesis.} We apply the construction to the two-paradigm setting, synthesizing the matched transform from two known generators without forming the mixed covariance, and show on synthetic and public real-world signals that it attains the full-data Karhunen--Lo\`eve compaction from a handful of observations (Section~\ref{sec:application}).
\end{enumerate}

\subsection{Related work}\label{sec:related}

Existing transform theories, including algebraic signal processing, graph Fourier analysis, wavelet and time-frequency constructions, and bispectral and orthogonal-polynomial theory, derive transforms from a prescribed algebraic, geometric, or operator model. In contrast, the present work addresses the inverse problem: given a covariance operator, recover structured generators whose eigenspaces explain the observed Karhunen--Lo\`eve basis. The double-commutator pencil provides a variational mechanism for discovering hidden transform families directly from second-order statistics. We position the construction against the most closely related lines below; the connection to isospectral and double-bracket flows and to supervised group-discovery methods is taken up again in Section~\ref{sec:discussion}.

\emph{Algebraic signal processing.} The algebraic signal processing theory of P\"uschel and Moura~\cite{puschel2008,puschelmoura2008space,puschelmoura2008cooleytukey} models a signal as a module over a filter algebra generated by a prescribed shift, and derives the Fourier, cosine, and sine transforms, with their fast algorithms, as the Fourier transforms attached to that algebra. This is the forward, model-first counterpart of the commutant table of Section~\ref{sec:commutant}; the present work runs in the opposite direction, recovering candidate generators, and hence transform families, from a covariance operator, with the residual $\delta$ quantifying the departure from any exact algebraic model.

\emph{Time-band limiting and commuting operators.} The prolate spheroidal theory of Slepian, Landau, and Pollak~\cite{slepianpollak1961,landaupollak1961,slepian1978} proceeds from a prescribed time-and-band-limiting operator to a commuting tridiagonal operator with a well-separated spectrum, whose existence was established by direct construction~\cite{grunbaumlonghiperlstadt1982}. Section~\ref{sec:prolate} reverses the arrow: given only the covariance, the pencil recovers the commuting tridiagonal operator with no prior assumption that one exists, an inverse recovery of a hidden commuting operator rather than its construction from a known concentration problem.

\emph{Bispectrality.} The Jacobi and Hahn results of Sections~\ref{sec:prolate} and~\ref{sec:hahn} sit close to the bispectral program of Duistermaat and Gr\"unbaum~\cite{duistermaat1986} and its development by Gr\"unbaum, Haine, Iliev, and coworkers~\cite{grunbaumhaine1997,haineiliev2000,grunbaumyakimov2002,grunbaumyakimov2004,grunbaumvinetzhedanov2018}, in which an orthogonal-polynomial family is paired with a commuting differential or difference operator and the prolate phenomenon is read as a consequence of bispectrality. That theory begins with the polynomial family or the operator; the present theorem begins with the covariance, recovering $J$ from $R=f(J)$ for injective $f$, an inverse viewpoint. The matrix-valued setting of Section~\ref{sec:matrix} is adjacent to matrix bispectral and time-band-limiting theory~\cite{duran2004,grunbaumpacharonizurrian2020,casperyakimov2022}.

\emph{Wavelets and multiresolution.} Wavelet bases are usually built forward from a prescribed dilation-translation (affine) structure or multiresolution analysis~\cite{haar1910,mallat1989,daubechies1992}. In Section~\ref{sec:wavelet} the Haar basis instead appears as the eigenbasis of a hierarchical generator recovered from the covariance, the affine structure read off the data rather than imposed.

\emph{Time-frequency analysis and the fractional Fourier transform.} The fractional Fourier and linear canonical transforms arise from a prescribed metaplectic action on the time--frequency plane~\cite{namias1980}. Here the metaplectic generator is recovered from the covariance as the matched symmetry on that plane (Section~\ref{sec:symplectic}).

\emph{Reflection groups and Dunkl operators.} A process invariant under a finite Coxeter group is diagonalized by the Dunkl transform attached to that prescribed reflection group~\cite{dunkl1989}; here the reflection-group structure is instead matched to the covariance (Section~\ref{sec:symplectic}).

\emph{Graph signal processing.} Graph Fourier analysis~\cite{shuman2013,sandryhailamoura2013} builds a transform forward from a prescribed graph operator, a Laplacian or adjacency shift, by eigendecomposition. The present construction is again inverse: it asks which structured operator a covariance most nearly commutes with, and recovers that operator rather than assuming it.

\emph{Graph automorphism and spectral graph theory.} Graph isomorphism and automorphism algorithms~\cite{weisfeilerleman1968,babai2016,godsilroyle2001,biggs1993} act on the combinatorial graph. Section~\ref{sec:pcf} recovers the automorphism group instead as the exact-commutativity locus of a graph-diffusion covariance, through the same pencil, with a permutation commutator formula linking the residual to eigenvalue displacement.

\emph{Representation-theoretic harmonic analysis.} The isotypic refinement of Section~\ref{sec:isotypic} is standard material in the harmonic analysis of finite and compact groups~\cite{serre1977,diaconis1988,terras1999,folland1995}, and the commutant and Schur-decomposition tools it uses are also central to the study of symmetry in quantum information~\cite{ragone2023}. We leverage these results rather than extend them; the new content is the variational recovery of the generator.

\emph{Joint diagonalization.} Joint approximate diagonalization seeks a single basis that nearly diagonalizes a given set of matrices~\cite{cardoso1996,pham2001} and underlies common principal component analysis~\cite{flury1984}. The present problem is different: a single covariance $R$ is given and the structured generator $A$ is unknown and to be recovered, an inverse problem $R\mapsto A$ rather than a common eigenbasis of several prescribed operators.

\emph{Isospectral and double-bracket flows.} The double commutator $\ad_R^2$ also drives the double-bracket flow toward diagonal form against a fixed target~\cite{brockett1991,chu2008,chudriessel1990,helmkemoore1994}. Here it is used statically: a Rayleigh--Ritz problem on a fixed candidate subspace that names the matched symmetry of a fixed $R$, rather than a flow that diagonalizes it. The full comparison is given in Section~\ref{sec:discussion}.

\emph{Operator-theoretic data analysis.} Data-driven operator methods such as dynamic mode decomposition and Koopman analysis~\cite{schmid2010,williamskevrekidisrowley2015} also pass from data to an operator to an eigenstructure, but they recover the dynamics that generate a time series; here the recovered operator is a symmetry generator of a covariance, and its eigenbasis is the transform.

\section{The Continuous Karhunen--Lo\`eve Transform}\label{sec:ckl}

Let $f$ be a zero-mean second-order process on a domain $\mathcal{T}\subseteq\mathbb{R}^n$ with continuous covariance kernel $R(s,t)=\mathbb{E}[f(s)\overline{f(t)}]$. The kernel defines a covariance operator $\mathcal{R}\colon L^2(\mathcal{T})\to L^2(\mathcal{T})$,
\begin{equation}\label{eq:covop}
[\mathcal{R}\,\varphi](s)=\int_{\mathcal{T}} R(s,t)\,\varphi(t)\,dt .
\end{equation}
Because $R$ is Hermitian and positive semidefinite, $\mathcal{R}$ is a compact self-adjoint operator; by Mercer's theorem~\cite{mercer1909} and the Hilbert--Schmidt theory~\cite{courant1953} it has a countable nonnegative spectrum $\lambda_1\ge\lambda_2\ge\cdots\ge 0$ with orthonormal eigenfunctions $\{\varphi_k\}\subset L^2(\mathcal{T})$ satisfying the Fredholm equation
\begin{equation}\label{eq:fredholm}
\int_{\mathcal{T}} R(s,t)\,\varphi_k(t)\,dt=\lambda_k\,\varphi_k(s).
\end{equation}

\begin{definition}[Continuous Karhunen--Lo\`eve transform]\label{def:ckl}
The continuous Karhunen--Lo\`eve transform is the unitary map $\mathcal{K}\colon L^2(\mathcal{T})\to \ell^2$ sending $f$ to the sequence of coefficients
\begin{equation}\label{eq:klt-analysis}
c_k \;=\; \langle f,\varphi_k\rangle \;=\; \int_{\mathcal{T}} f(t)\,\overline{\varphi_k(t)}\,dt ,
\end{equation}
where $\{\varphi_k\}$ are the orthonormal eigenfunctions of the covariance operator \eqref{eq:covop}, that is, the solutions of the integral equation \eqref{eq:fredholm}. The synthesis $f=\sum_k c_k\varphi_k$ converges in $L^2(\mathcal{T})$, and the coefficients $c_k$ are uncorrelated with $\mathbb{E}|c_k|^2=\lambda_k$.
\end{definition}

This integral-operator definition of the optimal expansion first appeared in Kosambi~\cite{kosambi1943} and was independently established by Karhunen~\cite{karhunen1947} and Lo\`eve~\cite{loeve1978}; the finite-dimensional principal-axis construction that it generalizes is due to Pearson~\cite{pearson1901} and Hotelling~\cite{hotelling1933}, the latter in the form now called the Hotelling transform or principal component analysis. We take \eqref{eq:fredholm}--\eqref{eq:klt-analysis} as classical and reserve all novelty for the symmetry theory built on top of it.

The optimality that distinguishes \eqref{eq:klt-analysis} among all orthonormal expansions is the following standard fact.

\begin{theorem}[Optimality of the continuous KLT]\label{thm:klt-opt}
Among all orthonormal systems $\{\psi_k\}_{k=1}^{K}$ in $L^2(\mathcal{T})$, the KLT eigenfunctions maximize the captured energy:
\begin{equation}
\sum_{k=1}^{K}\langle\mathcal{R}\,\varphi_k,\varphi_k\rangle=\sum_{k=1}^{K}\lambda_k\;\ge\;\sum_{k=1}^{K}\langle\mathcal{R}\,\psi_k,\psi_k\rangle
\end{equation}
for every $K\ge 1$.
\end{theorem}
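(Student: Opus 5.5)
The plan is to prove this Ky Fan maximum principle for the compact self-adjoint positive operator $\mathcal{R}$ by expanding everything in the Mercer eigenbasis and reducing it to an elementary linear-programming bound on weights. First we complete $\{\varphi_k\}$ to an orthonormal basis of $L^2(\mathcal{T})$ by adjoining an orthonormal basis $\{\varphi_j^0\}$ of $\ker\mathcal{R}$. The spectral theorem for compact self-adjoint operators, already quoted in Section~\ref{sec:ckl}, gives $\mathcal{R}\psi=\sum_j\lambda_j\langle\psi,\varphi_j\rangle\varphi_j$. For any orthonormal system $\{\psi_k\}_{k=1}^K$ we then have
\begin{equation*}
\sum_{k=1}^K\langle\mathcal{R}\psi_k,\psi_k\rangle=\sum_{j}\lambda_j w_j,\qquad w_j=\sum_{k=1}^K|\langle\psi_k,\varphi_j\rangle|^2 .
\end{equation*}
The interchange of sums is justified because all terms are nonnegative; the kernel directions contribute zero.

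The key step is to establish the two constraints $0\le w_j\le 1$ and $\sum_j w_j\le K$. The bound $w_j\le 1$ follows from Bessel's inequality applied to $\varphi_j$ and the orthonormal system $\{\psi_k\}$, since $w_j=\sum_k|\langle\varphi_j,\psi_k\rangle|^2\le\|\varphi_j\|^2=1$. The bound $\sum_j w_j\le K$ follows from Bessel (or Parseval over the completed basis) applied to each $\psi_k$, which gives $\sum_j|\langle\psi_k,\varphi_j\rangle|^2\le\|\psi_k\|^2=1$, and then summing over $k$.

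With these constraints in hand, the conclusion is a rearrangement argument. Since $\lambda_1\ge\lambda_2\ge\cdots\ge0$, we write
\begin{equation*}
\sum_{k\le K}\lambda_k-\sum_j\lambda_jw_j=\sum_{j\le K}\lambda_j(1-w_j)-\sum_{j>K}\lambda_jw_j\ge\lambda_K\Bigl(\sum_{j\le K}(1-w_j)-\sum_{j>K}w_j\Bigr)=\lambda_K\Bigl(K-\sum_j w_j\Bigr)\ge0.
\end{equation*}
The first inequality uses $\lambda_j\ge\lambda_K$ for $j\le K$ together with $1-w_j\ge0$, and $\lambda_j\le\lambda_K$ for $j>K$ together with $w_j\ge0$.

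Equality for the KLT system itself is immediate: $\langle\mathcal{R}\varphi_k,\varphi_k\rangle=\lambda_k$ by the Fredholm equation \eqref{eq:fredholm}.

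The only delicate point is the infinite-dimensional bookkeeping. If $\mathcal{R}$ has finite rank, or fewer than $K$ positive eigenvalues, we pad with zero eigenvalues from the kernel basis so that $\lambda_1,\dots,\lambda_K$ are still defined. We must also make sure the series $\sum_j\lambda_jw_j$ converges; it does, being bounded by $\lambda_1 K$. I expect no real obstacle beyond stating these conventions carefully.
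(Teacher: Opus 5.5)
Your proof is correct. It differs from the paper's in that it proves, rather than cites, the key principle.

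The paper writes the captured energy as $\operatorname{tr}(P_V\mathcal{R})$ with $V=\spn\{\psi_1,\ldots,\psi_K\}$ and then invokes the Ky Fan maximum principle as a black box. Your weights are exactly the diagonal of that projector in the eigenbasis, $w_j=\|P_V\varphi_j\|^2$, so $\sum_j\lambda_jw_j=\operatorname{tr}(P_V\mathcal{R})$. Your constraints $0\le w_j\le 1$ and $\sum_jw_j\le K$, together with the rearrangement bound, are the standard proof of Ky Fan. The paper itself uses the same diagonal-weight device later, in the proof of Theorem~\ref{thm:blockwise-optimal}, with $w_i=(Q_\rho)_{ii}$.

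The paper's version buys brevity: it is a one-line reduction to a classical theorem. Yours is self-contained and makes explicit what the citation leaves implicit:
compactness supplies the discrete spectral expansion;
the last step $\lambda_K(K-\sum_jw_j)\ge0$ uses $\lambda_K\ge0$, i.e.\ positivity of the covariance rather than mere self-adjointness;
and the finite-rank case needs padding from the kernel so that $\varphi_1,\ldots,\varphi_K$ exist.
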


\begin{proof}
For an orthonormal system $\{\psi_k\}_{k=1}^{K}$, set $V=\spn\{\psi_1,\ldots,\psi_K\}$ and let $P_V$ be the orthogonal projection onto $V$, so that $\sum_{k=1}^{K}\langle\mathcal{R}\psi_k,\psi_k\rangle=\operatorname{tr}(P_V\mathcal{R})$. The Ky Fan maximum principle for the compact self-adjoint operator $\mathcal{R}$ states that
\[
\sum_{k=1}^{K}\lambda_k=\max_{\dim V=K}\operatorname{tr}(P_V\mathcal{R}),
\]
with the maximum attained at $V=\spn\{\varphi_1,\ldots,\varphi_K\}$~\cite{fan1949}. The stated inequality is exactly this principle, with equality on the KLT eigenfunctions.
\end{proof}

Closed-form eigenfunctions of \eqref{eq:fredholm} are known for several canonical kernels: the Wiener kernel $R(s,t)=\min(s,t)$ yields sines, the Ornstein--Uhlenbeck (first-order Markov) kernel yields the transcendental sinusoids whose discrete counterpart is the cosine basis~\cite{rao1974,jain1979}, and the Brownian bridge yields a shifted sine system. For a general kernel no closed form exists, and the symmetry theory of the next sections provides the structural information that the eigenfunctions alone do not.

\section{The Symmetry Commutant of the Covariance Operator}\label{sec:commutant}

The classical transforms arise from the KLT through symmetry. Let $G$ be a group acting on $L^2(\mathcal{T})$ by a unitary representation $\pi$. We say $R$ is $G$-invariant when $\mathcal{R}$ lies in the commutant of the representation,
\begin{equation}\label{eq:commute}
[\pi(g),\mathcal{R}]=0\qquad\text{for all } g\in G .
\end{equation}

\begin{proposition}[Commutant forces the eigenbasis]\label{prop:schur}
Let $R$ be $G$-invariant, $[\pi(g),\mathcal{R}]=0$ for all $g\in G$, and suppose $\pi$ is \emph{multiplicity-free}, so that each irreducible of $G$ occurs at most once. By Schur's lemma $\mathcal{R}$ is then scalar on each isotypic component of $\pi$, so the KLT eigenspaces are unions of isotypic components and any representation-adapted basis is a KLT eigenbasis. How completely the symmetry fixes that basis depends on the irreducibles:
\begin{enumerate}[leftmargin=2.4em]
\item[(i)] \emph{(Abelian.)} If $G$ is Abelian, all irreducibles are one-dimensional, $\mathcal{R}$ is diagonal in the character basis, and the KLT eigenfunctions are the characters of $G$, unique up to phase.
\item[(ii)] \emph{(Non-Abelian.)} For an irreducible $\rho$ of dimension $n_\rho>1$, $\mathcal{R}=\lambda_\rho I_{n_\rho}$ on its isotypic component; the representation basis is a KLT eigenbasis, but $\lambda_\rho$ is $n_\rho$-fold degenerate and the eigenbasis within the component is fixed only up to a rotation in $U(n_\rho)$.
\end{enumerate}
\end{proposition}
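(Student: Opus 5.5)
The plan is to decompose $L^2(\mathcal{T})$ into isotypic components of $\pi$ and apply Schur's lemma to $\mathcal{R}$ as an intertwiner. Since $\mathcal{R}$ is compact and self-adjoint and $G$ acts unitarily, the decomposition $L^2(\mathcal{T})=\bigoplus_\rho H_\rho$ into closed, mutually orthogonal isotypic components is available. For $G$ compact or finite this comes from Peter--Weyl; in the locally compact Abelian cases it comes from the spectral theorem applied to the commuting unitaries. Multiplicity-freeness means each $H_\rho$ is a single copy of the irreducible $\rho$.

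\emph{First step: $\mathcal{R}$ preserves each $H_\rho$.} Because $[\pi(g),\mathcal{R}]=0$, the operator $\mathcal{R}$ commutes with the isotypic projections $P_\rho$. In the compact case these are $P_\rho=n_\rho\int_G\overline{\chi_\rho(g)}\pi(g)\,dg$; in the Abelian case they are spectral projections of $\pi$. Hence $\mathcal{R}H_\rho\subseteq H_\rho$.

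\emph{Second step: Schur on each block.} The restriction $\mathcal{R}|_{H_\rho}$ intertwines the irreducible representation $\rho$ with itself. By Schur's lemma it equals $\lambda_\rho I_{H_\rho}$, and $\lambda_\rho$ is real because $\mathcal{R}$ is self-adjoint. Every eigenspace of $\mathcal{R}$ is therefore the closed sum of those $H_\rho$ with $\lambda_\rho=\lambda$. Take any representation-adapted basis, meaning an orthonormal basis of each $H_\rho$. Such a basis consists of eigenvectors of $\mathcal{R}$, so by Definition~\ref{def:ckl} and Theorem~\ref{thm:klt-opt} it is a KLT eigenbasis. The two cases then follow. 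In (i), each irreducible is one-dimensional, so $H_\rho=\mathbb{C}\,\chi_\rho$. The basis is the characters, and any unit vector in $H_\rho$ is a unimodular multiple of $\chi_\rho$, which is uniqueness up to phase. In (ii), $\mathcal{R}|_{H_\rho}=\lambda_\rho I_{n_\rho}$, so the eigenvalue has multiplicity at least $n_\rho$. Any unitary $U\in U(n_\rho)$ applied to an orthonormal basis of $H_\rho$ again gives eigenvectors. Symmetry alone therefore fixes the basis only up to such a rotation.

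\emph{Main obstacle.} The delicate part is the infinite-dimensional setting, not the algebra. Groups such as $\mathbb{R}$ (translations) or dilations (Mellin) are non-compact, so their ``irreducibles'' are generalized eigenfunctions that do not lie in $L^2$. In that situation the statement should be read through the spectral measure, or via a compact quotient such as the circle or a finite group, where the isotypic decomposition is genuinely discrete. The approach I would try first is to prove the result for compact $G$, where it is rigorous, and note that for compact $\mathcal{R}$ the nonzero eigenspaces are finite-dimensional. That finiteness forces each nonzero-eigenvalue eigenspace to be a finite sum of finite-dimensional $H_\rho$. For the non-compact Abelian examples I would treat them as the corresponding statement about joint spectral decomposition. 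A further remark concerns uniqueness in (i). It holds within each $H_\rho$, but if distinct $\rho$ share a value of $\lambda_\rho$, the KLT eigenspace is larger. The claim is then that characters form a KLT basis, unique up to phase as representation functions.
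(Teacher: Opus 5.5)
Your proposal is correct and follows the same route as the paper: restrict $\mathcal{R}$ to each multiplicity-free isotypic component, apply Schur's lemma to get $\lambda_\rho I$, and then read off the Abelian case (one-dimensional components, phase freedom only) and the non-Abelian case ($U(n_\rho)$ freedom). Two of your additions go beyond the paper's proof: you show explicitly that $\mathcal{R}$ commutes with the isotypic projections, and you note that uniqueness up to phase in (i), and exact $n_\rho$-fold degeneracy in (ii), hold only when distinct sectors carry distinct $\lambda_\rho$, which corrects the paper's claim that the $\lambda_\rho$-eigenspace is \emph{exactly} the component.
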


\begin{proof}
Multiplicity-freeness makes each isotypic component a single copy of its irreducible, so the restriction of $\mathcal{R}$ to the $\rho$-component commutes with an irreducible action and is therefore scalar by Schur's lemma, $\lambda_\rho I$; the eigenspace for $\lambda_\rho$ is exactly that component. If $n_\rho=1$ the component is one-dimensional and the eigenfunction is the character; if $n_\rho>1$ the eigenvalue is $n_\rho$-fold degenerate and every orthonormal basis of the component diagonalizes $\mathcal{R}$, so the representation basis is one admissible choice in $U(n_\rho)$.
\end{proof}

The standard correspondences are instances of Proposition~\ref{prop:schur}.

\begin{center}
\small
\begin{tabular}{lll}
\toprule
\textbf{Covariance structure} & \textbf{Commuting group} & \textbf{KLT eigenfunctions (transform)} \\
\midrule
\multicolumn{3}{l}{\emph{Abelian: eigenbasis unique up to phase}}\\
stationary, $R(s-t)$ & translation $(\mathbb{R},+)$ & $e^{i\omega t}$ (Fourier) \\
periodic, $R(s-t \bmod N)$ & cyclic $\mathbb{Z}_N$ & DFT vectors \\
even stationary, $R(|s-t|)$ & reflection $\mathbb{Z}_2$ & $\cos\omega t$ (cosine, DCT) \\
scale invariant, $R(\lambda s,\lambda t){=}R(s,t)$ & dilation $(\mathbb{R}^+,\times)$ & $t^{i\sigma}$ (Mellin) \\
stationary on the circle & rotation $SO(2)$ & $e^{in\theta}$ (Fourier series) \\
\addlinespace
\multicolumn{3}{l}{\emph{Non-Abelian: eigenbasis unique up to rotation within each irreducible}}\\
isotropic on the sphere, $R(\hat n_1\!\cdot\!\hat n_2)$ & rotation $SO(3)$ & $Y_\ell^m$ (spherical harmonics) \\
circularly symmetric on the disk & rotation $SO(2)\ltimes$ radial & $e^{in\theta}J_n(kr)$ (Fourier--Bessel) \\
exchangeable (compound symmetry) & symmetric $S_M$ & $\mathbf{1}\oplus$ contrasts (Helmert) \\
\bottomrule
\end{tabular}
\end{center}

For stationary processes the cyclic (Fourier) case is exact only asymptotically: the discrete Fourier transform diagonalizes the circulant matrix, which is asymptotically equivalent to the Toeplitz covariance as the dimension grows~\cite{gray2006}. When $R$ commutes \emph{exactly} with a group, the KLT and the group transform coincide; the substantive case, and the subject of the remainder of the paper, is when $R$ commutes with no standard group, and one seeks the symmetry that $R$ most nearly possesses.

The two horizontal bands of the table are the two regimes of Proposition~\ref{prop:schur}: for an Abelian group the characters are forced, whereas for a non-Abelian group the symmetry fixes the eigenspaces, the isotypic components, but leaves a unitary rotation free within each irreducible, because $\mathcal{R}$ is scalar there. Two further structures extend the table beyond the exact commutant. A self-similar process with stationary increments, such as fractional Brownian motion, is invariant under the affine ``$ax+b$'' group, and the continuous wavelet transform approximately decorrelates it, the wavelet analogue of the Fourier diagonalization of a stationary process~\cite{wornell1993}; the decorrelation is asymptotic, so this row lies outside the exact table. A process invariant under a finite Coxeter (reflection) group is diagonalized by the Dunkl transform~\cite{dunkl1989}, which generalizes the cosine row (the rank-one $\mathbb{Z}_2$ case) to higher-rank reflection groups.

The rows above use groups acting on the domain $\mathcal{T}$. A further family acts on the time--frequency plane and is developed in Section~\ref{sec:symplectic}: the symplectic group, whose metaplectic representation gives the fractional Fourier and linear canonical transforms, and the finite reflection groups, whose Dunkl transform generalizes the cosine row.

The multiplicity-free hypothesis of Proposition~\ref{prop:schur} is lifted in Section~\ref{sec:isotypic}, where the general (multiplicity) case gives the isotypic refinement of the KLT and connects to the weak-symmetry condition.

\begin{example}[Mixed-structure kernel]\label{ex:mixed}
On $\mathcal{T}=[-1,1]$ consider
\begin{equation}\label{eq:mixed}
R(s,t)=e^{-\alpha|s-t|}\bigl(1+\beta\,st\bigr),\qquad \alpha>0,\ \beta\ge 0 .
\end{equation}
For $\beta=0$ this is the Ornstein--Uhlenbeck kernel with asymptotically cosine eigenfunctions. For $\beta>0$ the factor $1+\beta st$ depends on absolute position, breaking stationarity, and no standard transform is exactly optimal. The kernel is, however, persymmetric ($R(-s,-t)=R(s,t)$), so it commutes with the reflection $s\mapsto-s$; the symmetry theory below recovers this reflection as the exact matched generator from $R$ alone. Since the reflection squares to the identity, it yields an even/odd parity split rather than a full sinusoidal transform; recovering the cosine basis itself would require the stronger even-stationary (Toeplitz-plus-Hankel) structure of Section~\ref{sec:dct}, which the $\beta>0$ term breaks. The parity split is not evident from the eigenfunctions, which have no closed form for $\beta>0$.
\end{example}

Figure~\ref{fig:ckl} shows the continuous KLT of this kernel for $\alpha=2$, $\beta=0.5$: the covariance, its leading eigenfunctions and eigenvalue decay, and the residuals $\delta$ of the candidate translation, reflection, and dilation generators, with reflection alone attaining $\delta=0$.

\begin{figure}[H]
\centering
\includegraphics[width=0.96\textwidth]{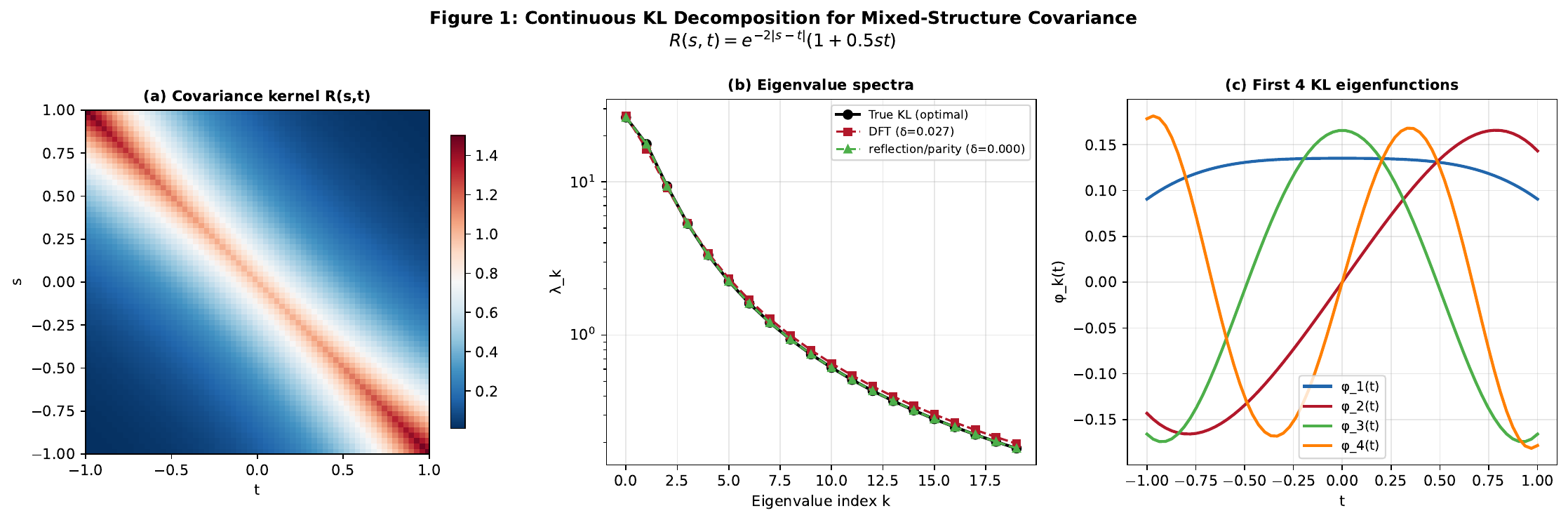}
\caption{Continuous KLT of the mixed kernel \eqref{eq:mixed} with $\alpha=2$, $\beta=0.5$: the covariance surface, the leading eigenfunctions of the integral operator \eqref{eq:fredholm}, the eigenvalue decay, and the generator residuals $\delta(A,R)$ for the translation, reflection, and dilation generators. The reflection residual is zero because the kernel is persymmetric.}
\label{fig:ckl}
\end{figure}

\section{Multiplicity, the Isotypic Refinement, and Weak Symmetry}\label{sec:isotypic}

Proposition~\ref{prop:schur} assumed each irreducible occurs without multiplicity. The general case is the complete statement, and it is the one met in practice: it is where the matched transform ceases to be fully determined by the group and acquires a small data-dependent part.

\subsection{The isotypic refinement}

Let $\pi$ decompose the space into the isotypic sum $\mathcal{H}=\bigoplus_\rho(\mathbb{C}^{m_\rho}\otimes V_\rho)$, where $V_\rho$ is the irreducible of dimension $n_\rho$ and $m_\rho$ its multiplicity.

\begin{theorem}[Isotypic structure of an invariant covariance]\label{thm:isotypic}
If $R$ is $G$-invariant then, in a symmetry-adapted basis,
\begin{equation}\label{eq:isotypic}
R=\bigoplus_\rho\bigl(M_\rho\otimes I_{n_\rho}\bigr),
\end{equation}
with $M_\rho$ a Hermitian $m_\rho\times m_\rho$ matrix, the \emph{reduced multiplicity covariance} of the $\rho$-isotypic component. The KLT eigenvalues are those of the $M_\rho$, each repeated $n_\rho$ times, and the KLT eigenvectors are $\{v\otimes e:\ v\ \text{an eigenvector of }M_\rho,\ e\in V_\rho\}$. The matched KLT is therefore obtained by (i) projecting $R$ onto each isotypic component and (ii) diagonalizing the $m_\rho\times m_\rho$ matrices $M_\rho$; the eigenbasis is unique up to the eigenbasis of each $M_\rho$ (unique when $M_\rho$ has simple spectrum) and an arbitrary unitary basis of each $V_\rho$.
\end{theorem}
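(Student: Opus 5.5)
The plan is to prove the block form \eqref{eq:isotypic} from the structure of the commutant of $\pi$, and then read off the spectral statements from elementary properties of Kronecker products. I will work with a finite-dimensional space, or a finite group or a compact group with unitary $\pi$ so that $\pi$ is completely reducible. Fix the isotypic decomposition $\mathcal{H}=\bigoplus_\rho(\mathbb{C}^{m_\rho}\otimes V_\rho)$. In it, $\pi(g)$ acts as $\bigoplus_\rho(I_{m_\rho}\otimes\rho(g))$. A symmetry-adapted basis is a product basis $f_i\otimes e_j$ inside each component.

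\emph{Step 1: block-diagonality.} Write $R$ in block form $R_{\rho\sigma}\colon \mathbb{C}^{m_\sigma}\otimes V_\sigma\to\mathbb{C}^{m_\rho}\otimes V_\rho$. The $G$-invariance hypothesis \eqref{eq:commute} gives $(I\otimes\rho(g))R_{\rho\sigma}=R_{\rho\sigma}(I\otimes\sigma(g))$. Contract with fixed $u\in\mathbb{C}^{m_\rho}$ and $w\in\mathbb{C}^{m_\sigma}$, i.e. take the partial map $T_{u,w}=(u^\ast\otimes I)R_{\rho\sigma}(w\otimes I)\colon V_\sigma\to V_\rho$. Then $T_{u,w}$ is an intertwiner between inequivalent irreducibles for $\rho\ne\sigma$, so Schur's lemma gives $T_{u,w}=0$. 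Since this holds for all $u,w$, we get $R_{\rho\sigma}=0$.

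\emph{Step 2: the diagonal blocks.} For $\rho=\sigma$ the same contraction makes $T_{u,w}$ commute with the irreducible $\rho$. By Schur's lemma (over $\mathbb{C}$) it equals $\mu(u,w)I_{n_\rho}$. The map $(u,w)\mapsto\mu(u,w)$ is sesquilinear, so $\mu(u,w)=u^\ast M_\rho w$ for a unique $m_\rho\times m_\rho$ matrix $M_\rho$. Hence $R_{\rho\rho}=M_\rho\otimes I_{n_\rho}$. Hermiticity of $M_\rho$ follows from $R=R^\ast$, because $(M_\rho\otimes I)^\ast=M_\rho^\ast\otimes I$. This step is the one I expect to be the only real obstacle. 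The main care needed is to justify that the isotypic component genuinely factors as $\mathbb{C}^{m_\rho}\otimes V_\rho$ with $G$ acting only on the second factor. This amounts to choosing a fixed model $V_\rho$ and identifying the multiplicity space with $\mathrm{Hom}_G(V_\rho,\mathcal{H})$, which I will take as standard. In the continuum case, compactness of $\mathcal{R}$ together with finite $n_\rho$ keeps each block well defined.

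\emph{Step 3: spectrum and eigenvectors.} For $M_\rho v=\mu v$ and any $e\in V_\rho$, we have $(M_\rho\otimes I)(v\otimes e)=\mu\,v\otimes e$. Taking an orthonormal eigenbasis of $M_\rho$ and an orthonormal basis of $V_\rho$ gives $m_\rho n_\rho$ orthonormal eigenvectors, which span the component. So the eigenvalues of $R$ are those of the $M_\rho$, each repeated $n_\rho$ times. The two-stage procedure then follows: first project onto the components, using the fixed group transform, and then diagonalize each $M_\rho$.

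\emph{Step 4: uniqueness.} If $M_\rho$ has simple spectrum and no eigenvalue is shared with another block, the eigenspace for $\mu$ is exactly $v\otimes V_\rho$. Within it, any orthonormal basis is admissible, which is the stated freedom in the basis of $V_\rho$. If $M_\rho$ has repeated eigenvalues, the extra freedom is a rotation inside that eigenspace of $M_\rho$, matching the qualifier ``unique up to the eigenbasis of each $M_\rho$''. Coincidences of eigenvalues across different $\rho$ would enlarge the freedom further, so I will state the uniqueness claim modulo such accidental degeneracies.
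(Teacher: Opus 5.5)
Your proposal is correct and follows the same route as the paper: Schur's lemma forces the cross-isotypic blocks to vanish and each diagonal block into $\mathrm{End}(\mathbb{C}^{m_\rho})\otimes I_{n_\rho}$, after which the spectrum and the eigenvectors $v\otimes e$ are read off from the Kronecker form. Your contraction argument with $T_{u,w}$ spells out the step the paper states in one line, and your caveat about eigenvalue coincidences across different $\rho$ is a legitimate sharpening of the uniqueness clause, which the paper's statement leaves implicit.
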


\begin{proof}
By Schur's lemma the space of $G$-equivariant endomorphisms of $\mathbb{C}^{m_\rho}\otimes V_\rho$ is $\mathrm{End}(\mathbb{C}^{m_\rho})\otimes I_{n_\rho}$, and a $G$-invariant operator carries nothing between distinct isotypic summands; hence $R$ has the form~\eqref{eq:isotypic}. The spectrum of $M_\rho\otimes I_{n_\rho}$ is that of $M_\rho$ with each eigenvalue repeated $n_\rho$ times, with eigenvectors $v\otimes e$. The multiplicity-free case $m_\rho\equiv1$ returns Proposition~\ref{prop:schur}, with $M_\rho$ a scalar.
\end{proof}

\begin{remark}[Reynolds projection and the isotypic pipeline]\label{rem:reynolds}
The projection in step~(i) is the Reynolds (group-averaging) operator onto the commutant, applied componentwise; in the algebraic-diversity setting it is the isotypic (Schur) separation that splits an observation into its symmetry sectors, after which each reduced covariance $M_\rho$ is diagonalized separately. Theorem~\ref{thm:isotypic} states that this two-stage procedure, group projection followed by a small eigendecomposition, is exactly the KLT when $R$ is invariant; it is the multiplicity-correct version of ``apply the group transform.''
\end{remark}

\begin{proposition}[Factorization of the matched KLT]\label{prop:factor}
For a $G$-invariant $R$ the KLT factors as $U=F_G\,D$, where $F_G$ is the symmetry-adapted transform that block-diagonalizes the $G$-action into isotypic components and depends only on $G$, and $D=\bigoplus_\rho(W_\rho\otimes I_{n_\rho})$ with $W_\rho$ the eigenbasis of $M_\rho$. The data dependence of the KLT is confined to the multiplicity blocks $W_\rho$; the remainder is the fixed group transform. The factor $F_G$ is the simultaneous block-diagonalization of the representation, the isotypic (Schur) decomposition, and $D$ is the eigendecomposition of $R$ within each block; the departure of $R$ from commuting with the group, which no symmetry-adapted basis can remove, is the commutativity residual $\delta$ of Section~\ref{sec:gevp}. For an Abelian group the generators commute and $F_G$ reduces to an ordinary joint diagonalization of $R$ with them~\cite{cardoso1996}.
\end{proposition}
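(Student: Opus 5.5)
The plan is to read the factorization off Theorem~\ref{thm:isotypic}, building the unitary $F_G$ explicitly from the representation alone. First I would fix, for each irreducible $\rho$ occurring in $\pi$, an orthonormal basis $e_1,\dots,e_{n_\rho}$ of a model of $V_\rho$. I would then choose an orthonormal basis of $\mathrm{Hom}_G(V_\rho,\mathcal{H})\cong\mathbb{C}^{m_\rho}$, that is, $m_\rho$ isometric intertwiners $T_{\rho,1},\dots,T_{\rho,m_\rho}$ with mutually orthogonal ranges. The vectors $T_{\rho,i}e_j$ form an orthonormal basis of $\mathcal{H}$, and collecting them as columns defines $F_G$.

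Two properties of $F_G$ need checking. By construction $F_G^{\ast}\pi(g)F_G=\bigoplus_\rho(I_{m_\rho}\otimes\rho(g))$, so $F_G$ block-diagonalizes the $G$-action into isotypic components. The construction references only $\pi$ and never $R$, so $F_G$ depends only on $G$ (and the chosen representation). This is precisely the symmetry-adapted basis presupposed in Theorem~\ref{thm:isotypic}, and applying that theorem gives
\[
F_G^{\ast}RF_G=\bigoplus_\rho(M_\rho\otimes I_{n_\rho}).
\]

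Next I would diagonalize each Hermitian reduced matrix, $M_\rho=W_\rho\Lambda_\rho W_\rho^{\ast}$ with $W_\rho$ unitary. The mixed-product rule for Kronecker products gives
\[
M_\rho\otimes I=(W_\rho\otimes I)(\Lambda_\rho\otimes I)(W_\rho\otimes I)^{\ast}.
\]
With $D=\bigoplus_\rho(W_\rho\otimes I_{n_\rho})$ this yields $R=(F_GD)\,\Lambda\,(F_GD)^{\ast}$, where $\Lambda$ is diagonal. So $U=F_GD$ is a KLT eigenbasis, and it coincides with the eigenvectors $v\otimes e$ described in Theorem~\ref{thm:isotypic}.

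The data dependence is confined to $D$. Any KLT eigenbasis that is compatible with the symmetry must, by the uniqueness clause of Theorem~\ref{thm:isotypic}, differ from $U$ only by a choice of eigenbasis within each $M_\rho$ and a unitary on each $V_\rho$. The latter can be absorbed into $F_G$, so it does not depend on $R$.

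The main obstacle is being honest about two points.
\begin{itemize}
\item The factorization is canonical only up to the $U(m_\rho)$ freedom in choosing the $T_{\rho,i}$, which trades off against $W_\rho$. I would state the result as existence of such a factorization, with $F_G$ chosen once per group.
\item The remark about $\delta$ is interpretive. For non-invariant $R$, conjugation by $F_G$ leaves off-block-diagonal terms that no choice of $D$ removes. I would note that these vanish if and only if $[\pi(g),R]=0$, which for a generator $A$ is $\delta(A,R)=0$, deferring the quantitative statement to Section~\ref{sec:gevp}.
\end{itemize}

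For the Abelian case, all $n_\rho=1$. There $F_G$ is the simultaneous eigenbasis of the commuting unitaries $\pi(g)$, equivalently of the commuting generators. Since $R$ commutes with all of them, $F_GD$ is a joint diagonalization of $R$ together with the generators.
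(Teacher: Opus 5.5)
Your proposal is correct and follows the paper's route: Theorem~\ref{thm:isotypic} gives $F_G^\ast R F_G=\bigoplus_\rho(M_\rho\otimes I_{n_\rho})$, and diagonalizing each $M_\rho$ supplies $D$. Your explicit intertwiner construction of $F_G$, the $U(m_\rho)$ trade-off between $F_G$ and $W_\rho$, and the observation that it is $F_G D$ rather than $F_G$ alone that jointly diagonalizes $R$ with the Abelian generators are useful sharpenings of the paper's terser argument.

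The one imprecision is in your interpretive $\delta$ remark. For non-Abelian $G$, commuting with every $\pi(g)$ requires more than vanishing of the inter-isotypic blocks: each isotypic block must also have the form $M_\rho\otimes I_{n_\rho}$. So your ``off-block terms'' must be read relative to that finer structure for the stated equivalence to hold.
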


\begin{proof}
$F_G$ block-diagonalizes $R$ to $\bigoplus_\rho(M_\rho\otimes I_{n_\rho})$ by Theorem~\ref{thm:isotypic}, and diagonalizing each $M_\rho$ by $W_\rho$ supplies the residual rotation $D$. The basis $F_G$ simultaneously block-diagonalizes the generators $\pi(g)$ into their isotypic blocks, which is all a non-commuting family admits, and $D$ then diagonalizes $R$ within each block, so $F_GD$ diagonalizes $R$ while leaving the generators block-diagonal. The mass measuring the failure of $R$ to commute with the representation, and hence to share the group's block structure, is $\sum_g\|[\pi(g),R]\|_F^2$ up to normalization, the residual of Section~\ref{sec:gevp}.
\end{proof}

\subsection{Weak symmetry and partial commutation}

The invariance $[\pi(g),R]=0$ is, in the terminology of Doat and Vanrietvelde~\cite{doat2026can}, the \emph{weak} symmetry condition: $R$ lies in the commutant, equivalently $R$ is block-diagonal in the charge (isotypic) sectors, equivalently $R$ is fixed by the weak twirl $\mathcal{T}_W(R)=\tfrac{1}{|G|}\sum_g\pi(g)R\pi(g)^\dagger$. The \emph{strong} condition restricts $R$ to the trivial (zero-charge) sector alone. Theorem~\ref{thm:isotypic} is the harmonic-analysis content of the weak condition: weak symmetry is exactly the block structure~\eqref{eq:isotypic} that makes the KLT group-aligned, and the weak twirl is the Reynolds projection of Remark~\ref{rem:reynolds}.

For an Abelian symmetry with charge generator $C$ and sectors of charge $\lambda_\rho$ (so $\pi(g)=e^{igC}$ and $C=\bigoplus_\rho\lambda_\rho P_\rho$), the residual decomposes by sector.

\begin{proposition}[Sectorwise residual]\label{prop:sector}
With $C=\bigoplus_\rho\lambda_\rho P_\rho$,
\begin{equation}\label{eq:sector}
\|[C,R]\|_F^2=\sum_{\rho\neq\rho'}(\lambda_\rho-\lambda_{\rho'})^2\,\|P_\rho R P_{\rho'}\|_F^2 .
\end{equation}
The residual is the off-sector mass weighted by squared charge differences; it vanishes iff $R$ is block-diagonal, i.e.\ weakly symmetric.
\end{proposition}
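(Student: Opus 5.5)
The proof is a direct block computation in the sector decomposition. The projections $P_\rho$ are orthogonal, mutually orthogonal, and sum to the identity. Inserting $I=\sum_\rho P_\rho$ on both sides of $R$ gives
\[
R=\sum_{\rho,\rho'}P_\rho R P_{\rho'}.
\]
Using $CP_\rho=P_\rho C=\lambda_\rho P_\rho$, each block satisfies
\[
P_\rho [C,R] P_{\rho'}=\lambda_\rho P_\rho R P_{\rho'}-P_\rho R P_{\rho'}\lambda_{\rho'}=(\lambda_\rho-\lambda_{\rho'})\,P_\rho R P_{\rho'} .
\]
So in the symmetry-adapted basis the commutator is the block matrix obtained from the blocks of $R$ by multiplying the $(\rho,\rho')$ block by the scalar $\lambda_\rho-\lambda_{\rho'}$. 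The diagonal blocks, where $\rho=\rho'$, are killed.

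The second step is to take the Frobenius norm. The blocks $P_\rho X P_{\rho'}$ of any operator $X$ are mutually orthogonal in the Hilbert--Schmidt inner product, since
\[
\Tr\bigl((P_\rho X P_{\rho'})^\ast P_\sigma X P_{\sigma'}\bigr)=0
\]
unless $\rho=\sigma$ and $\rho'=\sigma'$, by the cyclicity of the trace and $P_{\rho'}P_{\sigma'}=\delta_{\rho'\sigma'}P_{\rho'}$. It follows that $\Fro{X}^2=\sum_{\rho,\rho'}\Fro{P_\rho X P_{\rho'}}^2$. Applying this to $X=[C,R]$ and substituting the block formula gives \eqref{eq:sector}. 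The diagonal terms drop out because their coefficient is zero, so the sum may be restricted to $\rho\neq\rho'$.

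For the vanishing criterion, I would label sectors by distinct charges, so that $\lambda_\rho\neq\lambda_{\rho'}$ whenever $\rho\neq\rho'$. This is the natural convention: if two labels shared a charge they would be merged into one eigenspace $P_\rho$ of $C$. Every weight $(\lambda_\rho-\lambda_{\rho'})^2$ with $\rho\neq\rho'$ is then strictly positive. The sum therefore vanishes iff every off-diagonal block $P_\rho R P_{\rho'}$ vanishes, which means $R=\sum_\rho P_\rho R P_\rho$ is block-diagonal. That is the weak symmetry condition of the preceding discussion.

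The only point needing care is this distinct-charge convention. The algebra itself is routine. In the continuum setting one also needs $[C,R]$ to be Hilbert--Schmidt so that the norms make sense; with finitely many sectors and a bounded $C$ this follows from $R$ being Hilbert--Schmidt, and I would state it as a standing assumption rather than treat it as an obstacle.
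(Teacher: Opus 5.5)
Your proof is correct and follows the paper's argument. Like the paper, you compute the $(\rho,\rho')$ block of $[C,R]$ as $(\lambda_\rho-\lambda_{\rho'})P_\rho R P_{\rho'}$ and sum the squared Frobenius norms of the blocks. You also make explicit two points the paper leaves implicit: the Hilbert--Schmidt orthogonality of the blocks, and the distinct-charge convention that the ``iff'' needs, since otherwise off-diagonal blocks between equal-charge sectors would go unpenalized.
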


\begin{proof}
The $(\rho,\rho')$ block of $[C,R]$ is $(\lambda_\rho-\lambda_{\rho'})P_\rho R P_{\rho'}$; squaring and summing the Frobenius norms gives~\eqref{eq:sector}. This is the eigenvalue-difference representation of Section~\ref{sec:pcf} applied to the diagonal generator $C$.
\end{proof}

Equation~\eqref{eq:sector} makes precise a graded, partial notion of symmetry. Say $R$ is \emph{partially $G$-invariant on a subspace} $\mathcal{V}$ if $\mathcal{V}$ is $G$-invariant and $P_\mathcal{V}RP_\mathcal{V}$ commutes with the restricted action, while $R$ itself need not. On such a $\mathcal{V}$ the KLT eigenfunctions are the group transform's, and the residual~\eqref{eq:sector} measures the leakage off $\mathcal{V}$, sector by sector. The strong-symmetry projection onto a single sector is the limiting case in which all but one block is discarded.

Proposition~\ref{prop:schur} states that an exact commutant forces the group eigenbasis. The same conclusion holds approximately: when a covariance only nearly commutes with the group action, its KLT eigenspaces stay close to the symmetry-adapted eigenspaces forced by $G$, with a deviation controlled by the commutativity defect $\|R-\mathcal{T}_G(R)\|$ and the spectral gap.

\begin{theorem}[Approximate symmetry implies approximate KLT reduction]\label{thm:approx-klt}
Let $R$ be a compact self-adjoint covariance operator on a Hilbert space $\mathcal{H}$, and let $G$ be a compact group with unitary representation $\pi$. Let
\[
\mathcal{T}_G(R)=\int_G \pi(g)R\pi(g)^\ast\,dg
\]
be the Reynolds projection of $R$ onto the commutant of $G$, and set $R_0=\mathcal{T}_G(R)$, $E=R-R_0$. Assume $R_0$ has an isolated spectral cluster $\Lambda$ separated from the rest of $\sigma(R_0)$ by a gap $\eta=\operatorname{dist}(\Lambda,\sigma(R_0)\setminus\Lambda)>0$. Let $P_0$ be the spectral projector of $R_0$ for $\Lambda$ and $P$ the corresponding spectral projector of $R$. If $\|E\|<\eta/2$, then
\begin{equation}\label{eq:approx-klt}
\|P-P_0\|\le\frac{2\|R-\mathcal{T}_G(R)\|}{\eta}.
\end{equation}
\end{theorem}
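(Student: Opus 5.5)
We will prove \eqref{eq:approx-klt} by the resolvent (Riesz projection) method for a self-adjoint perturbation. First we check the hypotheses. Since $\mathcal{T}_G$ is an average of unitary conjugates, $R_0$ is compact and self-adjoint, hence so is $E=R-R_0$, and $\|E\|$ is the operator norm of $R-\mathcal{T}_G(R)$. We then choose a contour $\Gamma$ in $\mathbb{C}$ that encloses $\Lambda$ and stays at distance at least $\eta/2$ from all of $\sigma(R_0)$. For example, when $\Lambda$ lies in an interval $[a,b]$, take the boundary of the rectangle with real sides at $a-\eta/2$ and $b+\eta/2$, closed far off the real axis.

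Because $R_0$ is self-adjoint, $\|(R_0-z)^{-1}\|=1/\operatorname{dist}(z,\sigma(R_0))\le 2/\eta$ on $\Gamma$. By Weyl's inequality for self-adjoint perturbations, $\sigma(R)$ lies in the $\|E\|$-neighbourhood of $\sigma(R_0)$. Since $\|E\|<\eta/2$, no eigenvalue of $R$ crosses $\Gamma$, so $P=-\frac{1}{2\pi i}\oint_\Gamma (R-z)^{-1}dz$ is well defined. Continuity of the Riesz projection along the path $R_0+tE$, $t\in[0,1]$, shows that $P$ has the same rank as $P_0$. The second resolvent identity then gives
\[
P-P_0=\frac{1}{2\pi i}\oint_\Gamma (R-z)^{-1}E(R_0-z)^{-1}\,dz .
\]

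Estimating this integral directly gives a bound with a contour-length factor, not $2/\eta$. The sharp constant therefore needs a different route, and we expect this step to be the main obstacle. We would use the Davis--Kahan $\sin\Theta$ argument instead of the contour estimate. Let $Q=I-P$ and $Q_0=I-P_0$. Then $\|P-P_0\|=\max(\|QP_0\|,\|Q_0P\|)$. Take the identity
\[
Q R P_0 - Q P_0 R_0 = Q E P_0 ,
\]
which holds because $R=R_0+E$ and $R_0P_0=P_0R_0$. Rewriting it as the Sylvester equation $(QRQ)X-X(P_0R_0P_0)=QEP_0$ for $X=QP_0$, we note that the spectrum of $R$ on $\operatorname{ran}Q$ lies outside the $(\eta-\|E\|)$-neighbourhood of $\Lambda$, by Weyl's inequality together with the rank count. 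The Bhatia--Davis--McIntosh bound, or in the interval-separated case the Davis--Kahan bound, then yields
\[
\|QP_0\|\le\frac{\|E\|}{\eta-\|E\|}.
\]
This is at most $2\|E\|/\eta$ precisely because $\|E\|\le\eta/2$. The symmetric estimate for $\|Q_0P\|$ follows the same way, with the roles of $R$ and $R_0$ exchanged: it uses the gap of $R_0$ directly together with the separation of $\sigma(R|_{\operatorname{ran}P})$ from $\sigma(R_0)\setminus\Lambda$. Combining the two estimates gives \eqref{eq:approx-klt}.

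The hard part is the geometry of the separation. If $\Lambda$ is not an interval separated from the rest of the spectrum, for instance a cluster lying inside a gap of the other eigenvalues, the Sylvester bound carries a constant $\pi/2$. In that case we would fall back on the contour estimate, arranging $\Gamma$ to hug $\Lambda$ tightly enough that the $2/\eta$ constant survives, or we would state the bound with the Bhatia--Davis--McIntosh constant. For the interval-separated case the factor $2$ comes exactly from $\eta-\|E\|\ge\eta/2$.
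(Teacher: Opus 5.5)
Your argument is the paper's own: its proof is a one-line appeal to the Davis--Kahan $\sin\theta$ theorem for the cluster of $R_0$ under $\|E\|<\eta/2$, and your Sylvester-equation bound $\max(\|QP_0\|,\|Q_0P\|)\le\|E\|/(\eta-\|E\|)\le 2\|E\|/\eta$ is exactly what that citation contains. Your closing worry aims at the wrong case, though: a cluster lying in a gap of the other eigenvalues is already covered with constant $1$, since Davis--Kahan needs only one of the two sets inside some $[a,b]$ and the other outside $(a-\delta,b+\delta)$. The case that escapes both your proof and the paper's is genuine interlacing of $\Lambda$ with $\sigma(R_0)\setminus\Lambda$. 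There Bhatia--Davis--McIntosh gives only $\tfrac{\pi}{2}\|E\|/(\eta-\|E\|)$, which is at most $2\|E\|/\eta$ only for $\|E\|\le(1-\pi/4)\eta$. The contour route cannot restore the constant either: even its best single circle, of radius $\eta/2$, gives $\|E\|/(\eta/2-\|E\|)>2\|E\|/\eta$.
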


\begin{proof}
The operator $R_0=\mathcal{T}_G(R)$ commutes with $\pi(g)$ for every $g\in G$, so its spectral subspaces are symmetry-adapted. Writing $R=R_0+E$ with $E=R-\mathcal{T}_G(R)$ and applying the Davis--Kahan $\sin\theta$ theorem to the isolated cluster of $R_0$ separated by $\eta$, with $\|E\|<\eta/2$, gives $\|P-P_0\|\le 2\|E\|/\eta$. Substituting $E=R-\mathcal{T}_G(R)$ gives~\eqref{eq:approx-klt}.
\end{proof}

The bound is global, controlling the whole spectral cluster at once. The sectorwise refinement, which selects \emph{which} blocks to treat by the group transform, is the content of the blockwise result below.

\subsection{The blockwise group transform: eigenspaces and coding gain}

Write the covariance as $R=R_0+E$ with block-diagonal part $R_0=\sum_\rho P_\rho R P_\rho$ and off-sector part $E=\sum_{\rho\ne\rho'}P_\rho R P_{\rho'}$. The \emph{blockwise group transform} $U_G$ uses, on each sector, the eigenvectors of $A_\rho=P_\rho R P_\rho$; it is the KLT of $R_0$ and is forced by Schur's lemma to be symmetry-adapted. Two questions arise. First, how close are its sector eigenspaces to those of the true KLT of $R$? Second, what is the exact coding penalty for using it in place of the optimal KLT? The first has a small-leakage answer through Davis--Kahan; the second has an exact answer, valid at any leakage, through a determinant identity.

\begin{theorem}[Blockwise eigenspace closeness and truncation-energy discrepancy]\label{thm:blockwise-optimal}
Let $\mathcal H=\bigoplus_{\rho\in\widehat G}\mathcal H_\rho$ be the isotypic decomposition with projections $P_\rho$, let $R$ be a compact positive self-adjoint covariance, and write $R=R_0+E$ as above. Fix a sector $\rho$, set $A_\rho=P_\rho R P_\rho$, let $\Lambda_\rho=\{\lambda_1(A_\rho),\dots,\lambda_{k_\rho}(A_\rho)\}$ collect its top $k_\rho$ eigenvalues, and let $Q_\rho$ be the corresponding rank-$k_\rho$ spectral projector of $A_\rho$ (extended by zero outside $\mathcal H_\rho$). Let
\begin{equation}\label{eq:blockwise-gap}
\eta_\rho=\operatorname{dist}\bigl(\Lambda_\rho,\,\operatorname{spec}(R_0)\setminus\Lambda_\rho\bigr)>0
\end{equation}
be the separation of this cluster from the rest of the block-diagonal spectrum $\operatorname{spec}(R_0)=\bigcup_{\rho'}\operatorname{spec}(A_{\rho'})$, taken across all sectors and not only within $\rho$, and let $\ell_\rho=\|P_\rho R(I-P_\rho)\|$ be the off-sector leakage. If $\|E\|<\eta_\rho/2$, the eigenvalue cluster of $R$ perturbed from $\Lambda_\rho$ is well-defined, and, writing $\widetilde Q_\rho$ for its rank-$k_\rho$ spectral projector,
\begin{equation}\label{eq:blockwise-proj}
\|\widetilde Q_\rho-Q_\rho\|\le\frac{2\ell_\rho}{\eta_\rho},
\end{equation}
and the rank-$k_\rho$ truncation-energy discrepancy of the group-adapted projector obeys the second-order bound
\begin{equation}\label{eq:blockwise-energy}
\bigl|\operatorname{tr}(R\widetilde Q_\rho)-\operatorname{tr}(RQ_\rho)\bigr|\le\bigl(\lambda_{\max}(R)-\lambda_{\min}(R)\bigr)\sum_{j=1}^{k_\rho}\sin^2\theta_j\le 4k_\rho\bigl(\lambda_{\max}(R)-\lambda_{\min}(R)\bigr)\frac{\ell_\rho^2}{\eta_\rho^2},
\end{equation}
where $\theta_1,\dots,\theta_{k_\rho}$ are the principal angles between $\operatorname{range}(Q_\rho)$ and $\operatorname{range}(\widetilde Q_\rho)$; the discrepancy is quadratic in the relative leakage $\ell_\rho/\eta_\rho$. When $\Lambda_\rho$ comprises the $k_\rho$ largest eigenvalues of $R_0$, so that $\widetilde Q_\rho$ is the dominant rank-$k_\rho$ eigenspace of $R$, the difference is moreover nonnegative and the group-adapted truncation can only lose energy.
\end{theorem}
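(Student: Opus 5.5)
The plan is to treat $R=R_0+E$ as a perturbation of the block-diagonal operator $R_0$. Its spectral projector $Q_\rho$ for the cluster $\Lambda_\rho$ is exactly the group-adapted projector, since $R_0=\bigoplus_{\rho'}A_{\rho'}$ and $\Lambda_\rho\subset\operatorname{spec}(A_\rho)$. The gap in \eqref{eq:blockwise-gap} is measured against all of $\operatorname{spec}(R_0)$, so $\Lambda_\rho$ is an isolated cluster of $R_0$.

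\textbf{Cluster and projector bound.} Weyl's inequality moves every eigenvalue by at most $\|E\|<\eta_\rho/2$. Hence exactly $k_\rho$ eigenvalues of $R$ (with multiplicity) lie within $\eta_\rho/2$ of $\Lambda_\rho$, and they are separated from the rest. This makes $\widetilde Q_\rho$ well defined, for instance as a Riesz projector on a contour around that neighbourhood.

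For \eqref{eq:blockwise-proj} I would not apply Theorem~\ref{thm:approx-klt} verbatim, because that would give $\|E\|$ rather than $\ell_\rho$. Instead I would use the Davis--Kahan $\sin\theta$ theorem in residual form. With $X$ an orthonormal basis of $\operatorname{range}(Q_\rho)\subset\mathcal H_\rho$ and $A_\rho X=X\Lambda$, the residual is
\[
RX-X\Lambda=(R-R_0)X=EX=(I-P_\rho)RP_\rho X,
\]
because $E$ maps $\mathcal H_\rho$ entirely off-sector. Therefore $\|RX-X\Lambda\|\le\|(I-P_\rho)RP_\rho\|=\ell_\rho$, using self-adjointness of $R$. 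The perturbed complementary spectrum stays at distance at least $\eta_\rho/2$ from $\Lambda_\rho$, so Davis--Kahan gives $\|\sin\Theta\|\le\ell_\rho/(\eta_\rho/2)$. Since $\|\widetilde Q_\rho-Q_\rho\|=\|\sin\Theta\|$ for equal-rank projectors, this is \eqref{eq:blockwise-proj}. The main obstacle is getting the gap bookkeeping right so the constant is exactly $2$ and the numerator is $\ell_\rho$ rather than $\|E\|$. I would spell out which gap is used: the distance between the unperturbed $\Lambda_\rho$ and the perturbed remainder.

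\textbf{Energy bound.} Choose principal-vector bases $x_j$ of $\operatorname{range}(Q_\rho)$ and $y_j$ of $\operatorname{range}(\widetilde Q_\rho)$ with $y_j=\cos\theta_j\,x_j+\sin\theta_j\,z_j$, where the $z_j$ are orthonormal and orthogonal to $\operatorname{range}(Q_\rho)$. Then
\[
\operatorname{tr}(R\widetilde Q_\rho)-\operatorname{tr}(RQ_\rho)=\sum_j\Bigl[\sin^2\theta_j\bigl(\langle Rz_j,z_j\rangle-\langle Rx_j,x_j\rangle\bigr)+2\sin\theta_j\cos\theta_j\,\mathrm{Re}\langle Rx_j,z_j\rangle\Bigr].
\]
The cross term is not automatically zero. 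So I would instead write the difference as $\operatorname{tr}\bigl(R(\widetilde Q_\rho-Q_\rho)\bigr)$ and use that $\widetilde Q_\rho$ is invariant for $R$. Expanding $Q_\rho$ in the eigenbasis of $R$, the difference equals
\[
\sum_i\mu_i\,\bigl\|Q_\rho^\perp u_i\bigr\|^2-\sum_k\nu_k\,\bigl\|Q_\rho w_k\bigr\|^2 ,
\]
with $u_i,\mu_i$ the eigenpairs inside $\operatorname{range}(\widetilde Q_\rho)$ and $w_k,\nu_k$ those outside it. Both weight families have total mass $\sum_j\sin^2\theta_j=\|\widetilde Q_\rho-Q_\rho\|_F^2/2$. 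The difference is thus bounded by $(\lambda_{\max}-\lambda_{\min})\sum_j\sin^2\theta_j$. Bounding each $\sin^2\theta_j$ by $(2\ell_\rho/\eta_\rho)^2$ gives the final factor $4k_\rho$.

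\textbf{Sign in the dominant case.} If $\Lambda_\rho$ consists of the top $k_\rho$ eigenvalues of $R_0$, then the perturbed cluster consists of the top $k_\rho$ eigenvalues of $R$. So $\widetilde Q_\rho$ is a dominant eigenspace of $R$, and Ky Fan's principle (as in Theorem~\ref{thm:klt-opt}) gives $\operatorname{tr}(R\widetilde Q_\rho)\ge\operatorname{tr}(RQ_\rho)$, which is the claimed nonnegativity.
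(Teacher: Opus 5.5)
Your proposal is correct and follows essentially the same route as the paper. The steps match one for one: Weyl to isolate the cluster; the residual-form Davis--Kahan bound with residual $EQ_\rho=(I-P_\rho)RP_\rho Q_\rho$ of norm at most $\ell_\rho$ and gap $\eta_\rho-\|E\|>\eta_\rho/2$; the identity in an eigenbasis of $R$ whose two weight families each sum to $\sum_j\sin^2\theta_j$; and Ky Fan for the sign. Your bounding of every $\sin^2\theta_j$ by the largest gives the same factor $4k_\rho$ as the paper's rank-$2k_\rho$ Frobenius conversion.

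One caveat applies equally to the paper's proof. The operator-norm Davis--Kahan theorem with constant $1$ requires $\Lambda_\rho$ and the perturbed complementary spectrum to be separated by an interval. That can fail when eigenvalues of other sectors interlace $\Lambda_\rho$, which is precisely what the cross-sector gap allows. In that case only the Bhatia--Davis--McIntosh constant $\pi/2$ is available in general for \eqref{eq:blockwise-proj}. By contrast, \eqref{eq:blockwise-energy} survives unchanged: the Frobenius $\sin\Theta$ theorem holds under mere set separation and gives $\sum_j\sin^2\theta_j\le\|EQ_\rho\|_F^2/(\eta_\rho-\|E\|)^2\le 4k_\rho\ell_\rho^2/\eta_\rho^2$.
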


\begin{proof}
Track the cluster $\Lambda_\rho$ of the block-diagonal $R_0=\bigoplus_{\rho'}A_{\rho'}$ under the off-block perturbation $E=R-R_0$. By hypothesis $\Lambda_\rho$ is separated from $\operatorname{spec}(R_0)\setminus\Lambda_\rho$ by $\eta_\rho$ and $\|E\|<\eta_\rho/2$, so Weyl's inequality keeps the $k_\rho$ eigenvalues of $R$ perturbed from $\Lambda_\rho$ isolated from the rest of the spectrum and $\widetilde Q_\rho$ is well-defined. The Davis--Kahan $\sin\Theta$ theorem in residual form gives $\|\widetilde Q_\rho-Q_\rho\|\le\|EQ_\rho\|/(\eta_\rho-\|E\|)$; since $Q_\rho=P_\rho Q_\rho$ and $EP_\rho=(I-P_\rho)RP_\rho$, we have $\|EQ_\rho\|\le\|EP_\rho\|=\|(I-P_\rho)RP_\rho\|=\ell_\rho$, while $\eta_\rho-\|E\|>\eta_\rho/2$, and~\eqref{eq:blockwise-proj} follows. For the energy bound, work in an eigenbasis of $R$ and let $S$ index the cluster, so $\widetilde Q_\rho$ projects onto $\{e_i:i\in S\}$ and $\operatorname{tr}(R\widetilde Q_\rho)=\sum_{i\in S}\lambda_i$. Writing $w_i=(Q_\rho)_{ii}\in[0,1]$ with $\sum_i w_i=k_\rho$,
\[
\operatorname{tr}(R\widetilde Q_\rho)-\operatorname{tr}(RQ_\rho)=\sum_{i\in S}\lambda_i(1-w_i)-\sum_{i\notin S}\lambda_i w_i,
\]
and, each $\lambda_i$ lying in $[\lambda_{\min}(R),\lambda_{\max}(R)]$ and each weight in $[0,1]$, the right side has absolute value at most $(\lambda_{\max}(R)-\lambda_{\min}(R))\,\sigma$, where $\sigma=\sum_{i\notin S}w_i=\sum_{i\in S}(1-w_i)=\sum_j\sin^2\theta_j$ is the mass of $\operatorname{range}(Q_\rho)$ outside the cluster. If $\Lambda_\rho$ is the dominant rank-$k_\rho$ cluster of $R_0$, then $S$ indexes the top $k_\rho$ eigenvalues of $R$ and $\widetilde Q_\rho$ maximizes $\operatorname{tr}(R\,\cdot)$ among rank-$k_\rho$ projectors, so the difference is nonnegative. Finally, the principal-angle sum is the squared Frobenius distance of the two projectors, $\sum_j\sin^2\theta_j=\tfrac12\|\widetilde Q_\rho-Q_\rho\|_F^2$, whereas~\eqref{eq:blockwise-proj} controls the largest angle, $\|\widetilde Q_\rho-Q_\rho\|=\max_j\sin\theta_j$. Because $\widetilde Q_\rho-Q_\rho$ has rank at most $2k_\rho$, the conversion $\|\cdot\|_F^2\le(\operatorname{rank})\,\|\cdot\|^2$ supplies the rank factor explicitly,
\[
\sum_j\sin^2\theta_j=\tfrac12\|\widetilde Q_\rho-Q_\rho\|_F^2\le k_\rho\,\|\widetilde Q_\rho-Q_\rho\|^2\le\frac{4k_\rho\,\ell_\rho^2}{\eta_\rho^2},
\]
which is~\eqref{eq:blockwise-energy}. The constant $k_\rho$ is the truncation rank, the cost of passing from the single largest principal angle that the operator-norm Davis--Kahan bound controls to the sum of all $k_\rho$ of them in the trace functional; it is intrinsic to measuring leakage in operator norm rather than swept into the inequality. Measured in Frobenius norm the prefactor disappears: with the Frobenius leakage $\ell_\rho^{(F)}=\|(I-P_\rho)RP_\rho\|_F$, the Frobenius $\sin\Theta$ theorem gives $\sum_j\sin^2\theta_j\le(\ell_\rho^{(F)})^2/\eta_\rho^2$ up to the same Davis--Kahan constant, whence $|\operatorname{tr}(R\widetilde Q_\rho)-\operatorname{tr}(RQ_\rho)|\le(\lambda_{\max}(R)-\lambda_{\min}(R))\,(\ell_\rho^{(F)})^2/\eta_\rho^2$, the rank then residing in the Frobenius leakage.
\end{proof}

\begin{corollary}[Linear energy bound]\label{cor:blockwise-linear}
Under the hypotheses of Theorem~\ref{thm:blockwise-optimal}, the second-order bound~\eqref{eq:blockwise-energy} implies the linear bound
\begin{equation}\label{eq:blockwise-linear}
\bigl|\operatorname{tr}(R\widetilde Q_\rho)-\operatorname{tr}(RQ_\rho)\bigr|\le\frac{4k_\rho\|R\|\,\ell_\rho}{\eta_\rho}.
\end{equation}
\end{corollary}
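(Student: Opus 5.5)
We want to derive \eqref{eq:blockwise-linear} from \eqref{eq:blockwise-energy}. We need
\[
4k_\rho(\lambda_{\max}-\lambda_{\min})\,\frac{\ell_\rho^2}{\eta_\rho^2}\le \frac{4k_\rho\|R\|\,\ell_\rho}{\eta_\rho},
\]
which reduces to two elementary facts. The first is $\lambda_{\max}(R)-\lambda_{\min}(R)\le\|R\|$. The second is $\ell_\rho/\eta_\rho\le 1$.

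The first fact follows from positivity. Since $R$ is positive, $\lambda_{\min}(R)\ge0$ and $\lambda_{\max}(R)=\|R\|$, so the spectral spread is at most $\|R\|$. In the compact infinite-dimensional case one reads $\lambda_{\min}$ as the infimum of the spectrum, which is $0$, and the same inequality holds.

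The second fact comes from the hypothesis $\|E\|<\eta_\rho/2$. Write $E$ in block form with respect to $P_\rho$ and $I-P_\rho$. The compressions $P_\rho EP_\rho$ and $(I-P_\rho)E(I-P_\rho)$ both vanish. Consider the off-diagonal compression $(I-P_\rho)EP_\rho$. Because $R_0$ commutes with $P_\rho$, we have $(I-P_\rho)R_0P_\rho=0$. Therefore
\[
(I-P_\rho)EP_\rho=(I-P_\rho)RP_\rho .
\]
Compression by orthogonal projections does not increase the operator norm, so $\ell_\rho=\|(I-P_\rho)RP_\rho\|\le\|E\|<\eta_\rho/2$. (The statement writes $\ell_\rho=\|P_\rho R(I-P_\rho)\|$; this is the adjoint of the above operator and has the same norm.) Hence $\ell_\rho/\eta_\rho<1/2\le1$.

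Combining the two facts gives
\[
\bigl|\operatorname{tr}(R\widetilde Q_\rho)-\operatorname{tr}(RQ_\rho)\bigr|\le 4k_\rho\|R\|\,\frac{\ell_\rho}{\eta_\rho}\cdot\frac{\ell_\rho}{\eta_\rho}\le\frac{4k_\rho\|R\|\,\ell_\rho}{\eta_\rho}.
\]
Using $\ell_\rho/\eta_\rho<1/2$ would even give the sharper constant $2k_\rho$, but the stated bound is all that is required.

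The only step that needs care is identifying the leakage with an off-diagonal block of $E$, so that the perturbation hypothesis controls $\ell_\rho$. Once that is in place, the rest is routine.
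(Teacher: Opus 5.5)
Your proposal is correct and follows the paper's proof: bound the spread $\lambda_{\max}(R)-\lambda_{\min}(R)$ by $\|R\|$ using positivity, then drop one factor of $\ell_\rho/\eta_\rho$ because it is less than $1$. Your observation that $(I-P_\rho)EP_\rho=(I-P_\rho)RP_\rho$, which gives $\ell_\rho\le\|E\|<\eta_\rho/2$, supplies the justification of $\ell_\rho/\eta_\rho<1$ that the paper asserts without comment.
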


\begin{proof}
By~\eqref{eq:blockwise-energy}, using $\lambda_{\max}(R)-\lambda_{\min}(R)\le\|R\|$ for positive $R$ and $\ell_\rho/\eta_\rho<1$,
\[
\bigl|\operatorname{tr}(R\widetilde Q_\rho)-\operatorname{tr}(RQ_\rho)\bigr|\le 4k_\rho\bigl(\lambda_{\max}(R)-\lambda_{\min}(R)\bigr)\frac{\ell_\rho^2}{\eta_\rho^2}\le 4k_\rho\|R\|\frac{\ell_\rho^2}{\eta_\rho^2}\le\frac{4k_\rho\|R\|\,\ell_\rho}{\eta_\rho}.\qedhere
\]
\end{proof}

\begin{corollary}[Energy-domain block selection rule]\label{cor:block-selection}
Suppose a transform-coding or estimation objective charges a complexity cost $c_\rho>0$ for handling sector $\rho$ with a data-driven KLT block, traded against distortion through the Lagrangian $\mathcal J=\text{distortion}+\lambda\,\text{complexity}$. By~\eqref{eq:blockwise-linear} the distortion penalty $\Delta_\rho$ of the fixed group-adapted block satisfies $\Delta_\rho\le 4k_\rho\|R\|\ell_\rho/\eta_\rho$, so the group-adapted block lowers $\mathcal J$ whenever
\begin{equation}\label{eq:block-selection}
\frac{4k_\rho\|R\|\,\ell_\rho}{\eta_\rho}\le\lambda c_\rho,
\qquad\text{equivalently}\qquad
\ell_\rho\le\frac{\lambda c_\rho\,\eta_\rho}{4k_\rho\|R\|};
\end{equation}
otherwise the data-driven KLT block is justified. This is a sufficient energy-domain criterion and requires the cluster separation $\eta_\rho$ of Theorem~\ref{thm:blockwise-optimal}; the exact criterion, valid at any leakage and with no gap hypothesis, is the multi-information threshold of Theorem~\ref{thm:rd-threshold}.
\end{corollary}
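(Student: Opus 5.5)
The plan is to read the corollary as a comparison of two terms in the Lagrangian. Fix a sector $\rho$ and compare two ways of treating it. One is the fixed group-adapted block, whose range projector is $Q_\rho$, the top-$k_\rho$ eigenspace of $A_\rho=P_\rho RP_\rho$. The other is the data-driven KLT block, whose range projector is $\widetilde Q_\rho$, the cluster of $R$ perturbed from $\Lambda_\rho$. Following the convention in the statement, I take the complexity of the group-adapted block to be zero, since it is fixed by $G$ and involves no estimation, and the complexity of the KLT block to be $c_\rho$. I take the distortion of each choice to be the energy it fails to capture, $\operatorname{tr}R-\operatorname{tr}(R\,\cdot)$. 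The distortion difference is then $\Delta_\rho=\operatorname{tr}(R\widetilde Q_\rho)-\operatorname{tr}(RQ_\rho)$.

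The steps, in order:

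\begin{enumerate}
\item Write the Lagrangian for each choice:
\[
\mathcal J_{\mathrm{grp}}-\mathcal J_{\mathrm{KLT}}=\Delta_\rho-\lambda c_\rho .
\]
The group-adapted block is therefore preferable, in the sense that it does not raise $\mathcal J$, exactly when $\Delta_\rho\le\lambda c_\rho$.
\item Invoke Corollary~\ref{cor:blockwise-linear}, whose hypotheses are those of Theorem~\ref{thm:blockwise-optimal}, namely $\eta_\rho>0$ and $\|E\|<\eta_\rho/2$. It gives
\[
\Delta_\rho\le|\Delta_\rho|\le 4k_\rho\|R\|\ell_\rho/\eta_\rho .
\]
\item Conclude: if $4k_\rho\|R\|\ell_\rho/\eta_\rho\le\lambda c_\rho$, then $\Delta_\rho\le\lambda c_\rho$, so $\mathcal J$ does not increase when the group-adapted block is used.
\item For the equivalent form, divide by $4k_\rho\|R\|>0$ and multiply by $\eta_\rho>0$. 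This rearranges the condition into the stated threshold on $\ell_\rho$.
\end{enumerate}

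The ``otherwise'' clause should be read only as a heuristic default, because the condition is sufficient but not necessary. When the bound fails, the upper estimate no longer guarantees that the group-adapted block is preferable, so the data-driven block is the safe choice. I will state this explicitly so the result is not read as a two-sided equivalence.

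The main obstacle is not the estimate itself but making the objective precise. I need to fix what ``distortion'' means so that it is exactly the truncation-energy loss controlled by Theorem~\ref{thm:blockwise-optimal}, and I need the sign convention for $\Delta_\rho$ to match. When $\Lambda_\rho$ is dominant, $\Delta_\rho\ge0$, and it is a genuine penalty. In general, bounding $|\Delta_\rho|$ still suffices, because only an upper bound on $\Delta_\rho$ is needed.

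I will also note that the gap hypothesis on $\eta_\rho$ is inherited from Davis--Kahan. This is why the criterion is only sufficient, and it is where the exact, gap-free multi-information threshold of Theorem~\ref{thm:rd-threshold} differs.
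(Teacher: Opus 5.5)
Your proposal is correct and matches the paper's own inline justification. The Lagrangian difference between the two choices is $\Delta_\rho-\lambda c_\rho$, and under the hypotheses of Theorem~\ref{thm:blockwise-optimal}, Corollary~\ref{cor:blockwise-linear} bounds $\Delta_\rho$ by $4k_\rho\|R\|\ell_\rho/\eta_\rho$, so the stated threshold suffices. Your readings of ``lowers'' as ``does not raise'' in the equality case, and of the ``otherwise'' clause as a default rather than a converse, are sound clarifications of the paper's phrasing.
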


Equation~\eqref{eq:blockwise-energy} controls the eigenspaces. The coding penalty of the whole transform is governed by a determinant, and there the answer is exact.

\begin{theorem}[Exact rate-distortion gain of the blockwise group transform]\label{thm:rd-gain}
Assume high-resolution scalar transform coding with optimal bit allocation and marginal entropy coding. Let $R\succ0$ on $\mathcal H=\bigoplus_\rho\mathcal H_\rho$, let $R_0=\sum_\rho P_\rho R P_\rho$ be its block-diagonal part, and let $U_G$ be the blockwise group transform. Under this model the distortion of an orthonormal transform $U$ at fixed rate is proportional to the geometric mean of its coordinate variances $\bigl(\prod_i(URU^\ast)_{ii}\bigr)^{1/M}$, which the KLT of $R$ minimizes~\cite{huang1963,goyal2001}. The multiplicative distortion penalty of $U_G$ relative to the optimal KLT is exactly
\begin{equation}\label{eq:rd-gain}
G=\Bigl(\frac{\det R_0}{\det R}\Bigr)^{1/M}=\exp\!\Bigl(\frac{2C}{M}\Bigr)\ge 1,\qquad C=\tfrac12\Bigl(\sum_\rho\log\det A_\rho-\log\det R\Bigr),
\end{equation}
where $C$ is the Gaussian multi-information (total correlation) among the isotypic sectors, with equality if and only if $R$ is weakly symmetric ($E=0$). Equivalently, at fixed distortion the excess coding rate of $U_G$ over the rate-distortion bound is exactly $C$ nats, independent of the rate.
\end{theorem}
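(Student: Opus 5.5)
The proof compares the geometric-mean figure of merit for the two transforms and reduces each to a determinant. For the KLT of $R$ the coordinate variances are the eigenvalues of $R$, so $\prod_i(URU^\ast)_{ii}=\det R$. For $U_G$, the key observation is that $U_G$ is block-diagonal with respect to $\mathcal H=\bigoplus_\rho\mathcal H_\rho$: on each sector it is the eigenbasis of $A_\rho=P_\rho RP_\rho$. Hence for a basis vector $u$ of $U_G$ lying in $\mathcal H_\rho$,
\[
u^\ast Ru=u^\ast P_\rho RP_\rho u=u^\ast A_\rho u,
\]
because $u=P_\rho u$. The cross-sector blocks of $E$ therefore never touch the diagonal of $U_GRU_G^\ast$. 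These diagonal entries are exactly the eigenvalues of the $A_\rho$, so their product is $\prod_\rho\det A_\rho=\det R_0$. Taking the ratio of the two geometric means gives $G=(\det R_0/\det R)^{1/M}$. The identification with $\exp(2C/M)$ is then just the definition of $C$, since $\log\det R_0=\sum_\rho\log\det A_\rho$.

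Next we establish $G\ge1$ together with its equality case. We would invoke Fischer's inequality: for $R\succ0$ partitioned into blocks, $\det R\le\prod_\rho\det A_\rho$, with equality iff all off-diagonal blocks vanish. For a self-contained argument, write
\[
R=R_0^{1/2}(I+K)R_0^{1/2},\qquad K=R_0^{-1/2}ER_0^{-1/2}.
\]
The matrix $I+K$ is positive definite with unit-diagonal block structure, since its diagonal blocks are identities. Its eigenvalues $\mu_i>0$ sum to $M$ because $\operatorname{tr}K=0$, so AM--GM gives $\det(I+K)=\prod\mu_i\le1$, with equality iff all $\mu_i=1$, i.e.\ $K=0$, i.e.\ $E=0$. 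The condition $E=0$ is exactly weak symmetry, by Proposition~\ref{prop:sector}. This also shows $C\ge0$ and identifies $C$ as the Gaussian total correlation $\sum_\rho h(x_\rho)-h(x)$ for $x\sim\mathcal N(0,R)$, where the $2\pi e$ constants cancel.

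For the rate statement, use the high-resolution model $D(r)=c\,\bigl(\prod_i\sigma_i^2\bigr)^{1/M}2^{-2r}$, written with $e^{-2r}$ in nats per coefficient. Equal distortion with geometric means differing by the factor $G$ requires a per-coefficient rate increase of $\tfrac12\log G=C/M$ nats. The total excess over $M$ coefficients is therefore $C$ nats, independent of the operating rate. Because the KLT achieves the Gaussian rate-distortion bound in this regime, this is the excess over that bound.

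The main obstacle I expect is not the algebra but stating the modelling assumptions cleanly. These are the proportionality of distortion to the geometric mean, and that the constant $c$ is identical for both transforms, which requires the same scalar quantizer and entropy-coder model for every coordinate. Once these are fixed, the block-diagonality observation in the first step carries the whole proof. In infinite dimensions one would restrict to finite $M$ as the statement implicitly does, since $\det$ and $M$ appear.
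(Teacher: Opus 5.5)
Your proposal is correct and follows the paper's proof essentially step for step. Block-diagonality of $U_G$ keeps the off-sector part $E$ off the diagonal, so the variance product is $\det R_0$; the KLT gives $\det R$; Fischer's inequality gives $G\ge1$ with equality iff $E=0$ (your $R_0^{-1/2}RR_0^{-1/2}$ plus AM--GM argument is a self-contained proof of it, equality case included); and the Gaussian entropy formula identifies $C$.

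One correction concerns the sentence justifying the rate claim. With entropy-coded scalar quantization the KLT does \emph{not} attain the Gaussian rate-distortion bound: it sits a constant $\tfrac12\log(\pi e/6)$ per coefficient above it. What your rate computation actually establishes is an excess of $C$ nats relative to the KLT's rate in this model, and that is the sense in which the statement's ``rate-distortion bound'' must be read.
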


\begin{proof}
Because $U_G$ is block-diagonal across the sectors and $E$ is off-block, $U_G E U_G^\ast$ is off-block and has zero diagonal, so the transform-domain variances are $(U_G R U_G^\ast)_{ii}=(U_G R_0 U_G^\ast)_{ii}=\lambda_i(R_0)$ and $\prod_i(U_G R U_G^\ast)_{ii}=\det R_0$. By Hadamard's inequality the KLT attains the minimum $\prod_i(URU^\ast)_{ii}=\det R$ over orthonormal $U$, so the ratio of geometric means is $(\det R_0/\det R)^{1/M}$. Fischer's inequality gives $\det R_0=\prod_\rho\det A_\rho\ge\det R$, with equality iff $E=0$, so $G\ge1$. For jointly Gaussian sectors of covariance $R$ the multi-information is $C=\sum_\rho h(X_\rho)-h(X)=\tfrac12\log(\det R_0/\det R)$ with $h$ the differential entropy~\cite{covertthomas2006}, whence $G=\exp(2C/M)$ and the excess rate is $C$.
\end{proof}

The penalty resolves per sector into canonical correlations, which is where the off-sector mass of Proposition~\ref{prop:sector} reappears information-theoretically.

\begin{proposition}[Per-sector gain via canonical correlations]\label{prop:canon}
For a single sector $\rho$ against its complement, write $R=\left(\begin{smallmatrix}A_\rho & B_\rho\\ B_\rho^\ast & C_\rho\end{smallmatrix}\right)$ and let $\varrho_1,\dots,\varrho_m$ be the canonical correlations between the sector and its complement, the singular values of $A_\rho^{-1/2}B_\rho C_\rho^{-1/2}$. Then
\begin{equation}\label{eq:canon}
\frac{\det A_\rho\det C_\rho}{\det R}=\prod_j\frac{1}{1-\varrho_j^2},\qquad I_\rho=-\tfrac12\sum_j\log(1-\varrho_j^2),
\end{equation}
where $I_\rho$ is the mutual information between the sector and its complement. As the leakage tends to zero, $\varrho_j\to0$ and $I_\rho=\tfrac12\sum_j\varrho_j^2+O(\varrho^4)=O(\ell_\rho^2)$.
\end{proposition}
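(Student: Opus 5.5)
The plan is to use the Schur complement and then whiten. Since $R\succ0$, the diagonal blocks $A_\rho$ and $C_\rho$ are positive definite, and the block determinant formula gives
\[
\det R=\det A_\rho\,\det\bigl(C_\rho-B_\rho^\ast A_\rho^{-1}B_\rho\bigr).
\]
I would factor $C_\rho^{1/2}$ out of both sides of the Schur complement, writing
\[
C_\rho-B_\rho^\ast A_\rho^{-1}B_\rho=C_\rho^{1/2}\bigl(I-K^\ast K\bigr)C_\rho^{1/2},\qquad K=A_\rho^{-1/2}B_\rho C_\rho^{-1/2}.
\]
This yields $\det R=\det A_\rho\det C_\rho\det(I-K^\ast K)$. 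The eigenvalues of $K^\ast K$ are the squared singular values $\varrho_j^2$, padded with zeros that contribute factors of $1$. Hence $\det(I-K^\ast K)=\prod_j(1-\varrho_j^2)$, and inverting gives the first identity in \eqref{eq:canon}.

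Positivity of $R$ gives $I-K^\ast K\succ0$, so every $\varrho_j<1$ and the logarithms are finite. For the mutual information I would reuse the Gaussian entropy formula from the proof of Theorem~\ref{thm:rd-gain}, namely $h(X)=\tfrac12\log\det(2\pi e R)$. Then
\[
I_\rho=h(X_\rho)+h(X_{\rho^c})-h(X)=\tfrac12\log\frac{\det A_\rho\det C_\rho}{\det R},
\]
which equals $-\tfrac12\sum_j\log(1-\varrho_j^2)$ by the first identity.

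For the asymptotics I would expand $-\log(1-x)=x+O(x^2)$ at $x=\varrho_j^2$ to get $I_\rho=\tfrac12\sum_j\varrho_j^2+O(\varrho^4)$. The step that needs care is linking $\varrho$ to $\ell_\rho$. I would bound
\[
\max_j\varrho_j=\|K\|\le\|A_\rho^{-1/2}\|\,\|B_\rho\|\,\|C_\rho^{-1/2}\|=\frac{\ell_\rho}{\sqrt{\lambda_{\min}(A_\rho)\lambda_{\min}(C_\rho)}},
\]
using $\|B_\rho\|=\|P_\rho R(I-P_\rho)\|=\ell_\rho$.

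Then $\sum_j\varrho_j^2\le m\,\ell_\rho^2/(\lambda_{\min}(A_\rho)\lambda_{\min}(C_\rho))$, which is $O(\ell_\rho^2)$. The main obstacle is that this hides constants depending on $\lambda_{\min}$ of the diagonal blocks and on the rank $m$. I would state the limit as $\ell_\rho\to0$ with the diagonal blocks held fixed or uniformly bounded below; this holds automatically in finite dimension with $R\succ0$. In an infinite-dimensional compact setting, $\lambda_{\min}$ can degenerate. There I would either restrict to finite $M$, as the surrounding coding model does, or interpret the $O(\ell_\rho^2)$ through the Frobenius leakage as in Theorem~\ref{thm:blockwise-optimal}.
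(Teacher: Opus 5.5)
Your proof is correct and follows the paper's route: the Schur complement, identification of the $\varrho_j^2$ as the eigenvalues of $C_\rho^{-1}B_\rho^\ast A_\rho^{-1}B_\rho$ (your whitened $K^\ast K$ is its conjugate by $C_\rho^{1/2}$), the Gaussian mutual-information formula, and the expansion of $-\log(1-x)$. Your explicit bound $\max_j\varrho_j\le\ell_\rho/\sqrt{\lambda_{\min}(A_\rho)\lambda_{\min}(C_\rho)}$ supplies the link to the leakage that the paper asserts without justification. Your caveat that the hidden constant requires the diagonal blocks to stay bounded below is appropriate.
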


\begin{proof}
The Schur complement gives $\det R=\det A_\rho\,\det(C_\rho-B_\rho^\ast A_\rho^{-1}B_\rho)$, and $\det(C_\rho-B_\rho^\ast A_\rho^{-1}B_\rho)=\det C_\rho\prod_j(1-\varrho_j^2)$ since the $\varrho_j^2$ are the eigenvalues of $C_\rho^{-1}B_\rho^\ast A_\rho^{-1}B_\rho$. The Gaussian mutual information of the two blocks is $I_\rho=\tfrac12\log(\det A_\rho\det C_\rho/\det R)$~\cite{covertthomas2006}; the small-leakage expansion follows from $-\log(1-x)=x+O(x^2)$.
\end{proof}

The rate accounting is now exact and carries no small-leakage hypothesis: it trades the coding rate of grouping a sector against the complexity saved.

\begin{theorem}[Exact blockwise rate identity and marginal selection threshold]\label{thm:rd-threshold}
Let a transform-coding objective charge a complexity cost $c_\rho>0$ for each sector handled by a data-driven KLT block, traded against rate through the Lagrangian $\mathcal J(S)=\Delta R(S)+\lambda\sum_{\rho\notin S}c_\rho$, where $S$ is the set of sectors handled by the fixed group transform and $\Delta R(S)$ is the excess coding rate it incurs in the model of Theorem~\ref{thm:rd-gain}. Then
\begin{equation}\label{eq:rd-threshold}
\Delta R(S)=\tfrac12\Bigl(\sum_{\rho\in S}\log\det A_\rho+\log\det C_{S^c}-\log\det R\Bigr),
\end{equation}
the multi-information of the partition that separates the sectors in $S$ and keeps the complement $S^c$ together, where $C_{S^c}=P_{S^c}RP_{S^c}$. The marginal cost of moving sector $\rho$ from the data-driven complement into $S$ is exactly its mutual information $I(X_\rho;X_{S^c\setminus\rho})$ with the remaining data-driven sectors. Hence, for a current set $S$, moving sector $\rho$ into the fixed group-transform set lowers $\mathcal J$ if and only if
\begin{equation}\label{eq:threshold}
I(X_\rho;X_{S^c\setminus\rho})<\lambda c_\rho ,
\end{equation}
with equality leaving $\mathcal J$ unchanged. This is a one-step optimality condition relative to the current $S$: because the conditioning set $S^c\setminus\rho$ varies as sectors are reassigned, it certifies a local improvement rather than the global minimizer of $\mathcal J$, which would require an additional submodularity or greedy-optimality argument.
\end{theorem}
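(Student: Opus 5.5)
The plan is to reduce the whole statement to the determinant accounting already carried out in the proof of Theorem~\ref{thm:rd-gain}, applied to a coarser block partition. First I fix what the hybrid transform for a given $S$ is. It uses the eigenbasis of $A_\rho=P_\rho RP_\rho$ on each sector $\rho\in S$. On the joint complement $\mathcal H_{S^c}=\bigoplus_{\rho\notin S}\mathcal H_\rho$ it uses the KLT of the compressed block $C_{S^c}=P_{S^c}RP_{S^c}$.

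This transform is block-diagonal with respect to the partition $\{\rho\}_{\rho\in S}\cup\{S^c\}$. So, exactly as in the proof of Theorem~\ref{thm:rd-gain}, the off-partition part of $R$ contributes nothing to the diagonal in transform coordinates. The transform-domain variances are therefore the eigenvalues of the block-diagonal matrix $R_S=\bigoplus_{\rho\in S}A_\rho\oplus C_{S^c}$, and their product is $\prod_{\rho\in S}\det A_\rho\cdot\det C_{S^c}$.

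Next I compare with the optimal KLT. By Hadamard's inequality the KLT attains the product $\det R$. In the high-resolution model the excess rate at fixed distortion is half the log of the ratio of these products, which gives~\eqref{eq:rd-threshold} directly. For the Gaussian interpretation, $\Delta R(S)=\sum_{\rho\in S}h(X_\rho)+h(X_{S^c})-h(X)$ is the multi-information of that partition, and Fischer's inequality gives $\Delta R(S)\ge0$.

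The marginal-cost claim is then a telescoping computation. Set $S'=S\cup\{\rho\}$ with $\rho\in S^c$. Then
\[
\Delta R(S')-\Delta R(S)=\tfrac12\bigl(\log\det A_\rho+\log\det C_{S^c\setminus\rho}-\log\det C_{S^c}\bigr).
\]
Proposition~\ref{prop:canon}, applied to the matrix $C_{S^c}$ with sector $\rho$ against its complement $S^c\setminus\rho$, identifies this quantity as $I(X_\rho;X_{S^c\setminus\rho})$ via the Schur complement.

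Meanwhile, the complexity term changes by exactly $-\lambda c_\rho$, since $\rho$ leaves the set $\rho\notin S$. Hence $\mathcal J(S')-\mathcal J(S)=I(X_\rho;X_{S^c\setminus\rho})-\lambda c_\rho$. This is negative if and only if~\eqref{eq:threshold} holds, and it is zero at equality. The closing remark about local versus global optimality needs no proof beyond noting that the conditioning set depends on $S$.

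The main obstacle is the first step. One must justify that the diagonal entries of $URU^\ast$ for the hybrid $U$ are exactly the eigenvalues of $R_S$. This requires the off-partition blocks of $R$ to map to off-block, and hence off-diagonal, entries of $URU^\ast$. Because $U$ respects the partition, $U(R-R_S)U^\ast$ keeps the zero diagonal blocks of $R-R_S$, which settles it.

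A second point needs care: the convention that $\Delta R$ is measured against the true KLT, so that it equals the total per-coordinate excess. Degenerate edge cases ($S=\emptyset$ gives $\Delta R=0$; $S$ containing all sectors recovers Theorem~\ref{thm:rd-gain}) serve as consistency checks.
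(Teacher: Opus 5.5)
Your proposal is correct and follows the paper's proof essentially step for step: the hybrid transform's variance product $\prod_{\rho\in S}\det A_\rho\cdot\det C_{S^c}$, compared with $\det R$, gives the rate identity. The Schur-complement identification of $\Delta R(S\cup\{\rho\})-\Delta R(S)$ as $I(X_\rho;X_{S^c\setminus\rho})$, set against the released cost $\lambda c_\rho$, gives the threshold; your explicit check that $U(R-R_S)U^\ast$ has zero diagonal blocks spells out a step the paper leaves to the analogy with Theorem~\ref{thm:rd-gain}.
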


\begin{proof}
The transform realizing $S$ block-diagonalizes each sector in $S$ and applies the KLT of the joint complement block $C_{S^c}$; as in Theorem~\ref{thm:rd-gain} its transform-domain variance product is $\prod_{\rho\in S}\det A_\rho\cdot\det C_{S^c}$, which gives~\eqref{eq:rd-threshold}. The marginal increment is
\[
\Delta R(S\cup\{\rho\})-\Delta R(S)=\tfrac12\log\frac{\det A_\rho\,\det C_{S^c\setminus\rho}}{\det C_{S^c}}=I(X_\rho;X_{S^c\setminus\rho}).
\]
Including $\rho$ in $S$ changes $\mathcal J$ by $I(X_\rho;X_{S^c\setminus\rho})-\lambda c_\rho$, so the move lowers $\mathcal J$ exactly when this marginal is below the released complexity term $\lambda c_\rho$. The criterion is relative to the current $S$, since the conditioning set $S^c\setminus\rho$ varies with $S$.
\end{proof}

\begin{remark}[Small-leakage limit and energy threshold]\label{rem:smallleak}
By Proposition~\ref{prop:canon}, $I(X_\rho;X_{\rho^c})=\tfrac12\sum_j\varrho_j^2+O(\varrho^4)$. In the small-leakage regime the exact threshold~\eqref{eq:threshold} reduces to the quadratic condition $\sum_j\varrho_j^2\lesssim 2\lambda c_\rho$ on the canonical correlations, which is the information-theoretic counterpart of the energy bound~\eqref{eq:blockwise-energy}. Both are second order in the leakage, and both vanish exactly at weak symmetry. The determinant identity~\eqref{eq:rd-gain} is the fully general statement: the coding penalty of the blockwise group transform is the sector multi-information at any leakage level, with no spectral-gap hypothesis.
\end{remark}

\section{Symmetry on the Time--Frequency Plane}\label{sec:symplectic}

The commutant table of Section~\ref{sec:commutant} uses groups acting on the domain $\mathcal{T}$. A larger class of symmetries acts on the time--frequency plane and supplies the two transform families that complete the classical picture: the symplectic family, containing the fractional Fourier and linear canonical transforms, and the reflection family, containing the Dunkl transform.

\subsection{The symplectic family}

The symplectic group $Sp(2,\mathbb{R})$ consists of the linear maps of the time--frequency plane that preserve the symplectic form; its unitary action on $L^2(\mathbb{R})$ is the metaplectic (Weil) representation $\mu$, and the operators $\mu(g)$ are exactly the linear canonical transforms (LCTs), which contain the Fourier, fractional Fourier, Fresnel, and scaling transforms as special cases~\cite{namias1980}. A one-parameter subgroup $g_\theta$ has $\mu(g_\theta)=e^{-i\theta H}$ for a self-adjoint quadratic Hamiltonian $H$, the Weyl quantization of the corresponding quadratic form on phase space. The three conjugacy classes of one-parameter subgroup give the generators
\begin{equation}
H_{\mathrm{ell}}=\tfrac12\bigl(-\partial_t^2+t^2\bigr),\qquad
H_{\mathrm{hyp}}=\tfrac{i}{2}\bigl(t\partial_t+\partial_t\,t\bigr),\qquad
H_{\mathrm{par}}=-\partial_t^2 .
\end{equation}

\begin{theorem}[Symplectic commutant: the elliptic class]\label{thm:symplectic}
Let $R$ be a covariance operator on $L^2(\mathbb{R})$ commuting with the metaplectic image $\mu(g_\theta)$ of the elliptic one-parameter subgroup, generated by $H_{\mathrm{ell}}=\tfrac12(-\partial_t^2+t^2)$. This generator is the harmonic-oscillator Hamiltonian, with discrete simple spectrum $\{n+\tfrac12:n\ge0\}$ and the Hermite--Gauss functions $h_n(t)=H_n(t)e^{-t^2/2}$ as an orthonormal eigenbasis of $L^2(\mathbb{R})$. Then $R$ is a function of $H_{\mathrm{ell}}$, and the KLT of $R$ is the Hermite--Gauss basis, the eigenfunctions of the fractional Fourier transform (eigenvalues $e^{-i(n+1/2)\theta}$). The same conclusion holds for any LCT conjugate to the elliptic class, whose eigenfunctions are the correspondingly scaled and chirped Hermite--Gauss functions.
\end{theorem}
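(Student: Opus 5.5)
The argument is Schur's lemma for the abelian group $\{e^{-i\theta H_{\mathrm{ell}}}\}$ together with the simplicity of the spectrum of $H_{\mathrm{ell}}$. We take as classical input the spectral facts about the harmonic oscillator: the functions $h_n$ (suitably normalized) form an orthonormal basis of $L^2(\mathbb{R})$, and $H_{\mathrm{ell}}h_n=(n+\tfrac12)h_n$. In particular $H_{\mathrm{ell}}$ is essentially self-adjoint on the Schwartz space with purely point, simple spectrum, and $\mu(g_\theta)=e^{-i\theta H_{\mathrm{ell}}}$ acts on $h_n$ by $e^{-i(n+1/2)\theta}$. The identification of this group with the fractional Fourier transform, up to the standard global phase convention, is Namias' observation. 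It gives the parenthetical claim about eigenvalues directly.

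\textbf{Step 1: $R$ preserves each $\spn\{h_n\}$.} Since $R$ is bounded and commutes with $U_\theta=e^{-i\theta H_{\mathrm{ell}}}$ for all $\theta$, it commutes with every spectral projection of $H_{\mathrm{ell}}$. Concretely, the spectrum is contained in $\tfrac12+\mathbb{Z}$, so $U_\theta$ is $2\pi$-periodic up to the sign $e^{-i\theta/2}$. The projector onto $\spn\{h_n\}$ is therefore
\[
P_n=\frac{1}{2\pi}\int_0^{2\pi}e^{i(n+1/2)\theta}U_\theta\,d\theta ,
\]
a strongly convergent Bochner integral. Commutation then passes under the integral to give $[R,P_n]=0$. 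Each $P_n$ has rank one because the spectrum is simple. Hence $Rh_n\in\spn\{h_n\}$, so $Rh_n=r_nh_n$ with $r_n=\langle Rh_n,h_n\rangle$.

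\textbf{Step 2: $R$ is a function of $H_{\mathrm{ell}}$ and has the Hermite--Gauss KLT.} Since $\{h_n\}$ is complete, $R=\sum_n r_nP_n=f(H_{\mathrm{ell}})$, where $f(n+\tfrac12)=r_n$. This is a bounded Borel function on the spectrum, and $r_n\ge0$ because $R$ is a covariance. It follows that $\{h_n\}$ is an orthonormal eigenbasis of $R$, which is the KLT of Definition~\ref{def:ckl}.

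If some $r_n$ coincide, the KLT eigenspaces are larger and the eigenbasis is not unique. Even so, the Hermite--Gauss basis remains an admissible KLT basis, and it is the one forced by the symmetry, exactly as in Proposition~\ref{prop:schur}(i). The eigenvalue ordering of Theorem~\ref{thm:klt-opt} is just a reordering of the $r_n$.

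\textbf{Step 3: conjugate LCTs.} Suppose the one-parameter group is $g\,g_\theta\,g^{-1}$ for some $g\in Sp(2,\mathbb{R})$. Its metaplectic image is $\mu(g)e^{-i\theta H_{\mathrm{ell}}}\mu(g)^{-1}$, up to a phase that drops out of any commutator. Then $R'=\mu(g)^{-1}R\,\mu(g)$ commutes with the elliptic group. By Steps 1 and 2, $R'$ is diagonal in $\{h_n\}$, so $R$ is diagonal in the unitary image $\{\mu(g)h_n\}$.

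Decomposing $g$ by Iwasawa as a rotation, a dilation and a shear, the rotation factor only rephases the $h_n$. The dilation and shear act by $\phi(t)\mapsto a^{-1/2}\phi(t/a)$ and by multiplication by $e^{i c t^2/2}$. So $\mu(g)h_n$ are the scaled and chirped Hermite--Gauss functions, as claimed.

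\textbf{Main obstacle.} The only delicate point is Step 1, because $H_{\mathrm{ell}}$ is unbounded and $R$ commutes a priori only with the unitary group, not with $H_{\mathrm{ell}}$ itself. The Fourier-averaging formula for $P_n$ handles this: it needs only the discreteness of the spectrum and strong continuity of $U_\theta$. A fallback is Stone's theorem together with the fact that a bounded operator commuting with $e^{-i\theta H}$ commutes with all spectral projections of $H$.
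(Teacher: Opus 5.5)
Your proof is correct. It follows the paper's skeleton: the simple spectrum of $H_{\mathrm{ell}}$ forces any bounded operator commuting with the oscillator group to be diagonal in $\{h_n\}$. The key step, however, is carried out differently.

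\textbf{The paper's route.} It asserts that commuting with $\mu(g_\theta)$ for all $\theta$ is equivalent to $[H_{\mathrm{ell}},R]=0$. It then invokes the fact that the bounded commutant of a self-adjoint operator with simple discrete spectrum is the maximal Abelian algebra of its functions.

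\textbf{Your route.} You construct the rank-one spectral projectors directly as circle averages $P_n=\frac{1}{2\pi}\int_0^{2\pi}e^{i(n+1/2)\theta}U_\theta\,d\theta$. This needs only the half-integer spectrum and strong continuity of $U_\theta$. As a result you never have to give meaning to a commutator with the unbounded $H_{\mathrm{ell}}$, a domain issue the paper's ``equivalent to'' passes over. You also avoid appealing to Stone's theorem and the commutant fact. Your version is therefore more elementary and more careful.

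\textbf{The conjugate-LCT clause.} Your route also proves this clause, which the paper states but does not argue. One precision is needed in that step. Take the Iwasawa factorization in the order $g=nak$, so that the rotation acts first. Alternatively, note that the $K$-factor commutes with $g_\theta$ and drops out of $g\,g_\theta\,g^{-1}$, so you may assume $g=na$. With $g=kan$ the rotation acts last, and it does not merely rephase the $h_n$. The conclusion would still hold in that case, but your stated reasoning would not apply.

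\textbf{Your caveat on coinciding $r_n$.} When some $r_n$ coincide, the Hermite--Gauss basis is \emph{a} KLT rather than \emph{the} KLT. This is a correct sharpening of the statement as written.
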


\begin{proof}
The relation $[\mu(g_\theta),R]=0$ for all $\theta$ is equivalent to $[H_{\mathrm{ell}},R]=0$. Since $H_{\mathrm{ell}}$ is self-adjoint with discrete simple spectrum and complete $L^2$ eigenbasis $\{h_n\}$, the bounded operators commuting with it are exactly the functions of $H_{\mathrm{ell}}$, the maximal Abelian algebra it generates, so $R$ shares the eigenbasis $\{h_n\}$. These satisfy $H_{\mathrm{ell}}h_n=(n+\tfrac12)h_n$ and are the eigenfunctions of the fractional Fourier transform with eigenvalues $e^{-i(n+1/2)\theta}$~\cite{namias1980}.
\end{proof}

Theorem~\ref{thm:symplectic} adds a symplectic band to the commutant table: a process whose covariance is invariant under phase-space rotation has the Hermite--Gauss (fractional Fourier) KLT, exactly as a translation-invariant process has the Fourier KLT. The role played by the differential operator $d/dt$ for translations is played here by the quadratic operator $H$, and the matched transform is read off from which conjugacy class $H$ belongs to.

\begin{remark}[Hyperbolic and parabolic classes: continuous spectrum]\label{rem:symplectic-cont}
The other two conjugacy classes of $Sp(2,\mathbb{R})$ have generators with purely continuous spectrum, so the matched diagonalization is a continuous (direct-integral) spectral decomposition rather than an orthonormal $L^2$ eigenbasis, and the argument above, which relies on a discrete eigenbasis, does not transfer verbatim. The hyperbolic generator $H_{\mathrm{hyp}}=\tfrac{i}{2}(t\partial_t+\partial_t\,t)$ is the dilation operator, diagonalized by the Mellin transform with generalized eigenfunctions $t^{i\sigma}$ ($\sigma\in\mathbb{R}$); a dilation-invariant $R$ is a function of $H_{\mathrm{hyp}}$ and is diagonal in the Mellin variable, the scale-invariant row of Section~\ref{sec:commutant}. The parabolic generator $H_{\mathrm{par}}=-\partial_t^2$ is diagonalized by the Fourier transform, with generalized eigenfunctions the plain exponentials $e^{i\omega t}$; the chirp character of the parabolic class resides in its group elements $\mu(g_\theta)$, which act by chirp multiplication $e^{i\tau t^2/2}$, not in the eigenfunctions of the generator. Placing either case on the same footing as the elliptic one requires the spectral theory of operators with continuous spectrum (rigged Hilbert spaces or direct integrals), which we do not develop here.
\end{remark}

\subsection{The reflection-group family}

A second extension uses finite reflection groups. Let $W$ be a finite Coxeter group acting on $\mathbb{R}^n$ with multiplicity function $k$, and let $\{T_i\}$ be the associated Dunkl operators, a commuting family of $W$-equivariant differential-difference operators that deform the partial derivatives by reflection terms~\cite{dunkl1989}. The Dunkl Laplacian $\Delta_k=\sum_i T_i^2$ is diagonalized by the Dunkl transform $\mathcal{D}_k$, an isometry of $L^2(\mathbb{R}^n,w_k\,dx)$ for the weight $w_k(x)=\prod_{\alpha}|\langle\alpha,x\rangle|^{2k_\alpha}$, which reduces to the Fourier transform at $k=0$~\cite{dejeu1993}.

\begin{proposition}[Reflection-group commutant]\label{prop:dunkl}
Let $R$ act on $L^2(\mathbb{R}^n,w_k\,dx)$ and commute with the $W$-action and with the Dunkl Laplacian $\Delta_k$, which has simple spectrum on each $W$-isotypic component. Then $R$ is a function of $\Delta_k$ and is diagonalized by the Dunkl transform $\mathcal{D}_k$. As with the ordinary Laplacian it generalizes ($k=0$), the spectrum of $\Delta_k$ is purely continuous, so this diagonalization is the generalized (direct-integral) decomposition of Remark~\ref{rem:symplectic-cont}, with the Dunkl kernel as generalized eigenfunctions rather than an orthonormal $L^2$ eigenbasis. At $k=0$ the kernel is the Fourier exponential, and for $W=\mathbb{Z}_2$ acting on $\mathbb{R}$ the rank-one Dunkl (generalized cosine) kernel, the continuous analogue of the cosine row of Section~\ref{sec:commutant}.
\end{proposition}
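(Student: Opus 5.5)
The proof will follow the elliptic case of Theorem~\ref{thm:symplectic}, except that the discrete eigenbasis is replaced by the spectral theorem for an operator with simple spectrum in direct-integral form. The first step is to use the commutation with the $W$-action. By the isotypic structure of Theorem~\ref{thm:isotypic} (applied to the finite group $W$), $R$ preserves each $W$-isotypic component $\mathcal{H}_\rho$ of $L^2(\mathbb{R}^n,w_k\,dx)$. So it suffices to work on one component at a time, and there $R|_{\mathcal{H}_\rho}$ commutes with $\Delta_k|_{\mathcal{H}_\rho}$. I take $\Delta_k$ to be essentially self-adjoint on the Schwartz space with respect to $w_k\,dx$ and negative semidefinite, both standard facts from Dunkl theory~\cite{dunkl1989,dejeu1993}. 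Commutation of the bounded operator $R$ with the unbounded $\Delta_k$ is interpreted in the strong sense: $R$ commutes with the spectral projections of $\Delta_k$, equivalently with the resolvents.

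The second step is the function-of-the-operator conclusion. The hypothesis is that $\Delta_k$ has simple spectrum on $\mathcal{H}_\rho$, i.e.\ it admits a cyclic vector there. The multiplication-operator form of the spectral theorem then identifies $\mathcal{H}_\rho$ unitarily with $L^2(\sigma(-\Delta_k),\mu_\rho)$, with $-\Delta_k$ acting as multiplication by the spectral variable. The von Neumann algebra generated by a self-adjoint operator with simple spectrum is maximal Abelian, so its commutant is itself. Any bounded $R$ commuting with it is therefore multiplication by some $\phi_\rho\in L^\infty(\mu_\rho)$, i.e.\ $R|_{\mathcal{H}_\rho}=\phi_\rho(\Delta_k)$ by the Borel functional calculus. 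This is the continuous-spectrum analogue of the step "the bounded operators commuting with $H_{\mathrm{ell}}$ are its functions" used in Theorem~\ref{thm:symplectic}. Assembling over the finitely many $\rho$ gives $R=\bigoplus_\rho\phi_\rho(\Delta_k)$. The statement "$R$ is a function of $\Delta_k$" should be read in this sectorwise sense, with the $\phi_\rho$ becoming a single function if $R$ is also assumed to lie in the algebra generated by $\Delta_k$ and $W$.

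The third step makes the diagonalizing unitary concrete. The Dunkl transform $\mathcal{D}_k$ is an isometry of $L^2(\mathbb{R}^n,w_k\,dx)$ that intertwines $T_j$ with multiplication by $i\xi_j$, hence $\Delta_k$ with multiplication by $-|\xi|^2$~\cite{dejeu1993}. It also commutes with $W$ and maps $\mathcal{H}_\rho$ to the $\rho$-component on the $\xi$ side. Consequently $\mathcal{D}_k\,\phi_\rho(\Delta_k)\,\mathcal{D}_k^{-1}$ is multiplication by $\phi_\rho(-|\xi|^2)$, the decomposition is diagonalized by $\mathcal{D}_k$, and the Dunkl kernel $E_k(x,-i\xi)$ plays the role of generalized eigenfunction. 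Since $-|\xi|^2$ has absolutely continuous distribution, the spectrum is purely continuous, there is no $L^2$ eigenbasis, and the result is the direct-integral decomposition of Remark~\ref{rem:symplectic-cont}. The special cases are then read off: at $k=0$ the Dunkl kernel is $e^{i\langle x,\xi\rangle}$, and for $W=\mathbb{Z}_2$ on $\mathbb{R}$ it is the rank-one kernel built from normalized Bessel functions $j_{k-1/2}$, whose even part is the generalized cosine.

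The main obstacle is the second step, specifically reconciling "simple spectrum" with the radial degeneracy of $\Delta_k$. For $n>1$ the spectral value $-|\xi|^2$ is shared by a whole sphere of $\xi$, so $\Delta_k$ is not literally multiplicity-free even on a $W$-isotypic component; the hypothesis is being used as an assumption rather than derived. I would therefore handle it in one of two ways. The first is to take the hypothesis at face value and apply the maximal-Abelian commutant argument. The second, the more honest repair, is to weaken the conclusion when multiplicity is present: $R$ then commutes with a direct integral of multiplicity spaces and is block-diagonal in $\mathcal{D}_k$ coordinates, namely a measurable field of operators on each spherical fiber. This is the continuous version of the multiplicity blocks $M_\rho$ in Theorem~\ref{thm:isotypic}, and it collapses to a scalar function exactly under the simplicity hypothesis.
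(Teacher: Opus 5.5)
Your proposal is correct and follows the same route as the paper's proof. Both reduce to the $W$-isotypic components, use simplicity of the spectrum there to make $R$ a sectorwise function of $\Delta_k$ by the functional calculus, let $\mathcal{D}_k$ realize the spectral decomposition of $\Delta_k$ so that it diagonalizes $R$ in the generalized (continuous-spectrum) sense, and read the $k=0$ and rank-one cases off the Dunkl kernel; your cyclic-vector and maximal-Abelian argument just spells out the paper's one-line functional-calculus step. Your caveat is also well founded: for $n\ge 2$ the angular Dunkl harmonics give each isotypic component infinite spectral multiplicity, so the simple-spectrum hypothesis genuinely holds only for $W=\mathbb{Z}_2$ on $\mathbb{R}$, and the paper, like your first option, simply takes it at face value.
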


\begin{proof}
The Dunkl transform $\mathcal{D}_k$ realizes the spectral decomposition of the self-adjoint $\Delta_k$~\cite{dejeu1993}. With $\Delta_k$ of simple spectrum on each isotypic component and $[\Delta_k,R]=[W,R]=0$, functional calculus makes $R$ a function of $\Delta_k$ on each isotypic component, so $\mathcal{D}_k$ diagonalizes $R$. Since $\Delta_k$ has continuous spectrum, this is the generalized diagonalization of Remark~\ref{rem:symplectic-cont}, not an $L^2$ eigenbasis. The $k=0$ and rank-one specializations follow from the corresponding reductions of the Dunkl kernel.
\end{proof}

The reflection-group row generalizes the cosine entry the way $SO(3)$ generalizes $SO(2)$: the rank-one $\mathbb{Z}_2$ reflection gives the cosine transform, and higher-rank Coxeter groups give the Dunkl family.

\section{Two Notions of Optimality}\label{sec:optimality}

Theorem~\ref{thm:klt-opt} is often quoted without its criterion. The criterion is energy compaction: among orthonormal expansions the KLT minimizes the mean-square error of rank-$K$ truncation. This is a \emph{second-order} statement, a function of the covariance $R$ alone, and the entire apparatus of this paper, the commutant, the generator eigenvalue problem, the matched transform, likewise sees only $R$. A different and equally natural criterion is the \emph{coding rate}: the number of bits needed to represent the process to a fixed fidelity. The two criteria do not coincide in general, and separating them resolves a persistent confusion about whether the KLT can be outperformed.

\begin{proposition}[The KLT is a second-order construct]\label{prop:secondorder}
The KLT depends on the process only through its covariance operator $R$. Two processes with the same $R$ have the same KLT, the same compaction performance, and the same matched symmetry, irrespective of their higher-order statistics.
\end{proposition}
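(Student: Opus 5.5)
The plan is to unwind each clause of Proposition~\ref{prop:secondorder} back to its definition and observe that $R$ is the only input. The proof is essentially bookkeeping. Its one substantive point is that "the KLT" must be read as defined up to the same non-uniqueness (phases, rotations within degenerate eigenspaces) for both processes. No probabilistic structure beyond the second moment should ever enter.

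\emph{Step 1: the transform depends only on $R$.} By Definition~\ref{def:ckl}, the KLT of $f$ is the analysis map $c_k=\langle f,\varphi_k\rangle$. Here $\{\varphi_k\}$ are the orthonormal solutions of the Fredholm equation~\eqref{eq:fredholm}, whose kernel is $R(s,t)=\mathbb{E}[f(s)\overline{f(t)}]$. The eigenpairs $(\lambda_k,\varphi_k)$ are determined by the operator $\mathcal{R}$ alone, up to the usual choice of orthonormal basis within each eigenspace. So two processes $f,g$ with the same kernel have the same admissible family of KLT bases. The coefficient variances $\mathbb{E}|c_k|^2=\langle\mathcal{R}\varphi_k,\varphi_k\rangle=\lambda_k$ and their uncorrelatedness $\mathbb{E}[c_j\overline{c_k}]=\langle\mathcal{R}\varphi_k,\varphi_j\rangle=\lambda_k\delta_{jk}$ are likewise second-moment quantities. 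I would verify this by expanding $\mathbb{E}[c_j\overline{c_k}]$ and exchanging expectation with the integrals. That exchange is justified by Fubini, since $\int\!\int|R(s,t)|\,|\varphi_j(s)|\,|\varphi_k(t)|\,ds\,dt<\infty$ for a continuous kernel on the domain of interest.

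\emph{Step 2: compaction performance depends only on $R$.} For any orthonormal system $\{\psi_k\}_{k=1}^K$, the same computation gives the captured energy $\mathbb{E}\sum_k|\langle f,\psi_k\rangle|^2=\sum_k\langle\mathcal{R}\psi_k,\psi_k\rangle=\operatorname{tr}(P_V\mathcal{R})$. The truncation error is then $\mathbb{E}\|f\|^2-\operatorname{tr}(P_V\mathcal{R})=\operatorname{tr}\mathcal{R}-\operatorname{tr}(P_V\mathcal{R})$, using $\mathbb{E}\|f\|^2=\int R(t,t)\,dt=\operatorname{tr}\mathcal{R}$ (Mercer). Both terms are functionals of $\mathcal{R}$, so the whole compaction curve, and the optimality statement of Theorem~\ref{thm:klt-opt}, coincide for $f$ and $g$.

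\emph{Step 3: the matched symmetry depends only on $R$.} The residual $\delta(A,R)$ in~\eqref{eq:delta-intro} is defined purely from $R$ and $A$. The same holds for every derived object: the commutant condition~\eqref{eq:commute}, the Reynolds projection $\mathcal{T}_G(R)$, the isotypic blocks $M_\rho$ of Theorem~\ref{thm:isotypic}, and the pencil of Section~\ref{sec:gevp} that is built from $\ad_R^2$. Hence the minimizing generator over any candidate space $\mathcal{B}$ coincides for the two processes. I would state this by noting that each construction is a map whose only process-dependent argument is $\mathcal{R}$.

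The only step needing care, and the one I expect to be the main obstacle, is the exchange of expectation and integration in Steps~1--2. Handling it requires the standing assumption that the kernel is continuous on a domain where $\mathcal{R}$ is trace class (e.g.\ $\mathcal{T}$ compact). The closing remark would emphasize the consequence: higher-order statistics, which matter for coding rate, are invisible to every object in the statement.
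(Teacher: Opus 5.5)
Your proposal is correct and follows the paper's route. The paper simply declares the proposition immediate from Definition~\ref{def:ckl}, and you carry out that unwinding explicitly for the eigenbasis, the compaction functional $\operatorname{tr}(P_V\mathcal{R})$, and the residual $\delta(A,R)$. One small correction to your Fubini step: the exchange needs $\int\!\int\mathbb{E}\bigl[|f(s)|\,|f(t)|\bigr]\,|\varphi_j(s)|\,|\varphi_k(t)|\,ds\,dt<\infty$. This follows from $\mathbb{E}\bigl[|f(s)|\,|f(t)|\bigr]\le\sqrt{R(s,s)R(t,t)}$ and $\int R(t,t)\,dt<\infty$, which is your trace-class assumption. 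It does not follow from integrability of $|R(s,t)|=\bigl|\mathbb{E}[f(s)\overline{f(t)}]\bigr|$, which is the smaller quantity.
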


This is immediate from Definition~\ref{def:ckl}, and it has a sharp consequence. The compaction criterion is determined by $R$; the coding rate is not. At high resolution the rate to encode a coefficient $c_k$ to fixed distortion is governed by its differential entropy $h(c_k)$, and the total rate by $\sum_k h(c_k)$ up to transform-independent terms~\cite{goyal2001}. For a Gaussian process each $h(c_k)$ is a function of the variance $\sigma_k^2$ alone, so minimizing $\sum_k h(c_k)$ coincides with compaction and the KLT is rate-optimal; this is the classical result of Huang and Schultheiss~\cite{huang1963}.

\begin{theorem}[Criterion gap]\label{thm:criterion}
For a Gaussian process the compaction-optimal transform (the KLT) is also optimal for high-resolution transform coding under mean-square distortion. For non-Gaussian processes the two criteria diverge: there exist processes for which an orthonormal transform other than the KLT achieves strictly smaller coding rate at the same distortion. Equivalently, the compaction-optimal ``KLT genie'' can be strictly outperformed on the rate criterion.
\end{theorem}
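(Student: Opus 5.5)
The proof splits into a Gaussian half, which is classical, and a non-Gaussian half, which needs an explicit counterexample.

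\emph{Gaussian half.} I would use the high-resolution model of Theorem~\ref{thm:rd-gain}. For any orthonormal $U$, the coefficients $c=Uf$ of a Gaussian process are Gaussian with variances $(URU^\ast)_{ii}$. Each entropy is therefore $h(c_i)=\tfrac12\log(2\pi e\,(URU^\ast)_{ii})$. The total rate is, up to transform-independent constants, $\tfrac12\log\prod_i(URU^\ast)_{ii}$. Hadamard's inequality gives $\prod_i(URU^\ast)_{ii}\ge\det R$, with equality exactly when $URU^\ast$ is diagonal, that is, when $U$ is a KLT. This is the Huang--Schultheiss argument already invoked in the proof of Theorem~\ref{thm:rd-gain}. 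In the continuum I would state this on finite-dimensional truncations, since the claim is ultimately about $M$-dimensional transform coding.

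\emph{Non-Gaussian half.} Here I would construct a counterexample in dimension $M=2$, where $R$ is a multiple of the identity.
\begin{itemize}
\item Let $s_1,s_2$ be i.i.d.\ uniform on $[-\sqrt3,\sqrt3]$ (unit variance).
\item Set $f=Vs$ with $V$ the rotation by $\pi/4$.
\item Then $R=I$, so every orthonormal $U$ is a KLT. In particular $U=I$ is a legitimate KLT choice, because the KLT is defined only up to rotation within the degenerate eigenspace (Proposition~\ref{prop:schur}).
\item $U=V^{\ast}$ is also a KLT, and it returns independent uniform coordinates with $\sum_i h(c_i)=2\log(2\sqrt3)$.
\item $U=I$ returns $c=Vs$, whose coordinates $(s_1\pm s_2)/\sqrt2$ are triangular with unit variance.
\item The uniform has strictly larger entropy than the unit-variance triangular, by maximum entropy on bounded support. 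Rather than rely on that, I would compute it directly: $h(\text{tri})=\tfrac12+\log(\sqrt6/2)$ and $h(\text{unif})=\log(2\sqrt3)$, so $\log(2\sqrt3)-\log(\sqrt6/2)-\tfrac12=\log 2\sqrt2-\tfrac12>0$.
\end{itemize}
Both transforms produce the same variances, so at equal distortion (the same bit allocation) the rates differ by $\sum_i h(c_i)$. Hence the KLT $U=I$ is strictly beaten by $V^\ast$.

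\emph{Main obstacle.} Taken literally, this is only a counterexample for a non-unique KLT. To avoid that loophole, I would perturb to $R=\mathrm{diag}(1+\epsilon,1)$ so that the KLT is unique. I would take the process $f=\Sigma^{1/2}Vs$ with $\Sigma=\mathrm{diag}(1+\epsilon,1)$. Then $R=\Sigma$, and the unique KLT is $U=I$ with coordinates $\sqrt{1+\epsilon}\,(s_1+s_2)/\sqrt2$ and $(s_1-s_2)/\sqrt2$, which are still triangular. A competitor $U'=V^\ast$ produces variances within $O(\epsilon)$ of $1$. Its coordinates are near-independent, near-uniform mixtures, and their entropy depends continuously on $\epsilon$.

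The rate comparison has to account for variance differences. I would compare the Gaussian-normalized quantity $\sum_i\bigl[h(c_i)-\tfrac12\log\sigma_i^2\bigr]$, plus the $\tfrac12\log\prod\sigma_i^2$ term, which is larger for $U'$ only by $O(\epsilon^2)$ by Hadamard. The entropy advantage of about $0.54$ nats is a fixed positive gap, while the loss is $O(\epsilon)$. So for small $\epsilon$ the non-KLT transform $U'$ wins strictly.

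The one delicate point is verifying continuity of differential entropy under this smooth perturbation of bounded-support densities. I would handle it by writing the coordinate densities explicitly as convolutions of scaled uniforms and checking continuity directly.
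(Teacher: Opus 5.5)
Your Gaussian half is the paper's argument (Hadamard, Huang--Schultheiss). For the non-Gaussian half the paper derives the entropy-coded identity $R_{\mathrm{total}}(U)=\tfrac1N\sum_k h(c_k)-\tfrac12\log_2(12D)+o(1)$. It then cites Effros--Feng--Zeger, backed by the quadrature-evaluated Gaussian-mixture witness of Remark~\ref{rem:criterion-witness}. Your rotated-uniform witness is a legitimate alternative, but its central entropy comparison is wrong as written.

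The symmetric triangular law on $[-a,a]$ has $h=\tfrac12+\log a$. So the unit-variance triangular (half-width $\sqrt6$) has $h=\tfrac12+\tfrac12\log6\approx1.396$ nats, not $\tfrac12+\log(\sqrt6/2)$. The unit-variance uniform has $h=\tfrac12\log12\approx1.243$ nats. The uniform therefore has the \emph{smaller} entropy, by $\tfrac12(1-\log2)\approx0.153$ nats per coordinate. Maximum entropy on a bounded interval does not apply, because the two supports differ. At fixed variance the entropy power inequality gives $h\bigl((s_1\pm s_2)/\sqrt2\bigr)\ge h(s_1)$.

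This matters for the logic, not just the number. The inequality you assert, $h(\text{unif})>h(\text{tri})$, together with your own statement that rate grows with $\sum_i h(c_i)$, would make $V^\ast$ the \emph{costlier} transform. So your conclusion that $U=I$ is beaten by $V^\ast$ does not follow from the computation you give. It is true only because the inequality runs the other way. Correspondingly, the fixed gap in your perturbation step is $1-\log2\approx0.31$ nats in total, not $0.54$.

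The repair makes the sign automatic. For orthonormal $U$, $\sum_i h(c_i)\ge h(Uf)=h(f)$, with equality iff the coordinates of $Uf$ are independent. So each transform's excess over the floor $h(f)$ is the mutual information between its two coefficients.
\begin{itemize}
\item For $f=\Sigma^{1/2}Vs$, the KLT coefficients are scaled copies of $(s_1\mp s_2)/\sqrt2$. These are uncorrelated but dependent (their joint support is a rotated square), and their excess is exactly $1-\log2$ for every $\epsilon$.
\item The competitor $V^\ast$ has coefficients $(1+\beta/2)s_1-(\beta/2)s_2$ and its mirror, with $\beta=\sqrt{1+\epsilon}-1$.
\item For a sum of independent uniforms on $[-a,a]$ and $[-b,b]$ with $b\le a$, the closed form is $h=\log(2a)+b/(2a)$. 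This shows the competitor's excess is $O(\epsilon)$, which settles the continuity point you flagged.
\end{itemize}

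State the per-coefficient identity $R_k=h(c_k)-\tfrac12\log(12D_k)$ explicitly for this half. The variance-only model of Theorem~\ref{thm:rd-gain}, which you cite for the Gaussian half, cannot separate $I$ from $V^\ast$ at $\epsilon=0$ and favors the KLT for $\epsilon>0$. Once corrected, your route gives something the paper's does not: a self-contained, closed-form witness in place of a citation plus a numerical quadrature.
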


\begin{proof}
We first record a high-resolution rate identity that drives both parts. Let $U$ be an orthonormal transform, $c=Ux$ the coefficient vector, $\sigma_k^2=(URU^\ast)_{kk}$ the coefficient variances, and $h(c_k)$ the marginal differential entropies in bits. Encode each coefficient with an entropy-constrained uniform scalar quantizer of step $\Delta_k$. In the high-resolution limit $\Delta_k\to0$ the per-coefficient rate and mean-square distortion obey $D_k=\tfrac1{12}\,2^{2h(c_k)}2^{-2R_k}(1+o(1))$, equivalently $R_k=h(c_k)-\tfrac12\log_2(12D_k)+o(1)$~\cite{goyal2001}. Minimizing the total rate $\tfrac1N\sum_kR_k$ at a fixed average distortion $\tfrac1N\sum_kD_k=D$ holds $\sum_kh(c_k)$ fixed and leaves $\max\sum_k\log_2D_k$ subject to $\sum_kD_k=ND$, which by the arithmetic-geometric mean inequality is attained at $D_k=D$ for every $k$. Hence, in the high-resolution regime,
\begin{equation}\label{eq:hr-rate}
R_{\mathrm{total}}(U)=\frac1N\sum_{k}h(c_k)-\tfrac12\log_2(12D)+o(1),
\end{equation}
so at a fixed distortion $D$ minimizing the rate over orthonormal $U$ is equivalent to minimizing the summed marginal entropy $\sum_kh(c_k)$.

\emph{Gaussian case.} If $x$ is Gaussian then each $c_k$ is Gaussian with variance $\sigma_k^2$, so $h(c_k)=\tfrac12\log_2(2\pi e\,\sigma_k^2)$ and $\sum_kh(c_k)=\tfrac N2\log_2(2\pi e)+\tfrac12\log_2\prod_k\sigma_k^2$. The matrix $URU^\ast$ is Hermitian positive-semidefinite with diagonal entries $\sigma_k^2$ and, since $U$ is unitary, determinant $\det R$. Hadamard's inequality gives $\prod_k\sigma_k^2\ge\det R$, with equality if and only if $URU^\ast$ is diagonal, that is, if and only if $U$ is an eigenbasis of $R$. Thus the KLT minimizes $\sum_kh(c_k)$, and by \eqref{eq:hr-rate} it minimizes the high-resolution rate at every $D$; when $R$ has simple spectrum the minimizer is unique up to coordinate sign and order. This is the theorem of Huang and Schultheiss~\cite{huang1963}.

\emph{Non-Gaussian case.} By~\eqref{eq:hr-rate} the rate-optimal transform minimizes $\sum_kh(c_k)$, which for a non-Gaussian source is not a function of the coordinate variances $\sigma_k^2$ and so need not coincide with the variance-product-minimizing KLT. That the two are in fact strictly distinct for some processes is the theorem of Effros, Feng, and Zeger~\cite{effros2004}: for Gaussian-mixture sources the KLT is in general strictly suboptimal for the transform-coding rate, in both the fixed-rate and the high-resolution variable-rate settings. On such a source~\eqref{eq:hr-rate} makes an orthonormal transform other than the KLT attain a strictly smaller rate at equal distortion, which is the second claim.
\end{proof}

\begin{remark}[An explicit instance]\label{rem:criterion-witness}
An explicit construction makes the separation concrete. Let $x$ be the equiprobable mixture of two zero-mean bivariate Gaussians with covariances
\[
\Sigma_{\pm}=Q(\pm\theta)\,\mathrm{diag}(s,1/s)\,Q(\pm\theta)^{\!\top},\qquad s>1,\ 0<\theta<\tfrac\pi4,
\]
where $Q(\phi)$ is the planar rotation. The off-diagonal entries of $\Sigma_+$ and $\Sigma_-$ cancel under $\theta\mapsto-\theta$, so the mixture covariance $R=\tfrac12(\Sigma_++\Sigma_-)=\mathrm{diag}(s\cos^2\theta+\tfrac1s\sin^2\theta,\ s\sin^2\theta+\tfrac1s\cos^2\theta)$ is diagonal with distinct entries and the KLT is the coordinate basis. There each coefficient marginal is an equal mixture of two Gaussians of \emph{equal} variance, since $\Sigma_+$ and $\Sigma_-$ share each diagonal entry, hence is itself Gaussian and at the maximum-entropy point for its variance. The rotation $U=Q(\theta)^{\!\top}$ diagonalizes $\Sigma_+$ to $\mathrm{diag}(s,1/s)$ and sends $\Sigma_-$ to $Q(2\theta)\,\mathrm{diag}(s,1/s)\,Q(2\theta)^{\!\top}$, so each marginal becomes a scale mixture of two Gaussians of \emph{unequal} variance, whose differential entropy is strictly below that of the Gaussian of the same variance. Whether the summed marginal entropy falls is a finite-amplitude question, not a local one: by Hadamard every non-KLT orthonormal transform strictly raises the variance product $\prod_k\sigma_k^2$, so the per-coordinate entropy deficit must overcome that penalty, and near the KLT the deficit is $O(\theta^4)$ against an $O(\theta^2)$ penalty, leaving the KLT a strict local minimum of summed marginal entropy; the gain appears only past a threshold rotation. At $s=8$ and $\theta=28^\circ$, evaluating the marginal entropies by quadrature gives $5.8657$ bits for the KLT and $5.6703$ bits for $Q(\theta)^{\!\top}$, a strict reduction of $0.1954$ bits per vector at the common variance sum $\operatorname{tr}R=8.125$; the evaluation script is included with the manuscript. This realizes the separation of Theorem~\ref{thm:criterion} for a concrete source.
\end{remark}

The mechanism is structural. The KLT removes second-order dependence but not higher-order dependence; for a non-Gaussian source the coefficients can be uncorrelated yet statistically dependent, and a transform that reduces the residual dependence yields a lower-entropy, more compressible representation. The two criteria coincide for Gaussian sources, and more broadly in settings where second-order statistics determine the relevant marginal entropy criterion.

\begin{example}[A signal sparse off the KLT axes]\label{ex:chirp}
Let the process be a linear chirp with random rate, $f(t)=e^{i(\omega t+\frac12\gamma t^2)}$ with $\gamma$ drawn from a distribution. Averaged over $\gamma$, the covariance is invariant under phase-space rotation, so by Theorem~\ref{thm:symplectic} the ensemble KLT is the Hermite--Gauss basis and the energy of any single realization is spread across many Hermite--Gauss coefficients. Each realization is, however, a single coefficient in the fractional Fourier basis matched to its rate: it is one-sparse off the KLT axes. A coder that estimates $\gamma$ and encodes in the matched fractional Fourier domain spends one coefficient plus the side information for $\gamma$, beating the ensemble KLT genie whenever that side information is cheaper than the spread. This is the symplectic instance of Theorem~\ref{thm:criterion}: the rate-optimal transform is the matched LCT, not the KLT.
\end{example}

In the language of this paper, the KLT and its matched symmetry are optimal for the second-order criterion they are built from; outperforming the KLT requires a criterion, coding rate, sparsity, classification error, that depends on structure beyond the covariance, together with a transform matched to that structure. The symplectic example shows that this is not exotic: a process matched to a nontrivial symplectic subgroup is sparse in the corresponding linear canonical domain and compresses better there than in the KLT domain.

\section{The Double-Commutator Eigenvalue Theorem}\label{sec:gevp}

We now formalize the inverse problem. Fix a Hermitian operator $R$ (the covariance operator restricted to a finite-dimensional invariant subspace, or a discretization $\R\in\mathbb{C}^{M\times M}$) and a finite-dimensional space of candidate generators
\begin{equation}
\mathcal{B}=\spn\{B_1,\ldots,B_d\}\subseteq\mathbb{C}^{M\times M},\qquad \{B_k\}\ \text{linearly independent}.
\end{equation}
No structural restriction is placed on the $B_k$. The problem is to find $A^\ast=\sum_k c_k B_k\in\mathcal{B}$ minimizing the commutativity residual $\delta(A,\R)=\Fro{[\R,A]}/(\Fro{\R}\,\Fro{A})$.

\begin{definition}[Double commutator]
For $A\in\mathbb{C}^{M\times M}$, the double-commutator superoperator is
\begin{equation}
\ad_{\R}^2(A)=[\R,[\R,A]]=\R^2A-2\R A\R+A\R^2 .
\end{equation}
\end{definition}

The construction is the Rayleigh--Ritz / Galerkin reduction of the operator $\ad_{\R}\colon A\mapsto[\R,A]$ under the Frobenius inner product $\langle X,Y\rangle_F=\Tr(X^HY)$. For Hermitian $\R$ this operator is self-adjoint (Lemma~\ref{lem:quad}), so its normal operator $\ad_{\R}^\ast\ad_{\R}$ is $\ad_{\R}^2$ itself, and the double commutator appears intrinsically rather than as a derived quantity.

\begin{lemma}[Self-adjointness and squared residual]\label{lem:quad}
Let $\R$ be Hermitian and equip $\mathbb{C}^{M\times M}$ with the Frobenius (Hilbert--Schmidt) inner product $\langle X,Y\rangle_F=\Tr(X^HY)$. Then $\ad_{\R}\colon A\mapsto[\R,A]$ is self-adjoint,
\begin{equation}\label{eq:adR-selfadjoint}
\langle X,[\R,Y]\rangle_F=\langle[\R,X],Y\rangle_F ,
\end{equation}
so $\ad_{\R}^2$ is self-adjoint and positive semidefinite, with
\begin{equation}\label{eq:LR-bilinear}
\langle A,\ad_{\R}^2(B)\rangle_F=\langle[\R,A],[\R,B]\rangle_F=\langle\ad_{\R}^2(A),B\rangle_F .
\end{equation}
In particular, taking $B=A$,
\begin{equation}\label{eq:quadform}
\langle A,\ad_{\R}^2(A)\rangle_F=\Tr\!\bigl(A^H\ad_{\R}^2(A)\bigr)=\Fro{[\R,A]}^2\ge 0 ,
\end{equation}
with equality if and only if $[\R,A]=0$.
\end{lemma}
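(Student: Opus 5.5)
The plan is a direct computation with the trace form of the Frobenius inner product, in three steps.

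\emph{Step 1: self-adjointness of $\ad_{\R}$.} Expand the commutator and use cyclicity of the trace:
\[
\langle X,[\R,Y]\rangle_F=\Tr(X^H\R Y)-\Tr(X^HY\R)=\Tr(X^H\R Y)-\Tr(\R X^HY).
\]
On the other side, Hermiticity of $\R$ gives $[\R,X]^H=X^H\R-\R X^H$. Hence
\[
\langle[\R,X],Y\rangle_F=\Tr(X^H\R Y)-\Tr(\R X^HY),
\]
and the two expressions agree term by term, which is \eqref{eq:adR-selfadjoint}. This is the only place Hermiticity is used; it is needed to identify $[\R,X]^H$ with $-[\R,X^H]$, i.e.\ to move $\R$ across the conjugate transpose.

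\emph{Step 2: the bilinear identity and semidefiniteness.} Apply \eqref{eq:adR-selfadjoint} with $X=A$ and $Y=[\R,B]$. This gives $\langle A,\ad_{\R}^2(B)\rangle_F=\langle[\R,A],[\R,B]\rangle_F$. Applying \eqref{eq:adR-selfadjoint} again with $X=[\R,A]$ and $Y=B$ turns this into $\langle\ad_{\R}^2(A),B\rangle_F$, which establishes \eqref{eq:LR-bilinear}. So $\ad_{\R}^2$ is self-adjoint, being the square of a self-adjoint map.

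\emph{Step 3: the squared residual.} Setting $B=A$ in \eqref{eq:LR-bilinear} gives $\langle[\R,A],[\R,A]\rangle_F=\Fro{[\R,A]}^2\ge0$. This is \eqref{eq:quadform}, and it shows $\ad_{\R}^2$ is positive semidefinite. The equality case follows because the Frobenius norm is a norm, so it vanishes exactly when $[\R,A]=0$.

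I expect no real obstacle. The one point requiring care is the sign bookkeeping in Step 1, where the ordering in $\Tr(X^HY\R)$ versus $\Tr(\R X^HY)$ must be matched correctly under cyclic permutation. The same argument works verbatim for Hilbert--Schmidt $A$ and bounded self-adjoint $\R$ in infinite dimensions, since cyclicity of the trace holds there for a trace-class product with a bounded factor.
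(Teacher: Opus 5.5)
Your proposal is correct and follows the paper's proof essentially line for line. It uses the same trace-cyclicity computation with $[\R,X]^H=X^H\R-\R X^H$ for self-adjointness of $\ad_{\R}$, applies that identity twice to get the bilinear form, and specializes to $B=A$ for the squared residual; the only difference is that the paper also cross-checks the quadratic form by expanding $\ad_{\R}^2(A)=\R^2A-2\R A\R+A\R^2$, a step your argument rightly does without.
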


\begin{proof}
Since $\R$ is Hermitian, $[\R,X]^H=X^H\R-\R X^H$, so cyclicity of the trace gives
\[
\begin{aligned}
\langle X,[\R,Y]\rangle_F&=\Tr(X^H\R Y)-\Tr(X^HY\R)=\Tr(X^H\R Y)-\Tr(\R X^HY)\\
&=\Tr\!\bigl((X^H\R-\R X^H)Y\bigr)=\langle[\R,X],Y\rangle_F,
\end{aligned}
\]
which is~\eqref{eq:adR-selfadjoint}. Applying it twice, $\langle A,\ad_{\R}^2(B)\rangle_F=\langle[\R,A],[\R,B]\rangle_F=\langle\ad_{\R}^2(A),B\rangle_F$, so $\ad_{\R}^2$ is self-adjoint; the middle term is a positive-semidefinite form, giving~\eqref{eq:LR-bilinear}. Setting $B=A$ yields $\langle A,\ad_{\R}^2(A)\rangle_F=\Fro{[\R,A]}^2$, which vanishes exactly when $[\R,A]=0$. Expanding $\ad_{\R}^2(A)=\R^2A-2\R A\R+A\R^2$ and using cyclicity confirms $\Tr(A^H\ad_{\R}^2(A))=\Fro{[\R,A]}^2$, which is~\eqref{eq:quadform}.
\end{proof}

\begin{theorem}[Double-commutator eigenvalue problem]\label{thm:main}
Let $\R\in\mathbb{C}^{M\times M}$ be Hermitian, and let $\mathcal{B}=\spn\{B_1,\ldots,B_d\}$ with the $B_k$ linearly independent. Define the Hermitian matrices
\begin{equation}\label{eq:MG}
M_{ij}=\Tr\!\bigl(B_i^H\,\ad_{\R}^2(B_j)\bigr),\qquad G_{ij}=\Tr\!\bigl(B_i^HB_j\bigr).
\end{equation}
Then the minimizer of $\Fro{[\R,A]}^2/\Fro{A}^2$ over $A\in\mathcal{B}\setminus\{0\}$ is $A^\ast=\sum_j c_j^\ast B_j$, where $\mathbf{c}^\ast$ is a generalized eigenvector for the smallest generalized eigenvalue $\lambda_{\min}$ of
\begin{equation}\label{eq:gevp}
\mathbf{M}\mathbf{c}=\lambda\,\mathbf{G}\mathbf{c}.
\end{equation}
The cost is $O(d^2M^3+d^3)$, independent of $M$ in the dimension $d$ of the pencil.
\end{theorem}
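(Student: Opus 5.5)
The proof is a Rayleigh--Ritz reduction. We parametrize $\mathcal{B}$ by coordinates and use Lemma~\ref{lem:quad} to turn the residual quotient into a ratio of two Hermitian quadratic forms in $\mathbf{c}\in\mathbb{C}^d$.

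\emph{Step 1: quadratic forms.} For $A=\sum_j c_jB_j$, sesquilinearity of $\langle\cdot,\cdot\rangle_F$ together with \eqref{eq:quadform} gives
\[
\Fro{[\R,A]}^2=\langle A,\ad_{\R}^2(A)\rangle_F=\sum_{i,j}\overline{c_i}c_j\Tr\bigl(B_i^H\ad_{\R}^2(B_j)\bigr)=\mathbf{c}^H\mathbf{M}\mathbf{c}.
\]
In the same way $\Fro{A}^2=\mathbf{c}^H\mathbf{G}\mathbf{c}$. By \eqref{eq:LR-bilinear}, $M_{ij}=\langle[\R,B_i],[\R,B_j]\rangle_F$. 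Hence $\mathbf{M}$ is a Gram matrix: it is Hermitian and positive semidefinite. The matrix $\mathbf{G}$ is the Gram matrix of $B_1,\ldots,B_d$, which are linearly independent, so $\mathbf{G}$ is Hermitian positive definite. Consequently the map $\mathbf{c}\mapsto A$ is a bijection from $\mathbb{C}^d\setminus\{0\}$ onto $\mathcal{B}\setminus\{0\}$, and the problem becomes minimizing the generalized Rayleigh quotient $\mathbf{c}^H\mathbf{M}\mathbf{c}/\mathbf{c}^H\mathbf{G}\mathbf{c}$.

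\emph{Step 2: reduction to a standard eigenproblem.} Let $\mathbf{G}=\mathbf{L}\mathbf{L}^H$ be the Cholesky factorization (a Hermitian square root would also work), and set $\mathbf{y}=\mathbf{L}^H\mathbf{c}$. The quotient becomes $\mathbf{y}^H\widetilde{\mathbf{M}}\mathbf{y}/\mathbf{y}^H\mathbf{y}$ with $\widetilde{\mathbf{M}}=\mathbf{L}^{-1}\mathbf{M}\mathbf{L}^{-H}$, which is Hermitian and positive semidefinite. By the Courant--Fischer (Rayleigh) principle, the minimum is $\lambda_{\min}(\widetilde{\mathbf{M}})$, and it is attained exactly at the eigenvectors for that eigenvalue. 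Undoing the substitution, $\widetilde{\mathbf{M}}\mathbf{y}=\lambda\mathbf{y}$ is equivalent to $\mathbf{M}\mathbf{c}=\lambda\mathbf{G}\mathbf{c}$. So the minimizers are precisely the generalized eigenvectors of \eqref{eq:gevp} for $\lambda_{\min}$, and the eigenvalues are real and nonnegative. An alternative route is to set the gradient of the quotient to zero, which yields \eqref{eq:gevp} directly as the stationarity condition. We also note that the quotient is invariant under scaling of $\mathbf{c}$, so the minimizer is unique only up to a scalar and within the eigenspace.

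\emph{Step 3: cost.} Forming $\ad_{\R}^2(B_j)=\R^2B_j-2\R B_j\R+B_j\R^2$ takes $O(M^3)$ per $j$ after $\R^2$ has been computed once, so all $d$ of them cost $O(dM^3)$. Each trace $\Tr(B_i^HX)$ costs $O(M^2)$, so all the entries of $\mathbf{M}$ and $\mathbf{G}$ together cost $O(d^2M^2)$. Solving the $d\times d$ pencil costs $O(d^3)$. This total is within the stated bound $O(d^2M^3+d^3)$. If instead each entry $M_{ij}$ is computed by an independent $O(M^3)$ product, the cost is exactly $O(d^2M^3)$, which matches the statement. The eigenproblem itself has size $d$, independent of $M$.

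\emph{Main obstacle.} The only delicate point is handling the constraint $A\neq0$ through $\mathbf{G}\succ0$. This is exactly where linear independence of the $B_k$ is needed: without it $\mathbf{G}$ would be singular, the pencil would be degenerate, and the Cholesky reduction in Step~2 would fail. Everything else is a direct consequence of Lemma~\ref{lem:quad}.
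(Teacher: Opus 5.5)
Your proposal is correct and follows the paper's proof. You expand $A=\sum_j c_jB_j$ via Lemma~\ref{lem:quad} to get the generalized Rayleigh quotient $\mathbf{c}^H\mathbf{M}\mathbf{c}/\mathbf{c}^H\mathbf{G}\mathbf{c}$, use linear independence to make $\mathbf{G}$ positive definite, and apply Courant--Fischer. Your explicit Cholesky reduction and operation count fill in details the paper leaves implicit; in particular the paper never argues the cost claim, and your count gives the sharper bound $O(dM^3+d^2M^2+d^3)$.
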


\begin{proof}
Writing $A=\sum_j c_jB_j$ and using Lemma~\ref{lem:quad},
\[
\Fro{[\R,A]}^2=\sum_{i,j}\overline{c_i}c_j\,\Tr(B_i^H\ad_{\R}^2(B_j))=\mathbf{c}^H\mathbf{M}\mathbf{c},
\qquad
\Fro{A}^2=\sum_{i,j}\overline{c_i}c_j\,\Tr(B_i^HB_j)=\mathbf{c}^H\mathbf{G}\mathbf{c}.
\]
The problem is therefore the minimization of the generalized Rayleigh quotient $\mathbf{c}^H\mathbf{M}\mathbf{c}/\mathbf{c}^H\mathbf{G}\mathbf{c}$. Linear independence of $\{B_k\}$ makes $\mathbf{G}$ Hermitian positive definite, so by Courant--Fischer the minimum is attained at the smallest generalized eigenvector of \eqref{eq:gevp}.
\end{proof}

Figure~\ref{fig:gevp} applies the pencil to the mixed kernel of Example~\ref{ex:mixed}: the individual generator residuals, a restricted translation--modulation family that excludes reflection, and the pencil solution, which returns pure reflection at $\lambda_{\min}=0$.

\begin{figure}[H]
\centering
\includegraphics[width=0.96\textwidth]{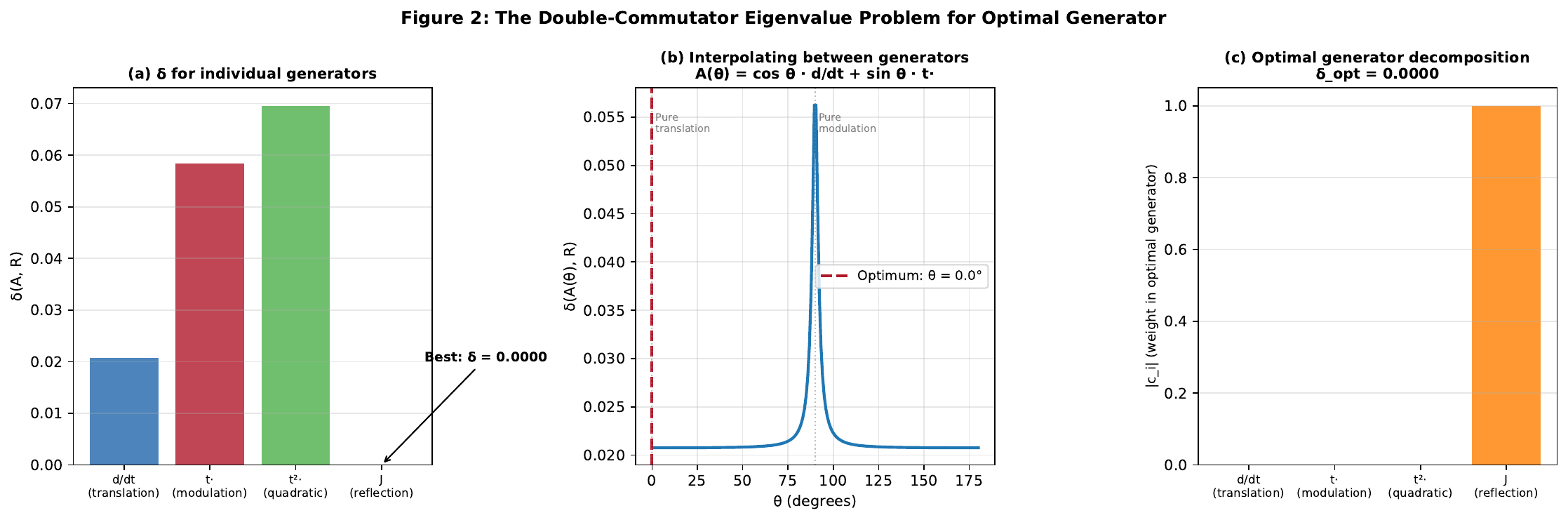}
\caption{The double-commutator eigenvalue problem for the mixed kernel. (a) Individual generator residuals: the reflection generator $J$ attains $\delta=0$ (persymmetry), with translation next at $\delta\approx0.02$. (b) The restricted translation--modulation family $A(\theta)=\cos\theta\,(d/dt)+\sin\theta\,(t\,\cdot)$, which excludes reflection, has its minimum at pure translation. (c) Solving the pencil $\ad_R^2(A^\ast)=\lambda A^\ast$ of Theorem~\ref{thm:main} returns the optimal generator as pure reflection with $\lambda_{\min}=0$, consistent with panel (a) and the persymmetry of Example~\ref{ex:mixed}.}
\label{fig:gevp}
\end{figure}

\begin{corollary}[Zero eigenvalue]\label{cor:zero}
$\lambda_{\min}=0$ if and only if some nonzero $A\in\mathcal{B}$ commutes with $\R$.
\end{corollary}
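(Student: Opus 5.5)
The plan is to read Corollary~\ref{cor:zero} directly off the Rayleigh-quotient reformulation in the proof of Theorem~\ref{thm:main}, together with the positivity statement of Lemma~\ref{lem:quad}. Writing $A=\sum_j c_jB_j$, we have $\Fro{[\R,A]}^2=\mathbf{c}^H\mathbf{M}\mathbf{c}$ and $\Fro{A}^2=\mathbf{c}^H\mathbf{G}\mathbf{c}$. Since the $B_k$ are linearly independent, $\mathbf{G}$ is positive definite. Hence the map $\mathbf{c}\mapsto A$ is a bijection from $\mathbb{C}^d\setminus\{0\}$ onto $\mathcal{B}\setminus\{0\}$. In addition, $\lambda_{\min}=\min_{\mathbf{c}\neq0}\mathbf{c}^H\mathbf{M}\mathbf{c}/\mathbf{c}^H\mathbf{G}\mathbf{c}$ by Courant--Fischer, and the minimum is attained.

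For the forward direction, suppose $\lambda_{\min}=0$. We take the minimizing generalized eigenvector $\mathbf{c}^\ast\neq0$ and set $A^\ast=\sum_j c_j^\ast B_j$, which is nonzero by linear independence. Then $\Fro{[\R,A^\ast]}^2=\mathbf{c}^{\ast H}\mathbf{M}\mathbf{c}^\ast=\lambda_{\min}\,\mathbf{c}^{\ast H}\mathbf{G}\mathbf{c}^\ast=0$. By the equality case of \eqref{eq:quadform}, this gives $[\R,A^\ast]=0$.

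For the converse, suppose some nonzero $A\in\mathcal{B}$ commutes with $\R$. Its coefficient vector $\mathbf{c}$ is nonzero and has Rayleigh quotient $0$. Because $\mathbf{M}$ is positive semidefinite by Lemma~\ref{lem:quad}, every generalized eigenvalue is nonnegative. The minimum of the quotient therefore equals $0$, so $\lambda_{\min}=0$.

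There is essentially no obstacle here. The only points needing care are that $\mathbf{G}\succ0$, so the pencil is well-posed and attains its minimum at a genuine eigenvector, and that a nonzero coefficient vector yields a nonzero operator. Both follow from linear independence of the $B_k$. As a refinement, the same argument identifies the kernel of $\mathbf{M}$ with the commutant $\{A\in\mathcal{B}:[\R,A]=0\}$: $\mathbf{M}\mathbf{c}=0$ if and only if $\mathbf{c}^H\mathbf{M}\mathbf{c}=0$, since $\mathbf{M}\succeq0$. Consequently, the multiplicity of the zero eigenvalue equals the dimension of the commuting subspace of $\mathcal{B}$.
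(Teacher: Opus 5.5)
Your proposal is correct and follows the paper's proof: both read $\lambda_{\min}$ from Theorem~\ref{thm:main} as the minimum of $\Fro{[\R,A]}^2/\Fro{A}^2$ over $\mathcal{B}\setminus\{0\}$, then apply the equality case of Lemma~\ref{lem:quad}. You write out the two directions, the attainment of the minimum, and $\mathbf{G}\succ0$, which the paper compresses into one line; your closing refinement, that the multiplicity of the zero eigenvalue equals $\dim\{A\in\mathcal{B}:[\R,A]=0\}$, is also correct and is the fact the paper later relies on in Proposition~\ref{prop:prolate}(ii) and Theorem~\ref{thm:jacobi-recovery}.
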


\begin{proof}
By Theorem~\ref{thm:main}, $\lambda_{\min}=\min_{A\in\mathcal{B}\setminus\{0\}}\Fro{[\R,A]}^2/\Fro{A}^2$, which is zero iff the numerator vanishes for some nonzero $A$, i.e.\ $[A,\R]=0$.
\end{proof}

\begin{remark}[Topology, finite dimension, and attainment]\label{rem:wellposed}
The candidate space $\mathcal{B}$ is finite-dimensional throughout. The residual $\delta(A,\R)$ is then continuous on the unit sphere $\{A\in\mathcal{B}:\Fro{A}=1\}$, which is compact, so the minimum in Theorem~\ref{thm:main} is attained rather than merely an infimum, and all norms on $\mathcal{B}$ are equivalent. When $\R$ acts on an infinite-dimensional Hilbert space the operative requirement is that $[\R,A]$ be Hilbert--Schmidt for every $A\in\mathcal{B}$, so that the Frobenius norm $\Fro{[\R,A]}$ and the quadratic form of Lemma~\ref{lem:quad} are finite; the finite generator families used in the continuum (Sections~\ref{sec:symplectic} and~\ref{sec:prolate}, and Theorem~\ref{thm:lie-recovery}) carry this hypothesis explicitly. No claim is made for an infinite-dimensional generator space, where the unit sphere is noncompact and a minimizing sequence need not converge within $\mathcal{B}$.
\end{remark}

\begin{proposition}[Structure of $\mathbf{M}$ and $\mathbf{G}$]\label{prop:MGstructure}
With $\R$ Hermitian and $\{B_k\}$ linearly independent:
\begin{enumerate}[leftmargin=2.4em]
\item[(i)] $\mathbf{M}$ is Hermitian positive semidefinite and $\mathbf{G}$ is Hermitian positive definite, so the eigenvalues of \eqref{eq:gevp} are real and nonnegative.
\item[(ii)] If each $B_k$ is skew-Hermitian (i.e.\ $\mathcal{B}\subseteq\mathfrak{u}(M)$, the Lie algebra of the unitary group), then $\mathbf{M}$ and $\mathbf{G}$ are real symmetric and \eqref{eq:gevp} is a real symmetric positive-semidefinite pencil over the real vector space $\mathfrak{u}(M)$, taken with the real Hilbert--Schmidt inner product $\langle X,Y\rangle_{\mathrm{Re}}=\operatorname{Re}\Tr(X^HY)$.
\end{enumerate}
\end{proposition}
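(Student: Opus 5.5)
For part~(i) the plan is to read everything off Lemma~\ref{lem:quad}. Hermitian symmetry of $\mathbf{M}$ comes from \eqref{eq:LR-bilinear}: $M_{ij}=\langle B_i,\ad_{\R}^2(B_j)\rangle_F=\langle[\R,B_i],[\R,B_j]\rangle_F$. This exhibits $\mathbf{M}$ as the Gram matrix of the family $\{[\R,B_k]\}$, so $\mathbf{M}^H=\mathbf{M}$. Positive semidefiniteness follows from $\mathbf{c}^H\mathbf{M}\mathbf{c}=\Fro{[\R,\sum_j c_jB_j]}^2\ge0$, exactly as in the proof of Theorem~\ref{thm:main}. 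In the same way $\mathbf{G}$ is the Gram matrix of $\{B_k\}$, and linear independence gives $\mathbf{c}^H\mathbf{G}\mathbf{c}=\Fro{\sum_j c_jB_j}^2>0$ for $\mathbf{c}\neq0$. Since $\mathbf{G}\succ0$, I will reduce the pencil with a Cholesky factor $\mathbf{G}=\mathbf{L}\mathbf{L}^H$ to the Hermitian matrix $\mathbf{L}^{-1}\mathbf{M}\mathbf{L}^{-H}$. That matrix is congruent to $\mathbf{M}$ and hence positive semidefinite, so every eigenvalue is real and nonnegative. Equivalently, each eigenvalue equals the Rayleigh quotient $\mathbf{c}^H\mathbf{M}\mathbf{c}/\mathbf{c}^H\mathbf{G}\mathbf{c}\ge0$ at its eigenvector.

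For part~(ii), the key observation is that $[\R,X]$ is again skew-Hermitian when $\R$ is Hermitian and $X$ is skew-Hermitian. Indeed, $[\R,X]^H=X^H\R-\R X^H=-X\R+\R X=[\R,X]$ — wait, that is Hermitian, not skew. I will recompute carefully: $(\R X-X\R)^H=X^H\R-\R X^H=-X\R+\R X=[\R,X]$, so $[\R,X]$ is Hermitian. What matters is only that it has a definite symmetry type. For any two matrices $X,Y$ that are both Hermitian, or both skew-Hermitian, $\Tr(X^HY)$ is real. In the Hermitian case $\overline{\Tr(XY)}=\Tr((XY)^H)=\Tr(YX)=\Tr(XY)$; the skew case follows the same way with two sign flips. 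Applying this to the pairs $B_i,B_j$ (both skew) makes $G_{ij}$ real. Applying it to $[\R,B_i],[\R,B_j]$ (both Hermitian) makes $M_{ij}$ real. Real entries together with Hermitian symmetry give real symmetric matrices. On real coefficient vectors the quadratic forms are $\Fro{[\R,A]}^2$ and $\Fro{A}^2$, and these agree with $\operatorname{Re}\Tr(\cdot^H\cdot)$, which coincides with $\Tr$ here because the traces are real. So the pencil is the Rayleigh--Ritz pencil of $\ad_{\R}^2$ on the real space $\mathfrak{u}(M)$ with the real Hilbert--Schmidt inner product. Positive (semi)definiteness carries over from part~(i).

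The only step needing care is the realness argument, specifically keeping track of which objects are Hermitian and which are skew. I also need to state explicitly that we restrict to real coefficients $\mathbf{c}$, since $\mathfrak{u}(M)$ is a real vector space and complex combinations of skew-Hermitian $B_k$ leave it. Everything else is a direct consequence of Lemma~\ref{lem:quad} and the computation in the proof of Theorem~\ref{thm:main}.
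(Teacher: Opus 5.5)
Your proposal is correct and follows essentially the paper's argument. Part~(i) comes from Lemma~\ref{lem:quad} and Rayleigh-quotient positivity. Part~(ii) comes from the realness of $\Tr(X^HY)$ when $X$ and $Y$ share a symmetry type. You pair the Hermitian commutators $[\R,B_i],[\R,B_j]$, while the paper equivalently pairs $B_i$ with the skew-Hermitian $\ad_{\R}^2(B_j)$. The paper's pairing also shows that $\ad_{\R}^2$ maps $\mathfrak{u}(M)$ into itself, and you should state this to justify calling the result a pencil on $\mathfrak{u}(M)$. In the written version, delete your initial false claim that $[\R,X]$ is skew-Hermitian and keep only the corrected Hermitian statement.
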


\begin{proof}
(i) For any $\mathbf{c}$, $\mathbf{c}^H\mathbf{M}\mathbf{c}=\Fro{[\R,A]}^2\ge0$ with $A=\sum_kc_kB_k$ by Lemma~\ref{lem:quad}, so $\mathbf{M}\succeq0$. Self-adjointness of $\ad_\R^2$ under the Frobenius inner product (Lemma~\ref{lem:quad}) gives
\[
M_{ij}=\langle B_i,\ad_\R^2 B_j\rangle_F=\langle\ad_\R^2 B_i,B_j\rangle_F=\overline{M_{ji}},
\]
so $\mathbf{M}$ is Hermitian. $\mathbf{G}$ is Hermitian with $\mathbf{c}^H\mathbf{G}\mathbf{c}=\Fro{\sum_kc_kB_k}^2>0$ for $\mathbf{c}\neq0$.
(ii) Suppose $B_k^H=-B_k$, so $\mathcal{B}\subseteq\mathfrak{u}(M)$, which is a real vector space (it is not closed under multiplication by $i$) carrying the real Hilbert--Schmidt inner product $\langle X,Y\rangle_{\mathrm{Re}}=\operatorname{Re}\Tr(X^HY)$. For skew-Hermitian $X,Y$ the Frobenius pairing is already real, since $\overline{\Tr(X^HY)}=\Tr(Y^HX)=\Tr(X^HY)$ by cyclicity, so $\Tr(X^HY)=\langle X,Y\rangle_{\mathrm{Re}}$ on $\mathfrak{u}(M)$. Hence $G_{ij}=\Tr(B_i^HB_j)=-\Tr(B_iB_j)$ is real, and symmetric because $\Tr(B_iB_j)=\Tr(B_jB_i)$ gives $G_{ij}=G_{ji}$. The adjoint action shifts parity: for Hermitian $\R$, $[\R,B_k]^H=B_k^H\R-\R B_k^H=\R B_k-B_k\R=[\R,B_k]$, so $[\R,B_k]$ is Hermitian, and a second bracket returns $\ad_\R^2(B_k)=[\R,[\R,B_k]]$ to skew-Hermitian; thus $\ad_\R^2$ maps $\mathfrak{u}(M)$ into itself and, by~\eqref{eq:LR-bilinear}, is self-adjoint for the real pairing,
\[
\langle B_i,\ad_\R^2B_j\rangle_{\mathrm{Re}}=\operatorname{Re}\langle[\R,B_i],[\R,B_j]\rangle_F=\langle\ad_\R^2B_i,B_j\rangle_{\mathrm{Re}}.
\]
Therefore $M_{ij}=\Tr(B_i^H\ad_\R^2B_j)=\langle B_i,\ad_\R^2B_j\rangle_{\mathrm{Re}}$ is real and symmetric. With the Hermiticity and positivity from~(i), $\mathbf{M}$ and $\mathbf{G}$ are real symmetric, and \eqref{eq:gevp} is a real symmetric positive-semidefinite pencil over $\mathfrak{u}(M)$.
\end{proof}

Remark~\ref{rem:lie} records why the skew-Hermitian case is the one of interest for continuous symmetry.

\begin{remark}[Lie-algebra specialization]\label{rem:lie}
When the candidate symmetries are continuous group elements $U=\exp(\theta A)$, the natural setting is $\mathcal{B}\subseteq\mathfrak{u}(M)$, where the real symmetric structure of Proposition~\ref{prop:MGstructure}(ii) holds and $\ad_R^2(A^\ast)=\lambda A^\ast$ is a real symmetric eigenvalue problem on $\mathcal{B}$. Discrete candidates such as permutation matrices, which are unitary but neither Hermitian nor skew-Hermitian, are admitted by the general Theorem~\ref{thm:main}; this is the case used in Sections~\ref{sec:pcf}--\ref{sec:sequential}.
\end{remark}

\begin{proposition}[Gradient-flow realization of the matched-generator search]\label{prop:flow}
Let $\rho(\mathbf{c})=\mathbf{c}^H\mathbf{M}\mathbf{c}/\mathbf{c}^H\mathbf{G}\mathbf{c}$, so that $\rho(\mathbf{c})=\Fro{[\R,A]}^2/\Fro{A}^2$ for $A=\sum_k c_k B_k$. Along the flow
\begin{equation}\label{eq:flow}
\dot{\mathbf{c}}=-\bigl(\mathbf{M}-\rho(\mathbf{c})\mathbf{G}\bigr)\mathbf{c},
\end{equation}
$\rho(\mathbf{c}(t))$ is nonincreasing, and the equilibria with $\mathbf{c}\neq0$ are exactly the generalized eigenvectors of~\eqref{eq:gevp}. On the unit $\mathbf{G}$-sphere the norm-preserving representative $\dot{\mathbf{c}}=-(\mathbf{G}^{-1}\mathbf{M}-\rho(\mathbf{c})\mathbf{I})\mathbf{c}$ is the Riemannian gradient flow of $\rho$ and, when $\lambda_1$ is simple, converges to the minimal generalized eigenvector from generic initial data. Writing $A(t)=\sum_k c_k(t)B_k$, the flow~\eqref{eq:flow} is the $\mathcal{B}$-projection of the double-commutator flow $\dot A=-\bigl([\R,[\R,A]]-\rho(A)A\bigr)$.
\end{proposition}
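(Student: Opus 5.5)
The plan is to treat the flow~\eqref{eq:flow} as a generalized-Rayleigh-quotient flow, check four things in order, and then lift it to the matrix level. Throughout, $\mathbf{M}$ and $\mathbf{G}$ are Hermitian with $\mathbf{M}\succeq0$ and $\mathbf{G}\succ0$ (Proposition~\ref{prop:MGstructure}), and $\rho(\mathbf{c})=\Fro{[\R,A]}^2/\Fro{A}^2$ by Theorem~\ref{thm:main}.

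\emph{Monotonicity.} Write $r=(\mathbf{M}-\rho\mathbf{G})\mathbf{c}$ and $q=\mathbf{c}^H\mathbf{G}\mathbf{c}$. Differentiating the quotient gives
\[
\dot\rho=\frac{2\operatorname{Re}\bigl(\dot{\mathbf{c}}^H(\mathbf{M}-\rho\mathbf{G})\mathbf{c}\bigr)}{q}.
\]
Substituting $\dot{\mathbf{c}}=-r$ yields $\dot\rho=-2\|r\|_2^2/q\le0$. Equilibria with $\mathbf{c}\neq0$ satisfy $r=0$, i.e.\ $\mathbf{M}\mathbf{c}=\rho\mathbf{G}\mathbf{c}$. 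Conversely, if $\mathbf{M}\mathbf{c}=\lambda\mathbf{G}\mathbf{c}$, then pairing with $\mathbf{c}^H$ forces $\lambda=\rho(\mathbf{c})$, so $r=0$. Hence the equilibria are exactly the generalized eigenvectors of~\eqref{eq:gevp}.

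\emph{Sphere flow.} For the normalized version, use the metric $\langle u,v\rangle_{\mathbf{G}}=\operatorname{Re}(u^H\mathbf{G}v)$. In this metric the gradient of $\rho$ is $2q^{-1}(\mathbf{G}^{-1}\mathbf{M}-\rho I)\mathbf{c}$. This vector is $\mathbf{G}$-orthogonal to $\mathbf{c}$, since $\mathbf{c}^H(\mathbf{M}-\rho\mathbf{G})\mathbf{c}=0$. It is therefore tangent to the sphere $q=1$, and there it equals the Riemannian gradient up to a positive constant. Along the flow $\dot q=-2\operatorname{Re}\bigl(\mathbf{c}^H(\mathbf{M}-\rho\mathbf{G})\mathbf{c}\bigr)=0$, so the flow preserves $q$.

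\emph{Convergence.} Change variables to $\mathbf{y}=\mathbf{G}^{1/2}\mathbf{c}$ and set $S=\mathbf{G}^{-1/2}\mathbf{M}\mathbf{G}^{-1/2}$. The flow becomes the standard Rayleigh flow $\dot{\mathbf{y}}=-(S-\rho)\mathbf{y}$ on the unit sphere, which has the explicit solution
\[
\mathbf{y}(t)=e^{-tS}\mathbf{y}_0/\|e^{-tS}\mathbf{y}_0\|.
\]
This is verified directly by differentiating. Expand $\mathbf{y}_0$ in eigenvectors of $S$. When $\lambda_1$ is simple, the term with the smallest eigenvalue dominates whenever its coefficient is nonzero, which holds for generic data (off a hyperplane). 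So $\mathbf{y}(t)$ converges, up to a phase, to the minimal eigenvector, and mapping back gives the minimal generalized eigenvector $\mathbf{c}$.

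\emph{Projection to matrices.} Let $\Pi$ be the Frobenius-orthogonal projection onto $\mathcal{B}$. For $A=\sum c_kB_k$, the coordinates of $\Pi X$ are $\mathbf{G}^{-1}\bigl(\langle B_i,X\rangle_F\bigr)_i$. Apply this to $X=-(\ad_\R^2A-\rho A)$. Lemma~\ref{lem:quad} gives $\langle B_i,\ad_\R^2A\rangle_F=(\mathbf{M}\mathbf{c})_i$ and $\langle B_i,A\rangle_F=(\mathbf{G}\mathbf{c})_i$, so the projected flow has coordinates $-\mathbf{G}^{-1}(\mathbf{M}-\rho\mathbf{G})\mathbf{c}$. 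This is the norm-preserving representative. The unnormalized form~\eqref{eq:flow} differs from it by the positive-definite preconditioner $\mathbf{G}$, which preserves both the equilibria and the monotonicity; the claim should be stated with that qualification.

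\emph{Main obstacle.} The main difficulty is being precise about which representative is exactly the projection and the exact gradient. The explicit exponential solution settles convergence, and it makes "generic" precise as a nonzero component along the minimal eigenvector.
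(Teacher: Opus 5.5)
Your argument is correct and has the same skeleton as the paper's proof. It uses the quotient-rule derivative for monotonicity and the equilibria, the substitution $\mathbf{G}^{1/2}\mathbf{c}$ that turns the normalized flow into the standard Rayleigh-quotient flow of $\mathbf{G}^{-1/2}\mathbf{M}\mathbf{G}^{-1/2}$, and pairing with the basis for the matrix-level clause.

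Two things differ, and both are in your favour.

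First, the paper cites Helmke and Moore for convergence. Your closed form $\mathbf{y}(t)=e^{-tS}\mathbf{y}_0/\|e^{-tS}\mathbf{y}_0\|$ makes that step self-contained and pins ``generic'' down as a nonzero component of $\mathbf{y}_0$ along the bottom eigenvector. You also name the metric, $\operatorname{Re}(u^H\mathbf{G}v)$, in which the normalized flow is a Riemannian gradient; the paper leaves this implicit in the substitution.

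Second, your caveat on the last clause is justified, and the paper's proof misses it. Take $P_{\mathcal{B}}$ to be the Frobenius-orthogonal projection, which is the paper's own operator in \eqref{eq:compressed}. Then the flow $\dot A=-P_{\mathcal{B}}(\ad_\R^2A-\rho A)$ has coordinates $\mathbf{G}\dot{\mathbf{c}}=-(\mathbf{M}-\rho\mathbf{G})\mathbf{c}$. That is the norm-preserving representative, and it preserves $\Fro{A}$. By contrast, \eqref{eq:flow} preserves $\|\mathbf{c}\|_2$ and in general traces different trajectories.

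The paper reaches the unqualified claim by calling $\mathbf{M}\mathbf{c}$ and $\mathbf{G}\mathbf{c}$ the $\mathcal{B}$-coordinates of $\ad_\R^2(A)$ and $A$. They are in fact the pairings $\langle B_i,\cdot\rangle_F$; the coordinates of $A$ are $\mathbf{c}$. Precisely, \eqref{eq:flow} is $\dot A=-\sum_i\langle B_i,\ad_\R^2A-\rho A\rangle_F\,B_i$. This equals the $P_{\mathcal{B}}$-projected flow only for a Frobenius-orthonormal basis ($\mathbf{G}=\mathbf{I}$); otherwise it is that flow preconditioned by $\mathbf{G}$. As you note, this preconditioning keeps the equilibria and the monotonicity, so only the word ``projection'' in the statement needs the qualification.
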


\begin{proof}
The Euclidean gradient of $\rho$ is $\nabla\rho(\mathbf{c})=\tfrac{2}{\mathbf{c}^H\mathbf{G}\mathbf{c}}(\mathbf{M}-\rho(\mathbf{c})\mathbf{G})\mathbf{c}$, so the right side of~\eqref{eq:flow} equals $-\tfrac12(\mathbf{c}^H\mathbf{G}\mathbf{c})\nabla\rho$ and
\[
\dot\rho=\langle\nabla\rho,\dot{\mathbf{c}}\rangle=-\tfrac12(\mathbf{c}^H\mathbf{G}\mathbf{c})\Fro{\nabla\rho}^2\le0,
\]
with equality iff $(\mathbf{M}-\rho\mathbf{G})\mathbf{c}=0$, i.e.\ $\mathbf{c}$ is a generalized eigenvector. The substitution $\mathbf{d}=\mathbf{G}^{1/2}\mathbf{c}$ turns~\eqref{eq:gevp} into the ordinary eigenproblem for $\widetilde{\mathbf{M}}=\mathbf{G}^{-1/2}\mathbf{M}\mathbf{G}^{-1/2}$ and the norm-preserving flow into the standard Rayleigh-quotient flow $\dot{\mathbf{d}}=-(\widetilde{\mathbf{M}}-\rho(\mathbf{d})\mathbf{I})\mathbf{d}$, which preserves $\Fro{\mathbf{d}}$ and descends to the bottom eigenvector for generic data when $\lambda_1$ is simple~\cite{helmkemoore1994}. Finally $\mathbf{M}\mathbf{c}$ and $\mathbf{G}\mathbf{c}$ are the $\mathcal{B}$-coordinates of $\ad_\R^2(A)$ and $A$ by Lemma~\ref{lem:quad}, so~\eqref{eq:flow} is the stated projected double-commutator flow.
\end{proof}

This is the static-search counterpart of the double-bracket flows of Brockett and Chu discussed in Section~\ref{sec:discussion}: here $\R$ is fixed and $\ad_\R^2$ drives the generator, rather than a matrix evolving on its isospectral manifold. It gives a continuous-time alternative to the direct pencil solve of Theorem~\ref{thm:main}.

\subsection{Exact recovery of the symmetry generator}

The zero set of the pencil is more than the yes/no certificate of Corollary~\ref{cor:zero}: restricted to a Lie algebra of candidates, it returns the entire infinitesimal symmetry algebra of $\R$.

\begin{theorem}[Lie algebra of the covariance stabilizer]\label{thm:lie-recovery}
Let $G$ be a Lie group with unitary representation $\pi$ on $\mathcal{H}$ and finite-dimensional infinitesimal representation $\mathfrak{g}$, let $R$ be self-adjoint with stabilizer $G_R=\{g\in G:\pi(g)R\pi(g)^\ast=R\}$, and suppose $[R,A]$ is Hilbert--Schmidt for every $A\in\mathfrak{g}$, so that $\ad_R^2|_{\mathfrak{g}}$ and the quadratic-form identity of Lemma~\ref{lem:quad} are defined on $\mathfrak{g}$. Then the Lie algebra of $G_R$ is
\begin{equation}\label{eq:lie-recovery}
\mathfrak{g}_R=\{A\in\mathfrak{g}:[A,R]=0\}=\ker\bigl(\ad_R^2|_{\mathfrak{g}}\bigr).
\end{equation}
Equivalently, the zero eigenspace of the double-commutator operator restricted to $\mathfrak{g}$ is exactly the infinitesimal symmetry algebra of $R$.
\end{theorem}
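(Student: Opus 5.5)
The proof splits into two equalities: the Lie algebra of $G_R$ is $\{A\in\mathfrak{g}:[A,R]=0\}$, and this set equals $\ker(\ad_R^2|_{\mathfrak{g}})$. The second is the easier one, and it comes directly from Lemma~\ref{lem:quad}. For $A\in\mathfrak{g}$ the hypothesis that $[R,A]$ is Hilbert--Schmidt makes the quadratic form $\langle A,\ad_R^2(A)\rangle_F=\Fro{[R,A]}^2$ finite. If $\ad_R^2(A)=0$, this form vanishes, so $[R,A]=0$. Conversely, $[R,A]=0$ gives $[R,[R,A]]=0$ trivially. Read on the finite-dimensional space $\mathfrak{g}$, this is the zero-eigenspace case of Corollary~\ref{cor:zero} with $\mathcal{B}=\mathfrak{g}$. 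I would state it in that form: the pencil \eqref{eq:gevp} built on a basis of $\mathfrak{g}$ has null space exactly the coordinates of $\{A\in\mathfrak{g}:[A,R]=0\}$, because $\mathbf{c}^H\mathbf{M}\mathbf{c}=0$ forces $\mathbf{M}\mathbf{c}=0$ when $\mathbf{M}\succeq0$.

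The first equality needs the standard identification of the Lie algebra of a closed subgroup. The stabilizer $G_R$ is closed because $g\mapsto\pi(g)R\pi(g)^\ast$ is continuous in the strong topology, so by Cartan's closed-subgroup theorem it is a Lie subgroup. Its Lie algebra is $\{X\in\mathrm{Lie}(G):\exp(tX)\in G_R\ \forall t\}$. Write $A=d\pi(X)$ for the skew-adjoint generator, so that $\pi(\exp tX)=e^{tA}$. The stabilizer condition for all $t$ reads $e^{tA}R\,e^{-tA}=R$. Differentiating at $t=0$ gives $[A,R]=0$ (interpreted on the appropriate domain), which is one inclusion. For the other, suppose $[A,R]=0$. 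Then $R$ commutes with the unitary group $e^{tA}$, either by Stone's theorem and functional calculus (a bounded operator commuting with a skew-adjoint generator commutes with its spectral projections) or, in finite dimension, from the power series directly. Hence $e^{tA}Re^{-tA}=R$ for every $t$, and $\exp(tX)\in G_R$.

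The main obstacle is making ``$[A,R]=0$'' rigorous when $A$ is unbounded, such as $d/dt$ or $H_{\mathrm{ell}}$. I would take it to mean that $R$ maps $\mathrm{Dom}(A)$ into itself and $AR\varphi=RA\varphi$ there. With that meaning, the derivative of $F(t)=\langle e^{tA}Re^{-tA}\varphi,\psi\rangle$ for $\varphi,\psi\in\mathrm{Dom}(A)$ is $\langle e^{tA}[A,R]e^{-tA}\varphi,\psi\rangle$, which vanishes. So $F$ is constant, and density of $\mathrm{Dom}(A)$ extends the identity $e^{tA}Re^{-tA}=R$ to all of $\mathcal{H}$. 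One further point needs care: the identification of $\mathfrak{g}$ with $\mathrm{Lie}(G)$ via $d\pi$, which requires $\pi$ to be faithful at the infinitesimal level. I would either add this as the standing convention, reading ``infinitesimal representation $\mathfrak{g}$'' as the image $d\pi(\mathrm{Lie}\,G)$, or state the result for $d\pi(\mathrm{Lie}\,G_R)$. Either choice leaves the displayed equality \eqref{eq:lie-recovery} intact.
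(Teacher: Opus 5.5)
Your proposal is correct and follows the paper's proof: you differentiate $e^{tA}Re^{-tA}=R$ at $t=0$ for one inclusion, exponentiate $[A,R]=0$ to recover $e^{tA}Re^{-tA}=R$ for the converse, and use the quadratic-form identity of Lemma~\ref{lem:quad} to identify $\ker(\ad_R^2|_{\mathfrak{g}})$ with the commutant. Your additions do not change the route, but they supply rigor that the paper's three-line proof leaves implicit: closedness of $G_R$ via Cartan's theorem, the domain meaning of $[A,R]=0$ for unbounded skew-adjoint generators together with the constancy argument for $\langle e^{tA}Re^{-tA}\varphi,\psi\rangle$, and reading $\mathfrak{g}$ as $d\pi(\mathrm{Lie}\,G)$.
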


\begin{proof}
If $A\in\mathfrak{g}_R$ then $e^{tA}Re^{-tA}=R$ for all small $t$, so differentiating at $t=0$ gives $[A,R]=0$. Conversely $[A,R]=0$ implies $e^{tA}Re^{-tA}=R$ for all $t$, so $A$ lies in the Lie algebra of $G_R$. By Lemma~\ref{lem:quad}, $\langle A,\ad_R^2 A\rangle_F=\Fro{[R,A]}^2$, so $\ad_R^2 A=0$ if and only if $[R,A]=0$; hence the kernel of $\ad_R^2|_{\mathfrak{g}}$ coincides with $\{A\in\mathfrak{g}:[A,R]=0\}$.
\end{proof}

The continuum counterpart is the exact recovery of a metaplectic generator, which makes Theorem~\ref{thm:lie-recovery} concrete for the time--frequency families of Section~\ref{sec:symplectic}.

\begin{theorem}[Exact recovery of a metaplectic generator]\label{thm:metaplectic}
Let $\mathcal{H}=L^2(\mathbb{R})$ and let $H_S$ be a self-adjoint quadratic Hamiltonian generating a one-parameter subgroup of the metaplectic representation of $Sp(2,\mathbb{R})$ (Section~\ref{sec:symplectic}), associated with a nonzero $S\in\mathfrak{sp}(2,\mathbb{R})$. Suppose $R=f(H_S)$ for a real Borel function $f$ injective on the spectrum of $H_S$, and let $\mathcal{B}\subset\mathfrak{sp}(2,\mathbb{R})$ be a finite-dimensional candidate space of generators $A$ for which $[R,A]$ is Hilbert--Schmidt, so that $\delta(A,R)$ and the identity of Lemma~\ref{lem:quad} are defined on $\mathcal{B}$, and whose commutant with $R$ is one-dimensional, $\{A\in\mathcal{B}:[A,R]=0\}=\spn\{H_S\}$. Then the minimizer of $\delta(A,R)$ over nonzero $A\in\mathcal{B}$ is $H_S$, up to scalar multiple, and the KLT of $R$ is the spectral decomposition of the recovered metaplectic generator: the Hermite--Gauss basis of Theorem~\ref{thm:symplectic} when $H_S$ is elliptic, and the continuous decomposition of Remark~\ref{rem:symplectic-cont} for the hyperbolic and parabolic classes.
\end{theorem}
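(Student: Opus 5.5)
The plan is to reduce the statement to Corollary~\ref{cor:zero} and Theorem~\ref{thm:lie-recovery} on the finite-dimensional space $\mathcal{B}$, and then read the transform off the spectral theorem for $H_S$.

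\emph{Step 1: $H_S$ commutes with $R$.} Since $R=f(H_S)$ is defined by the Borel functional calculus of the self-adjoint $H_S$, the spectral projections of $H_S$ commute with $R$. Hence $R$ commutes with every bounded function of $H_S$, in particular with the unitary group $e^{-i\theta H_S}$. This yields $[H_S,R]=0$ on the domain of $H_S$, and by the hypothesis on $\mathcal{B}$ we have $H_S\in\mathcal{B}$. So $\delta(H_S,R)=0$.

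\emph{Step 2: the minimizers are exactly $\spn\{H_S\}$.} By Lemma~\ref{lem:quad}, whose quadratic-form identity holds on $\mathcal{B}$ because each $[R,A]$ is Hilbert--Schmidt, we have $\delta(A,R)\ge0$, with equality iff $[R,A]=0$. Remark~\ref{rem:wellposed} guarantees that the minimum over the compact unit sphere of $\mathcal{B}$ is attained. By Step~1 it equals $0$, so $\lambda_{\min}=0$ in Corollary~\ref{cor:zero}. The set of minimizers is therefore $\{A\in\mathcal{B}\setminus\{0\}:[A,R]=0\}$, which by the one-dimensionality hypothesis is $\spn\{H_S\}\setminus\{0\}$. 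Equivalently, the zero eigenspace of the pencil~\eqref{eq:gevp}, that is $\ker(\ad_R^2|_{\mathcal{B}})$ by Theorem~\ref{thm:lie-recovery}, is one-dimensional and spanned by the coordinates of $H_S$. This gives recovery up to a scalar multiple.

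\emph{Step 3: the KLT is the spectral decomposition of $H_S$.} Because $f$ is injective on $\sigma(H_S)$, the spectral measure of $R$ is the pushforward of that of $H_S$. Each spectral set $\{\lambda\}$ or Borel set of $R$ pulls back to a single Borel set of $\sigma(H_S)$. So the spectral resolutions of $R$ and $H_S$ generate the same von Neumann algebra, and the two operators share the same eigenspaces or generalized eigenfunctions.

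In the elliptic case, Theorem~\ref{thm:symplectic} gives a simple discrete spectrum with the Hermite--Gauss eigenbasis. Injectivity keeps the eigenvalues $f(n+\tfrac12)$ distinct, so the KLT is exactly the Hermite--Gauss basis; this argument also applies to conjugates by an LCT. In the hyperbolic and parabolic classes, Remark~\ref{rem:symplectic-cont} supplies the Mellin and Fourier direct-integral decompositions, which diagonalize $R=f(H_S)$ by the functional calculus.

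\emph{Main obstacle.} The delicate point is the unbounded-operator bookkeeping. $H_S$ is unbounded, so $[H_S,R]=0$ must be interpreted as $RH_S\subseteq H_SR$ (strong commutation). Likewise, $\Fro{\cdot}$ on $\mathcal{B}$ requires a norm on unbounded generators; $\Fro{A}$ is not itself finite. I would handle this by working with the Hilbert--Schmidt hypothesis on $[R,A]$ as the statement stipulates. I would also treat $\mathcal{B}$ abstractly as a finite-dimensional space carrying some inner product (the coordinate Gram form $\mathbf{G}$), noting that only the numerator's zero set matters for identifying the minimizer at $\delta=0$. With that in place, Steps~1 and~2 are formal.

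For a non-elliptic $f(H_S)$, the continuous spectrum means $R$ is typically not compact. The ``KLT'' is then read as the generalized diagonalization of Remark~\ref{rem:symplectic-cont}, and I would state the conclusion at that level rather than claim an $L^2$ eigenbasis.
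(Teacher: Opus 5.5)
Your proposal is correct and follows the paper's proof essentially step for step: functional calculus gives $[f(H_S),H_S]=0$, so $H_S$ is a zero-residual candidate; Lemma~\ref{lem:quad} identifies the set of minimizers with the commutant $\{A\in\mathcal{B}:[A,R]=0\}=\spn\{H_S\}$; and injectivity of $f$ on $\sigma(H_S)$ gives $R$ and $H_S$ the same spectral resolution, which is then read off via Theorem~\ref{thm:symplectic} or Remark~\ref{rem:symplectic-cont}. Your extra bookkeeping addresses points the paper's proof passes over silently, and it is a sound repair rather than a change of route: strong commutation $RH_S\subseteq H_SR$; replacing the normalization $\Fro{A}$, which is infinite for an unbounded generator, by the coordinate Gram form; and the non-compactness of $R$ in the continuous-spectrum classes.
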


\begin{proof}
Functional calculus gives $[R,H_S]=[f(H_S),H_S]=0$, so $H_S$ is a zero-residual candidate and the minimum value of $\delta$ is zero. By Lemma~\ref{lem:quad} the set of minimizers is $\ker(\ad_R^2|_{\mathcal{B}})=\{A\in\mathcal{B}:[A,R]=0\}=\spn\{H_S\}$, so every minimizer is a scalar multiple of $H_S$. Since $R=f(H_S)$ with $f$ injective on $\sigma(H_S)$, the spectral theorem gives $R$ and $H_S$ the same spectral resolution, so the KLT of $R$ is the spectral decomposition of $H_S$: an $L^2$ eigenbasis in the elliptic case (Theorem~\ref{thm:symplectic}) and a continuous decomposition in the hyperbolic and parabolic cases (Remark~\ref{rem:symplectic-cont}).
\end{proof}

Theorem~\ref{thm:metaplectic} is the continuous analogue of the graph-automorphism certificate (Theorem~\ref{thm:aut}): in both cases an injective spectral map of the structured operator leaves the commutant, and hence the recovered symmetry, unchanged.

\section{Properties of the Spectrum}\label{sec:spectrum}

The pencil eigenvalues collect the elementary invariants of the inverse problem: nonnegativity, the zero-residual certificate, a norm bound, and the degeneracy of white covariance.

\begin{theorem}[Spectral characterization]\label{thm:spectrum}
Let $\lambda_1\le\cdots\le\lambda_d$ be the eigenvalues of \eqref{eq:gevp}. Then
\begin{enumerate}[leftmargin=2.4em]
\item[(i)] $\lambda_k\ge0$ for all $k$;
\item[(ii)] $\lambda_1=0$ iff $\R$ commutes with some $A\in\mathcal{B}$;
\item[(iii)] $\lambda_d\le 4\lVert\R\rVert_2^2$;
\item[(iv)] if $\R=\sigma^2\mathbf{I}$ then $\lambda_k=0$ for all $k$.
\end{enumerate}
\end{theorem}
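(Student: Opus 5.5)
The plan is to reduce everything to the generalized Rayleigh quotient established in the proof of Theorem~\ref{thm:main}. For $\mathbf{c}\neq0$ and $A=\sum_k c_kB_k$ we have $\mathbf{c}^H\mathbf{M}\mathbf{c}/\mathbf{c}^H\mathbf{G}\mathbf{c}=\Fro{[\R,A]}^2/\Fro{A}^2$. Because $\mathbf{G}\succ0$ (Proposition~\ref{prop:MGstructure}(i)), the substitution $\mathbf{d}=\mathbf{G}^{1/2}\mathbf{c}$ identifies the pencil eigenvalues with the ordinary eigenvalues of the Hermitian matrix $\mathbf{G}^{-1/2}\mathbf{M}\mathbf{G}^{-1/2}$. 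Courant--Fischer then gives
\[
\lambda_1=\min_{A\in\mathcal{B}\setminus\{0\}}\frac{\Fro{[\R,A]}^2}{\Fro{A}^2},\qquad
\lambda_d=\max_{A\in\mathcal{B}\setminus\{0\}}\frac{\Fro{[\R,A]}^2}{\Fro{A}^2}.
\]
Each item then follows from a bound on this quotient.

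For (i), I would use $\mathbf{M}\succeq0$ from Proposition~\ref{prop:MGstructure}(i). This makes every Rayleigh quotient nonnegative, so every eigenvalue is nonnegative; in particular $\lambda_1\ge0$, and the other eigenvalues are at least $\lambda_1$. Item (ii) is exactly Corollary~\ref{cor:zero}, read with the convention that the commuting $A$ is nonzero; I would just cite it. For (iv), when $\R=\sigma^2\mathbf{I}$ we get $\ad_\R(A)=\sigma^2A-\sigma^2A=0$ for every $A$. Hence $\ad_\R^2\equiv0$, so $\mathbf{M}=0$, and $\mathbf{M}\mathbf{c}=\lambda\mathbf{G}\mathbf{c}$ with $\mathbf{G}$ invertible forces every $\lambda_k=0$.

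The only step with actual content is (iii), and it is short. I would bound the maximal quotient using the triangle inequality together with the mixed submultiplicativity $\Fro{XY}\le\lVert X\rVert_2\Fro{Y}$ and $\Fro{YX}\le\Fro{Y}\lVert X\rVert_2$. This gives
\[
\Fro{[\R,A]}\le\Fro{\R A}+\Fro{A\R}\le2\lVert\R\rVert_2\Fro{A}.
\]
Squaring and dividing by $\Fro{A}^2$ yields $\lambda_d\le4\lVert\R\rVert_2^2$.

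A sharper alternative would write $\R=U\Lambda U^H$, so that $\Fro{[\R,A]}^2=\sum_{ij}(\lambda_i-\lambda_j)^2|\tilde a_{ij}|^2$, giving $\lambda_d\le(\lambda_{\max}-\lambda_{\min})^2$. This bound is at most $4\lVert\R\rVert_2^2$, but the crude version already suffices.

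The main point needing care is the identification of the pencil extremes with the extremes of the quotient, which requires $\mathbf{G}$ to be positive definite. That is supplied by the linear independence of the $B_k$, so I expect no real obstacle.
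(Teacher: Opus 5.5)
Your proposal is correct and follows the paper's proof item by item: (i) from $\mathbf{M}\succeq0$, (ii) from Corollary~\ref{cor:zero}, (iii) from squaring $\Fro{[\R,A]}\le2\lVert\R\rVert_2\Fro{A}$, and (iv) from $[\sigma^2\mathbf{I},A]=0$. Two steps the paper leaves implicit are made explicit in your version: the Courant--Fischer identification of $\lambda_d$ as the maximal quotient, and the nonzero-$A$ reading of (ii).
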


\begin{proof}
(i) is positive semidefiniteness of $\mathbf{M}$ (Proposition~\ref{prop:MGstructure}(i)); (ii) is Corollary~\ref{cor:zero}; (iii) follows from $\Fro{[\R,A]}\le2\lVert\R\rVert_2\Fro{A}$ on squaring; (iv) is immediate since $[\sigma^2\mathbf{I},A]=0$ for all $A$.
\end{proof}

Part (iv) is the harmonic-analysis statement that white noise has no preferred symmetry: every generator commutes with a scalar covariance, the spectrum is identically zero, and no transform is distinguished, consistent with the KLT being arbitrary for white processes.

\begin{proposition}[Total residual budget]\label{prop:sumrule}
The pencil eigenvalues satisfy the trace identity
\begin{equation}\label{eq:sumrule}
\sum_{k=1}^d\lambda_k=\operatorname{tr}\bigl(\mathbf{G}^{-1}\mathbf{M}\bigr).
\end{equation}
For a basis orthonormal in the Frobenius inner product ($\mathbf{G}=\mathbf{I}$) this reads $\sum_k\lambda_k=\sum_k\Fro{[\R,B_k]}^2$: the total commutator energy of the candidate basis is a fixed budget the eigenvalues partition, so a near-symmetry, with $\lambda_1$ close to zero, forces the remaining eigenvalues to carry that budget.
\end{proposition}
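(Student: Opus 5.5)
The identity is a statement about the generalized eigenvalues of the pencil \eqref{eq:gevp}, so I would reduce it to an ordinary Hermitian eigenproblem. By Proposition~\ref{prop:MGstructure}(i), $\mathbf{G}$ is Hermitian positive definite, so $\mathbf{G}^{1/2}$ and $\mathbf{G}^{-1/2}$ exist. The substitution $\mathbf{d}=\mathbf{G}^{1/2}\mathbf{c}$, already used in the proof of Proposition~\ref{prop:flow}, turns $\mathbf{M}\mathbf{c}=\lambda\mathbf{G}\mathbf{c}$ into $\widetilde{\mathbf{M}}\mathbf{d}=\lambda\mathbf{d}$ with $\widetilde{\mathbf{M}}=\mathbf{G}^{-1/2}\mathbf{M}\mathbf{G}^{-1/2}$. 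This matrix is Hermitian positive semidefinite and has exactly the $d$ pencil eigenvalues $\lambda_1,\dots,\lambda_d$, counted with multiplicity.

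The sum of the eigenvalues of a $d\times d$ matrix is its trace. So $\sum_k\lambda_k=\operatorname{tr}\widetilde{\mathbf{M}}$. Cyclicity of the trace then gives $\operatorname{tr}(\mathbf{G}^{-1/2}\mathbf{M}\mathbf{G}^{-1/2})=\operatorname{tr}(\mathbf{G}^{-1}\mathbf{M})$, which is \eqref{eq:sumrule}. Equivalently, $\mathbf{G}^{-1}\mathbf{M}$ is similar to $\widetilde{\mathbf{M}}$ via $\mathbf{G}^{1/2}$, so the two matrices share eigenvalues and trace.

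For the Frobenius-orthonormal case, $\mathbf{G}=\mathbf{I}$, so the sum reduces to $\operatorname{tr}\mathbf{M}=\sum_k M_{kk}$. By definition \eqref{eq:MG}, $M_{kk}=\Tr(B_k^H\ad_\R^2(B_k))$, and the quadratic-form identity \eqref{eq:quadform} of Lemma~\ref{lem:quad} gives $M_{kk}=\Fro{[\R,B_k]}^2$. Summing over $k$ yields the stated budget.

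The interpretive clause needs only nonnegativity, from Theorem~\ref{thm:spectrum}(i). The sum $\sum_{k\ge2}\lambda_k$ equals the budget minus $\lambda_1$, so as $\lambda_1\to0$ the remaining eigenvalues carry essentially the whole budget.

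There is no real obstacle here. The only step that needs care is that the pencil eigenvalues are genuinely the eigenvalues of $\mathbf{G}^{-1}\mathbf{M}$, including multiplicities. This holds because $\det(\mathbf{M}-\lambda\mathbf{G})=\det\mathbf{G}\,\det(\mathbf{G}^{-1}\mathbf{M}-\lambda\mathbf{I})$ with $\det\mathbf{G}\neq0$. That determinant factorization is what I would cite to justify the trace step.
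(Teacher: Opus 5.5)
Your proof is correct and follows the paper's own argument: identify the pencil eigenvalues, with multiplicity, with those of $\mathbf{G}^{-1}\mathbf{M}$, take the trace, then set $\mathbf{G}=\mathbf{I}$ and read off $M_{kk}=\Fro{[\R,B_k]}^2$ from Lemma~\ref{lem:quad}. Your detour through $\mathbf{G}^{-1/2}\mathbf{M}\mathbf{G}^{-1/2}$ and the factorization $\det(\mathbf{M}-\lambda\mathbf{G})=\det\mathbf{G}\,\det(\mathbf{G}^{-1}\mathbf{M}-\lambda\mathbf{I})$ only spells out the multiplicity bookkeeping that the paper leaves implicit.
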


\begin{proof}
The eigenvalues of~\eqref{eq:gevp} are those of $\mathbf{G}^{-1}\mathbf{M}$, so $\sum_k\lambda_k=\operatorname{tr}(\mathbf{G}^{-1}\mathbf{M})$, the trace being basis-independent. For $\mathbf{G}=\mathbf{I}$,
\[
\sum_k\lambda_k=\operatorname{tr}\mathbf{M}=\sum_k M_{kk}=\sum_k\Fro{[\R,B_k]}^2
\]
by Lemma~\ref{lem:quad}.
\end{proof}

\section{Stability Under Noise}\label{sec:perturbation}

In practice $R$ is estimated, not known, and an exact symmetry is never observed; the operative question is how the eigenvalue problem of Section~\ref{sec:gevp} behaves when $R$ is replaced by $\widehat R=R+E$. We give two equivalent statements: a coordinate-free bound on the recovered generator (Theorem~\ref{thm:invstab}), and its computable pencil-coordinate refinement on the residual $\lambda_{\min}$. Write $\kappa_G=\|\mathbf{G}^{-1}\|_2$ and $C_{\mathcal{B}}=\sum_i\|B_i\|_F^2$.

\paragraph{The compressed double-commutator operator.} Although $\ad_R^2$ is self-adjoint on the full operator space, $\ad_R^2(\mathcal{B})$ need not lie in $\mathcal{B}$, so the object whose spectrum the pencil of Theorem~\ref{thm:main} computes is the compressed (Galerkin) operator
\begin{equation}\label{eq:compressed}
L_R^{\mathcal{B}}=P_{\mathcal{B}}\,\ad_R^2|_{\mathcal{B}}\colon\mathcal{B}\to\mathcal{B},
\end{equation}
with $P_{\mathcal{B}}$ the Frobenius-orthogonal projection onto $\mathcal{B}$. For $A,A'\in\mathcal{B}$ one has $\langle A,L_R^{\mathcal{B}}A'\rangle_F=\langle A,\ad_R^2 A'\rangle_F$, so $L_R^{\mathcal{B}}$ is self-adjoint and positive semidefinite on $\mathcal{B}$ with quadratic form $\langle A,L_R^{\mathcal{B}}A\rangle_F=\Fro{[R,A]}^2$ by Lemma~\ref{lem:quad}; its matrix in a basis $\{B_i\}$ of $\mathcal{B}$ is the pencil $(\mathbf{M},\mathbf{G})$, and its eigenvalues are the pencil eigenvalues. The stability statements below concern this compressed operator.

\begin{theorem}[Stability of the recovered generator]\label{thm:invstab}
Let $R$ be self-adjoint and let $L_R^{\mathcal{B}}=P_{\mathcal{B}}\,\ad_R^2|_{\mathcal{B}}$ be the compressed double-commutator operator on $\mathcal{B}$ of~\eqref{eq:compressed}, so that $\langle A,L_R^{\mathcal{B}} A\rangle_F=\Fro{[R,A]}^2$ by Lemma~\ref{lem:quad}. Suppose $L_R^{\mathcal{B}}$ has a one-dimensional kernel $\spn\{A_0\}$, $\Fro{A_0}=1$, separated from the rest of its spectrum by a gap $\gamma>0$. Let $\widehat R=R+E$ and let $\widehat A$ be a unit-norm minimizer of $\Fro{[\widehat R,A]}^2$ over $A\in\mathcal{B}$. If $8\|R\|_2\|E\|_2+4\|E\|_2^2<\gamma/2$, then
\begin{equation}\label{eq:invstab}
\sin\angle(\widehat A,A_0)\le\frac{8\|R\|_2\|E\|_2+4\|E\|_2^2}{\gamma-8\|R\|_2\|E\|_2-4\|E\|_2^2}.
\end{equation}
\end{theorem}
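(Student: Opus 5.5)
We treat this as a Davis--Kahan perturbation of the bottom eigenvector of the compressed operator $L_R^{\mathcal{B}}$. Write $\widehat L=L_{\widehat R}^{\mathcal{B}}$ and $\Delta=\widehat L-L_R^{\mathcal{B}}$, both self-adjoint on $\mathcal{B}$ with the Frobenius inner product. By Theorem~\ref{thm:main}, $\widehat A$ is a unit bottom eigenvector of $\widehat L$, and by hypothesis $A_0$ is the unit bottom eigenvector of $L_R^{\mathcal{B}}$, with eigenvalue $0$. The bound~\eqref{eq:invstab} has exactly the form $\|\Delta\|/(\gamma-\|\Delta\|)$. So the plan has two steps: bound $\|\Delta\|$ by $\epsilon:=8\|R\|_2\|E\|_2+4\|E\|_2^2$, then apply a $\sin\theta$ argument.

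\emph{Step 1: bounding the perturbation.} Expand
\[
\ad_{\widehat R}^2-\ad_R^2=\ad_R\ad_E+\ad_E\ad_R+\ad_E^2 .
\]
On any $A$ with $\Fro{A}=1$ we have $\Fro{[X,A]}\le 2\|X\|_2$. Hence $\Fro{\ad_R\ad_E A}\le 4\|R\|_2\|E\|_2$, the same holds for $\ad_E\ad_R A$, and $\Fro{\ad_E^2A}\le4\|E\|_2^2$. Compressing by $P_{\mathcal{B}}$ does not increase norms, so $\|\Delta\|\le\epsilon$ in the operator norm induced by Frobenius on $\mathcal{B}$. Alternatively, one can bound the quadratic forms, $\bigl|\Fro{[\widehat R,A]}^2-\Fro{[R,A]}^2\bigr|$, which gives the same constant.

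\emph{Step 2: the $\sin\theta$ bound.} By Weyl's inequality the bottom eigenvalue satisfies $\widehat\lambda_1\le\epsilon$, and every other eigenvalue of $\widehat L$ is at least $\gamma-\epsilon$. Since $\epsilon<\gamma/2$, the bottom eigenvector is isolated.

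Decompose $\widehat A=\cos\theta\,A_0+\sin\theta\,A_\perp$, where $A_\perp\perp A_0$ is a unit vector in $\mathcal{B}$. Take the inner product of the eigen-equation $(L_R^{\mathcal{B}}+\Delta)\widehat A=\widehat\lambda_1\widehat A$ with $A_\perp$, and use that $L_R^{\mathcal{B}}$ preserves the orthogonal complement of its kernel. This gives
\[
\sin\theta\,\bigl\langle A_\perp,(L_R^{\mathcal{B}}-\widehat\lambda_1)A_\perp\bigr\rangle=-\langle A_\perp,\Delta\widehat A\rangle .
\]
Here we would actually apply the residual form more cleanly: $\bigl\|P_\perp(L_R^{\mathcal{B}}-\widehat\lambda_1)P_\perp\,\cdot\bigr\|$ is bounded below by $\gamma-\widehat\lambda_1\ge\gamma-\epsilon$ on $A_0^\perp$. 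This yields
\[
\sin\theta\le\frac{\|\Delta\|}{\gamma-\epsilon}\le\frac{\epsilon}{\gamma-\epsilon},
\]
which is~\eqref{eq:invstab}.

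\emph{Main obstacle.} The delicate point is the bookkeeping in the $\sin\theta$ step. The gap must be measured between $\widehat\lambda_1$ and the spectrum of $L_R^{\mathcal{B}}$ on $A_0^\perp$, which is $\ge\gamma$. That gives the denominator $\gamma-\widehat\lambda_1\ge\gamma-\epsilon$, matching the stated form, rather than the Davis--Kahan variant with $\gamma-2\epsilon$.

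A second point needs care: checking that the operator norm of $\ad_R$ on Frobenius space is at most $2\|R\|_2$. This holds because $\Fro{RA}\le\|R\|_2\Fro{A}$. Finally, if the $B_k$ are not Frobenius-orthonormal, we work with $L$ abstractly on $\mathcal{B}$ rather than with the pencil matrices, so that no $\kappa_G$ factor appears.
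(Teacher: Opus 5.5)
Your proof is correct and follows the paper's route. Like the paper, you bound $\|\ad_R\ad_E+\ad_E\ad_R+\ad_E^2\|\le 8\|R\|_2\|E\|_2+4\|E\|_2^2$ term by term using $\|\ad_X\|\le 2\|X\|_2$, note that compression to $\mathcal{B}$ does not increase the norm, and finish with a $\sin\theta$ bound whose denominator is $\gamma-\|\Delta\|$. The only difference is cosmetic: you derive the $\sin\theta$ step by hand, pairing the perturbed eigenvalue $\widehat\lambda_1\le\epsilon$ against the unperturbed spectrum $\ge\gamma$ on $A_0^\perp$, whereas the paper quotes Davis--Kahan with the unperturbed eigenvalue $0$ against the perturbed remaining spectrum $\ge\gamma-\|\Delta\|$ obtained from Weyl; both give the same denominator, and your inner-product identity already suffices without the residual-form detour.
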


\begin{proof}
Since $\ad_R$ is self-adjoint under the Frobenius inner product (Lemma~\ref{lem:quad}), $L_R=\ad_R^2$ and $L_{\widehat R}=(\ad_R+\ad_E)^2$, so
\begin{equation}\label{eq:LR-decomp}
L_{\widehat R}-L_R=\ad_R\ad_E+\ad_E\ad_R+\ad_E^2 .
\end{equation}
The commutator superoperator satisfies $\|\ad_X\|\le 2\|X\|_2$ for any bounded $X$, since $\|\ad_X Y\|_F=\|XY-YX\|_F\le\|X\|_2\|Y\|_F+\|Y\|_F\|X\|_2=2\|X\|_2\|Y\|_F$. The data perturbation is $\|R-\widehat R\|_2=\|E\|_2$, and the induced perturbation of the superoperator follows from~\eqref{eq:LR-decomp} term by term, by submultiplicativity of the induced norm and the triangle inequality:
\begin{equation}\label{eq:LR-perturb}
\|L_{\widehat R}-L_R\|\le\|\ad_R\|\,\|\ad_E\|+\|\ad_E\|\,\|\ad_R\|+\|\ad_E\|^2\le 8\|R\|_2\|E\|_2+4\|E\|_2^2 .
\end{equation}
This is the explicit operator-norm link between the covariance error $\|E\|_2$ and the superoperator error $\|L_{\widehat R}-L_R\|$: the latter is first order in $\|E\|_2$ with coefficient $8\|R\|_2$, plus a second-order $4\|E\|_2^2$ term. Compression to $\mathcal{B}$ does not increase it, since $L_X^{\mathcal{B}}=P_{\mathcal{B}}L_X|_{\mathcal{B}}$ with $P_{\mathcal{B}}$ a contraction gives $\|L_{\widehat R}^{\mathcal{B}}-L_R^{\mathcal{B}}\|\le\|L_{\widehat R}-L_R\|$. The kernel $\spn\{A_0\}$ sits at the edge of $\sigma(L_R^{\mathcal{B}})$ (the eigenvalue $0$), separated from the remaining spectrum by $\gamma$, and $\widehat A$ is the bottom eigenvector of $L_{\widehat R}^{\mathcal{B}}$. By Weyl's inequality the perturbed remaining spectrum stays at distance at least $\gamma-\|L_{\widehat R}^{\mathcal{B}}-L_R^{\mathcal{B}}\|$ from $0$, so the Davis--Kahan $\sin\theta$ theorem, with $\|L_{\widehat R}^{\mathcal{B}}-L_R^{\mathcal{B}}\|<\gamma/2$, yields $\sin\angle(\widehat A,A_0)\le\|L_{\widehat R}^{\mathcal{B}}-L_R^{\mathcal{B}}\|/(\gamma-\|L_{\widehat R}^{\mathcal{B}}-L_R^{\mathcal{B}}\|)$. Since $t\mapsto t/(\gamma-t)$ is increasing on $[0,\gamma)$ and $\|L_{\widehat R}^{\mathcal{B}}-L_R^{\mathcal{B}}\|\le 8\|R\|_2\|E\|_2+4\|E\|_2^2$ by~\eqref{eq:LR-perturb}, substituting gives~\eqref{eq:invstab}.
\end{proof}

The bound \eqref{eq:invstab} is coordinate-free on $\mathcal{B}$. Its computable, pencil-coordinate refinement begins from the observation that $\mathbf{M}$ is the Gram matrix of the commutators.

\begin{lemma}\label{lem:gram}
$M_{ij}=\langle[R,B_i],[R,B_j]\rangle_F$, so $\Delta\mathbf{M}=\mathbf{M}(\widehat R)-\mathbf{M}(R)$ has entries
\[
\Delta M_{ij}=\langle[E,B_i],[R,B_j]\rangle_F+\langle[R,B_i],[E,B_j]\rangle_F+\langle[E,B_i],[E,B_j]\rangle_F .
\]
\end{lemma}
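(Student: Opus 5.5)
The plan is to read off the first identity from Lemma~\ref{lem:quad} and then get the perturbation formula by expanding a bilinear form. By definition $M_{ij}=\Tr(B_i^H\ad_R^2(B_j))=\langle B_i,\ad_R^2 B_j\rangle_F$. By the polarized identity \eqref{eq:LR-bilinear}, which holds for any Hermitian $R$, this equals $\langle[R,B_i],[R,B_j]\rangle_F$. So $\mathbf{M}(R)$ is the Gram matrix of the commutators $[R,B_k]$ in the Frobenius inner product. No property of $R$ beyond Hermiticity is used.

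Next, I will assume $E$ is Hermitian, so that $\widehat R=R+E$ is Hermitian; a sample covariance perturbation is Hermitian, so this is the natural setting. Then the same identity applies to $\widehat R$, giving $\widehat M_{ij}=\langle[\widehat R,B_i],[\widehat R,B_j]\rangle_F$.

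The commutator is linear in its first slot, so $[\widehat R,B_k]=[R,B_k]+[E,B_k]$. The Frobenius inner product is sesquilinear: conjugate-linear in the first argument and linear in the second. Expanding $\langle[R,B_i]+[E,B_i],[R,B_j]+[E,B_j]\rangle_F$ therefore gives four terms, and the term $\langle[R,B_i],[R,B_j]\rangle_F$ is exactly $M_{ij}$. Subtracting it leaves the two cross terms and the quadratic term $\langle[E,B_i],[E,B_j]\rangle_F$, which is the stated formula for $\Delta M_{ij}$.

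There is no real obstacle here. The only point that needs care is the Hermiticity of $E$, which is what lets Lemma~\ref{lem:quad} be applied to $\widehat R$ and not only to $R$.

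If one did not want to assume this, the formula could instead be taken as the definition of $\Delta\mathbf{M}$ in Gram form. The grouping of the terms also sets up the later bounds. The cross terms are first order in $E$ and are controlled by Cauchy--Schwarz together with $\Fro{[X,B]}\le2\|X\|_2\Fro{B}$, which yields constants involving $C_{\mathcal{B}}$. At a true symmetry $A_0$, the cross terms vanish in the quadratic form $\mathbf{c}^H\Delta\mathbf{M}\mathbf{c}$, which is the source of the quadratic residual bound.
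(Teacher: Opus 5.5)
Your proof is correct and matches the paper's: you identify $M_{ij}$ as the Gram matrix of the commutators using the self-adjointness of $\ad_R$ from Lemma~\ref{lem:quad}, then substitute $[\widehat R,B_k]=[R,B_k]+[E,B_k]$ and expand sesquilinearly. Your explicit requirement that $E$ be Hermitian is a hypothesis the paper uses only tacitly, and it is genuinely needed: for non-Hermitian $X$ one has only $\langle B_i,\ad_X^2B_j\rangle_F=\langle\ad_{X^H}B_i,\ad_XB_j\rangle_F$, so without it the Gram form at $\widehat R$ fails.
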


\begin{proof}
By self-adjointness of $\ad_R$ under the Frobenius inner product (Lemma~\ref{lem:quad}),
\[
M_{ij}=\langle B_i,\ad_R^2 B_j\rangle_F=\langle\ad_R B_i,\ad_R B_j\rangle_F=\langle[R,B_i],[R,B_j]\rangle_F.
\]
Substituting $[\widehat R,B_k]=[R,B_k]+[E,B_k]$ and expanding gives $\Delta\mathbf{M}$.
\end{proof}

\begin{theorem}[Eigenvalue stability]\label{thm:eigstab}
$\bigl|\lambda_{\min}(\widehat R)-\lambda_{\min}(R)\bigr|\le\kappa_G\,\|\Delta\mathbf{M}\|_2\le 4\kappa_G C_{\mathcal{B}}\,\|E\|_2\bigl(2\|R\|_2+\|E\|_2\bigr).$
\end{theorem}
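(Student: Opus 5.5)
The plan has two halves, mirroring the two inequalities in the statement. The first half converts a perturbation of the pencil into a perturbation of its smallest eigenvalue. The second bounds that perturbation $\Delta\mathbf{M}$ entrywise using Lemma~\ref{lem:gram}. Throughout, $\mathbf{G}$ does not depend on $R$, since it is the Gram matrix of the fixed basis $\{B_k\}$; only $\mathbf{M}$ moves when $R$ is replaced by $\widehat R$.

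\emph{First inequality.} Since $\mathbf{G}\succ0$, set $\widetilde{\mathbf{M}}=\mathbf{G}^{-1/2}\mathbf{M}\mathbf{G}^{-1/2}$. Its ordinary eigenvalues are the pencil eigenvalues, exactly as in the proof of Proposition~\ref{prop:flow}. Replacing $R$ by $\widehat R$ changes $\widetilde{\mathbf{M}}$ by the Hermitian matrix $\mathbf{G}^{-1/2}\Delta\mathbf{M}\,\mathbf{G}^{-1/2}$. Weyl's inequality applied to the smallest eigenvalue then gives
\[
|\lambda_{\min}(\widehat R)-\lambda_{\min}(R)|\le\|\mathbf{G}^{-1/2}\Delta\mathbf{M}\,\mathbf{G}^{-1/2}\|_2\le\|\mathbf{G}^{-1/2}\|_2^2\|\Delta\mathbf{M}\|_2=\kappa_G\|\Delta\mathbf{M}\|_2 .
\]
One can reach the same bound by a Rayleigh-quotient argument: $|\mathbf{c}^H\Delta\mathbf{M}\mathbf{c}|\le\kappa_G\|\Delta\mathbf{M}\|_2\,\mathbf{c}^H\mathbf{G}\mathbf{c}$, combined with the min characterization in Theorem~\ref{thm:main}.

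\emph{Second inequality.} Bound $\|\Delta\mathbf{M}\|_2\le\|\Delta\mathbf{M}\|_F$. Lemma~\ref{lem:gram}, Cauchy--Schwarz, and $\|[X,B]\|_F\le2\|X\|_2\|B\|_F$ (as in the proof of Theorem~\ref{thm:invstab}) give
\[
|\Delta M_{ij}|\le\bigl(4\|E\|_2\|R\|_2+4\|R\|_2\|E\|_2+4\|E\|_2^2\bigr)\|B_i\|_F\|B_j\|_F=4\|E\|_2(2\|R\|_2+\|E\|_2)\,\|B_i\|_F\|B_j\|_F .
\]
The matrix with entries $\|B_i\|_F\|B_j\|_F$ is rank one, $\mathbf{b}\mathbf{b}^\top$ with $b_i=\|B_i\|_F$. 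Its Frobenius norm is therefore $\|\mathbf{b}\|^2=\sum_i\|B_i\|_F^2=C_{\mathcal{B}}$, so $\|\Delta\mathbf{M}\|_F\le4C_{\mathcal{B}}\|E\|_2(2\|R\|_2+\|E\|_2)$. Combining this with the first half yields the statement.

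\emph{Main obstacle.} The only delicate point is the first half, because the problem is a generalized rather than an ordinary eigenproblem. The congruence by $\mathbf{G}^{-1/2}$ is what produces the factor $\kappa_G$ and no worse. It also uses the fact that $\mathbf{G}$ is unperturbed, which holds because $\mathbf{G}$ depends only on the basis. The remaining steps are routine norm bookkeeping.
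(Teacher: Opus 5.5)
Your proof is correct and follows the paper's argument. The first inequality is the same Weyl bound for the congruence $\mathbf{G}^{-1/2}\mathbf{M}\,\mathbf{G}^{-1/2}$ with $\mathbf{G}$ fixed, and the second uses the same three-term expansion of $\Delta\mathbf{M}$ from Lemma~\ref{lem:gram}; the paper bounds it at the operator level via $\Phi_X\colon\mathbf{c}\mapsto\sum_kc_k[X,B_k]$ with $\|\Phi_X\|_2\le2\|X\|_2\sqrt{C_{\mathcal{B}}}$, whereas you bound it entrywise with the rank-one majorant $\mathbf{b}\mathbf{b}^\top$, reaching the identical constant.
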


\begin{proof}
The pencil eigenvalues are those of $\mathbf{G}^{-1/2}\mathbf{M}\mathbf{G}^{-1/2}$, perturbed by $\mathbf{G}^{-1/2}\Delta\mathbf{M}\,\mathbf{G}^{-1/2}$, a perturbation of spectral norm at most $\kappa_G\|\Delta\mathbf{M}\|_2$; Weyl's inequality gives the first bound. For the second, define $\Phi_X\colon\mathbf{c}\mapsto\sum_k c_k[X,B_k]$, so that $\mathbf{M}=\Phi_R^\ast\Phi_R$ and $\mathbf{M}(\widehat R)=\Phi_{\widehat R}^\ast\Phi_{\widehat R}$ with $\Phi_{\widehat R}=\Phi_R+\Phi_E$. By Cauchy--Schwarz and $\|[X,B_k]\|_F\le 2\|X\|_2\|B_k\|_F$,
\[
\|\Phi_X\mathbf{c}\|_F\le\sum_k|c_k|\,\|[X,B_k]\|_F\le 2\|X\|_2\Bigl(\textstyle\sum_k\|B_k\|_F^2\Bigr)^{1/2}\|\mathbf{c}\|_2,
\]
so $\|\Phi_X\|_2\le 2\|X\|_2\sqrt{C_{\mathcal{B}}}$. Then $\Delta\mathbf{M}=\Phi_R^\ast\Phi_E+\Phi_E^\ast\Phi_R+\Phi_E^\ast\Phi_E$ gives
\[
\|\Delta\mathbf{M}\|_2\le 2\|\Phi_R\|_2\|\Phi_E\|_2+\|\Phi_E\|_2^2\le 4C_{\mathcal{B}}\|E\|_2\bigl(2\|R\|_2+\|E\|_2\bigr).
\]
\end{proof}

\begin{theorem}[Quadratic residual at an exact symmetry]\label{thm:quad}
If $A^\ast\in\mathcal{B}$ satisfies $[R,A^\ast]=0$ (so $\lambda_{\min}(R)=0$), then
\begin{equation}
\lambda_{\min}(\widehat R)\;\le\;\frac{\|[E,A^\ast]\|_F^2}{\|A^\ast\|_F^2}\;\le\;4\,\|E\|_2^2 .
\end{equation}
The residual at a true symmetry is second order in the perturbation.
\end{theorem}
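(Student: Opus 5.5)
The plan is to bound $\lambda_{\min}(\widehat R)$ by evaluating the Rayleigh quotient of the perturbed pencil at a single test vector, namely the coefficient vector of the exact symmetry $A^\ast$. By Theorem~\ref{thm:main} applied to $\widehat R$, which is Hermitian whenever $E$ is, we have
\[
\lambda_{\min}(\widehat R)=\min_{A\in\mathcal{B}\setminus\{0\}}\frac{\Fro{[\widehat R,A]}^2}{\Fro{A}^2}\le\frac{\Fro{[\widehat R,A^\ast]}^2}{\Fro{A^\ast}^2}.
\]
We never need the minimizer for $\widehat R$; this one upper bound suffices.

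Next, expand the commutator with $\widehat R=R+E$ and linearity of the bracket: $[\widehat R,A^\ast]=[R,A^\ast]+[E,A^\ast]=[E,A^\ast]$, because $[R,A^\ast]=0$ by hypothesis. Substituting gives the first inequality, $\lambda_{\min}(\widehat R)\le\Fro{[E,A^\ast]}^2/\Fro{A^\ast}^2$. The second-order character is already visible here. In the Gram-matrix expansion of Lemma~\ref{lem:gram}, the cross terms $\langle[E,B_i],[R,B_j]\rangle_F$ vanish when contracted against the coefficients of $A^\ast$, so only the quadratic term $\langle[E,A^\ast],[E,A^\ast]\rangle_F$ survives.

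For the second inequality, use the commutator bound already recorded in the proof of Theorem~\ref{thm:invstab}:
\[
\Fro{[E,A^\ast]}\le\|E\|_2\Fro{A^\ast}+\Fro{A^\ast}\|E\|_2=2\|E\|_2\Fro{A^\ast},
\]
which follows from $\Fro{XY}\le\|X\|_2\Fro{Y}$ and $\Fro{YX}\le\Fro{Y}\|X\|_2$. Squaring and dividing by $\Fro{A^\ast}^2$ gives $4\|E\|_2^2$.

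There is no real obstacle. The only points to check are that $\widehat R$ stays Hermitian, so that Theorem~\ref{thm:main} applies (true for a sample covariance estimate), and that the Rayleigh-quotient characterization gives an upper bound at any nonzero element of $\mathcal{B}$. The remark worth making afterwards is the contrast with a false candidate. There $\Fro{[R,A]}\ne0$, and the cross term of Lemma~\ref{lem:gram} makes the change in residual first order in $\|E\|_2$, consistent with Theorem~\ref{thm:eigstab}.
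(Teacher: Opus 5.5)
Your proposal is correct and follows the paper's proof essentially verbatim: evaluate the Rayleigh quotient of Theorem~\ref{thm:main} at the test generator $A^\ast$, use $[\widehat R,A^\ast]=[E,A^\ast]$ since $[R,A^\ast]=0$, and then apply $\Fro{[E,A^\ast]}\le 2\|E\|_2\Fro{A^\ast}$. Your added checks, that $\widehat R$ must stay Hermitian and that the cross terms of Lemma~\ref{lem:gram} vanish against the coefficients of $A^\ast$, are accurate.
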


\begin{proof}
By Theorem~\ref{thm:main}, $\lambda_{\min}(\widehat R)=\min_{A\neq0}\|[\widehat R,A]\|_F^2/\|A\|_F^2\le\|[\widehat R,A^\ast]\|_F^2/\|A^\ast\|_F^2$. Since $[R,A^\ast]=0$, $[\widehat R,A^\ast]=[E,A^\ast]$, and $\|[E,A^\ast]\|_F\le 2\|E\|_2\|A^\ast\|_F$ gives the second inequality.
\end{proof}

\begin{theorem}[Generator stability]\label{thm:genstab}
Suppose the pencil at $R$ has a spectral gap $\gamma=\lambda_2-\lambda_{\min}>0$, and let $\eta=\kappa_G\|\Delta\mathbf{M}\|_2<\gamma$. Then the optimal coefficient vectors satisfy the Davis--Kahan bound
\begin{equation}
\sin\angle\bigl(\mathbf{c}^\ast(\widehat R),\mathbf{c}^\ast(R)\bigr)\le\frac{\eta}{\gamma-\eta},
\end{equation}
and the recovered generator $A^\ast=\sum_k c_k^\ast B_k$ inherits this bound up to the basis conditioning $\kappa_G$, in agreement with the coordinate-free Theorem~\ref{thm:invstab}. The recovered symmetry is stable when the matched group is spectrally separated from its competitors and ill-conditioned when several symmetries nearly coincide.
\end{theorem}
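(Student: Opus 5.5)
The plan is to reduce the generalized pencil to an ordinary Hermitian eigenproblem and then apply the Davis--Kahan $\sin\theta$ theorem there, as in the proof of Theorem~\ref{thm:eigstab}.

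\textbf{Step 1: whitening.} Substitute $\mathbf{d}=\mathbf{G}^{1/2}\mathbf{c}$. Then the pencil $(\mathbf{M},\mathbf{G})$ becomes the Hermitian matrix $\widetilde{\mathbf{M}}=\mathbf{G}^{-1/2}\mathbf{M}\mathbf{G}^{-1/2}$. Its eigenvalues are the pencil eigenvalues $\lambda_{\min}\le\lambda_2\le\cdots$, so its bottom eigenvalue is isolated by the gap $\gamma$. The Gram matrix $\mathbf{G}$ depends only on the basis $\{B_k\}$ and not on $R$, so replacing $R$ by $\widehat R$ perturbs only $\mathbf{M}$. Hence
\[
\widetilde{\mathbf{M}}(\widehat R)-\widetilde{\mathbf{M}}(R)=\mathbf{G}^{-1/2}\Delta\mathbf{M}\,\mathbf{G}^{-1/2},
\]
and this perturbation has spectral norm at most $\kappa_G\|\Delta\mathbf{M}\|_2=\eta$.

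\textbf{Step 2: Davis--Kahan.} By Weyl's inequality, the perturbed eigenvalues other than the bottom one stay at distance at least $\gamma-\eta>0$ from the unperturbed bottom eigenvalue $\lambda_{\min}$. Therefore the bottom eigenvector $\widehat{\mathbf{d}}$ of $\widetilde{\mathbf{M}}(\widehat R)$ is well defined. Davis--Kahan in residual form gives
\[
\sin\angle(\widehat{\mathbf{d}},\mathbf{d})\le\frac{\eta}{\gamma-\eta}.
\]

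\textbf{Step 3: returning to coordinates.} The stated bound is on the coefficient vectors themselves, so the angle must be moved back through $\mathbf{c}=\mathbf{G}^{-1/2}\mathbf{d}$. This is the main obstacle, because the map $\mathbf{G}^{-1/2}$ is not an isometry.

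\begin{itemize}
\item I would first read the angle in the $\mathbf{G}$-inner product, where $\mathbf{G}^{-1/2}$ is an isometry. In that inner product the $\mathbf{c}$-angle equals the $\mathbf{d}$-angle exactly.
\item For the generator, the map $\mathbf{c}\mapsto A=\sum_k c_kB_k$ is itself an isometry from the $\mathbf{G}$-inner product to the Frobenius inner product, since $\|A\|_F^2=\mathbf{c}^H\mathbf{G}\mathbf{c}$. So the $\mathbf{d}$-angle bound transfers verbatim to $\sin\angle(\widehat A,A^\ast)$ in Frobenius geometry.
\item If the angle is instead measured in Euclidean $\mathbf{c}$-coordinates, a factor of at most the condition number of $\mathbf{G}^{1/2}$ appears. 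This is the ``up to basis conditioning $\kappa_G$'' qualifier in the statement.
\end{itemize}

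\textbf{Step 4: consistency with Theorem~\ref{thm:invstab}.} Both results apply $\sin\theta$ to the same compressed operator $L_R^{\mathcal{B}}$. Its matrix in a $\mathbf{G}$-orthonormal basis is $\widetilde{\mathbf{M}}$. The only difference is that Theorem~\ref{thm:invstab} bounds the perturbation by $8\|R\|_2\|E\|_2+4\|E\|_2^2$, whereas here it is bounded by $\eta$. By Theorem~\ref{thm:eigstab}, $\eta$ is at most a constant multiple of that quantity.

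The closing qualitative sentence of the statement is then just a reading of the $\gamma$ in the denominator: the bound degenerates as $\gamma\to0$, which is when several candidate symmetries nearly coincide.
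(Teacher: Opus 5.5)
Your proposal is correct and follows the paper's argument: whiten to $\mathbf{G}^{-1/2}\mathbf{M}\mathbf{G}^{-1/2}$, note that only $\mathbf{M}$ moves so the perturbation has norm at most $\eta$, and combine Weyl with Davis--Kahan at gap $\gamma-\eta$. Your Step~3 is sharper than the paper's one-line transfer, since $\mathbf{c}\mapsto A$ is an isometry from the $\mathbf{G}$-inner product to the Frobenius inner product: the generator angle inherits the bound exactly, and the conditioning factor is paid only for Euclidean angles between coefficient vectors. One small slip is that $\gamma-\eta>0$ does not by itself make the perturbed bottom eigenvector unique, which needs $\eta<\gamma/2$; this is harmless, because the bound is vacuous when $\eta\ge\gamma/2$ and the residual-form Davis--Kahan estimate holds for any choice of bottom eigenvector.
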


\begin{proof}
Apply the Davis--Kahan $\sin\theta$ theorem to the symmetric matrix $\mathbf{G}^{-1/2}\mathbf{M}\mathbf{G}^{-1/2}$ and its perturbation of norm at most $\eta$, with eigenvalue gap $\gamma$ at the smallest eigenvalue; transferring from $\mathbf{G}^{-1/2}\mathbf{c}$ back to $\mathbf{c}$ and to $A^\ast$ costs the conditioning factor.
\end{proof}

\begin{corollary}[Sample complexity]\label{cor:sample}
If $\widehat R$ satisfies $\|\widehat R-R\|_2\le\varepsilon$, then $|\lambda_{\min}(\widehat R)-\lambda_{\min}(R)|=O(\|R\|_2\,\varepsilon)$ in general, $\lambda_{\min}(\widehat R)=O(\varepsilon^2)$ at an exact symmetry, and the recovered generator moves by $O(\|R\|_2\,\varepsilon/\gamma)$. For a sub-Gaussian sample covariance from $N$ observations in dimension $p$, $\varepsilon\asymp\|R\|_2\sqrt{p/N}$ with high probability~\cite{vershynin2018}, so the acceptance threshold $\tau$ should be set at order $\|R\|_2\varepsilon$: above the second-order floor of a true symmetry and below the first-order separation of a false one.
\end{corollary}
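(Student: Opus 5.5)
The corollary combines Theorems~\ref{thm:eigstab}, \ref{thm:quad}, and \ref{thm:genstab}, with $\|E\|_2\le\varepsilon$ substituted, plus an external concentration bound for the sample covariance. The plan has four steps.

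\emph{Step 1: general eigenvalue stability.} I will set $E=\widehat R-R$, so that $\|E\|_2\le\varepsilon$. Theorem~\ref{thm:eigstab} then gives
\[
|\lambda_{\min}(\widehat R)-\lambda_{\min}(R)|\le 4\kappa_G C_{\mathcal{B}}\,\varepsilon(2\|R\|_2+\varepsilon).
\]
In the regime $\varepsilon\le\|R\|_2$ this is at most $12\kappa_G C_{\mathcal{B}}\|R\|_2\varepsilon=O(\|R\|_2\varepsilon)$. The constants $\kappa_G,C_{\mathcal{B}}$ depend only on the fixed candidate basis, so they are absorbed into the $O(\cdot)$.

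\emph{Step 2: behavior at an exact symmetry.} When some $A^\ast\in\mathcal{B}$ commutes with $R$, Theorem~\ref{thm:quad} gives directly $\lambda_{\min}(\widehat R)\le 4\varepsilon^2$.

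\emph{Step 3: generator displacement.} Theorem~\ref{thm:genstab} applies with $\eta=\kappa_G\|\Delta\mathbf{M}\|_2$. By Step~1, $\eta=O(\|R\|_2\varepsilon)$. Once $\eta\le\gamma/2$, the sine of the angle is at most $2\eta/\gamma=O(\|R\|_2\varepsilon/\gamma)$, and the transfer from coefficients to $A^\ast$ costs only the fixed conditioning factor. The coordinate-free Theorem~\ref{thm:invstab} gives the same order, with constant $16\|R\|_2\varepsilon/\gamma$ under its hypothesis. I need to state explicitly the smallness assumption $\varepsilon\lesssim\gamma/\|R\|_2$ that makes these theorems applicable.

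\emph{Step 4: sample complexity and threshold.} I will quote the standard sub-Gaussian covariance concentration bound~\cite{vershynin2018}: with probability at least $1-2e^{-t}$,
\[
\|\widehat R-R\|_2\le C\|R\|_2\Bigl(\sqrt{(p+t)/N}+(p+t)/N\Bigr),
\]
which is $\asymp\|R\|_2\sqrt{p/N}$ for $N\gtrsim p$. Feeding this $\varepsilon$ into Steps~1--3 gives the three rates.

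For the threshold I separate the two cases. A true symmetry has $\lambda_{\min}(\widehat R)\le4\varepsilon^2$. A false candidate space has $\lambda_{\min}(R)=\lambda_0>0$ by Theorem~\ref{thm:spectrum}(ii), so $\lambda_{\min}(\widehat R)\ge\lambda_0-O(\|R\|_2\varepsilon)$. Any $\tau$ with $4\varepsilon^2<\tau<\lambda_0-c\|R\|_2\varepsilon$ therefore separates them. Since $\varepsilon\ll\|R\|_2$ implies $\varepsilon^2\ll\|R\|_2\varepsilon$, the choice $\tau\asymp\|R\|_2\varepsilon$ sits above the floor, and it sits below the false-candidate level provided $\lambda_0$ exceeds a constant multiple of $\|R\|_2\varepsilon$.

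The main obstacle is making precise the sense of the "first-order separation" of a false candidate. This amounts to an identifiability assumption $\lambda_0\gg\|R\|_2\varepsilon$, which I would state as a hypothesis rather than prove. The remaining steps are direct substitutions.
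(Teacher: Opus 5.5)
Your proposal is correct and is the derivation the paper intends. The corollary has no separate proof and follows by substituting $\|E\|_2\le\varepsilon$ into Theorems~\ref{thm:eigstab}, \ref{thm:quad}, and \ref{thm:genstab} (or \ref{thm:invstab}) and quoting the sub-Gaussian concentration bound; the side conditions you make explicit are the ones the paper leaves implicit, namely $\varepsilon\lesssim\min(\|R\|_2,\gamma/\|R\|_2)$, a margin $\lambda_{\min}(R)\gg\|R\|_2\varepsilon$ for a false candidate space, and $\tau$ read as a threshold on the squared residual $\lambda_{\min}$ rather than the normalized $\delta$, whose true-symmetry floor is only first order. The one slip is cosmetic: under the hypothesis of Theorem~\ref{thm:invstab} the bound is $(16\|R\|_2\varepsilon+8\varepsilon^2)/\gamma$ rather than $16\|R\|_2\varepsilon/\gamma$, which is still $O(\|R\|_2\varepsilon/\gamma)$.
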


\begin{remark}[Why closest-match selection works]\label{rem:closest}
Theorem~\ref{thm:quad} is the reason matching an empirical covariance to its nearest group is reliable: a genuine symmetry produces a residual that shrinks \emph{quadratically} with the estimation error, so it sits well below the residual of any non-symmetry, which stays within $O(\varepsilon)$ of its positive exact value. Theorem~\ref{thm:genstab} is the caveat: when two candidate symmetries are nearly tied (small gap $\gamma$), the recovered generator is ill-conditioned and the selection should report the tie rather than commit. This near-degenerate case, two structures fitting almost equally well, is exactly the regime met when no exact match exists and the closest is chosen.
\end{remark}

\begin{proposition}[The identifiability gap]\label{prop:idgap}
Define the identifiability gap of the pencil~\eqref{eq:gevp} as $g=\lambda_2-\lambda_1\ge0$. Then:
\begin{enumerate}[leftmargin=2.4em]
\item[(i)] $g>0$ iff the residual-minimizing generator is unique up to scale; $g=0$ iff the minimal residual is attained on a subspace of dimension at least two, in which case the matched symmetry is determined only up to that subspace.
\item[(ii)] Under a covariance perturbation with $\eta=\kappa_G\|\Delta\mathbf{M}\|_2<g$, the recovered generator obeys
\[
\sin\angle\bigl(\mathbf{c}^\ast(\widehat R),\mathbf{c}^\ast(R)\bigr)\le\frac{\eta}{g-\eta},
\]
which is Theorem~\ref{thm:genstab} with $\gamma=g$; the bound degrades as $g\to0$.
\end{enumerate}
Thus $g$ is the margin by which the matched symmetry beats its nearest competitor: the acceptance threshold $\tau$ of Corollary~\ref{cor:sample} certifies a unique match when it lies below $g$, and the tie-reporting of Remark~\ref{rem:closest} is the regime $g\lesssim\tau$.
\end{proposition}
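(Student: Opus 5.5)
Both parts reduce to the symmetric matrix $\widetilde{\mathbf{M}}=\mathbf{G}^{-1/2}\mathbf{M}\mathbf{G}^{-1/2}$ via the substitution $\mathbf{d}=\mathbf{G}^{1/2}\mathbf{c}$, as in the proofs of Proposition~\ref{prop:flow} and Theorem~\ref{thm:genstab}. Since $\mathbf{G}\succ0$ (Proposition~\ref{prop:MGstructure}(i)), the map $\mathbf{c}\mapsto\mathbf{G}^{1/2}\mathbf{c}$ is a linear bijection. It carries generalized eigenvectors of~\eqref{eq:gevp} to ordinary eigenvectors of $\widetilde{\mathbf{M}}$ with the same eigenvalues, and it preserves the dimension of every eigenspace. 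It also maps the coefficient space bijectively onto $\mathcal{B}$ through $A=\sum_kc_kB_k$, by linear independence of the $B_k$. Consequently ``unique up to scale'' has the same meaning for $\mathbf{c}$, for $\mathbf{d}$, and for the generator $A$.

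For (i), I would use Theorem~\ref{thm:main} and Courant--Fischer to identify the set of minimizers of $\Fro{[\R,A]}^2/\Fro{A}^2$ over $\mathcal{B}\setminus\{0\}$ with the nonzero vectors of the $\lambda_1$-eigenspace of $\widetilde{\mathbf{M}}$, pulled back to $\mathcal{B}$. For a Hermitian matrix the minimum of the Rayleigh quotient is attained exactly on $\ker(\widetilde{\mathbf{M}}-\lambda_1\mathbf{I})\setminus\{0\}$. One checks this by expanding in an orthonormal eigenbasis: the quotient equals $\lambda_1$ iff every component along eigenvalues $>\lambda_1$ vanishes. The multiplicity of $\lambda_1$ is the number of indices $k$ with $\lambda_k=\lambda_1$. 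So $\lambda_2>\lambda_1$ iff that eigenspace is one-dimensional, which means the minimizer is unique up to scale. Likewise $g=0$ iff the eigenspace has dimension at least two, and then every element of that subspace is a minimizer. This last statement is the claim that the matched symmetry is determined only up to that subspace. Nonnegativity of $g$ holds because the eigenvalues are ordered.

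For (ii), I would simply invoke Theorem~\ref{thm:genstab} with $\gamma=\lambda_2-\lambda_{\min}=g$, noting that $\lambda_{\min}=\lambda_1$. That theorem requires exactly $\eta=\kappa_G\|\Delta\mathbf{M}\|_2<\gamma$, and its conclusion is the displayed bound. If one wants the argument self-contained, it goes as follows. The perturbation of $\widetilde{\mathbf{M}}$ is $\mathbf{G}^{-1/2}\Delta\mathbf{M}\,\mathbf{G}^{-1/2}$, with spectral norm at most $\kappa_G\|\Delta\mathbf{M}\|_2=\eta$. By Weyl's inequality the perturbed spectrum other than the bottom eigenvalue stays at distance at least $g-\eta$ from the perturbed bottom eigenvalue. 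Davis--Kahan then gives $\sin\angle\le\eta/(g-\eta)$. The degradation as $g\to0$ is visible from the formula, since the denominator vanishes.

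The closing interpretive sentence about $\tau$ is not a mathematical claim that needs proof beyond combining (i) with Corollary~\ref{cor:sample}. I would remark on it without formalizing it.

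The only delicate point, and so the main obstacle, is the transfer between angles measured for $\mathbf{d}=\mathbf{G}^{1/2}\mathbf{c}$ and angles measured for $\mathbf{c}$ itself. Davis--Kahan naturally controls $\mathbf{d}$, and converting back introduces a conditioning factor, as Theorem~\ref{thm:genstab} concedes. I would handle this by stating (ii) with the angle interpreted in the $\mathbf{G}$-inner product, which is the Frobenius angle between the recovered generators in $\mathcal{B}$. In that metric the bound holds exactly. In the Euclidean coefficient metric the bound holds only after multiplying by the condition number of $\mathbf{G}^{1/2}$, which is $1$ for a Frobenius-orthonormal basis.
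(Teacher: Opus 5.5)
Your proposal is correct and follows the paper's proof: (i) is the simplicity of the bottom eigenvalue of $\mathbf{G}^{-1/2}\mathbf{M}\mathbf{G}^{-1/2}$, whose multiplicity is the dimension of the minimizing subspace, and (ii) is Theorem~\ref{thm:genstab} with $\gamma=g$. One small fix to your optional self-contained sketch: Weyl keeps the perturbed remainder of the spectrum at distance at least $g-\eta$ from the \emph{unperturbed} $\lambda_1$ (from the perturbed bottom eigenvalue only $g-2\eta$ is guaranteed), and that is the separation Davis--Kahan needs to give $\eta/(g-\eta)$; separately, your remark that this bound holds exactly for the $\mathbf{G}$-angle (the Frobenius angle between generators), while the Euclidean coefficient angle picks up the condition number of $\mathbf{G}^{1/2}$, is correct and more precise than the paper's own phrasing in Theorem~\ref{thm:genstab}.
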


\begin{proof}
(i) The minimizer of $\rho$ is unique up to scale iff the smallest eigenvalue of $\mathbf{G}^{-1/2}\mathbf{M}\mathbf{G}^{-1/2}$ is simple, i.e.\ $\lambda_1<\lambda_2$; otherwise the minimal eigenspace has dimension equal to the multiplicity of $\lambda_1$. (ii) is Theorem~\ref{thm:genstab} with $\gamma=\lambda_2-\lambda_1=g$.
\end{proof}

\section{The Permutation Commutator Formula and Graph Automorphisms}\label{sec:pcf}

For permutation generators the residual has a transparent spectral form. The results of this section refine the framework when $\mathcal{B}$ is spanned by permutation matrices; they were first announced, in a signal-processing context, in the companion work~\cite{thornton2026algebraicdiversitygrouptheoreticspectral}, and are recorded here self-contained from the eigenvalue setup of Section~\ref{sec:gevp}.

\begin{proposition}[Eigenvalue-difference representation]\label{prop:eigdiff}
If $\R=\mathrm{diag}(\lambda_1,\ldots,\lambda_M)$ then for any $A\in\mathbb{C}^{M\times M}$,
\begin{equation}
\Fro{[\R,A]}^2=\sum_{j\neq k}(\lambda_j-\lambda_k)^2\,|A_{jk}|^2 .
\end{equation}
\end{proposition}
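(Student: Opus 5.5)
The plan is a direct entrywise computation, since the statement is a coordinate identity for a diagonal $\R$. I will write $\R=\mathrm{diag}(\lambda_1,\ldots,\lambda_M)$ and compute the $(j,k)$ entry of each product separately. Left multiplication by a diagonal matrix scales rows, so $(\R A)_{jk}=\lambda_jA_{jk}$. Right multiplication scales columns, so $(A\R)_{jk}=A_{jk}\lambda_k$. Hence
\[
[\R,A]_{jk}=(\lambda_j-\lambda_k)\,A_{jk}.
\]

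Next I will sum the squared moduli over all index pairs, using $\Fro{X}^2=\sum_{j,k}|X_{jk}|^2$. This gives $\Fro{[\R,A]}^2=\sum_{j,k}|\lambda_j-\lambda_k|^2|A_{jk}|^2$. The $\lambda_j$ are real because $\R$ is Hermitian, so $|\lambda_j-\lambda_k|^2=(\lambda_j-\lambda_k)^2$. The diagonal terms $j=k$ carry the factor $(\lambda_j-\lambda_j)^2=0$ and drop out, leaving exactly the sum over $j\neq k$ in the statement.

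No step is a genuine obstacle. The only points to watch are to state realness of the eigenvalues explicitly and to note that $A$ is arbitrary: no Hermiticity or permutation structure is used, and the identity holds for any complex $A$.

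I will close with a remark on scope. For a general Hermitian $\R=U\Lambda U^H$, unitary invariance of the Frobenius norm gives $\Fro{[\R,A]}=\Fro{[\Lambda,U^HAU]}$. The formula therefore applies with $A_{jk}$ read in the eigenbasis of $\R$, which is how it will be used for permutation generators later. This also matches Proposition~\ref{prop:sector}, which is the block version of the same computation.
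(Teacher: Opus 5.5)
Your proposal is correct and is the same entrywise computation the paper uses: $[\R,A]_{jk}$ is $A_{jk}$ times an eigenvalue difference, and squaring and summing over $j,k$ gives the identity, with the $j=k$ terms vanishing. Your sign $(\lambda_j-\lambda_k)$ is the correct one for $[\R,A]=\R A-A\R$; the paper writes $(\lambda_k-\lambda_j)$, which makes no difference after squaring. You also state explicitly that the $\lambda_j$ are real, a point the paper leaves implicit.
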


\begin{proof}
$[\R,A]_{jk}=(\lambda_k-\lambda_j)A_{jk}$; squaring and summing gives the result.
\end{proof}

\begin{corollary}[Permutation commutator formula]\label{cor:pcf}
For a permutation matrix $\mathbf{P}_\sigma$ expressed in the eigenbasis of $\R$,
\begin{equation}\label{eq:pcf}
\Fro{[\mathbf{P}_\sigma,\R]}^2=\sum_{k=1}^{M}(\lambda_k-\lambda_{\sigma(k)})^2 .
\end{equation}
The residual equals the total squared eigenvalue displacement along the orbits of $\sigma$.
\end{corollary}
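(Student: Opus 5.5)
The plan is to read the corollary as a direct specialization of Proposition~\ref{prop:eigdiff}. The phrase ``expressed in the eigenbasis of $\R$'' means we work in coordinates where $\R=\mathrm{diag}(\lambda_1,\ldots,\lambda_M)$, and $\mathbf{P}_\sigma$ is a permutation matrix in those same coordinates. I fix the convention $(\mathbf{P}_\sigma)_{jk}=1$ if $j=\sigma(k)$ and $0$ otherwise. Each column $k$ of $\mathbf{P}_\sigma$ then has exactly one nonzero entry, equal to $1$, located in row $\sigma(k)$.

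First, I use $\Fro{[\mathbf{P}_\sigma,\R]}=\Fro{-[\R,\mathbf{P}_\sigma]}=\Fro{[\R,\mathbf{P}_\sigma]}$, which lets me apply Proposition~\ref{prop:eigdiff} with $A=\mathbf{P}_\sigma$. This gives
\[
\Fro{[\mathbf{P}_\sigma,\R]}^2=\sum_{j\neq k}(\lambda_j-\lambda_k)^2\,|(\mathbf{P}_\sigma)_{jk}|^2 .
\]
Second, I collapse the double sum. In column $k$ the only surviving term is $j=\sigma(k)$, and only when $\sigma(k)\neq k$. The fixed points of $\sigma$ are excluded from the restricted sum, but each one would contribute $(\lambda_k-\lambda_k)^2=0$ anyway. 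So the restriction $j\neq k$ can be dropped, and the sum becomes $\sum_{k=1}^M(\lambda_{\sigma(k)}-\lambda_k)^2$. This is \eqref{eq:pcf}, since the square makes the sign irrelevant.

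Third, I check that the opposite convention $(\mathbf{P}_\sigma)_{jk}=[k=\sigma(j)]$ yields the same value. That convention gives $\sum_j(\lambda_j-\lambda_{\sigma(j)})^2$, which is literally the same expression. The orbit interpretation follows by grouping the terms of the sum according to the cycles of $\sigma$.

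I do not expect any real obstacle. The one point that needs care is the hypothesis itself. Corollary~\ref{cor:pcf} requires $\mathbf{P}_\sigma$ to be a permutation matrix in the eigenbasis of $\R$. If $\mathbf{P}_\sigma$ is instead a permutation in the original coordinates, then conjugating by the eigenvector matrix generally destroys the permutation structure, and the formula no longer applies. I would state this explicitly to keep the corollary's scope clear.
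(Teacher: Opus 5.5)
Your proposal is correct and follows the paper's route: the corollary is given as an immediate specialization of Proposition~\ref{prop:eigdiff} to $A=\mathbf{P}_\sigma$, whose only nonzero entries sit at positions $(\sigma(k),k)$, with fixed points contributing zero. Your remark that the formula requires $\mathbf{P}_\sigma$ to be a permutation matrix in the eigenbasis of $\R$, not in the original coordinates, is a correct clarification of scope.
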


The identity incurs zero residual; a transposition of $i,j$ incurs $2(\lambda_i-\lambda_j)^2$; a cyclic shift incurs $\sum_k(\lambda_k-\lambda_{k+1})^2$. Minimizing over permutations is thus an eigenvalue-assignment problem. The clean boundary case $\delta=0$ for graph-structured covariance is the following.

\begin{theorem}[Automorphism characterization]\label{thm:aut}
Let $\mathcal{G}$ be a graph with Laplacian $\mathbf{L}$, and let $\R=f(\mathbf{L})$ for any $f$ injective on the spectrum of $\mathbf{L}$ (for instance the diffusion covariance $(\mathbf{I}+\mathbf{L})^{-1}$ or the heat kernel $\exp(-t\mathbf{L})$). Then for every $\sigma\in S_M$,
\begin{equation}
\delta(\mathbf{P}_\sigma,\R)=0\quad\Longleftrightarrow\quad \sigma\in\Aut(\mathcal{G}).
\end{equation}
\end{theorem}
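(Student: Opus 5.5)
Since $\mathbf{P}_\sigma$ and $\R$ are nonzero, $\delta(\mathbf{P}_\sigma,\R)=0$ is equivalent to $[\mathbf{P}_\sigma,\R]=0$. The plan is to reduce this to $[\mathbf{P}_\sigma,\mathbf{L}]=0$ and then to the combinatorial automorphism condition. That gives two equivalences:
\[
[\mathbf{P}_\sigma,\R]=0\iff[\mathbf{P}_\sigma,\mathbf{L}]=0\iff\sigma\in\Aut(\mathcal{G}).
\]

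\emph{Second equivalence (combinatorial).} Write $\mathbf{L}=\mathbf{D}-\mathbf{W}$ with $\mathbf{W}$ the (weighted) adjacency matrix and $\mathbf{D}$ the degree matrix. Orient $\mathbf{P}_\sigma$ so that $(\mathbf{P}_\sigma\mathbf{L}\mathbf{P}_\sigma^{\top})_{ij}=\mathbf{L}_{\sigma^{-1}(i)\sigma^{-1}(j)}$. Since $\mathbf{P}_\sigma$ is orthogonal, commutation is equivalent to $\mathbf{L}_{\sigma(i)\sigma(j)}=\mathbf{L}_{ij}$ for all $i,j$.
\begin{enumerate}[leftmargin=2.4em]
\item[(a)] Off the diagonal, this says $w_{\sigma(i)\sigma(j)}=w_{ij}$, i.e.\ $\sigma$ preserves adjacency.
\item[(b)] On the diagonal, the condition is the degree condition. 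Degree preservation is implied by adjacency preservation, since degrees are row sums of $\mathbf{W}$.
\end{enumerate}
Hence commuting with $\mathbf{L}$ is exactly $\sigma\in\Aut(\mathcal{G})$, for simple or weighted graphs alike.

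\emph{First equivalence (spectral).} This is where injectivity of $f$ enters, and I expect it to be the only real obstacle.
\begin{enumerate}[leftmargin=2.4em]
\item[(a)] \emph{Forward direction.} $\mathbf{L}$ is real symmetric, so $\R=f(\mathbf{L})=\sum_\mu f(\mu)\Pi_\mu$, where $\Pi_\mu$ are the spectral projectors of $\mathbf{L}$. Each $\Pi_\mu$ is a polynomial in $\mathbf{L}$ (Lagrange interpolation on the finite spectrum). So commuting with $\mathbf{L}$ gives commuting with every $\Pi_\mu$, hence with $\R$.
\item[(b)] \emph{Converse.} Because $f$ is injective on $\sigma(\mathbf{L})$, the values $f(\mu)$ are distinct. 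So the $\Pi_\mu$ are exactly the spectral projectors of $\R$, and each is a polynomial in $\R$, again by Lagrange interpolation, now at the distinct points $f(\mu)$. Consequently $\mathbf{L}=\sum_\mu\mu\,\Pi_\mu=g(\R)$ for the interpolating polynomial $g$ with $g(f(\mu))=\mu$. Commuting with $\R$ therefore gives commuting with $\mathbf{L}$.
\end{enumerate}

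Alternatively, the converse can be phrased with Corollary~\ref{cor:pcf} in an eigenbasis of $\mathbf{L}$. Since $\mathbf{P}_\sigma$ is no longer a permutation in that basis, I would instead use the projector argument above. One detail to state: $f$ only needs to be defined on the spectrum, and the examples $(\mathbf{I}+\mathbf{L})^{-1}$ and $\exp(-t\mathbf{L})$ are strictly monotone, hence injective. Chaining the two equivalences proves the theorem.
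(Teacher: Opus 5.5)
Your proof is correct and follows the paper's route: you reduce $\delta(\mathbf{P}_\sigma,\R)=0$ to $[\mathbf{P}_\sigma,\R]=0$, use injectivity of $f$ to identify the commutant of $f(\mathbf{L})$ with that of $\mathbf{L}$, and read the automorphism condition off $\mathbf{P}_\sigma\mathbf{L}\mathbf{P}_\sigma^{\top}=\mathbf{L}$. You also supply two details the paper only asserts: the Lagrange-interpolation argument that $\mathbf{L}=g(\R)$, and the entrywise check that commuting with $\mathbf{L}$ is exactly adjacency preservation, with degree preservation following from it.
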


\begin{proof}
($\Leftarrow$) $\sigma\in\Aut(\mathcal{G})$ gives $\mathbf{P}_\sigma\mathbf{L}\mathbf{P}_\sigma^T=\mathbf{L}$, hence $[\mathbf{P}_\sigma,f(\mathbf{L})]=0$. ($\Rightarrow$) Injectivity of $f$ on the spectrum makes the commutant of $f(\mathbf{L})$ equal to that of $\mathbf{L}$, so $[\mathbf{P}_\sigma,\mathbf{L}]=0$ and $\sigma\in\Aut(\mathcal{G})$.
\end{proof}

\begin{remark}[Scope of the equivalence]\label{rem:commutant-aut}
The equivalence holds for permutation generators and a spectrally injective covariance map, and both hypotheses carry weight. The inclusion $\Aut(\mathcal{G})\subseteq\{A:[A,\R]=0\}$ is unconditional, since every automorphism commutes with $\mathbf{L}$ and hence with $f(\mathbf{L})$. It is the reverse inclusion that requires the hypotheses. The commutant $\{A:[A,\R]=0\}$ is the full algebra of operators preserving the eigenspaces of $\R$, which for a graph with repeated Laplacian eigenvalues is strictly larger than $\Aut(\mathcal{G})$ and contains non-permutation, and even continuous, symmetries; intersecting it with the permutation matrices is what cuts it down to $\Aut(\mathcal{G})$. Likewise, if $f$ is not injective on the spectrum of $\mathbf{L}$, commuting with $f(\mathbf{L})$ no longer forces commuting with $\mathbf{L}$. The pencil on a permutation basis with a spectrally injective $\R$ enforces both restrictions, and only under them does $\delta=0$ certify an automorphism rather than a more general commuting operator.
\end{remark}

That automorphisms commute with the Laplacian is classical algebraic graph theory~\cite{biggs1993,godsilroyle2001}; Theorem~\ref{thm:aut} is the statement that the commutant survives any spectrally injective covariance map, so the eigenvalue problem \eqref{eq:gevp} on a permutation basis has $\lambda_{\min}=0$ exactly when the basis contains an automorphism. The complementary problem of deciding isomorphism between two graphs, addressed by colour refinement~\cite{weisfeilerleman1968} and Babai's quasipolynomial algorithm~\cite{babai2016}, is not directly comparable; here the graph is fixed and its automorphisms are read off from $\delta$.

\paragraph{Numerical illustration.}
Figure~\ref{fig:recovery} illustrates the boundary case $\delta=0$ of Corollary~\ref{cor:pcf} and Theorem~\ref{thm:aut} across four covariance types, each selecting a different generator: a directional process the cyclic shift, a persymmetric AR(1) the reflection, a heteroscedastic diagonal the modulation, and a permutation-invariant covariance that permutation.

\begin{figure}[H]
\centering
\includegraphics[width=0.96\textwidth]{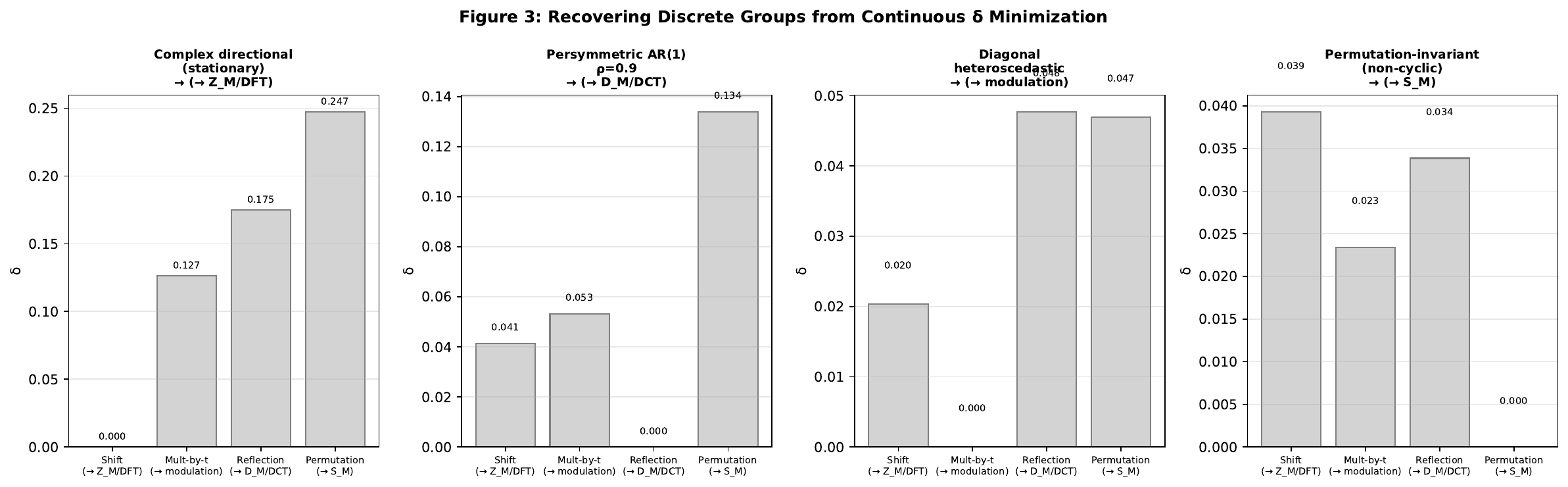}
\caption{Recovering discrete structure from the residual $\delta$ (Corollary~\ref{cor:pcf}, Theorem~\ref{thm:aut}). For four covariance types each generator's residual is computed and exactly one attains $\delta=0$: a complex directional process selects the cyclic shift (Fourier); a real persymmetric AR(1) covariance selects reflection (the even/odd parity split), not the shift, because a symmetric Toeplitz matrix is persymmetric; a diagonal heteroscedastic covariance selects modulation; and a covariance invariant under a fixed non-cyclic permutation selects that permutation.}
\label{fig:recovery}
\end{figure}

Table~\ref{tab:graphaut} applies Theorem~\ref{thm:aut} to six graphs with diffusion covariance $\R=(\mathbf{I}+\mathbf{L})^{-1}$ and a fixed candidate set of five permutations (cyclic shift, reflection, a transposition, a block swap, a $3$-cycle). In every case the residual-minimizing generator attains $\delta=0$ to machine precision and is a verified automorphism, and the recovered generator reflects the structure of $\Aut(\mathcal{G})$.

\begin{table}[H]
\centering
\small
\begin{tabular}{lcccc}
\toprule
\textbf{Graph} & $|\Aut(\mathcal{G})|$ & \textbf{Best generator} & $\delta_{\min}$ & \textbf{Automorphism?} \\
\midrule
cycle $C_6$ & $12$ & reflection & $<10^{-15}$ & yes \\
complete $K_4$ & $24$ & transposition $(1\,2)$ & $<10^{-15}$ & yes \\
path $P_6$ & $2$ & reflection & $<10^{-15}$ & yes \\
triangular prism & $12$ & block swap & $<10^{-15}$ & yes \\
complete bipartite $K_{3,3}$ & $72$ & transposition $(1\,2)$ & $0$ & yes \\
complete $K_3$ & $6$ & cyclic shift & $0$ & yes \\
\bottomrule
\end{tabular}
\caption{Graph-automorphism detection by the residual $\delta$ (Theorem~\ref{thm:aut}), diffusion covariance $\R=(\mathbf{I}+\mathbf{L})^{-1}$. In all six cases the best generator achieves $\delta=0$ and is an automorphism. The reproducibility script \texttt{graph\_automorphism\_table.py} is included with the manuscript.}
\label{tab:graphaut}
\end{table}

\section{Sequential Generator Recovery for Non-Abelian Symmetry}\label{sec:sequential}

Theorem~\ref{thm:main} returns a single optimal generator. For Abelian symmetry this suffices: $A^\ast$ generates the group by powers (finite order) or exponential lift ($A^\ast\in\mathfrak{u}(M)$). A non-Abelian symmetry group needs several generators, which a single eigenvalue solve cannot supply. We formalize a sequential procedure that deflates the search space against the subgroup found so far. Write
\begin{equation}\label{eq:autR}
\Aut(\R)=\{\sigma\in S_M:\mathbf{P}_\sigma\R=\R\mathbf{P}_\sigma\},
\end{equation}
identified with $\Aut(\mathcal{G})$ when $\R=f(\mathbf{L})$ by Theorem~\ref{thm:aut}.

\paragraph{Procedure (sequential eigenvalue problem with deflation).}
Initialize $G_0=\{e\}$, $k=0$, with acceptance threshold $\tau\ge0$ and iteration cap $K_{\max}$.
\begin{enumerate}[leftmargin=2.6em,label=(S\arabic*)]
\item Form the deflated basis $\mathcal{B}^{\perp G_k}$, the Frobenius-orthogonal complement of $\spn\{\mathbf{P}_g:g\in G_k\}$ within $\spn(\mathcal{B})$.
\item Solve \eqref{eq:gevp} on $\mathcal{B}^{\perp G_k}$ for the optimal $A^\ast_{k+1}$ with $\Fro{A^\ast_{k+1}}=1$.
\item Round to the nearest permutation, $\sigma^\ast_{k+1}=\arg\max_{\sigma}\langle A^\ast_{k+1},\mathbf{P}_\sigma\rangle_F$, by the Hungarian algorithm~\cite{kuhn1955} in $O(M^3)$.
\item If $\delta(\mathbf{P}_{\sigma^\ast_{k+1}},\R)>\tau$, terminate and return $G_k$.
\item Otherwise set $G_{k+1}=\langle G_k,\sigma^\ast_{k+1}\rangle$, increment $k$, and repeat unless $k\ge K_{\max}$.
\end{enumerate}
Per-iteration cost is $O(d^2M^3+d^3+|G_k|^2M^2+M^3)$, polynomial in $d$, $M$, and $|G_k|$.

\begin{lemma}[Deflation orthogonality]\label{lem:deforth}
For every $k$, every $A\in\spn(\mathcal{B}^{\perp G_k})$ satisfies $\langle A,\mathbf{P}_g\rangle_F=0$ for all $g\in G_k$; in particular $\langle A^\ast_{k+1},\mathbf{P}_g\rangle_F=0$.
\end{lemma}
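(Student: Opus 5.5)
The statement is essentially definitional, so I would prove it directly from the construction in step (S1), with no use of the pencil structure beyond the fact that (S2) searches within the deflated space. Set $\mathcal{P}_k=\spn\{\mathbf{P}_g:g\in G_k\}$. By (S1), $\mathcal{B}^{\perp G_k}$ is defined as the Frobenius-orthogonal complement of $\mathcal{P}_k$ inside $\spn(\mathcal{B})$:
\[
\mathcal{B}^{\perp G_k}=\{A\in\spn(\mathcal{B}):\langle A,X\rangle_F=0\ \text{for all } X\in\mathcal{P}_k\}.
\]
This set is a linear subspace. I would realize it concretely by Gram--Schmidt, or equivalently by applying the projection $I-\Pi_{\mathcal{P}_k}$ to the $B_i$ and discarding dependent vectors. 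In either form, every element of its span is annihilated by $\langle\cdot,X\rangle_F$ for all $X\in\mathcal{P}_k$, by linearity of the inner product in its first slot (conjugate-linearity does not affect vanishing).

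The steps, in order, are as follows. (1) Note that each $\mathbf{P}_g$ with $g\in G_k$ lies in $\mathcal{P}_k$. (2) For $A=\sum_j a_jC_j$ with $C_j$ in the deflated basis, expand $\langle A,\mathbf{P}_g\rangle_F=\sum_j\overline{a_j}\langle C_j,\mathbf{P}_g\rangle_F$. Each term vanishes, since each $C_j$ is orthogonal to $\mathcal{P}_k$ by construction. (3) Observe that $A^\ast_{k+1}$ is, by (S2) and Theorem~\ref{thm:main}, a linear combination $\sum_j c^\ast_jC_j$ of the deflated basis with the generalized eigenvector coefficients. Hence it lies in $\spn(\mathcal{B}^{\perp G_k})$, and the particular claim follows from (2).

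The only point needing care is that the pencil in (S2) is posed on the deflated basis itself, so the minimizer cannot leave that span. Theorem~\ref{thm:main} requires a linearly independent basis; if the orthogonalized vectors are dependent or zero, I would first prune them, which does not change the span. I would also remark that $G_k$ is finite, being a subgroup of $S_M$, so $\mathcal{P}_k$ is finite-dimensional and the orthogonal projection is well defined.

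This setup leaves one degenerate case: if $\spn(\mathcal{B})\subseteq\mathcal{P}_k$, the deflated space is $\{0\}$. The lemma then holds vacuously for all $A$, and (S2) has no unit-norm solution, so the procedure must terminate. I would note that case rather than treat it as an obstacle. Overall there is no substantive difficulty: the main thing to get right is stating that orthogonality is with respect to the full span $\mathcal{P}_k$ and not merely the listed generators.
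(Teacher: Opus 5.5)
Your proof is correct and follows the paper's argument: orthogonality to each $\mathbf{P}_g$, $g\in G_k$, holds by the definition of $\mathcal{B}^{\perp G_k}$ in (S1), and it passes by (conjugate-)linearity to every linear combination, in particular to $A^\ast_{k+1}$, which (S2) builds inside that span. Your remarks on pruning dependent vectors and on the degenerate case $\spn(\mathcal{B})\subseteq\spn\{\mathbf{P}_g:g\in G_k\}$ are sound additions that the paper leaves implicit.
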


\begin{proof}
$\mathcal{B}^{\perp G_k}$ is by definition orthogonal to every $\mathbf{P}_g$, $g\in G_k$, and the property passes to linear combinations, including $A^\ast_{k+1}$.
\end{proof}

\begin{lemma}[Forward progress]\label{lem:forward}
If $\max_\sigma\langle A^\ast_{k+1},\mathbf{P}_\sigma\rangle_F>0$ and the procedure accepts $\sigma^\ast_{k+1}$, then $\sigma^\ast_{k+1}\notin G_k$.
\end{lemma}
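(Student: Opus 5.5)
Suppose, for contradiction, that $\sigma^\ast_{k+1}\in G_k$, and derive a conflict with Lemma~\ref{lem:deforth}. The rounding step (S3) defines $\sigma^\ast_{k+1}$ as a maximizer of $\sigma\mapsto\langle A^\ast_{k+1},\mathbf{P}_\sigma\rangle_F$ over $S_M$. It therefore attains the maximal value,
\[
\langle A^\ast_{k+1},\mathbf{P}_{\sigma^\ast_{k+1}}\rangle_F=\max_\sigma\langle A^\ast_{k+1},\mathbf{P}_\sigma\rangle_F>0 ,
\]
where the strict positivity is the hypothesis of the lemma.

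On the other hand, $A^\ast_{k+1}$ is produced in step (S2) by solving \eqref{eq:gevp} on the deflated basis $\mathcal{B}^{\perp G_k}$. It is therefore a linear combination of elements of $\mathcal{B}^{\perp G_k}$. Lemma~\ref{lem:deforth} then gives $\langle A^\ast_{k+1},\mathbf{P}_g\rangle_F=0$ for every $g\in G_k$. If $\sigma^\ast_{k+1}$ belonged to $G_k$, taking $g=\sigma^\ast_{k+1}$ would make the inner product both zero and strictly positive, which is impossible. Hence $\sigma^\ast_{k+1}\notin G_k$.

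The step that needs care is the meaning of ``$\max$'' and of the inner product when the data are complex. The Hungarian rounding maximizes a real objective, so I would interpret $\langle\cdot,\cdot\rangle_F$ in (S3) as its real part, $\operatorname{Re}\Tr(A^H\mathbf{P}_\sigma)$. This is harmless: Lemma~\ref{lem:deforth} makes the full complex inner product vanish on $G_k$, so its real part vanishes too. For real $\R$ and permutation bases the question does not arise, since every quantity is already real.

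Acceptance in (S4) plays no role in the argument. It only ensures that $\sigma^\ast_{k+1}$ is actually adjoined in (S5), so that the conclusion yields strict growth $G_k\subsetneq G_{k+1}$.
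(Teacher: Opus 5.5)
Your proof is correct and matches the paper's argument: the maximizer $\sigma^\ast_{k+1}$ has strictly positive overlap with $A^\ast_{k+1}$, while Lemma~\ref{lem:deforth} forces zero overlap with every $\mathbf{P}_g$, $g\in G_k$. The paper concludes via $\mathbf{P}_{\sigma^\ast_{k+1}}\neq\mathbf{P}_g$ and the injectivity of $\sigma\mapsto\mathbf{P}_\sigma$, whereas you substitute $g=\sigma^\ast_{k+1}$ directly (slightly more direct), and your real-part remark for complex data is a harmless extra precaution.
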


\begin{proof}
By Lemma~\ref{lem:deforth}, $\langle A^\ast_{k+1},\mathbf{P}_g\rangle_F=0$ for $g\in G_k$, whereas $\langle A^\ast_{k+1},\mathbf{P}_{\sigma^\ast_{k+1}}\rangle_F=\max_\sigma\langle A^\ast_{k+1},\mathbf{P}_\sigma\rangle_F>0$. Hence $\mathbf{P}_{\sigma^\ast_{k+1}}\neq\mathbf{P}_g$, and since $\sigma\mapsto\mathbf{P}_\sigma$ is injective, $\sigma^\ast_{k+1}\notin G_k$.
\end{proof}

The positive-overlap hypothesis holds whenever the deflated subspace contains a matrix with a positive entry, since $\langle A,\mathbf{P}_\sigma\rangle_F=\sum_iA_{i,\sigma(i)}$; it is routine to verify for bases of permutation matrices and of permutation-difference matrices $\mathbf{P}_\sigma-\mathbf{I}$.

\begin{theorem}[Strict subgroup growth]\label{thm:growth}
Under the hypothesis of Lemma~\ref{lem:forward}, at every accepted iteration $|G_{k+1}|>|G_k|$.
\end{theorem}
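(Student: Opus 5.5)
The plan is to reduce strict growth to the forward-progress statement already proved. At an accepted iteration, $G_{k+1}=\langle G_k,\sigma^\ast_{k+1}\rangle$ is by construction the subgroup of $S_M$ generated by $G_k$ and $\sigma^\ast_{k+1}$. It therefore contains $G_k$, so $|G_{k+1}|\ge|G_k|$ holds automatically, and the only content is strictness.

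For strictness I invoke Lemma~\ref{lem:forward}. Under its positive-overlap hypothesis, the accepted permutation satisfies $\sigma^\ast_{k+1}\notin G_k$. Since $\sigma^\ast_{k+1}\in G_{k+1}$, the group $G_{k+1}$ contains an element not in $G_k$, so $G_k\subsetneq G_{k+1}$. Both groups are finite subgroups of $S_M$, so a proper inclusion gives $|G_{k+1}|>|G_k|$.

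I would also record the sharper consequence, which is useful for the logarithmic iteration bound that presumably follows. Because $G_k$ is a proper subgroup of $G_{k+1}$, Lagrange's theorem gives that $|G_k|$ divides $|G_{k+1}|$ with index at least $2$, hence $|G_{k+1}|\ge 2|G_k|$.

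There is no real obstacle here. The only point needing care is that Lemma~\ref{lem:forward} uses the deflation orthogonality of Lemma~\ref{lem:deforth} against \emph{every} element of $G_k$, not just against the previously accepted generators. This holds because step (S1) deflates against $\spn\{\mathbf{P}_g:g\in G_k\}$ for the full generated group, so the argument goes through as stated.
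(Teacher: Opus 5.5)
Your proof is correct and matches the paper's: forward progress gives $\sigma^\ast_{k+1}\notin G_k$, and since $G_{k+1}=\langle G_k,\sigma^\ast_{k+1}\rangle$ contains both $G_k$ and $\sigma^\ast_{k+1}$, the inclusion $G_k\subsetneq G_{k+1}$ of finite groups is proper. Your Lagrange index-two remark is exactly how the paper proves the following iteration-bound corollary.
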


\begin{proof}
$\sigma^\ast_{k+1}\notin G_k$ by Lemma~\ref{lem:forward}, and $G_{k+1}=\langle G_k,\sigma^\ast_{k+1}\rangle$ contains $G_k$ and $\sigma^\ast_{k+1}$, so $G_k\subsetneq G_{k+1}$.
\end{proof}

\begin{corollary}[Iteration bound]\label{cor:iter}
The procedure performs at most $K\le\lceil\log_2|G_K|\rceil$ accepted iterations; in particular $K\le\lceil\log_2 M!\rceil=O(M\log M)$.
\end{corollary}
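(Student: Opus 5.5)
The bound comes from Lagrange's theorem applied to the chain of subgroups produced by the accepted iterations. By Theorem~\ref{thm:growth}, every accepted iteration produces a strict inclusion $G_k\subsetneq G_{k+1}$ of finite subgroups of $S_M$. By Lagrange, $|G_k|$ divides $|G_{k+1}|$, so strict inclusion forces $|G_{k+1}|\ge 2|G_k|$.

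Starting from $|G_0|=1$, induction gives $|G_K|\ge 2^K$ after $K$ accepted iterations. Taking logarithms yields $K\le\log_2|G_K|\le\lceil\log_2|G_K|\rceil$.

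For the global bound, I use that every $G_K$ is a subgroup of $S_M$, so $|G_K|\le M!$. Hence $K\le\lceil\log_2 M!\rceil$, and Stirling's formula, or simply $M!\le M^M$, gives $\log_2 M!=O(M\log M)$. If $G_K$ is also contained in $\Aut(\R)$, the same argument gives the sharper bound $K\le\log_2|\Aut(\R)|$. That containment holds only when $\tau=0$, since then each accepted $\sigma$ commutes with $\R$ and so does the group it generates.

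I do not expect a real obstacle; the argument is short. The one point to handle carefully is the index-two doubling step: it must be justified by Lagrange's theorem and not merely by the growth $|G_{k+1}|>|G_k|$, which on its own would only give a linear bound. Non-accepted (terminating) iterations and the cap $K_{\max}$ do not affect the count of accepted ones.
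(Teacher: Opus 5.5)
Your proof is correct and is essentially the paper's own argument. Like the paper, you use Lagrange's theorem to get index at least $2$ at each strict inclusion from Theorem~\ref{thm:growth}, iterate to $|G_K|\ge 2^K$, and then bound $|G_K|\le M!$ and apply Stirling. One small correction to your aside: containment $G_K\subseteq\Aut(\R)$ is \emph{guaranteed} when $\tau=0$ (Theorem~\ref{thm:conv}), not ``only'' then, but this does not affect the proof.
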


\begin{proof}
By Lagrange's theorem the proper inclusion $G_k\subsetneq G_{k+1}$ gives index at least $2$, so $|G_{k+1}|\ge2|G_k|$; iterating, $|G_K|\ge2^K$. The bound $|G_K|\le M!$ and Stirling's estimate give $O(M\log M)$.
\end{proof}

\begin{theorem}[Convergence at $\tau=0$]\label{thm:conv}
Run with $\tau=0$, every accepted $\sigma^\ast_{k+1}\in\Aut(\R)$, hence $G_K\subseteq\Aut(\R)$.
\end{theorem}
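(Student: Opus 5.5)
The argument is an induction on $k$ that carries the invariant $G_k\subseteq\Aut(\R)$. The base case holds because $G_0=\{e\}$ and the identity permutation commutes with $\R$. For the inductive step, suppose $G_k\subseteq\Aut(\R)$ and that iteration $k+1$ is accepted. With $\tau=0$, step (S4) accepts only when $\delta(\mathbf{P}_{\sigma^\ast_{k+1}},\R)\le 0$.

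Since $\delta$ is nonnegative, this forces $\delta(\mathbf{P}_{\sigma^\ast_{k+1}},\R)=0$. Hence $\Fro{[\R,\mathbf{P}_{\sigma^\ast_{k+1}}]}=0$, which gives $\mathbf{P}_{\sigma^\ast_{k+1}}\R=\R\mathbf{P}_{\sigma^\ast_{k+1}}$, i.e.\ $\sigma^\ast_{k+1}\in\Aut(\R)$ by the definition~\eqref{eq:autR}. The normalization in $\delta$ causes no trouble: $\Fro{\mathbf{P}_\sigma}=\sqrt M>0$, and we may assume $\R\neq0$. If $\R=0$ the statement is trivial, because then $\Aut(\R)=S_M$.

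Next I will show that $\Aut(\R)$ is a subgroup of $S_M$. The map $\sigma\mapsto\mathbf{P}_\sigma$ is a group homomorphism, and the commutant $\{X:X\R=\R X\}$ is closed under products and inverses. Therefore $\Aut(\R)$, which is the preimage of the commutant intersected with the permutation matrices, is a subgroup. It follows that $G_{k+1}=\langle G_k,\sigma^\ast_{k+1}\rangle$ is generated by elements of $\Aut(\R)$, so $G_{k+1}\subseteq\Aut(\R)$.

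The procedure stops either by rejection at (S4), which returns the current $G_k$, or at the cap $K_{\max}$. In both cases the returned $G_K$ satisfies the invariant, and this gives the claim.

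The only step that needs care is the subgroup closure. The point is that $\Aut(\R)$ is closed under generation, so membership of the new generator alone is enough to control the whole generated group. This closure is elementary, since $\mathbf{P}_{\sigma\pi}=\mathbf{P}_\sigma\mathbf{P}_\pi$ and $\mathbf{P}_{\sigma^{-1}}=\mathbf{P}_\sigma^{-1}$. The argument never uses any optimality property of the rounding step.
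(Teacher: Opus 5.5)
Your proof is correct and follows essentially the same route as the paper: acceptance at $\tau=0$ forces $\delta(\mathbf{P}_{\sigma^\ast_{k+1}},\R)=0$, hence $\mathbf{P}_{\sigma^\ast_{k+1}}$ commutes with $\R$, and since $\Aut(\R)$ is a group it contains every generated $G_K$. Your induction on $k$, the explicit subgroup check, and the separate handling of the degenerate case $\R=0$ only spell out details that the paper leaves implicit.
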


\begin{proof}
Acceptance at $\tau=0$ means $\delta(\mathbf{P}_{\sigma^\ast_{k+1}},\R)=0$, i.e.\ $\mathbf{P}_{\sigma^\ast_{k+1}}\R=\R\mathbf{P}_{\sigma^\ast_{k+1}}$, the defining condition $\sigma^\ast_{k+1}\in\Aut(\R)$ of \eqref{eq:autR}. As $\Aut(\R)$ is a group, it contains $G_K=\langle\sigma^\ast_1,\ldots,\sigma^\ast_K\rangle$.
\end{proof}

\begin{theorem}[Full recovery under generating completeness]\label{thm:full-recovery}
Let $\R=f(\mathbf{L})$ with $f$ injective on $\operatorname{spec}(\mathbf{L})$, and suppose the positive-overlap hypothesis of Lemma~\ref{lem:forward} holds at every stage. Call the candidate basis $\mathcal{B}$ \emph{generating-complete} for $\R$ if, at every stage with $G_k\subsetneq\Aut(\R)$, the rounding step~(S3) returns a permutation $\sigma^\ast_{k+1}\in\Aut(\R)\setminus G_k$. If $\mathcal{B}$ is generating-complete, then the sequential procedure at $\tau=0$ halts with $G_K=\Aut(\R)$, the full automorphism group.
\end{theorem}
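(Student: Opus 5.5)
The plan is to prove the two inclusions $G_K\subseteq\Aut(\R)$ and $\Aut(\R)\subseteq G_K$ separately, and to show that the procedure actually halts. The first inclusion is Theorem~\ref{thm:conv}. Running at $\tau=0$, every accepted permutation satisfies $\delta(\mathbf{P}_{\sigma^\ast_{k+1}},\R)=0$ and therefore lies in $\Aut(\R)$. Since $\Aut(\R)$ is a group, it contains every $G_k$. Because $\R=f(\mathbf{L})$ with $f$ injective on $\operatorname{spec}(\mathbf{L})$, Theorem~\ref{thm:aut} identifies $\Aut(\R)$ with $\Aut(\mathcal{G})$. This identification is used only so that the conclusion can be read as recovery of the graph automorphism group.

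For termination, I will use the positive-overlap hypothesis at every stage. Lemma~\ref{lem:forward} and Theorem~\ref{thm:growth} then give $|G_{k+1}|\ge 2|G_k|$ at every accepted step. Every $G_k$ stays inside the finite group $\Aut(\R)$, so by Corollary~\ref{cor:iter} there can be at most $\lceil\log_2|\Aut(\R)|\rceil$ accepted iterations. Hence the procedure halts after finitely many steps at some stage $K$. I will take $K_{\max}$ at least this bound, so the cap is never the reason for stopping.

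For the reverse inclusion, I argue by contradiction at the halting stage. Suppose $G_K\subsetneq\Aut(\R)$. Generating completeness says that step~(S3) at stage $K$ returns some $\sigma^\ast_{K+1}\in\Aut(\R)\setminus G_K$. Such a permutation commutes with $\R$, so $\delta(\mathbf{P}_{\sigma^\ast_{K+1}},\R)=0\le\tau$. The test in (S4) therefore does not terminate, and (S5) accepts, which contradicts that the procedure halted at $K$. So halting forces $G_K=\Aut(\R)$.

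The argument is mostly bookkeeping. The step that needs care is ruling out halting for a reason other than the $\delta$-test in (S4). One such reason is the cap $K_{\max}$, which is handled by the iteration bound above. The other is a degenerate deflated subspace: if $\mathcal{B}^{\perp G_k}$ were trivial, (S2) could not be solved. Generating completeness, read as the guarantee that (S3) returns an element at every stage with $G_k\subsetneq\Aut(\R)$, implicitly excludes this case. I will state explicitly that this is how the hypothesis is being used.
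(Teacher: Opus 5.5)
Your proof is correct and rests on the same three ingredients as the paper's: soundness of accepted generators (Theorem~\ref{thm:conv}), progress at proper stages from generating completeness, and termination from strict growth inside the finite group $\Aut(\R)$. The only difference is organization: you bound the accepted steps first and then argue by contradiction at the halting stage, whereas the paper chains forward to $G_k=\Aut(\R)$ and then uses deflation orthogonality with positive overlap to force rejection. Your explicit handling of the cap $K_{\max}$ and of a trivial deflated subspace makes precise points the paper leaves implicit.
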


\begin{proof}
By Theorem~\ref{thm:conv} every accepted generator lies in $\Aut(\R)$, so $G_k\subseteq\Aut(\R)$ at every stage. While $G_k\subsetneq\Aut(\R)$, generating completeness returns $\sigma^\ast_{k+1}\in\Aut(\R)\setminus G_k$, so $\delta(\mathbf{P}_{\sigma^\ast_{k+1}},\R)=0=\tau$ by Theorem~\ref{thm:aut}; the acceptance test~(S4) passes and $G_{k+1}=\langle G_k,\sigma^\ast_{k+1}\rangle$ strictly contains $G_k$ by Theorem~\ref{thm:growth}, while remaining inside $\Aut(\R)$. The chain $G_0\subsetneq G_1\subsetneq\cdots$ cannot grow without bound inside the finite group $\Aut(\R)$, so it reaches $G_k=\Aut(\R)$ after finitely many accepted steps. At that stage Lemma~\ref{lem:deforth} makes the deflated minimizer Frobenius-orthogonal to every $\mathbf{P}_g$, $g\in\Aut(\R)$, so under the positive-overlap hypothesis its nearest permutation has positive overlap and therefore lies outside $\Aut(\R)$; by Theorem~\ref{thm:aut} that permutation has $\delta>0=\tau$, and~(S4) terminates, returning $G_K=\Aut(\R)$.
\end{proof}

\begin{remark}[Soundness versus completeness]\label{rem:completeness}
Generating completeness is precisely the gap between the two properties. Theorem~\ref{thm:conv} gives the inclusion $G_K\subseteq\Aut(\R)$ unconditionally, from the soundness of each accepted generator; generating completeness promotes it to the equality $G_K=\Aut(\R)$. The hypothesis is a condition on the interaction of the candidate basis, the deflation, and the rounding, not a consequence of them, and it can fail. The six-cycle of Appendix~\ref{app:sequential} uses a basis for which the rounded second residual misses the reflection $\rho\in D_6$, so the procedure halts at the proper subgroup $\langle\tau\rangle\subsetneq D_6$. Certifying generating completeness for a given observation geometry is the algorithmic question pursued in the companion development.
\end{remark}

\section{Recovery of Hidden Transforms}\label{sec:prolate}

The transforms recovered so far follow from symmetries that are, in hindsight, visible in the covariance: a cyclic shift, a reflection, a phase-space rotation. We close the development with several cases in which the matched generator is a genuine hidden symmetry. It is not a geometric symmetry of the covariance, and historically it was found only by an ad hoc calculation. The double-commutator eigenvalue problem recovers it as the solution of a variational problem, with no prior knowledge that it exists.

Consider the discrete time-and-band-limiting operator on $\mathbb{C}^N$,
\begin{equation}\label{eq:prolateC}
\R_{mn}=\frac{\sin\bigl(2\pi W(m-n)\bigr)}{\pi(m-n)}\quad(m\neq n),\qquad \R_{nn}=2W,
\end{equation}
a real symmetric Toeplitz matrix whose eigenvectors are the discrete prolate spheroidal sequences (DPSS), the length-$N$ sequences that maximize energy concentration in the band $[-W,W]$~\cite{slepian1978}. The DPSS are the optimal tapers of multitaper spectral estimation and a standard tool in band-limited signal analysis. The spectrum of $\R$ clusters near $0$ and $1$, so its eigenvectors are numerically ill-determined; in practice the DPSS are computed instead from a symmetric tridiagonal matrix
\begin{equation}\label{eq:slepianT}
T_{nn}=\Bigl(\tfrac{N-1-2n}{2}\Bigr)^2\cos(2\pi W),\qquad T_{n,n+1}=T_{n+1,n}=\tfrac12(n+1)(N-1-n),
\end{equation}
which commutes with $\R$ and has a simple, well-separated spectrum~\cite{slepian1978,percival1993}. The existence of this commuting operator is the prolate spheroidal property, and it is not predicted by any symmetry of $\R$. Across time-band limiting, random matrix theory, and integrable systems such commuting pairs were, before the bispectral theory that now organizes them, constructed by ad hoc methods that worked because a commuting operator of low order could be found by direct calculation~\cite{duistermaat1986,casper2023}. We show that the matched-generator pencil of Section~\ref{sec:gevp} recovers $T$ directly.

\subsection{The variational characterization of a hidden symmetry}

We first record when the pencil detects a commuting operator beyond the trivial one. The identity always commutes with $\R$ and contributes a trivial kernel direction; a hidden symmetry is a second, independent kernel generator.

\begin{proposition}[Variational characterization of the prolate property]\label{prop:prolate}
Let $\R$ be self-adjoint on $\mathbb{C}^M$ and let $\mathcal{B}$ be a candidate generator space with $\mathbf{I}\in\mathcal{B}$. Order the pencil eigenvalues~\eqref{eq:gevp} as $0=\lambda_1\le\lambda_2\le\cdots\le\lambda_d$. Then:
\begin{enumerate}[leftmargin=2.4em]
\item[(i)] $\lambda_1=0$, with $\mathbf{I}$ spanning a kernel direction.
\item[(ii)] $\lambda_2=0$ if and only if $\mathcal{B}$ contains a generator $A$ linearly independent of $\mathbf{I}$ with $[\R,A]=0$. If $\mathcal{B}$ is a structurally restricted class, such as the banded matrices of fixed bandwidth, and such an $A$ is not realized by a low-degree polynomial in $\R$, then $\R$ has the prolate spheroidal property relative to $\mathcal{B}$: a structurally distinct operator sharing the eigenbasis of $\R$.
\item[(iii)] \textup{(Resolution.)} If a kernel generator $A$ has simple spectrum, its eigenbasis is the unique common eigenbasis of $\{\R,A\}$, hence a KLT of $\R$, even on eigenspaces where $\R$ is degenerate. Thus $A$ resolves the KLT wherever the spectrum of $\R$ leaves it undetermined.
\end{enumerate}
\end{proposition}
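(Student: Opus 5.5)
The plan is to read all three parts off the variational description of the pencil in Theorem~\ref{thm:main} and Corollary~\ref{cor:zero}. The key fact is that the zero eigenspace of the pencil is exactly the commutant of $\R$ inside $\mathcal{B}$. By Lemma~\ref{lem:quad}, $\mathbf{c}^H\mathbf{M}\mathbf{c}=\Fro{[\R,A]}^2$ for $A=\sum_k c_kB_k$. Since $\mathbf{M}\succeq 0$ (Proposition~\ref{prop:MGstructure}(i)), $\mathbf{c}^H\mathbf{M}\mathbf{c}=0$ forces $\mathbf{M}\mathbf{c}=0$. Hence $\ker\mathbf{M}$ corresponds, via the coefficient map (an isomorphism because the $B_k$ are linearly independent), to $\mathcal{C}:=\{A\in\mathcal{B}:[\R,A]=0\}$. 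Because $\mathbf{G}\succ0$, the multiplicity of the generalized eigenvalue $0$ equals $\dim\ker\mathbf{M}=\dim\mathcal{C}$. This is the Courant--Fischer count applied to $\mathbf{G}^{-1/2}\mathbf{M}\mathbf{G}^{-1/2}$, whose kernel is $\mathbf{G}^{1/2}\ker\mathbf{M}$.

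For (i), $\mathbf{I}\in\mathcal{B}$ and $[\R,\mathbf{I}]=0$, so $\dim\mathcal{C}\ge1$. Together with nonnegativity (Theorem~\ref{thm:spectrum}(i)) this gives $\lambda_1=0$ with $\mathbf{I}\in\mathcal{C}$. For (ii), $\lambda_2=0$ holds if and only if $\dim\mathcal{C}\ge2$. Since $\mathbf{I}\in\mathcal{C}$, this is equivalent to $\mathcal{C}$ containing some $A\notin\spn\{\mathbf{I}\}$. The second sentence of (ii) is a definition: it names the situation where such an $A$ lies in the restricted class but is not a low-degree polynomial in $\R$. I will note that it needs no argument beyond exhibiting $A$ as sharing an eigenbasis with $\R$, which follows from (iii).

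For (iii), I will use the standard argument that commuting operators preserve each other's eigenspaces, applied from the side of $A$. The wrinkle is that $A\in\mathcal{B}$ need not be Hermitian, so I first reduce to a diagonalizable $A$. A matrix with simple spectrum has $M$ distinct eigenvalues, hence is diagonalizable with one-dimensional eigenspaces $\spn\{v_j\}$. From $[\R,A]=0$ we get $A(\R v_j)=\R A v_j=\mu_j\R v_j$. So $\R v_j$ lies in the $\mu_j$-eigenspace of $A$, i.e.\ $\R v_j=\lambda v_j$, and every $v_j$ is an eigenvector of $\R$.

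Uniqueness follows because any common eigenbasis of $\{\R,A\}$ consists of eigenvectors of $A$. Each of these is a multiple of some $v_j$ by simplicity, so the basis is unique up to scaling and order. When an eigenspace of $\R$ is degenerate, the $v_j$ lying in it form a particular basis of that eigenspace, which is the sense of "resolves."

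The main obstacle is the KLT claim, which requires an orthonormal basis. Eigenvectors of a non-normal $A$ need not be orthogonal, so I will argue orthonormality separately. Each $v_j$ lies in some eigenspace of $\R$, and distinct eigenspaces of the Hermitian $\R$ are orthogonal. Within one eigenspace $E_\lambda$ of $\R$, the vectors $v_j$ contained in it span $E_\lambda$ (they form a basis of $\mathbb{C}^M$ compatible with the decomposition into eigenspaces of $\R$). They are orthogonal when $A|_{E_\lambda}$ is normal, and in particular when $A$ is Hermitian or skew-Hermitian as in Remark~\ref{rem:lie}. I will state the KLT conclusion under that normality, which covers the tridiagonal $T$ of~\eqref{eq:slepianT}. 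Otherwise I will orthonormalize within each $E_\lambda$: this still gives a KLT of $\R$, and it is canonically determined by the simple-spectrum $A$ up to the Gram--Schmidt ordering.
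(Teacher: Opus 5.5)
Your proposal is correct and follows the paper's own argument: parts (i)--(ii) identify the zero eigenspace of the pencil with the commutant $\{A\in\mathcal{B}:[\R,A]=0\}$ via Lemma~\ref{lem:quad}, and part (iii) uses that each one-dimensional eigenspace of the simple-spectrum $A$ is $\R$-invariant. Your normality caveat is warranted rather than pedantic. The paper's assertion that the distinct eigenvalues of $A$ ``select a unique orthonormal basis'' tacitly assumes $A$ is normal on the degenerate eigenspaces of $\R$; for $\R=\mathbf{I}_2$ and $A=\bigl(\begin{smallmatrix}1&1\\0&2\end{smallmatrix}\bigr)$ the common eigenvectors are not orthogonal, although the symmetric generators $T$, $L$, $C$ and the hierarchical $A$ used later do satisfy it. By the same token, your appeal to (iii) for the ``sharing the eigenbasis'' clause of (ii) needs $A$ to be diagonalizable, for example when $A$ has simple spectrum, or when $\R$ does, which is the hypothesis the paper uses there.
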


\begin{proof}
(i) $[\R,\mathbf{I}]=0$, so $\mathbf{I}\in\ker(\ad_\R^2|_{\mathcal{B}})$ and $\lambda_1=0$ by Theorem~\ref{thm:spectrum}(ii). (ii) By Lemma~\ref{lem:quad} the kernel of $\ad_\R^2|_{\mathcal{B}}$ is $\{A\in\mathcal{B}:[\R,A]=0\}$, so $\lambda_2=0$ iff that kernel has dimension at least two, i.e.\ contains a generator independent of $\mathbf{I}$. When $\R$ has simple spectrum every commuting generator is a polynomial in $\R$; the prolate property is then the case in which a structurally simple member of $\mathcal{B}$, such as a banded matrix, is realized only by a high-degree polynomial in $\R$, so its distinctness is one of structure and conditioning rather than of algebra. (iii) If $[\R,A]=0$ and $A$ has simple spectrum, its eigenvectors are determined up to phase and, commuting with $\R$, are eigenvectors of $\R$; on a degenerate eigenspace of $\R$ the distinct eigenvalues of $A$ select a unique orthonormal basis.
\end{proof}

Part (iii) is the operative point. A classical symmetry of $\R$ that squares to the identity, such as a reflection, has only the eigenvalues $\pm1$ and resolves $\R$ no further than a two-block split; a kernel generator with simple spectrum resolves the entire basis. The prolate property, in this language, is the condition $\lambda_2=0$ together with a simple-spectrum kernel generator.

\begin{theorem}[Uniqueness of the tridiagonal commuting generator]\label{thm:tridiag-unique}
Let $\R$ be self-adjoint on $\mathbb{C}^M$ with simple spectrum, and suppose some irreducible symmetric tridiagonal matrix $T$, with nonzero off-diagonal entries, satisfies $[\R,T]=0$. Then the commutant of $\R$ within the symmetric tridiagonal matrices $\mathcal{T}$ is exactly
\begin{equation}\label{eq:tridiag-commutant}
\{A\in\mathcal{T}:[\R,A]=0\}=\spn\{\mathbf{I},T\}.
\end{equation}
In particular every element of this commutant independent of $\mathbf{I}$ equals $\alpha T+\beta\mathbf{I}$ with $\alpha\neq0$; it has the eigenbasis of $T$, and that spectrum being simple, recovers the eigenbasis uniquely as a KLT of $\R$.
\end{theorem}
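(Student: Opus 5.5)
Since $\R$ has simple spectrum, every operator commuting with $\R$ is diagonal in the eigenbasis $\{u_1,\dots,u_M\}$ of $\R$, and hence is a polynomial $p(\R)$ of degree at most $M-1$. The hypothesis $[\R,T]=0$ therefore puts $T$ in this diagonal algebra $\mathcal{D}=\{\mathrm{diag}\text{ in }\{u_k\}\}$. Because $T$ is irreducible tridiagonal, its spectrum is simple, so $\mathcal{D}$ is also exactly the commutant of $T$. Thus any $A\in\mathcal{T}$ with $[\R,A]=0$ satisfies $[A,T]=0$, and we can write $A=q(T)$ for some polynomial $q$ of degree at most $M-1$. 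The problem reduces to showing that the only polynomials in $T$ that are again symmetric tridiagonal are the affine ones, $\alpha T+\beta\mathbf{I}$.

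The approach is to use the Krylov/band structure of an irreducible Jacobi matrix. Write $q(T)=\sum_{j=0}^{m}c_jT^j$ with $c_m\ne0$ and $m\le M-1$. The powers $T^j$ have bandwidth exactly $j$ as long as $j\le M-1$. Concretely, the $(1,j+1)$ entry of $T^j$ equals $\prod_{i=1}^{j}T_{i,i+1}$, which is nonzero by irreducibility. Every $T^i$ with $i<j$ has zero $(1,j+1)$ entry. Hence the $(1,m+1)$ entry of $q(T)$ is $c_m\prod_{i=1}^{m}T_{i,i+1}\ne0$. If $m\ge2$, this entry lies outside the tridiagonal band, which contradicts $A\in\mathcal{T}$. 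Therefore $m\le1$ and $A=c_1T+c_0\mathbf{I}$, which proves the inclusion $\subseteq$ in \eqref{eq:tridiag-commutant}. The reverse inclusion is immediate, since $\mathbf{I}$ and $T$ are symmetric tridiagonal and commute with $\R$.

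For the ``in particular'' clause, any element of the commutant that is independent of $\mathbf{I}$ has $c_1=\alpha\ne0$. Its eigenvectors are then those of $T$, and its spectrum $\alpha\,\sigma(T)+\beta$ is simple. Proposition~\ref{prop:prolate}(iii) then shows that this eigenbasis is the unique common eigenbasis of $\R$ and $A$, and hence a KLT of $\R$.

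The main obstacle is to make the step ``commutes with $\R$ $\Rightarrow$ polynomial in $T$'' airtight. I would handle it through the shared diagonal algebra: $A$ and $T$ are both diagonal in $\{u_k\}$, and since $T$ has $M$ distinct eigenvalues, Lagrange interpolation produces a polynomial $q$ with $q(t_k)=a_k$. The simplicity of $\sigma(T)$ comes from the standard fact that an irreducible Jacobi matrix has one-dimensional eigenspaces: the eigenvector recursion is determined by its first component. The band-growth computation for $T^j$ is a routine induction on path counting and needs no further care.
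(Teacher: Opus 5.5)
Your proof is correct, but at the key step it takes a different route from the paper. The reduction is the same in both. Simple spectrum of $\R$ places every commuting $A$, and $T$ itself, in the diagonal algebra of the common eigenbasis, and simplicity of $\sigma(T)$ lets you write $A=q(T)$ by interpolation.

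The paper then passes to the spectral measure $\nu=\sum_k w_k\delta_{\mu_k}$ of $T$. It writes the eigenvector entries as $\sqrt{w_k}\,p_i(\mu_k)$ in terms of the orthonormal polynomials $p_i$, and uses Gauss quadrature to express $A_{ij}=\langle q\,p_i,p_j\rangle_\nu$. Orthogonality then caps the bandwidth at $\deg q$, and the leading-coefficient identity $A_{i,i+d}=\kappa_q k_i/k_{i+d}\neq0$ shows the bandwidth is attained.

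You instead count paths in powers of $T$. You use $(T^j)_{1,j+1}=\prod_{i\le j}T_{i,i+1}$ and $(T^i)_{1,j+1}=0$ for $i<j$, so the top coefficient of $q$ survives alone in the single entry $(1,m+1)$.

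Your argument is more elementary and somewhat more general. It uses only that the superdiagonal entries $T_{i,i+1}$ are nonzero, and needs no spectral measure or orthogonal-polynomial machinery. It excludes every tridiagonal $A$, not just symmetric ones, outside $\spn\{\mathbf{I},T\}$. Combined with the trivial fact that $T^m$ has bandwidth at most $m$, it also yields the paper's remark that the pentadiagonal commutant is $\spn\{\mathbf{I},T,T^2\}$.

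The paper's route buys structural information instead. It identifies the commutant with multiplication operators in the orthogonal-polynomial basis and shows the entire $d$-th superdiagonal is nonzero. It also produces the eigenvector formula $(v_k)_i=\sqrt{w_k}\,p_i(\mu_k)$, which Theorem~\ref{thm:jacobi-recovery} reuses to describe the recovered KLT explicitly as an orthogonal-polynomial transform.
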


\begin{proof}
Because $\R$ has simple spectrum, every $A$ with $[\R,A]=0$ is a polynomial in $\R$, hence diagonal in the common eigenbasis $\{v_k\}_{k=0}^{M-1}$ of $\R$ and $T$; write $A=\sum_k a_k\,v_kv_k^{*}$. As $T$ is an irreducible Jacobi matrix its eigenvalues $\mu_0,\ldots,\mu_{M-1}$ are simple and its eigenvector entries are $(v_k)_i=\sqrt{w_k}\,p_i(\mu_k)$, where $p_0,\ldots,p_{M-1}$ are the orthonormal polynomials of the discrete measure $\nu=\sum_k w_k\delta_{\mu_k}$ with weights $w_k=(v_k)_0^2$, normalized by the Gauss quadrature identity $\sum_k w_k\,p_i(\mu_k)p_j(\mu_k)=\delta_{ij}$. Let $f$ be the polynomial of degree at most $M-1$ with $f(\mu_k)=a_k$. Then
\begin{equation}\label{eq:Aij-quad}
A_{ij}=\sum_k a_k(v_k)_i(v_k)_j=\sum_k f(\mu_k)\,w_k\,p_i(\mu_k)p_j(\mu_k)=\langle f p_i,\,p_j\rangle_\nu.
\end{equation}
If $\deg f=d$, then by orthogonality $\langle f p_i,p_j\rangle=\langle p_i,f p_j\rangle=0$ whenever $|i-j|>d$, so $A$ is $(2d+1)$-banded; and for $i+d\le M-1$ the entry $A_{i,i+d}=\langle f p_i,p_{i+d}\rangle=\kappa_f\,k_i/k_{i+d}$ is nonzero, where $\kappa_f$ is the leading coefficient of $f$ and $k_i$ that of $p_i$, so the bandwidth is exactly $2d+1$. Hence $A$ is tridiagonal if and only if $d\le1$, that is $a_k=\alpha\mu_k+\beta$ and $A=\alpha T+\beta\mathbf{I}$. This proves~\eqref{eq:tridiag-commutant}; the final statement follows from the simplicity of the spectrum of $T$ with Proposition~\ref{prop:prolate}(iii).
\end{proof}

The theorem is what turns minimization into recovery. On the tridiagonal search space the pencil kernel is forced to equal $\spn\{\mathbf{I},T\}$, so its second null vector is $T$ up to an affine normalization, not merely a residual minimizer, and its eigenbasis is the transform. The bandwidth of the search space sets the admissible polynomial degree through~\eqref{eq:Aij-quad}: over the pentadiagonal matrices the commutant gains the direction $T^2$ and becomes three-dimensional, which is the precise sense in which the recovered generator depends on the candidate class. Simple spectrum of $\R$ is the load-bearing hypothesis; where $\R$ has a repeated eigenvalue its commutant is only block diagonal, the tridiagonal commutant can be larger, and one falls back on the degeneracy resolution of Proposition~\ref{prop:prolate}(iii), as in the hierarchical example of Section~\ref{sec:wavelet}.

\begin{theorem}[Jacobi commutant recovery]\label{thm:jacobi-recovery}
Let $J$ be an irreducible symmetric tridiagonal (Jacobi) matrix on $\mathbb{C}^M$, with nonzero off-diagonal entries and hence simple spectrum $\mu_0,\ldots,\mu_{M-1}$, and let $\R=f(J)$ for a real-valued $f$ injective on $\{\mu_k\}$. Then $\R$ is self-adjoint with simple spectrum and $[\R,J]=0$, and the double-commutator pencil $\ad_\R^2(A)=\lambda A$ over the symmetric tridiagonal matrices $\mathcal{T}$ has kernel exactly $\spn\{\mathbf{I},J\}$, so its eigenvalues obey $0=\lambda_1=\lambda_2<\lambda_3$. The nontrivial null vector is $J$ up to affine normalization, and its eigenbasis is the orthogonal-polynomial transform generated by $J$, with columns $(v_k)_i=\sqrt{w_k}\,p_i(\mu_k)$ given by the orthonormal polynomials $p_i$ of the spectral measure $\nu=\sum_k w_k\delta_{\mu_k}$; this is the unique KLT of $\R$.
\end{theorem}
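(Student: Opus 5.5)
The plan is to reduce Theorem~\ref{thm:jacobi-recovery} to the uniqueness result Theorem~\ref{thm:tridiag-unique}, with the pencil facts supplied by Lemma~\ref{lem:quad} and Theorem~\ref{thm:main}.

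First I verify the hypotheses. $J$ is real symmetric, hence self-adjoint, with orthonormal eigenvectors $v_k$. Set $\R=f(J)=\sum_k f(\mu_k)v_kv_k^{*}$. Since $f$ is real, $\R$ is self-adjoint. Since $f$ is injective on $\{\mu_k\}$, the values $f(\mu_k)$ are distinct, so $\R$ has simple spectrum. Functional calculus gives $[\R,J]=0$. I also record the standard fact that an irreducible Jacobi matrix has simple spectrum. The reason is that any eigenvector is determined by its first entry through the three-term recurrence, because the off-diagonal entries are nonzero. Thus $J$ plays the role of $T$ in Theorem~\ref{thm:tridiag-unique}, and that theorem gives
\[
\{A\in\mathcal{T}:[\R,A]=0\}=\spn\{\mathbf{I},J\}.
\]

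Next I translate this into the pencil statement. $\mathcal{T}$ is a finite-dimensional space of dimension $2M-1$. Take any linearly independent basis of it, so that $\mathbf{G}\succ0$. By Lemma~\ref{lem:quad} and the proof of Theorem~\ref{thm:main}, a coefficient vector $\mathbf{c}$ satisfies $\mathbf{M}\mathbf{c}=0$ exactly when $[\R,A]=0$ for $A=\sum_kc_kB_k$. So the null space of the pencil (equivalently of the compressed operator $L_\R^{\mathcal{T}}$) has dimension equal to $\dim\spn\{\mathbf{I},J\}$.

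That dimension is $2$ provided $J$ is not a multiple of $\mathbf{I}$. This holds whenever $M\ge2$, because $J$ has nonzero off-diagonal entries. (For $M=1$ there is no $\lambda_3$, and I would note $M\ge2$ implicitly; I would likewise note $\lambda_3$ exists once $\dim\mathcal{T}=2M-1\ge3$.) Since the pencil eigenvalues are nonnegative (Proposition~\ref{prop:MGstructure}) and the multiplicity of $0$ is exactly the kernel dimension, we get $0=\lambda_1=\lambda_2<\lambda_3$. Any null vector independent of $\mathbf{I}$ has the form $\alpha J+\beta\mathbf{I}$ with $\alpha\ne0$, so after affine normalization it is $J$.

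Finally, the transform. The eigenbasis of $J$ is read off from the proof of Theorem~\ref{thm:tridiag-unique}: $(v_k)_i=\sqrt{w_k}\,p_i(\mu_k)$ with $w_k=(v_k)_0^2$, the orthonormal polynomials of $\nu$. To justify this, I use the three-term recurrence of the rows of $J$ applied to $v_k$. Together with $(v_k)_0\neq0$, which holds because otherwise the recurrence would force $v_k=0$, the recurrence shows that $(v_k)_i/(v_k)_0$ is a polynomial of degree $i$ in $\mu_k$. Orthonormality of the columns of the orthogonal eigenvector matrix then gives Gauss quadrature orthonormality of these polynomials. Since $\R$ has simple spectrum, its eigenbasis is unique up to phase and coincides with that of $J$, so this is the unique KLT; Proposition~\ref{prop:prolate}(iii) can also be cited.

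I expect no real obstacle, since the heavy lifting is done by Theorem~\ref{thm:tridiag-unique}. The only delicate points are the strictness $\lambda_2<\lambda_3$, which needs exact kernel dimension two (and $M\ge2$), and the nonvanishing of $(v_k)_0$ used to define $w_k>0$.
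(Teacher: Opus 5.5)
Your proposal is correct and follows the paper's proof essentially step for step. You verify that $\R$ has simple spectrum and $[\R,J]=0$, apply Theorem~\ref{thm:tridiag-unique} with $T=J$, identify the symmetric-tridiagonal commutant with the pencil kernel via Lemma~\ref{lem:quad}, and read off the eigenbasis and the uniqueness of the KLT. Your extra points fill in details the paper leaves implicit: the need for $M\ge2$ so that $J\notin\spn\{\mathbf{I}\}$ and $\lambda_3$ exists, and the recurrence argument giving $(v_k)_0\neq0$ and the quadrature orthonormality of the $p_i$.
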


\begin{proof}
Irreducibility makes the eigenvalues $\mu_k$ of $J$ simple, and injectivity of $f$ on $\{\mu_k\}$ makes the $f(\mu_k)$ distinct, so $\R=f(J)$ is self-adjoint with simple spectrum; functional calculus gives $[\R,J]=[f(J),J]=0$. The hypotheses of Theorem~\ref{thm:tridiag-unique} thus hold with $T=J$, so the tridiagonal commutant $\{A\in\mathcal{T}:[\R,A]=0\}$ equals $\spn\{\mathbf{I},J\}$. By Lemma~\ref{lem:quad} this commutant is the kernel of $\ad_\R^2|_{\mathcal{T}}$, which is therefore two-dimensional, giving $0=\lambda_1=\lambda_2<\lambda_3$. Every kernel element independent of $\mathbf{I}$ is $\alpha J+\beta\mathbf{I}$ with $\alpha\ne0$; the eigenvector formula and the identification of the eigenbasis as the unique KLT are those of Theorem~\ref{thm:tridiag-unique} and Proposition~\ref{prop:prolate}(iii), using the simplicity of the spectrum of $J$.
\end{proof}

The three exact recoveries below are its instances. The band-limiting covariance~\eqref{eq:prolateC} shares the eigenbasis of Slepian's irreducible tridiagonal $T$~\eqref{eq:slepianT} and has simple spectrum, so it is $f(T)$ for an injective $f$; the Hahn ensemble is $\R=e^{-tL}$ with the Jacobi matrix $L$ and $f(\theta)=e^{-t\theta}$ injective; and the discrete cosine transform (Section~\ref{sec:dct}) is $\R=e^{-tC}$ with the cosine generator $C$ and the same injective $f$. In each, Theorem~\ref{thm:jacobi-recovery} returns the generator and its transform.

\subsection{The prolate spheroidal (Slepian) basis}

The band-limiting covariance~\eqref{eq:prolateC} does have a classical symmetry. It is centrosymmetric, $\mathbf{J}\R\mathbf{J}=\R$ with $\mathbf{J}$ the exchange matrix $\mathbf{J}_{ij}=\delta_{i,N-1-j}$, so $\delta(\mathbf{J},\R)=0$ and $\mathbf{J}$ is a matched generator that a permutation search of the kind in Section~\ref{sec:pcf} would find. But $\mathbf{J}^2=\mathbf{I}$ has spectrum $\{\pm1\}$, so commuting with $\mathbf{J}$ only block-diagonalizes $\R$ into its even and odd subspaces. The reflection is exact but coarse: a two-block reduction, not the transform.

Enlarging the candidate space to the symmetric tridiagonal matrices and solving the pencil $\ad_\R^2(A)=\lambda A$ over that space changes the picture. With $N=48$ and $W=0.1$, the zero-residual subspace is two-dimensional, $\lambda_1=\lambda_2=0$ (both below $10^{-15}$) with a gap of ten orders of magnitude to $\lambda_3\approx5\times10^{-6}$: the prolate property in the form $\lambda_2=0$ of Proposition~\ref{prop:prolate}. The two-dimensionality is not a numerical accident. The band-limiting covariance has simple spectrum~\cite{slepian1978} and $T$ is an irreducible Jacobi matrix commuting with $\R$, so Theorem~\ref{thm:tridiag-unique} forces the tridiagonal commutant to equal $\spn\{\mathbf{I},T\}$ exactly. The pencil therefore recovers $T$ rather than merely minimizing the residual: projecting Slepian's matrix~\eqref{eq:slepianT} onto the computed kernel leaves a relative residual of $2\times10^{-11}$, confirming the nontrivial null vector is $T$ up to scale. Its spectrum is simple, so by Proposition~\ref{prop:prolate}(iii) its eigenvectors resolve the full DPSS basis: forming $V_T^\top\R V_T$ leaves off-diagonal mass below $10^{-14}$, confirming that the eigenvectors of the recovered generator diagonalize $\R$. The matched generator $T$ has $\delta(T,\R)\approx2\times10^{-17}$, machine zero. Figure~\ref{fig:slepian} shows the band-limiting covariance, the recovered tridiagonal generator $T$, the leading discrete prolate spheroidal sequences, and the pencil residual spectrum.

\begin{figure}[H]
\centering
\includegraphics[width=0.96\textwidth]{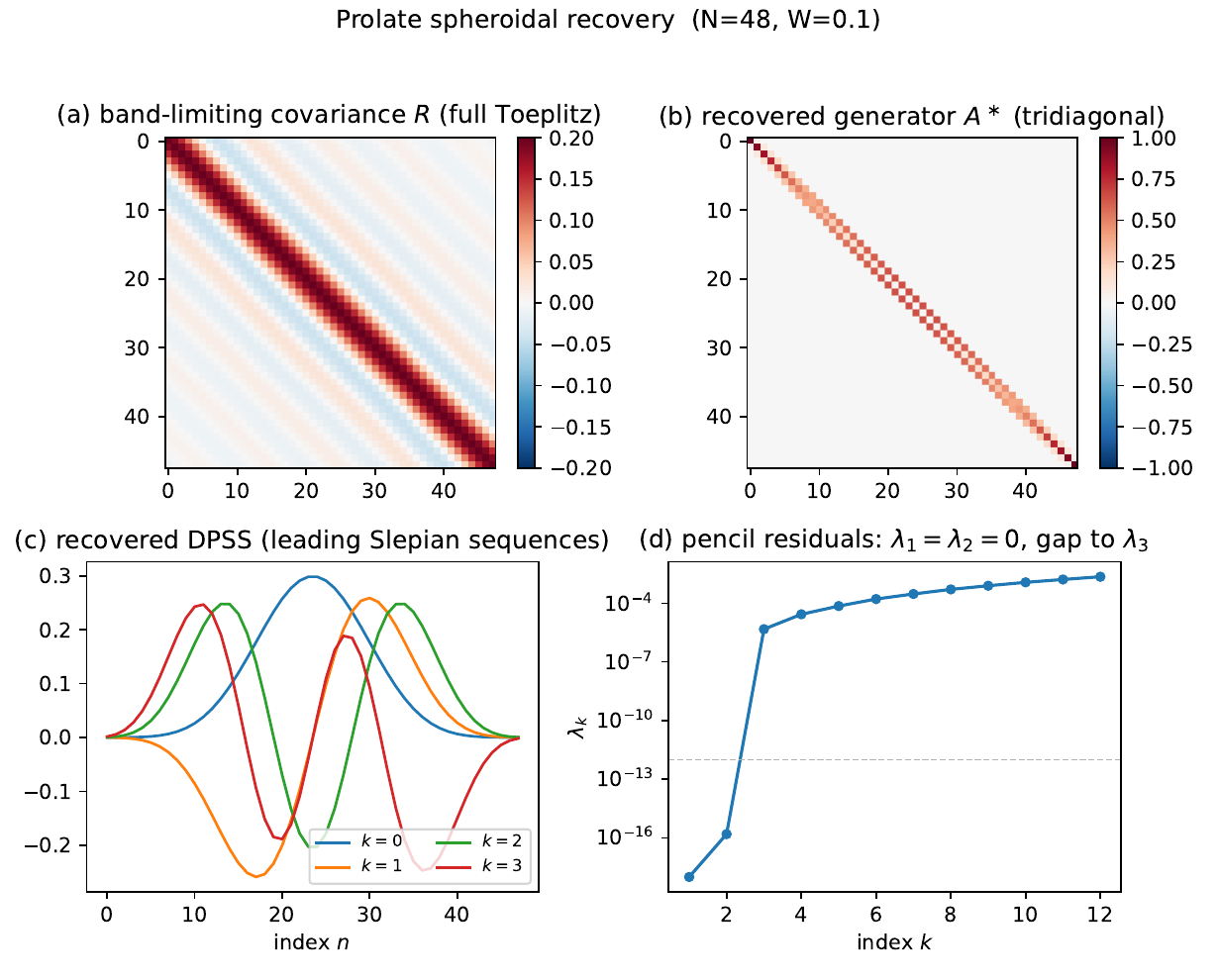}
\caption{Recovery of the prolate spheroidal (Slepian) transform by the double-commutator pencil ($N=48$, $W=0.1$). (a) The band-limiting covariance $\R$, a full Toeplitz matrix. (b) The nontrivial matched generator returned by the pencil over symmetric tridiagonal candidates: Slepian's tridiagonal $T$. (c) The leading discrete prolate spheroidal sequences, the eigenvectors of the recovered generator. (d) The pencil residual spectrum, with $\lambda_1=\lambda_2=0$ spanning the kernel $\protect\spn\{\mathbf{I},T\}$ and a gap of ten orders of magnitude to $\lambda_3$, the prolate property in the form $\lambda_2=0$.}
\label{fig:slepian}
\end{figure}

The contrast is the point of the example. The classical symmetry that inspection or a permutation search reveals, centrosymmetry, yields only a two-block reduction. The transform itself, the full DPSS resolution, comes from a generator that is not a symmetry of $\R$ in any geometric sense and that historically had to be guessed. Searching a structured generator space rather than a permutation group, the matched-generator pencil recovers it as the second null vector of $\ad_\R^2$, with no input beyond $\R$ and the choice of candidate structure.

\subsection{The Hahn transform}\label{sec:hahn}

The sinc recovery is not isolated. The same exact, finite-dimensional phenomenon holds for the discrete orthogonal-polynomial ensembles, the discrete bispectral families whose three-term recurrence supplies an exact commuting tridiagonal operator~\cite{grunbaumvinetzhedanov2018}. We give the Hahn ensemble as a worked case.

Fix $\alpha=\beta=2$ and let $w(x)=\binom{\alpha+x}{x}\binom{\beta+N-1-x}{N-1-x}$ be the symmetric Hahn weight on $x=0,\ldots,N-1$. The orthonormal Hahn polynomials are the eigenvectors of the Hahn difference operator $L$, a symmetric tridiagonal matrix with eigenvalues $\theta_n=n(n+\alpha+\beta+1)$~\cite{koekoek2010}; in self-adjoint form $L$ generates a reversible birth-and-death chain whose stationary law is $w$. We take as covariance the heat operator of that generator,
\begin{equation}\label{eq:hahnR}
\R=\exp(-tL),\qquad t>0,
\end{equation}
a full-rank positive-definite matrix whose modes are the Hahn polynomials and whose spectrum $\exp(-t\theta_n)$ is smooth and simple. The smooth, simple spectrum is what makes the recovery exact. The eigenvalues $\exp(-t\theta_n)$ are distinct, so $\R$ has simple spectrum and Theorem~\ref{thm:tridiag-unique} applies with the irreducible Jacobi matrix $L$, pinning the tridiagonal commutant to $\spn\{\mathbf{I},L\}$; a rank-deficient concentration operator, by contrast, would have a degenerate spectrum and admit many commuting tridiagonals.

The classical symmetry of $\R$ is again coarse. The symmetric weight makes $\R$ centrosymmetric, $\mathbf{J}\R\mathbf{J}=\R$, so $\delta(\mathbf{J},\R)=0$; but $\mathbf{J}^2=\mathbf{I}$ resolves $\R$ only into even and odd subspaces. Solving the pencil $\ad_\R^2(A)=\lambda A$ over the symmetric tridiagonal matrices recovers the full transform. With $N=40$ and $t=0.03$, the zero-residual subspace is two-dimensional, $\lambda_1=\lambda_2=0$ (both below $10^{-16}$) with a gap of twelve orders of magnitude to $\lambda_3\approx5\times10^{-6}$; the kernel is $\spn\{\mathbf{I},L\}$ as Theorem~\ref{thm:tridiag-unique} guarantees, the nontrivial generator $L$ is recovered to relative residual $3\times10^{-11}$, and $\delta(L,\R)\approx6\times10^{-17}$. The generator has simple spectrum, so by Proposition~\ref{prop:prolate}(iii) its eigenvectors are the Hahn-polynomial transform, and they diagonalize $\R$ to machine precision. Figure~\ref{fig:hahn} shows the covariance, the recovered tridiagonal generator, the leading Hahn modes, and the pencil spectrum.

\begin{figure}[H]
\centering
\includegraphics[width=0.96\textwidth]{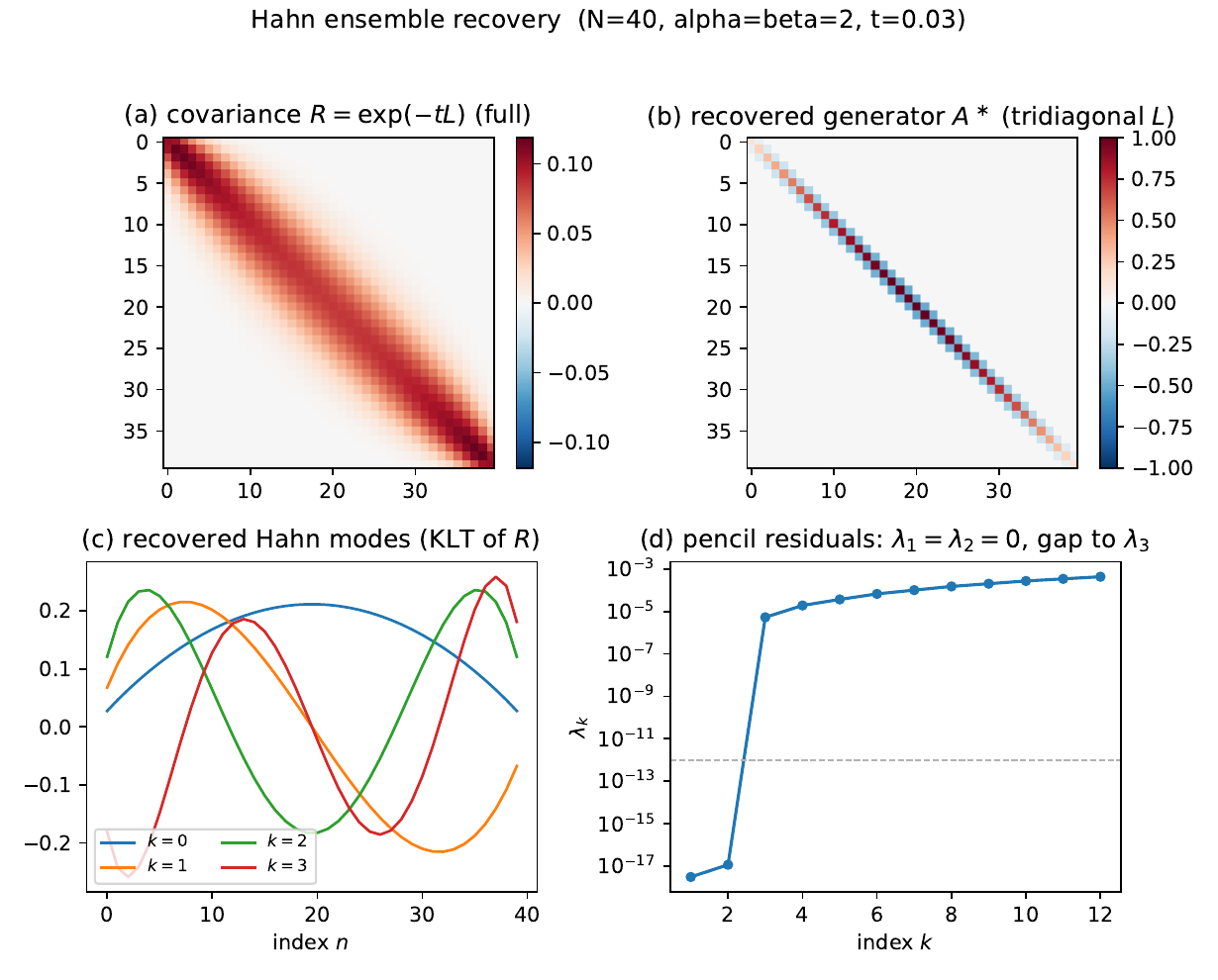}
\caption{Exact recovery of the Hahn transform by the double-commutator pencil ($N=40$, $\alpha=\beta=2$, $\R=\exp(-tL)$ with $t=0.03$). (a) The covariance $\R$, a full matrix. (b) The nontrivial matched generator returned by the pencil over symmetric tridiagonal candidates: the Hahn difference operator $L$. (c) The leading Hahn modes, the eigenvectors of the recovered generator and the Karhunen--Lo\`eve transform of $\R$. (d) The pencil residual spectrum, with $\lambda_1=\lambda_2=0$ spanning the kernel $\protect\spn\{\mathbf{I},L\}$ and a gap of twelve orders of magnitude to $\lambda_3$.}
\label{fig:hahn}
\end{figure}

The construction is not particular to Hahn. The dual Hahn, Krawtchouk, and Racah ensembles carry the same exact commuting tridiagonal by the same mechanism~\cite{grunbaumvinetzhedanov2018}, with Hahn shown here. The contrast with the preceding section is the point: where the Airy and Bessel kernels recover only in the continuum, the discrete ensembles recover exactly in finite dimensions, of the same character as the sinc case. The prolate property is recovered by the pencil across a family of transforms, not at a single point.

\subsection{The discrete cosine (DCT) basis}\label{sec:dct}

The same mechanism recovers the discrete cosine transform, the example that motivates Section~\ref{sec:ckl}. There the mixed kernel was persymmetric, so the reflection was its exact matched generator, but the reflection squares to the identity and yields only the even/odd parity split, not a sinusoidal transform; the cosine basis itself requires the stronger even-stationary structure. That structure is a Jacobi commutant. The DCT-II vectors $u_k[n]=c_k\cos\!\bigl(\pi(n+\tfrac12)k/N\bigr)$ are exactly the eigenvectors of the \emph{cosine generator} $C$, the symmetric tridiagonal Neumann second-difference matrix with diagonal $(1,2,\ldots,2,1)$ and off-diagonals $-1$, whose eigenvalues are $2-2\cos(k\pi/N)$; this is the sinusoidal-family characterization of Jain~\cite{jain1979}, and $C$ is an irreducible Jacobi matrix. Taking the even-stationary covariance
\begin{equation}\label{eq:dctR}
\R=\exp(-tC),\qquad t>0,
\end{equation}
the heat operator of $C$, equivalently the covariance of a reflecting first-order Markov process whose precision matrix is $C$, gives a full-rank positive-definite matrix whose spectrum $\exp(-t(2-2\cos(k\pi/N)))$ is simple, so Theorem~\ref{thm:jacobi-recovery} applies with the Jacobi matrix $C$.

The classical symmetry is again coarse. The covariance is centrosymmetric, $\mathbf{J}\R\mathbf{J}=\R$, so $\delta(\mathbf{J},\R)=0$; but $\mathbf{J}^2=\mathbf{I}$ resolves $\R$ only into its even and odd subspaces, the parity split, not the cosine basis. Solving the pencil $\ad_\R^2(A)=\lambda A$ over the symmetric tridiagonal matrices recovers the full transform. With $N=40$ and $t=0.3$, the zero-residual subspace is two-dimensional, $\lambda_1=\lambda_2=0$ (both below $10^{-16}$) with a gap of thirteen orders of magnitude to $\lambda_3\approx1.4\times10^{-4}$; the kernel is $\spn\{\mathbf{I},C\}$ as Theorem~\ref{thm:jacobi-recovery} guarantees, the cosine generator $C$ is recovered to relative residual $4\times10^{-13}$, and $\delta(C,\R)\approx4\times10^{-16}$. Its spectrum is simple, so its eigenvectors are the DCT-II basis: they agree with the analytic cosines to alignment $1.000000$ and diagonalize $\R$ to machine precision. Figure~\ref{fig:dct} shows the covariance, the recovered cosine generator, the leading DCT modes, and the pencil residual spectrum.

\begin{figure}[H]
\centering
\includegraphics[width=0.96\textwidth]{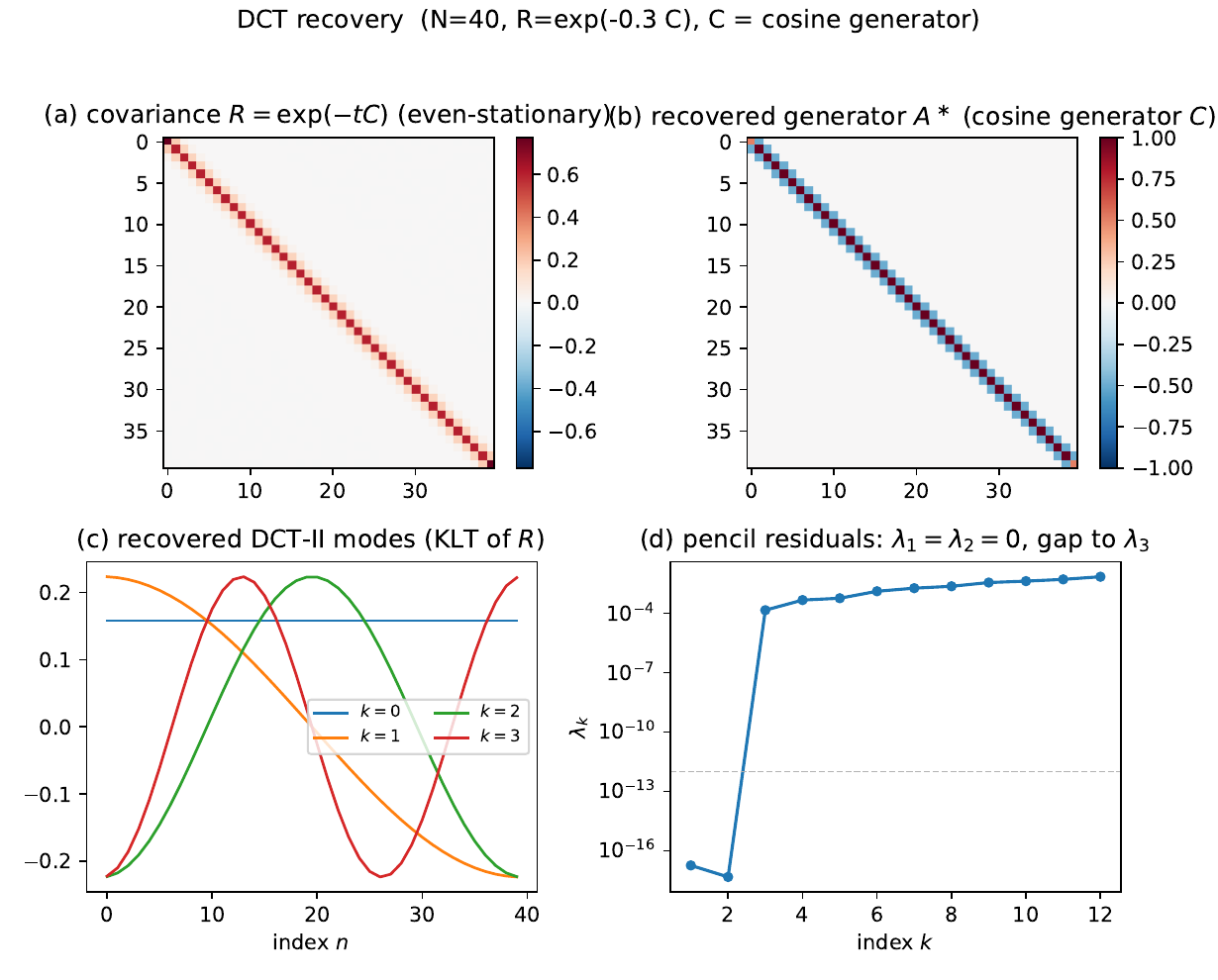}
\caption{Exact recovery of the discrete cosine transform (DCT-II) by the double-commutator pencil ($N=40$, $\R=\exp(-tC)$ with $t=0.3$). (a) The even-stationary covariance $\R$. (b) The nontrivial matched generator returned by the pencil over symmetric tridiagonal candidates: the cosine generator $C$, the Neumann second-difference matrix. (c) The leading DCT-II modes, the eigenvectors of the recovered generator and the Karhunen--Lo\`eve transform of $\R$. (d) The pencil residual spectrum, with $\lambda_1=\lambda_2=0$ spanning the kernel $\protect\spn\{\mathbf{I},C\}$ and a gap of thirteen orders of magnitude to $\lambda_3$.}
\label{fig:dct}
\end{figure}

This closes the contrast that opened the paper. For an even-stationary process the reflection is an exact but coarse symmetry, giving only the parity split, while the hidden tridiagonal cosine generator gives the full DCT; recovering the cosine basis genuinely requires the even-stationary, Jacobi-commutant structure that the persymmetric-but-nonstationary kernel of Section~\ref{sec:ckl} lacks. The discrete cosine transform joins the prolate spheroidal and Hahn bases as an exact instance of Theorem~\ref{thm:jacobi-recovery}.

\subsection{The Haar wavelet basis}\label{sec:wavelet}

The Slepian and Hahn recoveries both live in the bispectral world: the hidden generator is tridiagonal, the matrix form of a three-term recurrence, and the recovered transform is a Sturm--Liouville eigenbasis. The matched-generator principle is not confined to that world. We give a third example, in which the hidden generator is \emph{hierarchical} rather than tridiagonal and the recovered transform is the Haar wavelet basis, the prototype of multiresolution analysis~\cite{haar1910,mallat1989,daubechies1992}. Unlike the two preceding cases this one is a numerical demonstration rather than an exact-recovery theorem: the ultrametric-to-Haar diagonalization it rests on is classical, but we do not establish a commutant-uniqueness result for the hierarchical generator class analogous to Theorem~\ref{thm:tridiag-unique}, and the two-dimensionality of the pencil kernel over this class is observed numerically rather than proved.

Index $N=2^L$ points by the leaves of a balanced binary tree. For $1\le j\le L$ let $S_j$ be the scale-$j$ coupling operator, the symmetric matrix with $(S_j)_{ik}=1$ when the paths from the root to leaves $i$ and $k$ first diverge at level $j$, and zero otherwise. Every combination $\sum_j w_j S_j$ is an ultrametric matrix, and ultrametric matrices are diagonalized by the Haar wavelets of the tree; this is the matrix form of the identification of wavelet analysis with $p$-adic spectral analysis~\cite{kozyrev2002}. A pure ultrametric therefore fixes the dyadic scale eigenspaces but, because each scale is highly degenerate, not a unique basis within them. Adding a small position gradient $\delta\mathbf{X}$, with $\mathbf{X}=\operatorname{diag}(n-\tfrac{N-1}{2})$, breaks the within-scale degeneracy and resolves a unique multiscale basis, exactly the mechanism of Proposition~\ref{prop:prolate}(iii). Take
\begin{equation}\label{eq:waveletA}
A=\sum_{j=1}^{L} w_j S_j+\delta\mathbf{X},\qquad w_j=\rho^{\,j},\qquad \R=\exp(-tA),
\end{equation}
a hierarchical covariance whose modes are the Haar wavelets.

Solving the pencil $\ad_\R^2(A)=\lambda A$ over the hierarchical class $\{\mathbf{I},\mathbf{X},S_1,\ldots,S_L\}$ recovers the generator numerically. With $N=64$, $\rho=0.7$, $\delta=2\times10^{-3}$, and $t=0.15$, the zero-residual subspace is two-dimensional, $\lambda_1=\lambda_2=0$ (below $10^{-18}$) with a large gap to $\lambda_3\approx1.4\times10^{-7}$; the kernel is $\spn\{\mathbf{I},A\}$, the nontrivial generator is recovered with $\delta(A,\R)\approx3\times10^{-16}$, and it has simple spectrum. By Proposition~\ref{prop:prolate}(iii) its eigenvectors are the recovered transform, and they coincide with the Haar wavelet basis to alignment $0.99997$; the residual is the infinitesimal within-scale ordering imposed by $\delta\mathbf{X}$, which can be made arbitrarily small. The recovered basis carries the multiscale signature, sparse and localized across a range of scales, and it diagonalizes $\R$ to machine precision. Figure~\ref{fig:wavelet} shows the hierarchical covariance, the recovered ultrametric generator, the leading Haar modes, and the pencil spectrum.

\begin{figure}[H]
\centering
\includegraphics[width=0.96\textwidth]{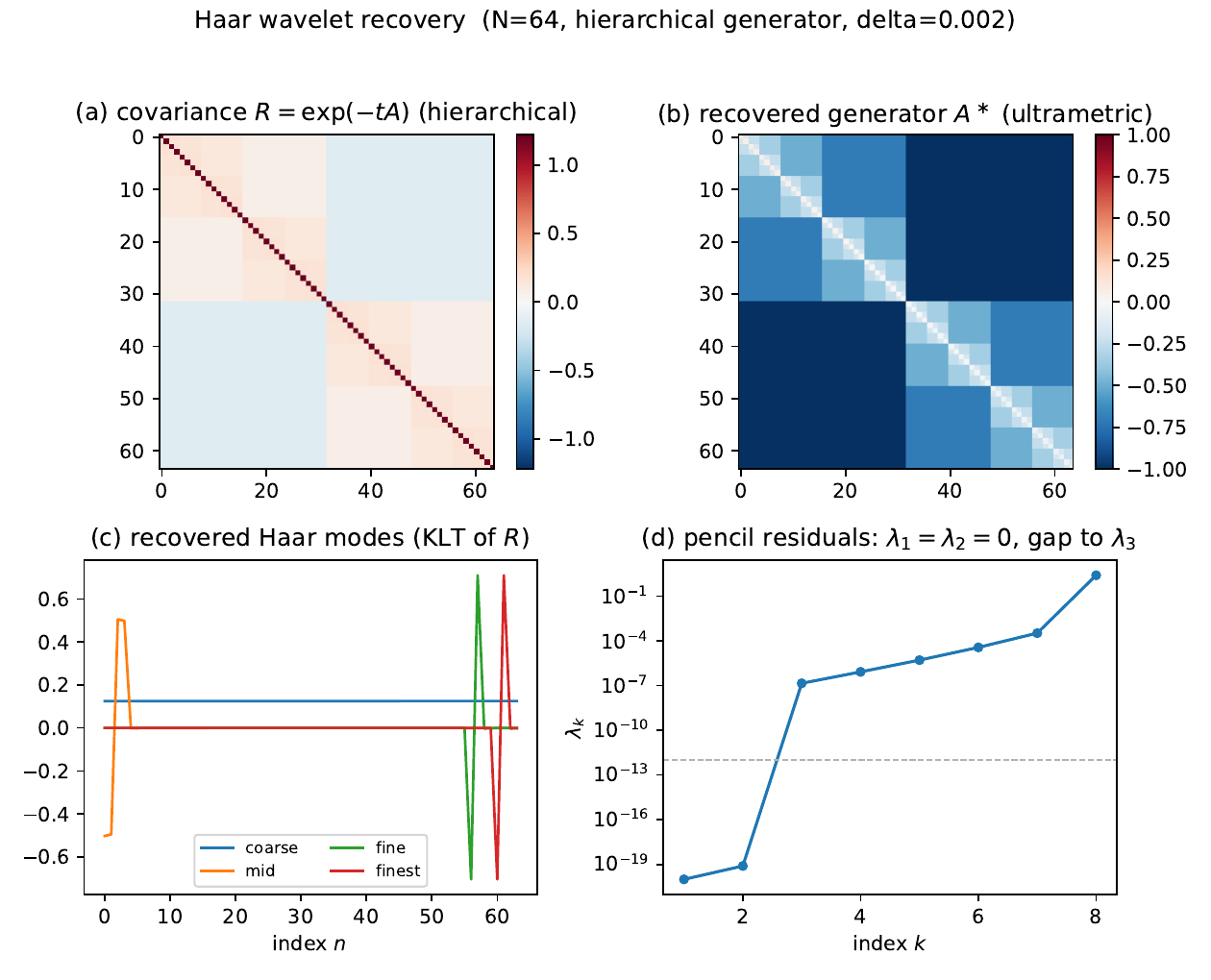}
\caption{Numerical recovery of the Haar wavelet basis by the double-commutator pencil ($N=64$, hierarchical generator $A=\sum_j 0.7^{\,j}S_j+\delta\mathbf{X}$ with $\delta=2\times10^{-3}$, $\R=\exp(-tA)$, $t=0.15$). (a) The hierarchical covariance $\R$, with nested dyadic block structure. (b) The nontrivial matched generator returned by the pencil over the hierarchical class: an ultrametric matrix on the dyadic tree. (c) The leading recovered modes, the eigenvectors of the generator and the Karhunen--Lo\`eve transform of $\R$, Haar wavelets localized across a range of scales. (d) The pencil residual spectrum, with $\lambda_1=\lambda_2=0$ spanning the kernel $\protect\spn\{\mathbf{I},A\}$ and a large gap to $\lambda_3$.}
\label{fig:wavelet}
\end{figure}

Unlike the preceding two examples, the generator here is not tridiagonal and the transform is not a Sturm--Liouville eigenbasis. The demonstration suggests that the matched-generator principle reaches beyond the bispectral, three-term-recurrence family: alongside the Fourier (cyclic), cosine (reflection), prolate (Slepian tridiagonal), and Hahn (orthogonal-polynomial) transforms, the Haar wavelet basis is recovered by the same pencil and the same degeneracy-resolution mechanism, from a hierarchical rather than a local generator. A rigorous characterization of the hierarchical generator class, identifying which covariances make its commutant exactly two-dimensional, would promote this demonstration to a recovery theorem; we leave it open. The multiresolution structure that makes wavelets computationally fast is exactly the structure the matched generator supplies.

\subsection{Positioning}

The commuting tridiagonal operator is classical~\cite{slepian1978}, and the prolate spheroidal property is studied in depth in the bispectral framework, where commuting integral and differential pairs are organized by the adelic Grassmannian and were, before that theory, built case by case~\cite{duistermaat1986,casper2023}. We claim none of that. The contribution here is a lens: the matched-generator eigenvalue problem recovers the commuting operator as the minimal solution of a variational problem over a structured generator space, and Proposition~\ref{prop:prolate} turns the existence of a hidden symmetry into the spectral condition $\lambda_2=0$ on the pencil. This makes locating the commuting operator a computation rather than a construction, and it suggests a systematic search for prolate-type reductions in other operators by varying the candidate structure, a direction the Bessel and Airy kernels of random matrix theory make natural.

\paragraph{The bispectral triple.} The sinc kernel is one of three basic instances of the same phenomenon. Rescaling the Gaussian unitary ensemble at the soft edge of its spectrum produces the Airy kernel and at a hard edge the Bessel kernel, and both admit a commuting second-order operator just as the band-limiting operator does~\cite{tracywidom1994airy,tracywidom1994bessel}; the sinc, Airy, and Bessel functions are the basic rank-one and rank-two bispectral functions, and the commuting operator in each case is a consequence of bispectrality~\cite{grunbaumyakimov2004,casper2023}. The exact finite recovery above is particular to the sinc case, where Slepian's tridiagonal commutes with the band-limiting matrix to machine precision. The Airy and Bessel commuting operators are objects of the continuum: under discretization their commutation is only approximate, the matched residual decreasing with the grid rather than reaching the floor, and the rapid spectral decay of these trace-class kernels makes a direct generator search numerically delicate. The exact finite analogues of the prolate property for these families are instead the discrete orthogonal-polynomial ensembles, whose three-term recurrence supplies an exact commuting tridiagonal operator~\cite{grunbaumyakimov2004,grunbaumvinetzhedanov2018}, as the Hahn recovery of Section~\ref{sec:hahn} demonstrates.

\subsection{The matrix-valued setting}\label{sec:matrix}

The recoveries above act on scalar signals: the covariance is a scalar kernel and the recovered transform is a basis of scalar functions. The matched-generator principle is not restricted to that case. When the signal is vector valued, with a value in $\mathbb{C}^d$ at each point, the covariance is a $d\times d$ block kernel and the natural orthogonal systems are matrix valued. Deciding which weight matrices have orthogonal polynomials that are eigenfunctions of a second-order differential operator is the matrix Bochner problem, and unlike its scalar predecessor it is only partially resolved~\cite{duran2004,casperyakimov2022}. This is where the framework reaches genuinely new objects, and where its limits are also sharpest. We record one verified worked example and a deformation analysis of the reducible point.

\paragraph{A matrix-Hermite matched generator.} Let $a\in\mathbb{R}\setminus\{0\}$ and take the Hermite-type weight matrix
\begin{equation}\label{eq:matrixhermite}
W(x)=e^{-x^2}\begin{pmatrix} 1+a^2x^2 & ax \\ ax & 1\end{pmatrix}
      =e^{-x^2}\,e^{Ax}e^{A^{*}x},\qquad A=\begin{pmatrix}0&a\\0&0\end{pmatrix}.
\end{equation}
No constant invertible matrix block-diagonalizes $W$: the coupling $ax$ rotates the channel frame with $x$, so $W$ is irreducible. Its monic matrix orthogonal polynomials $P_n$, defined by $\langle P_n,P_m\rangle=\int P_n\,W\,P_m^{*}\,dx=\delta_{nm}H_n$, satisfy two eigenvalue relations at once. In the degree index they obey a matrix three-term recurrence $xP_n=P_{n+1}+B_nP_n+C_nP_{n-1}$ with matrix coefficients $B_n,C_n$, a block-tridiagonal difference operator. In the variable $x$ they are eigenfunctions of the second-order matrix differential operator
\begin{equation}\label{eq:matrixD}
\mathcal{D}=\partial_x^{2}\,\mathbf{I}
  +\partial_x\begin{pmatrix}-2x & 2a\\ 0 & -2x\end{pmatrix}
  +\begin{pmatrix}-2 & 0\\ 0 & 0\end{pmatrix},
\qquad \mathcal{D}\,P_n=\Lambda_n P_n,\quad \Lambda_n=\operatorname{diag}(-2n-2,\,-2n),
\end{equation}
acting with coefficients on the right of $P_n$~\cite{duran2004,casperyakimov2022}. The polynomials are therefore bispectral, eigenfunctions of a difference operator in $n$ and a differential operator in $x$, and $\mathcal{D}$ is the matrix analogue of the Slepian and Hermite operators of the scalar case: the matched generator whose eigenbasis is the matrix-Hermite transform of $\mathbb{C}^2$-valued signals. The operator is genuinely matrix valued. Its first-order coefficient is a non-normal Jordan block sharing no constant eigenbasis with the zeroth-order term, the two eigenvalue sequences $-2n-2$ and $-2n$ differ, and the corresponding orthonormal functions, the rows of $H_n^{-1/2}P_nW^{1/2}$, couple the two channels; the transform is not a pair of scalar transforms in a fixed frame. We reconstructed this example numerically inside the present framework as a check: building $P_n$ by the stable matrix recurrence (orthogonality to $5\times10^{-14}$), confirming \eqref{eq:matrixD} to $2\times10^{-12}$ across degrees, and confirming the matrix-Hermite functions form an orthonormal basis of $L^2(\mathbb{R},\mathbb{C}^2)$ to $6\times10^{-15}$. The weight and operator are not new: \eqref{eq:matrixhermite} is the Dur\'an--Gr\"unbaum Hermite-type weight and \eqref{eq:matrixD} the explicit operator recorded in the matrix Bochner analysis~\cite{duran2004,casperyakimov2022}. What the matched-generator lens adds is the reading of $\mathcal{D}$ as the structured generator selected by commutation with the covariance, exactly as in the scalar recoveries.

\paragraph{The reducible point: a deformation analysis.} A natural hope is to reach a matrix example by coupling two scalar ones. Fix two scalar band-limiting systems of distinct bandwidths $W_1\neq W_2$ on $\mathbb{C}^N$, with covariances $\R_1,\R_2$ and commuting Slepian tridiagonals $T_1,T_2$, so that $[\R_i,T_i]=0$. On $\mathbb{C}^2\otimes\mathbb{C}^N$ the direct sum
\begin{equation}\label{eq:matrix-reducible}
\R_0=\operatorname{diag}(\R_1,\R_2),\qquad T_0=\operatorname{diag}(T_1,T_2),\qquad [\R_0,T_0]=0
\end{equation}
is bispectral by construction. We analyze its concentration-preserving deformations, holding the scalar band-limiting blocks $\R_{11}\equiv\R_1$ and $\R_{22}\equiv\R_2$ fixed while an off-diagonal channel coupling turns on, and ask whether a genuine matrix prolate is reached by deforming this reducible point.

\begin{lemma}[First-order coupling tangent]\label{lem:matrix-tangent}
Assume the spectral disjointness
\begin{equation}\label{eq:spec-disjoint}
\operatorname{spec}(T_1)\cap\operatorname{spec}(T_2)=\varnothing,
\end{equation}
which is generic for distinct bandwidths $W_1\neq W_2$ and holds in the numerical cases considered here. Write a channel coupling as $\dot\R=E_{12}\otimes S+E_{21}\otimes S^{\top}$ and $\dot T=E_{12}\otimes\tau+E_{21}\otimes\tau^{\top}$, with $E_{ij}$ the $2\times2$ matrix units and $\tau$ tridiagonal. The commutation condition $[\dot\R,T_0]+[\R_0,\dot T]=0$ is equivalent to the Sylvester relation
\begin{equation}\label{eq:matrix-sylvester}
T_1 S - S T_2 = \R_1\tau-\tau \R_2 .
\end{equation}
Under \eqref{eq:spec-disjoint} the Sylvester map $S\mapsto T_1S-ST_2$ is invertible, so \eqref{eq:matrix-sylvester} determines $S$ uniquely from $\tau$ and the coupling tangent is parametrized by the tridiagonal $\tau$.
\end{lemma}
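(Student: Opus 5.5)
The plan is to compute the first-order commutator directly in block form, read off the $(1,2)$ block, and then invoke the classical Sylvester solvability criterion. Expanding $[\R_0+\epsilon\dot\R,\,T_0+\epsilon\dot T]=0$ to first order in $\epsilon$ and using $[\R_0,T_0]=0$ leaves exactly the tangent condition $[\dot\R,T_0]+[\R_0,\dot T]=0$ of the statement. Because $\R_0=E_{11}\otimes\R_1+E_{22}\otimes\R_2$ and $T_0=E_{11}\otimes T_1+E_{22}\otimes T_2$ are block diagonal, while $\dot\R$ and $\dot T$ are purely off-diagonal, every term of the tangent condition is off-diagonal. The matrix-unit products $E_{12}E_{22}=E_{12}$, $E_{11}E_{12}=E_{12}$, $E_{12}E_{11}=0$, and $E_{22}E_{12}=0$ then allow the blocks to be read off one at a time.

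For the $(1,2)$ block, the term $[\dot\R,T_0]$ contributes $ST_2-T_1S$ and the term $[\R_0,\dot T]$ contributes $\R_1\tau-\tau\R_2$. Setting their sum to zero gives \eqref{eq:matrix-sylvester}. The $(2,1)$ block reads $S^\top T_1-T_2S^\top+\R_2\tau^\top-\tau^\top\R_1=0$. Since $\R_i$ and $T_i$ are real symmetric, this is the transpose of the $(1,2)$ equation up to sign, so it imposes nothing new. This redundancy is the one point I would check carefully, because it is what makes the tangent condition equivalent to a single Sylvester relation rather than two independent ones. The converse direction follows by reversing the computation: if \eqref{eq:matrix-sylvester} holds, both off-diagonal blocks vanish, and the diagonal blocks are identically zero.

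For invertibility, I will use the classical Sylvester theorem. The operator $S\mapsto T_1S-ST_2$ on $\mathbb{C}^{N\times N}$ equals $I\otimes T_1-T_2^{\top}\otimes I$ under vectorization. Its eigenvalues are $\mu-\nu$ with $\mu\in\operatorname{spec}(T_1)$ and $\nu\in\operatorname{spec}(T_2)$. This can be seen directly because $T_1$ and $T_2$ are real symmetric, so orthogonal eigenbases $\{u_k\}$ and $\{v_l\}$ give the eigenvectors $u_kv_l^\top$. Hypothesis \eqref{eq:spec-disjoint} makes every $\mu-\nu$ nonzero, so the map is invertible. Consequently $S=\mathcal{S}^{-1}(\R_1\tau-\tau\R_2)$ is uniquely determined, and it depends linearly on $\tau$. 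If a symmetric setting is wanted, the real symmetry of all data also makes $S$ real. The tangent space is therefore the linear image of the space of tridiagonal $\tau$, which completes the parametrization.

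I expect no serious obstacle here, since this is bookkeeping plus a standard theorem. The step most prone to error is the sign and transpose consistency of the $(2,1)$ block. To settle it cleanly, I would note that the full tangent commutator is skew-symmetric: it is the commutator of two symmetric matrices, so $[X,Y]^\top=-[X,Y]$ for symmetric $X,Y$. Its $(2,1)$ block is therefore minus the transpose of its $(1,2)$ block, and it vanishes automatically whenever the $(1,2)$ block does.
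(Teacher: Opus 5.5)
Your proposal is correct and follows the paper's proof essentially step for step. You read off the $E_{12}$ block of the tangent commutator to obtain the Sylvester relation, observe that the $E_{21}$ block adds no new condition, and use the fact that the Sylvester operator has eigenvalues $\mu-\nu$ with $\mu\in\operatorname{spec}(T_1)$, $\nu\in\operatorname{spec}(T_2)$, which are all nonzero under the disjointness hypothesis. Your skew-symmetry argument is slightly more precise than the paper's wording: the $E_{21}$ block is minus the transpose of the $E_{12}$ block, not its transpose as the paper states, although the sign does not affect the equivalence.
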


\begin{proof}
The $E_{12}$ block of $[\dot\R,T_0]+[\R_0,\dot T]$ is $ST_2-T_1S+\R_1\tau-\tau\R_2$ and the $E_{21}$ block is its transpose, giving~\eqref{eq:matrix-sylvester}. The eigenvalues of the Sylvester operator $S\mapsto T_1S-ST_2$ are the differences $\mu-\nu$ with $\mu\in\operatorname{spec}(T_1)$ and $\nu\in\operatorname{spec}(T_2)$; the disjointness~\eqref{eq:spec-disjoint} makes all of them nonzero, so the operator is invertible and $S$ is determined by $\tau$. For Slepian tridiagonals~\cite{slepian1978} of distinct bandwidth the spectra are generically disjoint, and disjointness is verified directly in the numerical instances.
\end{proof}

\begin{lemma}[Second-order obstruction in the tridiagonal commutant cokernel]\label{lem:matrix-coker}
Holding $\R_1,\R_2$ fixed, the within-channel part of the second-order commutation equation is
\begin{equation}\label{eq:matrix-secondorder}
[\R_1,\ddot T_{11}]=-2\,(S\tau^{\top}-\tau S^{\top}),\qquad \ddot T_{11}\ \text{symmetric tridiagonal},
\end{equation}
and likewise for channel $2$ with $S^{\top}\tau-\tau^{\top}S$ and $\R_2$. The map $\ddot T_{11}\mapsto[\R_1,\ddot T_{11}]$ on the symmetric tridiagonal matrices has kernel exactly $\operatorname{span}\{\mathbf{I},T_1\}$ by Theorem~\ref{thm:tridiag-unique}, so its image has dimension $2N-3$ and its cokernel, inside the antisymmetric matrices, has dimension $\binom{N}{2}-(2N-3)$. A first-order coupling extends to second order if and only if $S\tau^{\top}-\tau S^{\top}$ and $S^{\top}\tau-\tau^{\top}S$ project to zero on these two cokernels.
\end{lemma}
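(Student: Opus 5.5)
The plan is to expand the commutation relation $[\R(\varepsilon),T(\varepsilon)]=0$ to second order along a one-parameter deformation and read off the diagonal blocks. Write
\[
\R(\varepsilon)=\R_0+\varepsilon\dot\R+\tfrac{\varepsilon^2}{2}\ddot\R+O(\varepsilon^3),\qquad
T(\varepsilon)=T_0+\varepsilon\dot T+\tfrac{\varepsilon^2}{2}\ddot T+O(\varepsilon^3).
\]
Here $\dot\R,\dot T$ are the channel couplings of Lemma~\ref{lem:matrix-tangent}. The constraint that $\R_{11}\equiv\R_1$ and $\R_{22}\equiv\R_2$ are held fixed means that $\ddot\R$ has vanishing diagonal blocks. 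The $\varepsilon^0$ and $\varepsilon^1$ coefficients are $[\R_0,T_0]=0$ and the first-order condition already treated in Lemma~\ref{lem:matrix-tangent}. The $\varepsilon^2$ coefficient is
\[
\tfrac12[\R_0,\ddot T]+[\dot\R,\dot T]+\tfrac12[\ddot\R,T_0]=0.
\]

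Next I will take the $(1,1)$ block of this identity. Since $\R_0$ and $T_0$ are block diagonal, the $(1,1)$ block of $[\R_0,\ddot T]$ is $[\R_1,\ddot T_{11}]$. Because $\ddot\R$ is purely off-diagonal and $T_0$ is block diagonal, the product $[\ddot\R,T_0]$ is also purely off-diagonal, so it does not contribute. For the remaining term I multiply the $2\times2$ matrix units. The product $\dot\R\dot T$ has $(1,1)$ block $S\tau^{\top}$, and $\dot T\dot\R$ has $(1,1)$ block $\tau S^{\top}$. Multiplying by $2$ then gives exactly~\eqref{eq:matrix-secondorder}. The $(2,2)$ block produces $S^{\top}\tau-\tau^{\top}S$ by the same computation.

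The solvability criterion then follows from linear algebra. The operator $\ad_{\R_1}$ maps the space of symmetric tridiagonal matrices, of dimension $2N-1$, into the antisymmetric matrices: the commutator of two symmetric matrices is antisymmetric. The right side $S\tau^{\top}-\tau S^{\top}$ is visibly antisymmetric as well. The band-limiting $\R_1$ has simple spectrum and commutes with the irreducible Jacobi matrix $T_1$. Theorem~\ref{thm:tridiag-unique} therefore identifies the kernel as $\spn\{\mathbf{I},T_1\}$. Rank--nullity then gives an image of dimension $2N-3$ inside the $\binom N2$-dimensional antisymmetric space. The equation is solvable if and only if the right side lies in this image, equivalently if and only if its Frobenius-orthogonal projection onto the complement (the cokernel) vanishes. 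The same argument applies to channel $2$.

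The main obstacle is justifying the phrase ``extends to second order if and only if''. The off-diagonal blocks of the $\varepsilon^2$ equation must impose no additional constraint. I would handle this by noting that those blocks are Sylvester equations of the same form as~\eqref{eq:matrix-sylvester}, now in the unknowns $(\ddot\R_{12},\ddot T_{12})$, with the fixed inhomogeneity coming from the off-diagonal part of $[\dot\R,\dot T]$. Under the disjointness~\eqref{eq:spec-disjoint}, the Sylvester map is invertible. So for any choice of $\ddot T_{12}$, the off-diagonal blocks can always be solved uniquely for $\ddot\R_{12}$. The only genuine obstructions are therefore the two within-channel cokernel conditions.
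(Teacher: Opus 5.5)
Your proposal is correct and follows the paper's argument: expand $[\R(\varepsilon),T(\varepsilon)]=0$ to second order, read off the $(1,1)$ block using $E_{12}E_{12}=0$, and identify the kernel via Theorem~\ref{thm:tridiag-unique} with rank--nullity giving the $2N-3$ image and cokernel count. Your extra check of the off-diagonal blocks is a welcome addition that the paper leaves implicit, but the inhomogeneity you describe is actually zero, since $[\dot\R,\dot T]=E_{11}\otimes(S\tau^{\top}-\tau S^{\top})+E_{22}\otimes(S^{\top}\tau-\tau^{\top}S)$ is block diagonal; those blocks are therefore the homogeneous Sylvester relation, satisfied by $\ddot\R_{12}=\ddot T_{12}=0$ without even invoking the disjointness hypothesis.
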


\begin{proof}
With $\R_{11}\equiv\R_1$ the term $[\ddot\R_{11},T_1]$ is absent and the quadratic term $[\dot\R,\dot T]$ lands within channel since $E_{12}E_{12}=0$, giving~\eqref{eq:matrix-secondorder}; both sides are antisymmetric. A symmetric tridiagonal commutes with $\R_1=f(T_1)$ if and only if it lies in the tridiagonal commutant of $T_1$, which is $\operatorname{span}\{\mathbf{I},T_1\}$ by Theorem~\ref{thm:tridiag-unique}; the dimension count follows from the symmetric tridiagonals having dimension $2N-1$ and the antisymmetric matrices dimension $\binom{N}{2}$.
\end{proof}

\begin{lemma}[Channel rotations are excluded]\label{lem:matrix-rotation}
The position-dependent channel rotations, generated by $\Xi=E_{12}\otimes\xi-E_{21}\otimes\xi^{\top}$ with $\xi$ diagonal, lie in the tangent of Lemma~\ref{lem:matrix-tangent} but are generically not concentration-preserving. With the convention $[A,B]=AB-BA$, the rotation flow $e^{\theta\Xi}\R_0\,e^{-\theta\Xi}$ shifts the first diagonal block at second order by $\tfrac12[\Xi,[\Xi,\R_0]]_{11}$, where
\begin{equation}\label{eq:matrix-rotshift}
[\Xi,[\Xi,\R_0]]_{11}=2\,\xi\R_2\xi^{\top}-\xi\xi^{\top}\R_1-\R_1\xi\xi^{\top},
\end{equation}
This term is generically nonzero, and in particular is nonzero for a scalar rotation $\xi=c\mathbf I$ with $c\neq0$ whenever $\R_1\neq\R_2$, where it reduces to $2c^2(\R_2-\R_1)$. A nonzero shift is incompatible with $\R_{11}\equiv\R_1$, so such a rotation is not concentration-preserving. The concentration-preserving class therefore carries no scalar channel-rotation gauge, and generically no channel-rotation gauge at all. This excludes one natural family of deformations; it is not on its own a rigidity statement, and the status of the remaining directions is taken up after the lemmas.
\end{lemma}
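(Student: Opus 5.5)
The plan is to verify the lemma by direct block computation. Three things need checking: that the rotations lie in the tangent of Lemma~\ref{lem:matrix-tangent}, the second-order formula~\eqref{eq:matrix-rotshift}, and the genericity claim.

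\emph{Step 1: membership in the tangent.} The conjugation flow $\theta\mapsto(e^{\theta\Xi}\R_0e^{-\theta\Xi},\,e^{\theta\Xi}T_0e^{-\theta\Xi})$ preserves $[\R_0,T_0]=0$ identically. Its derivative at $\theta=0$, namely $(\dot\R,\dot T)=([\Xi,\R_0],[\Xi,T_0])$, therefore satisfies the linearized commutation condition. This also follows from the Jacobi identity: $[[\Xi,\R_0],T_0]+[\R_0,[\Xi,T_0]]=[\Xi,[\R_0,T_0]]=0$.

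Next I will expand the brackets in matrix units, using $E_{12}E_{21}=E_{11}$ and $E_{12}E_{12}=0$:
\[
[\Xi,\R_0]=E_{12}\otimes(\xi\R_2-\R_1\xi)-E_{21}\otimes(\xi^{\top}\R_1-\R_2\xi^{\top}).
\]
The same computation for $T_0$ gives $\tau=\xi T_2-T_1\xi$. This is tridiagonal because $\xi$ is diagonal and $T_1,T_2$ are tridiagonal, and the corresponding $S$ is $\xi\R_2-\R_1\xi$. Hence the flow has exactly the coupling form of Lemma~\ref{lem:matrix-tangent}, and~\eqref{eq:matrix-sylvester} holds automatically. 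Under the spectral disjointness~\eqref{eq:spec-disjoint} this $S$ is the unique solution attached to $\tau$.

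\emph{Step 2: the second-order shift.} I will use the expansion $e^{\theta\Xi}\R_0e^{-\theta\Xi}=\R_0+\theta[\Xi,\R_0]+\tfrac{\theta^2}{2}[\Xi,[\Xi,\R_0]]+O(\theta^3)$. The first-order term is purely off-diagonal, so it does not touch the $(1,1)$ block. Writing $K=[\Xi,\R_0]$, the $(1,1)$ block of $[\Xi,K]$ is
\[
\Xi_{12}K_{21}-K_{12}\Xi_{21}.
\]
Substituting $\Xi_{12}=\xi$, $\Xi_{21}=-\xi^{\top}$, $K_{12}=\xi\R_2-\R_1\xi$ and $K_{21}=-(\xi^{\top}\R_1-\R_2\xi^{\top})$, this becomes
\[
-\xi\xi^{\top}\R_1+\xi\R_2\xi^{\top}+\xi\R_2\xi^{\top}-\R_1\xi\xi^{\top},
\]
which is~\eqref{eq:matrix-rotshift}. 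Setting $\xi=c\mathbf I$ collapses it to $2c^2(\R_2-\R_1)$. This is nonzero when $c\neq0$ and $\R_1\neq\R_2$, and the latter holds for distinct bandwidths because the diagonals are $2W_1\neq2W_2$.

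\emph{Step 3: genericity.} The shift is a polynomial map of the diagonal entries of $\xi$. By Step 2 it is not identically zero, since it is nonzero at scalar $\xi$. Its zero set is therefore a proper algebraic subset of the diagonal matrices, so the shift is nonzero for generic $\xi$.

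A nonzero second-order change in the $(1,1)$ block contradicts $\R_{11}\equiv\R_1$ along the flow, so such rotations are not concentration-preserving. The only step needing care is the sign bookkeeping in Step 2; the rest is formal.
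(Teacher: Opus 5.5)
Your proposal is correct and follows the paper's proof. Tangent membership comes from the Jacobi identity with $[\R_i,T_i]=0$, the shift comes from a direct block expansion of $[\Xi,[\Xi,\R_0]]_{11}$, and the scalar case reduces to $2c^2(\R_2-\R_1)$. Your zero-set argument for genericity, and your observation that $\R_1\neq\R_2$ because the diagonals $2W_1\neq 2W_2$ differ, usefully make explicit two points the paper only asserts.
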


\begin{proof}
For $\xi$ diagonal the pair $(\xi\R_2-\R_1\xi,\ \xi T_2-T_1\xi)$ satisfies~\eqref{eq:matrix-sylvester} by the Jacobi identity together with $[\R_i,T_i]=0$, so $\Xi$ is in the tangent. Expanding $[\Xi,[\Xi,\R_0]]$ with $\R_0=\operatorname{diag}(\R_1,\R_2)$ and $\xi^{\top}=\xi$ gives~\eqref{eq:matrix-rotshift} for the $(1,1)$ block. For a scalar $\xi=c\mathbf I$ this reduces to $2c^2(\R_2-\R_1)$, nonzero whenever $\R_1\neq\R_2$; for general diagonal $\xi$ it is generically nonzero.
\end{proof}

These lemmas reduce the local question to the tridiagonal commutant cokernel of Theorem~\ref{thm:tridiag-unique} and exclude the channel rotations exactly, but they do not by themselves establish rigidity. The second-order obstruction is not bounded below on the rotation complement: minimizing it, rather than sampling, which in these dimensions misses the relevant directions, drives it to a small and size-dependent floor for $N\ge 7$ and to exactly zero for $N=6$, where the reducible point carries genuine second-order-flat coupling directions whose continuation is unobstructed through fourth order. Whether the reducible point is rigid for every bandwidth pair, and the classification of irreducible matrix transforms beyond the matrix-Hermite example, are not settled by this local analysis. Both are global: in the resolved cases of the matrix Bochner problem the irreducible solutions are noncommutative Darboux transforms of scalar direct sums by a non-constant intertwiner, the factor $e^{Ax}$ in~\eqref{eq:matrixhermite}, and the rigidity that would distinguish a genuine matrix prolate from a deformed reducible point is the obstruction to writing that intertwiner perturbatively~\cite{casperyakimov2022}. We leave both as future work.

\paragraph{Placement.} The two facts together locate the matrix-valued frontier. The matched-generator catalog, scalar throughout this paper, the Fourier, cosine, prolate, Hahn, and Haar bases, extends to genuinely irreducible matrix-valued transforms, of which the matrix Hermite is the cleanest worked example and $\mathcal{D}$ its matched generator. But these objects are isolated and intrinsically irreducible, reached by a non-constant intertwining of scalar systems, a global Darboux construction, and their classification, like the rigidity of the reducible point, is open. The framework carries over unchanged in form: the same variational reading of the generator and the same commutation residual now range over matrix-valued generator spaces. A genuinely new matrix bispectral family would require a weight outside the known Hermite, Laguerre, Jacobi, and Gegenbauer matrix classes, and the constructive check used here, a stable matrix recurrence paired with a verified operator solver, is a tool for that search.

\section{Interpolation Between Classical Transforms}\label{sec:interp}

Because the generator space $\mathcal{B}$ is continuous, one may form convex combinations of known generators and obtain transforms that lie between the classical ones. A parametric family of sinusoidal unitary transforms containing the Fourier, cosine, sine, and first-order Markov KLT as asymptotically equivalent members was given by Jain~\cite{jain1979}. The construction here differs in that the interpolation is driven by the covariance through a structured self-adjoint generator rather than by a fixed sinusoidal parameterization, so it is not restricted to stationary or sinusoidal structure.

\begin{definition}[Parameterized generator family]
For self-adjoint structured generators $A_1,A_2$ on $\mathbb{C}^M$, the interpolating generator is $A(\alpha)=(1-\alpha)A_1+\alpha A_2$, $\alpha\in[0,1]$. Each $A(\alpha)$ is again self-adjoint, and the transform at parameter $\alpha$ is its orthonormal eigenbasis. Here the generators act as spectral transforms through their eigenbases, not as infinitesimal generators of unitary flows $\exp(\theta A(\alpha))$; the examples below are the Neumann second-difference, the ultrametric generator, and their convex combinations, all self-adjoint rather than skew-Hermitian.
\end{definition}

At $\alpha=0$ and $\alpha=1$ the transform is the $A_1$- and $A_2$-transform respectively. A concrete instance takes $A_1$ the periodic second-difference Laplacian, whose orthonormal eigenbasis is the real Fourier (sine-cosine) system, and $A_2$ the Neumann second-difference Laplacian, whose eigenbasis is the DCT-II~\cite{rao1974}. Both are self-adjoint, so each $A(\alpha)$ has an orthonormal eigenbasis lying between the two. The periodic endpoint carries the two-fold frequency degeneracy of the spectrum $2-2\cos(2\pi k/M)$, so its eigenbasis is fixed only up to rotation within each frequency pair; any Neumann admixture ($\alpha>0$) lifts this degeneracy and selects a unique interpolating basis, the multiplicity-resolution mechanism of Proposition~\ref{prop:prolate}(iii). For a process whose covariance shares the symmetry of an interior generator $A(\alpha_0)$, the residual-minimizing $\alpha$ recovers $\alpha_0$, and the matched transform improves on both the real-Fourier and the DCT endpoint in rank-$K$ truncation energy (Figure~\ref{fig:interp}). For a single covariance the optimal generator is, by Theorem~\ref{thm:main}, the exact eigenvector of $\ad_\R^2$ restricted to the span $\{A_1,A_2\}$, so the interpolation parameter is not a free knob but the solution of the $2\times2$ pencil.

\begin{figure}[H]
\centering
\includegraphics[width=0.96\textwidth]{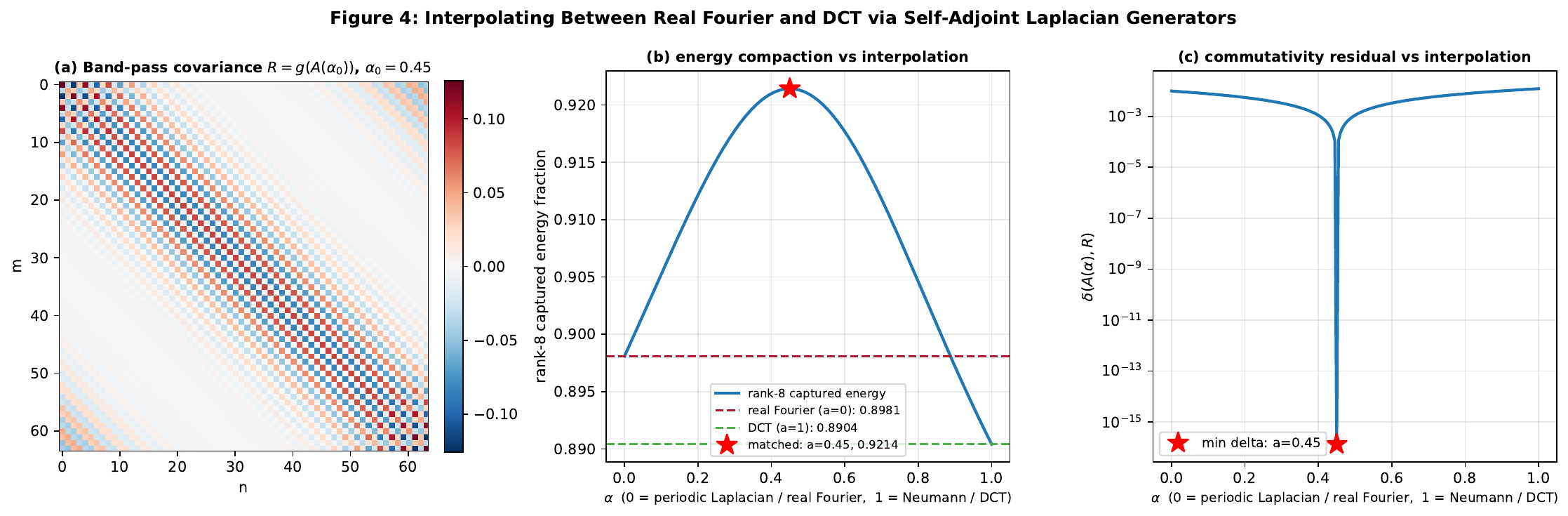}
\caption{Interpolation between the real Fourier and DCT bases via the self-adjoint Laplacian generator family of Section~\ref{sec:interp}. (a) A band-pass covariance whose symmetry generator is the interior mixed Laplacian $A(\alpha_0)$, $\alpha_0=0.45$. (b) The rank-$K$ truncation energy is maximized at the matched interior $\alpha$, improving on both the real-Fourier ($\alpha=0$) and DCT ($\alpha=1$) endpoints. (c) The commutativity residual $\delta(A(\alpha),\R)$ vanishes at $\alpha_0$. The periodic endpoint is spectrally degenerate; the Neumann admixture resolves a unique basis.}
\label{fig:interp}
\end{figure}

\subsection{An interpolant outside the sinusoidal family}\label{sec:interpjain}

The Fourier-to-cosine interpolant above stays within the sinusoidal world, and so does not yet exhibit the freedom asserted at the start of this section, since Jain's family already interpolates the sinusoidal transforms. To reach outside it, one interpolates a sinusoidal generator with a non-tridiagonal one. Let $A_1$ be the cosine generator, the Neumann second-difference, a symmetric tridiagonal operator whose eigenbasis is the DCT and hence a member of Jain's family, and let $A_2$ be the hierarchical ultrametric generator of Section~\ref{sec:wavelet}, whose eigenbasis is the Haar system and which is not tridiagonal. For the covariance $\R=\exp(-sG)$ with $G=\tfrac12 A_1+\tfrac12 A_2$ and $s=0.15$ on $M=64$ points, the two-generator pencil over $\spn\{A_1,A_2\}$ recovers the data-optimal interpolant exactly: $\alpha^\ast=0.500$ with $\delta(A^\ast,\R)=1.2\times10^{-16}$. The matched transform is a hybrid of the two, aligned $0.80$ with the DCT basis and $0.37$ with the Haar basis, and its generator carries $67\%$ of its Frobenius mass outside the tridiagonal band.

That the interpolant lies outside Jain's family admits an exact certificate through Theorem~\ref{thm:tridiag-unique}. Every member of the sinusoidal family is the eigenbasis of a symmetric tridiagonal (Jacobi) operator, so if no symmetric tridiagonal matrix other than the identity commutes with $\R$, the KLT of $\R$ is the eigenbasis of no tridiagonal operator and is therefore not a sinusoidal transform. For the mixed $\R$ the symmetric-tridiagonal commutant is one-dimensional, the identity alone; for a pure-cosine covariance it is two-dimensional, $\spn\{\mathbf{I},A_1\}$, exhibiting the Jain generator. The data-matched continuum thus passes through transforms that no fixed sinusoidal parameterization contains. This is a structural statement about the reach of the construction, not a claim of improved compaction. Figure~\ref{fig:interpjain} shows a matched mode of the interpolant against its nearest DCT and Haar modes, the residual $\delta(A(\alpha),\R)$ over the interpolation, and the tridiagonal-commutant certificate.

\begin{figure}[H]
\centering
\includegraphics[width=0.96\textwidth]{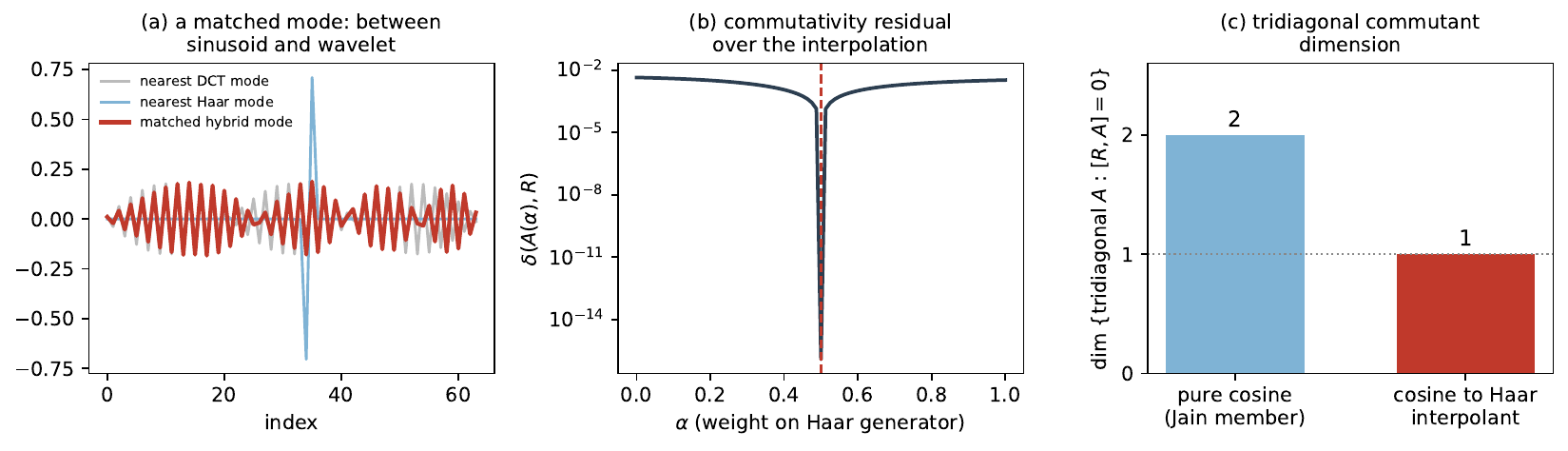}
\caption{An interpolated transform outside Jain's sinusoidal family, from the cosine generator (a Jain member) to the hierarchical Haar generator of Section~\ref{sec:wavelet}. (a) A matched mode of the data-optimal interpolant against its nearest DCT and Haar modes: it is neither. (b) The commutativity residual $\delta(A(\alpha),\R)$ over the interpolation, minimized at the recovered $\alpha^\ast$. (c) The certificate of Theorem~\ref{thm:tridiag-unique}: the symmetric-tridiagonal commutant of the mixed covariance has dimension one, so no tridiagonal operator commutes and the KLT is not sinusoidal, versus dimension two for a pure-cosine covariance, which exhibits its Jain generator.}
\label{fig:interpjain}
\end{figure}

\section{Application: Matched-Generator Karhunen--Lo\`eve Transforms}\label{sec:application}

The preceding theory is stated for an arbitrary covariance and an arbitrary candidate space.
This section specializes it to the case that arises most often in practice, a covariance built
from two known paradigms, and reports how the construction behaves on real signals. The setting
is the data-scarce regime: rather than estimate a full covariance and eigendecompose it, one
commits to known algebraic structure and estimates only the few coefficients that combine it. A
fuller signal-processing treatment, with the complete experimental protocol, is given in a
companion manuscript; the account here is condensed.

\subsection{The two-paradigm synthesis}\label{sec:twoparadigm}

Suppose the covariance is modeled as a convex combination of two paradigms with known matched
generators,
\begin{equation}\label{eq:mixture}
\R(\alpha)=\alpha\,\R_A+(1-\alpha)\,\R_B,\qquad \R_A=\exp(-tG_A),\quad \R_B=\exp(-tG_B),
\end{equation}
where $G_A$ and $G_B$ are the generators whose eigenbases are the two classical transforms, for
example the periodic second-difference $G_F$ (Fourier) and the Neumann second-difference $G_C$
(cosine) of Section~\ref{sec:commutant}. The mixing weight $\alpha$ and the diffusion time $t$ are
unknown. The matched transform is synthesized directly from the two generators, without forming or
eigendecomposing $\R(\alpha)$, by the double-commutator pencil of Theorem~\ref{thm:main} applied to
the two-dimensional candidate space $\mathcal{B}=\spn\{G_A,G_B\}$: the synthesized generator is
$A^\ast=c_1^\ast G_A+c_2^\ast G_B$, where $\mathbf{c}^\ast$ is the smallest-eigenvalue generalized
eigenvector of the $2\times2$ pencil $\mathbf{M}\mathbf{c}=\lambda\,\mathbf{G}\mathbf{c}$ with
$\mathbf{M}$ and $\mathbf{G}$ as in~\eqref{eq:MG}, and the synthesized transform is the eigenbasis of
$A^\ast$.

Three properties of the construction are inherited directly from the general theory. First, exact
recovery: if some generator in $\spn\{G_A,G_B\}$ commutes with $\R(\alpha)$, then by
Theorem~\ref{thm:lie-recovery} the pencil returns it and the synthesized transform is the KLT
exactly. Second, stability: the near-optimality gap is governed by the commutativity residual
$\delta(A^\ast,\R)$. A true symmetry has residual that is second order in the data perturbation
(Theorem~\ref{thm:quad}), and the subspace error of the recovered transform is controlled by the
residual through the Davis--Kahan bound of Theorem~\ref{thm:genstab}, so the high-resolution coding
penalty~\eqref{eq:hr-rate} of the synthesized transform over the KLT scales as $\delta^2$. Third,
sample efficiency: the synthesis estimates only the $d=2$ coefficients $\mathbf{c}$ rather than the
$M(M+1)/2$ free entries of a covariance, so by the sample-complexity corollary~\ref{cor:sample} it
reaches the population KLT from far fewer observations than the sample-covariance eigendecomposition
requires. This is the operating principle: algebraic structure substitutes for the data that a
covariance estimate would need.

\subsection{A worked case: Fourier and cosine}\label{sec:worked}

Instantiate~\eqref{eq:mixture} with the Fourier and cosine generators on $M=64$ points, $t=0.3$, and
sweep the mixing weight $\alpha$ across the unit interval. Figure~\ref{fig:appl-synthetic} reports
two figures of merit against the KLT computed from the full $\R(\alpha)$: the rank-$K$ energy
compaction ratio at $K=8$, and the high-resolution coding penalty~\eqref{eq:hr-rate}. The synthesized
transform holds a compaction ratio of one to six decimal places at every $\alpha$ and a coding
penalty that never exceeds $6\times10^{-8}$ bits per sample, that is, it is numerically the exact KLT
across the entire mixture. Each fixed endpoint, by contrast, is optimal only at its own corner and
degrades away from it. The matched generator commutes with $\R(\alpha)$ to a residual
$\delta(A^\ast,\R)\approx10^{-5}$ at interior $\alpha$, which is the quantitative reason the synthesis
is numerically exact, consistent with the $\delta^2$ scaling of the coding penalty established in
Section~\ref{sec:perturbation}.

\begin{figure}[H]
\centering
\includegraphics[width=0.92\textwidth]{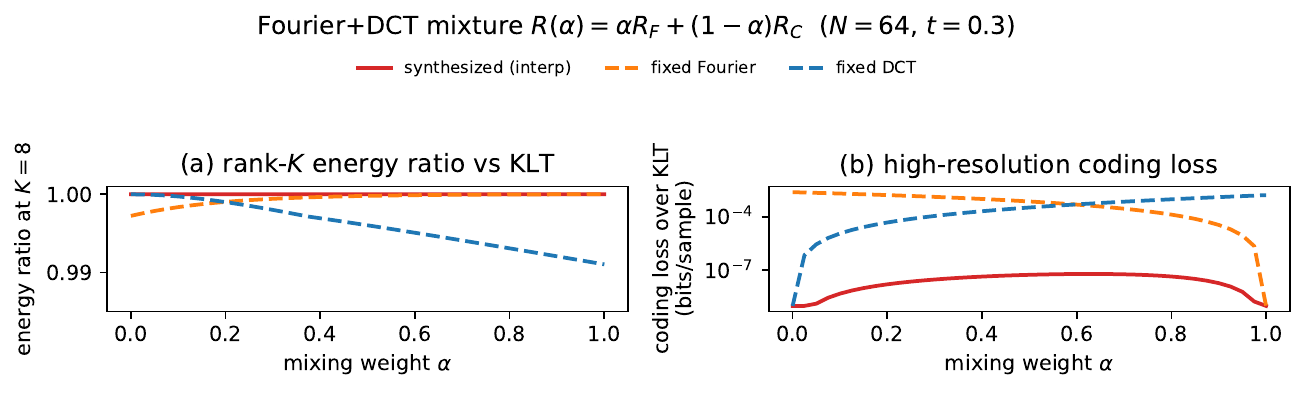}
\caption{Synthesized transform versus the KLT on a Fourier and cosine mixture
$\R(\alpha)=\alpha\R_F+(1-\alpha)\R_C$ at block length $M=64$ and $t=0.3$ (the panel label $N$ is the
block length $M$). (a) Rank-$K$ energy compaction at $K=8$: the synthesized transform holds one across
the mixture while each fixed endpoint degrades away from its corner. (b) High-resolution coding
penalty over the KLT (log scale): the synthesized transform stays below $10^{-7}$ bits per sample at
every weight, orders of magnitude below either fixed transform.}
\label{fig:appl-synthetic}
\end{figure}

\subsection{Sample efficiency on real signals}\label{sec:realsignals}

On real data the proportion of the two paradigms is unknown, the model~\eqref{eq:mixture} holds only
approximately, and the covariance must be estimated from a finite number of windows. Framing a signal
into length-$M$ windows, setting aside a held-out test set, and computing each transform from a
training set of $N_{\mathrm{tr}}$ windows swept from few to many, two questions separate. The first is
how few windows the synthesized transform needs to reach the compaction of the full-data KLT; the
second is whether the synthesized transform improves on the best fixed transform.

On the first question the answer is uniform and is the robust finding. Across four public signal
classes, an electrocardiogram record~\cite{moody2001mitbih}, a natural texture~\cite{vanderwalt2014scikit}, bearing-fault vibration~\cite{smith2015cwru}, and intracranial
EEG~\cite{fedele2021ieeg,fedele2017hfo}, the synthesized transform reaches the full-data KLT ceiling from five to twenty windows, where
the sample-covariance KLT overfits and needs hundreds. This is the data-scarcity benefit of
Section~\ref{sec:twoparadigm} made concrete: estimating two coefficients is stable on little data,
estimating a covariance is not.

On the second question the answer depends on the structure of the signal. The improvement over the best
fixed transform is a property of genuinely two-paradigm signals (Theorem~\ref{thm:criterion} and the
$\delta^2$ stability above): where a signal is close to a single paradigm, the best fixed transform is
already near the KLT and no improvement is available. Table~\ref{tab:realdata} summarizes the four
classes. The synthesized transform improves on the best fixed transform on exactly one of them, the
band-passed bearing faults, where band-passing removes a dominant single-paradigm component and
exposes a Fourier--cosine mixture in the resonance ringing; the gain is small but consistent across
three fault types. On the electrocardiogram and the texture the synthesized transform matches the best
fixed transform, and on the intracranial-EEG cohort it shows a statistically consistent but
operationally negligible margin of about $4\times10^{-4}$. Exposing two-paradigm structure by
band-pass filtering is itself delicate: too narrow a band drives the signal back toward a single
paradigm and removes the improvement.

\begin{table}[H]
\centering
\caption{Synthesized transform on four public signal classes. The data-scarcity benefit (reaching the
full-data KLT from few windows) is uniform; the improvement over the best fixed transform appears only
where the signal is genuinely a two-paradigm mixture.}
\label{tab:realdata}
\begin{tabular}{@{}llll@{}}
\toprule
Signal class & $M$ & $K$ & Synthesized versus best fixed transform \\
\midrule
Electrocardiogram (MIT--BIH)        & 128 & 32 & matches (single-paradigm signal) \\
Natural texture (grass)             & 64  & 32 & matches (single-paradigm signal) \\
Bearing fault, band-passed          & 128 & 32 & improves, small and consistent over three faults \\
Intracranial EEG (HFO)              & 256 & 64 & negligible margin ($\approx 4\times10^{-4}$) \\
\bottomrule
\end{tabular}
\end{table}

The principal use-case for the construction in this setting is the low-sample regime. When only a
small number of noisy observations are available, directly estimating a covariance and
eigendecomposing it approximates the Karhunen--Lo\`eve transform poorly; by constraining the solution
to a two-generator family derived from prior structural knowledge, the synthesis estimates only a few
structural degrees of freedom and often produces a substantially more accurate approximation to the
underlying transform. Where a signal additionally carries genuine two-paradigm structure, as in the
band-passed bearing faults, the same synthesis identifies an interpolated transform that improves on
the best fixed transform.

\subsection{Discovery from a candidate library}\label{sec:appl-discovery}

When the active pair is not known in advance but several candidate generators are, the same pencil of
Theorem~\ref{thm:main} over the larger candidate space discovers the combination whose transform most
nearly commutes with the data covariance, the finite-dimensional instance of the blind group matching
discussed in Section~\ref{sec:discussion}. One caution carries over from the few-window regime. A
candidate that achieves a small individual residual need not be the right one: on the bearing faults a
hierarchical (Haar) decoy attains the smallest individual residual at few windows, so the bare pencil
and a hard residual threshold both latch onto it and the recovered transform collapses. A soft
inverse-residual weighting, which combines the candidates in proportion to the reciprocal of their
residuals rather than selecting one, structurally never assigns zero weight to a true paradigm and
tracks the supplied pair from five windows. The advantage of inverse weighting is a property of the
few-dominant-paradigm regime: a signal mixing many mutually incoherent paradigms is near-white, and no
candidate-selection rule has anything to recover. The deterministic and statistical modes of the
blind-group-matching selector screen the decoy when their criteria are used in conjunction, abstaining
in exactly the few-window regime where a single residual is not a reliable identifier.

\section{Discussion}\label{sec:discussion}

The viewpoint of this paper is that the question ``which transform diagonalizes this covariance'' is, after the classical KLT reduction, a question about the symmetry commutant, and that the symmetry can be found by a single self-adjoint eigenvalue problem on a space of generators. The double-commutator theorem makes the search polynomial in the ambient dimension and exact at the commuting boundary. The numerical illustrations throughout are produced by scripts included with the manuscript and seeded for exact reproduction.

Several lines of prior work meet this one. The algebraic signal processing theory of P\"uschel and Moura~\cite{puschel2008} derives the Fourier and cosine/sine transforms as the Fourier transforms of polynomial-algebra modules, the algebraic counterpart of the commutant table of Section~\ref{sec:commutant}; that theory assumes exact symmetry, whereas the residual $\delta$ quantifies and minimizes departure from it. Joint approximate diagonalization seeks one basis that nearly diagonalizes a \emph{set} of matrices~\cite{cardoso1996} and underlies common principal component analysis~\cite{flury1984}; the present problem instead seeks the generator nearest to commuting with a \emph{single} operator, returning the matched symmetry rather than a joint eigenbasis. The optimality of the KLT for energy compaction (Theorem~\ref{thm:klt-opt}) is itself conditional on the coding setup~\cite{effros2004}. The blind group-discovery problem has also been approached through supervised, regular-representation methods requiring labeled data~\cite{hoyos2026groupalgebraictensorsprovablyoptimalequivariant}; the construction here is unsupervised, operates on the covariance commutant, and uses a single observation. The finite-dimensional instance of the present inverse problem is the blind group matching problem of algebraic-diversity signal processing, where a single covariance estimate is matched to the finite group whose representation best commutes with it; the combinatorial search over groups of order $M$ becomes the polynomial-time pencil of Section~\ref{sec:gevp} with the sequential deflation of Section~\ref{sec:sequential}. Its signal-processing development, including the single-observation variance reduction and the comparison against conventional multi-snapshot estimation, is carried out in companion work~\cite{thornton2026algebraicdiversitygrouptheoreticspectral,thornton2026continuousalgebraicdiversityunifying}. The asymptotic equivalence of the discrete Fourier transform and the KLT for stationary processes~\cite{gray2006} is the precise sense in which the cyclic row of the commutant table holds in finite dimension.

The operator at the centre of this paper, the double commutator $\ad_R^2$, is the same object that drives the double-bracket flow $\dot H=[H,[H,N]]$ of Brockett~\cite{brockett1991}, which evolves a matrix along a gradient flow on the orthogonal group toward diagonal form for a fixed target $N$, and which anchors the isospectral-flow view of numerical linear algebra surveyed by Chu~\cite{chu2008} (see also the projected-gradient flows under spectral constraints of Chu and Driessel~\cite{chudriessel1990} and the systematic treatment of Helmke and Moore~\cite{helmkemoore1994}). The use here is different in kind. Rather than flowing a matrix to diagonal form against a fixed $N$, we pose a static Rayleigh--Ritz problem for $\ad_R^2$ restricted to a fixed subspace of candidate generators, whose minimizer names the matched symmetry of a fixed $R$. The pencil~\eqref{eq:gevp} computes the low end of the $\ad_R^2$ spectrum that the equilibria of such a flow would approach, and repurposes it from diagonalization to group identification.

Open directions include the choice of generator basis for a given observation geometry, the extension of the sequential procedure of Section~\ref{sec:sequential} to continuous non-compact groups, where the deflation step must be formulated on the Lie algebra rather than on a finite permutation set, and a continuous-time realization of the matched-generator search as a gradient flow of the generalized Rayleigh quotient on the candidate algebra, which would connect the static pencil to the double-bracket and isospectral flows discussed above.

\appendix
\section{Sequential Generator Recovery on the Six-Cycle}\label{app:sequential}

We illustrate the sequential procedure of Section~\ref{sec:sequential} on the diffusion covariance $\R=(\mathbf{I}+\mathbf{L})^{-1}$ of the cycle $C_6$, for which $\Aut(\R)=D_6$ of order $12$ by Theorem~\ref{thm:aut}. The candidate basis is $\{\mathbf{P}_\tau-\mathbf{I},\,\mathbf{P}_{\tau^2}-\mathbf{I},\,\mathbf{P}_\rho-\mathbf{I},\,\mathbf{P}_\eta-\mathbf{I}\}$ with $\tau=(1\,2\,3\,4\,5\,6)$, $\tau^2=(1\,3\,5)(2\,4\,6)$, $\rho=(1\,6)(2\,5)(3\,4)$ a reflection, and $\eta=(1\,3)$ a transposition not in $\Aut(\R)$. Direct evaluation gives $\delta(\mathbf{P}_\sigma,\R)<10^{-15}$ for $\sigma\in\{\tau,\tau^2,\rho\}$ and $\delta(\mathbf{P}_\eta,\R)=0.119$.

Running the procedure at threshold $\tau=0$ (numerical tolerance $10^{-10}$) produces the trace

\begin{center}
\small
\begin{tabular}{ccccc}
\toprule
$k$ & $\lambda_{\min}$ & $\sigma^\ast_k$ & $\delta(\mathbf{P}_{\sigma^\ast_k},\R)$ & $|G_k|$ \\
\midrule
$0$ & $-$ & $e$ & $-$ & $1$ \\
$1$ & $5.0\times10^{-18}$ & $(1\,2\,3\,4\,5\,6)$ & $4.6\times10^{-17}$ & $6$ \\
$2$ & $6.9\times10^{-18}$ & $(1\,2\,3\,6\,5\,4)$ & $1.4\times10^{-1}$ & rejected \\
\bottomrule
\end{tabular}
\end{center}

At iteration $1$ the eigenvalue problem returns $\lambda_{\min}=0$ and rounding yields $\tau$, accepted, so $G_1=\langle\tau\rangle$ of order $6$. At iteration $2$ the deflated problem again returns $\lambda_{\min}=0$, at the residual of $\mathbf{P}_\rho-\mathbf{I}$, but Hungarian rounding of that residual lands on a permutation outside $\Aut(\R)$; the acceptance test rejects and the procedure terminates. The four named results of Section~\ref{sec:sequential} all hold on this trace: forward progress ($\tau\notin\{e\}$), strict growth ($|G_1|=6>1$), the iteration bound ($K=1\le\lceil\log_2 6\rceil=3$), and convergence ($\langle\tau\rangle\subseteq D_6$). The recovered subgroup is proper: rounding failed to recover $\rho$ from its deflation residual. The procedure is therefore sound, every accepted permutation is a genuine automorphism, but not complete: this basis is not generating-complete in the sense of Theorem~\ref{thm:full-recovery}, the rounded second residual missing $\rho$, so the procedure halts at the proper subgroup $\langle\tau\rangle\subsetneq D_6$ (Remark~\ref{rem:completeness}). Completeness depends on the interaction of the basis, the deflation, and the rounding step, an algorithmic question pursued in the companion development. The reproducibility script \texttt{sequential\_gevp\_C6.py} is included with the manuscript.

\bibliographystyle{plain}
\bibliography{klt_arxiv_refs}

\end{document}